\newtheorem{theorem}{Theorem}[subsection]
\newtheorem{definition}[theorem]{Definition}
\newtheorem{definition-lemma}[theorem]{Definition/Lemma}
\newtheorem{definition-explanation}[theorem]{Definition/Explanation}
\newtheorem{explanation-definition}[theorem]{Explanation/Definition}
\newtheorem{definition-fact}[theorem]{Definition/Fact}
\newtheorem{definition-notation}[theorem]{Definition/Notation}
\newtheorem{definition-conjecture}[theorem]{Definition/Conjecture}
\newtheorem{lemma}[theorem]{Lemma}
\newtheorem{lemma-definition}[theorem]{Lemma/Definition}
\newtheorem{remark}[theorem]{\it Remark}
\newtheorem{remark-notation}[theorem]{\it Remark/Notation}
\newtheorem{application-lemma}[theorem]{Application/Lemma}
\newtheorem{convention}[theorem]{\it Convention}
\newtheorem{example-definition}[theorem]{Example/Definition}
\newtheorem{definition-prototype}[theorem]{Definition-Prototype}
\numberwithin{equation}{subsection}
\newtheorem{sdefinition-lemma}[stheorem]{Definition/Lemma}
\newtheorem{sdefinition-explanation}[stheorem]{Definition/Explanation}
\newtheorem{sexplanation-definition}[stheorem]{Explanation/Definition}
\newtheorem{sdefinition-fact}[stheorem]{Definition/Fact}
\newtheorem{sdefinition-notation}[stheorem]{Definition/Notation}
\newtheorem{sdefinition-conjecture}[stheorem]{Definition/Conjecture}
\newtheorem{slemma-definition}[stheorem]{Lemma/Definition}
\newtheorem{sremark-notation}[stheorem]{\it Remark/Notation}
\newtheorem{sapplication-lemma}[stheorem]{Application/Lemma}
\newtheorem{sexample-definition}[stheorem]{Example/Definition}
\newtheorem{sdefinition-prototype}[stheorem]{Definition-Prototype}
\newtheorem{ssdefinition-lemma}[sstheorem]{Definition/Lemma}
\newtheorem{ssdefinition-explanation}[sstheorem]{Definition/Explanation}
\newtheorem{ssexplanation-definition}[sstheorem]{Explanation/Definition}
\newtheorem{ssdefinition-fact}[sstheorem]{Definition/Fact}
\newtheorem{ssdefinition-notation}[sstheorem]{Definition/Notation}
\newtheorem{ssdefinition-conjecture}[sstheorem]{Definition/Conjecture}
\newtheorem{sslemma-definition}[sstheorem]{Lemma/Definition}
\newtheorem{ssremark-notation}[sstheorem]{\it Remark/Notation}
\newtheorem{ssapplication-lemma}[sstheorem]{Application/Lemma}
\newtheorem{ssexample-definition}[sstheorem]{Example/Definition}
\newtheorem{ssdefinition-prototype}[sstheorem]{Definition-Prototype}
\newcommand{\Der}{\mbox{\it Der}\,}
  \newcommand{\tinyDouble}{\mbox{\it\tiny Double}}
\newcommand{\Endsheaf}{\mbox{\it ${\cal E}\!$nd}\,}
\newcommand{\Exponential}{\mbox{\it Exp}\,}
\newcommand{\Id}{\mbox{\it Id}\,}
\newcommand{\Imaginary}{\mbox{\it Im}\,}
\newcommand{\Log}{\mbox{\it Log}\,}
\newcommand{\parameter}{\mbox{\scriptsize\it parameter}\,}
\newcommand{\Real}{\mbox{\it Re}\,}
\newcommand{\SL}{\mbox{\it SL}}
\newcommand{\SO}{\mbox{\it SO}\,}
\newcommand{\Spin}{\mbox{\it Spin}\,}
\newcommand{\Tr}{\mbox{\it Tr}\,}
  \newcommand{\tinyWZ}{\mbox{\tiny\it WZ}\,}
\newcommand{\scriptsizeach}{\mbox{\scriptsize\it ach}}
\newcommand{\anticommuting}{\mbox{\scriptsize\it anti-c}}
  \newcommand{\tinyanticommuting}{\mbox{\tiny\it anti-c}}
\newcommand{\chd}{\mbox{\it c.h.d}\,}
\newcommand{\scriptsizech}{\mbox{\scriptsize\it ch}}
\newcommand{\constrained}{\mbox{\scriptsize\it constrained}\,}
\newcommand{\coordinates}{\mbox{\scriptsize\it coordinates}}
\newcommand{\even}{\mbox{\scriptsize\rm even}}
\newcommand{\field}{\mbox{\scriptsize\it field}}
\newcommand{\gaugescriptsize}{\mbox{\scriptsize\it gauge}\,}
 \newcommand{\gaugetiny}{\mbox{\tiny\it gauge}\,}
\newcommand{\odd}{\mbox{\scriptsize\rm odd}}
\newcommand{\physics}{\mbox{\scriptsize\it physics}}
  \newcommand{\tinyphysics}{\mbox{\tiny\it physics}}
\newcommand{\stc}{\mbox{\scriptsize\it stc}\,}
\newcommand{\boldsigma}{\mbox{\boldmath $\sigma$}}
\newcommand{\LARGEdot}{\mbox{\LARGE $\cdot$}}
\newcommand{\tinybullet}{\raisebox{.2ex}{\tiny $\bullet$}}							
\begin{document}

\enlargethispage{24cm}

\begin{titlepage}

$ $

\vspace{-1.5cm} % Re: -1.5cm for PC; -2.5cm for UT-Math-system

\noindent\hspace{-1cm}
\parbox{6cm}{\small February 2019}\
   \hspace{7cm}\
   \parbox[t]{6cm}{\small
                arXiv:yymm.nnnnn [hep-th] \\
				D(14.1.Supp.1), SUSY(1):\\  $\mbox{\hspace{4.3em}}$ 
                        $d=3+1$, $N=1$ \\   $\mbox{\hspace{4.3em}}$ 
				        tower construction
				}

%\vspace{2em}
\vspace{2cm}

%title
\centerline{\large\bf
 Physicists' $d=3+1$, $N=1$ superspace-time and supersymmetric QFTs}
\vspace{1ex}
\centerline{\large\bf 
 from a tower construction in complexified ${\Bbb Z}/2$-graded $C^\infty$-Algebraic Geometry}
\vspace{1ex}
\centerline{\large\bf 
 and a purge-evaluation/index-contracting map}
 
% end-title

\bigskip

%\bigskip\bigskip
%\vspace{2.4em}
\vspace{3em}

%authors-'n-addresses
\centerline{\large
  Chien-Hao Liu       \hspace{1ex} and \hspace{1ex} 
  Shing-Tung Yau
}

%\vspace{2em}
%\vspace{3em}
\vspace{4em}

%abstract%
\begin{quotation}
\centerline{\bf Abstract}
\vspace{0.3cm}

\baselineskip 12pt  %13pt for [12pt] style
{\small
 The complexified ${\Bbb Z}/2$-graded $C^\infty$-Algebraic Geometry aspect 
   of a superspace(-time) $\widehat{X}$
   in Sec.\,1 of D(14.1) (arXiv:1808.05011 [math.DG]) 
  together with the Spin-Statistics Theorem in Quantum Field Theory, 
     which requires fermionic components of a superfield be anticommuting, 
 lead us to the notion of towered superspace(-time) $\widehat{X}^{\widehat{\boxplus}}$ 
   and 
     the built-in purely even physics sector $X^{\mbox{\tiny physics}}$ from $\widehat{X}^{\widehat{\boxplus}}$.
 We use this to reproduce
   the $d=3+1$, $N=1$ Wess-Zumino model and
   the $d=3+1$, $N=1$ supersymmetric $U(1)$ gauge theory with matter
    ---  as in, e.g.,  Chap.\,V and Chap.\,VI \& part of Chap.\,VII of the classical Supersymmetry \& Supergravity textbook
	      by Julius Wess and Jonathan Bagger ---
   and, hence, recast physicists' two most basic supersymmetric quantum field theories
   solidly into the realm of (complexified ${\Bbb Z}/2$-graded) $C^\infty$-Algebraic Geometry.
 Some traditional differential geometers' ways of understanding supersymmetric quantum field theories
  are incorporated into the notion of a purge-evaluation/index-contracting map
  ${\cal P}:C^\infty(X^{\mbox{\tiny physics}})\rightarrow C^\infty(\widehat{X})$ in the setting.
 This completes for the current case
   a $C^\infty$-Algebraic Geometry language we sought for in D(14.1), footnote 2,
   that can directly link to the study of supersymmetry in particle physics.
 Once generalized to the nonabelian case in all dimensions and extended $N\ge 2$,
  this prepares us for a fundamental (as opposed to solitonic) description of
  super D-branes parallel to Ramond-Neveu-Schwarz fundamental superstrings
 } % endsmall
\end{quotation}

%\smallskip
%\bigskip
\vspace{2em}

\baselineskip 12pt
{\footnotesize
\noindent
{\bf Key words:} \parbox[t]{13.4cm}{      % {14cm}{
    supersymmetry, Spin-Statistics Theorem;
    superspace, complexified super $C^\infty$-scheme; tower construction, towered superspace, purge-evaluation map;
	chiral superfield, antichiral superfield, Wess-Zumino model;
	principal sheaf, vector superfield, Wess-Zumino gauge, supersymmetry in Wess-Zumino gauge,
	supersymmetric $U(1)$ gauge theory with matter; super D-brane
 }} %end-footnotesize

%\smallskip
%\medskip
\bigskip

\noindent {\small MSC number 2010:  58A50, 81T60; 14A22, 17A70,  46L87, 81Q60, 81T30
} % end-small

%\smallskip
%\medskip
\bigskip

\baselineskip 10pt
% Re: 11pt for [11pt] style; 12pt for [12pt] style
{\scriptsize
\noindent{\bf Acknowledgements.}
 We thank
 Andrew Strominger and Cumrun Vafa
   for influence to our understanding of strings, branes, and gravity.
 C.-H.L.\ thanks in addition
 Girma Hailu
   for his topic course on supersymmetry and discussions, fall 2018, that lead to the current work
  (cf. footnote 1 {\it Special acknowledges});
 Cumrun Vafa
   for consultation;
 Denis Auroux,
 Ashvin Vishwanath,
 Goran Radanovic/Haifeng Xu/Brian Plancher
   for other inspiring topic courses, fall 2018;
 Francesca Pei-Jung Chen
   for the recording of explanation/demonstration of J.S.\,Bach's sonata that accompanies the typing of the current work;
 Jing Gu, Jia-Chi Lee for sharing their life stories;
 Ling-Miao Chou
   for comments that improve the illustration and moral support.
 The project is supported by NSF grants DMS-9803347 and DMS-0074329.
} %endscriptsize

\end{titlepage}

\newpage

\enlargethispage{24cm}
\begin{titlepage}

$ $

%\vspace{2em}
%\vspace{4em}
\vspace{12em}

\centerline{\small\it
 Chien-Hao Liu dedicates this note}
\centerline{\small\it
 to the loving memory of}
\centerline{\small\it
 Rev.\ \& Mrs.\ R.\ Campbell Willman (1925-2014) and Barbara M.\ Willman (1927-1999).}
% \centerline{\small\it
% ????????????????}

%\vspace{3.5em}
\vspace{30em}

\baselineskip 11pt

{\footnotesize
\noindent
(From C.H.L.)
 There are memories that are too abundant to condense and too cherished and personal to reveal.
 A seemingly accidental encounter in my teenage years, which turned out to have profound impact on me.    
 The motherly love to me from Mrs.\ Willman and the friendship this family, including {\it Ann} and {\it Lisa},
 had provided me with
  turned a rebellious, unresting teenager from an almost high-school dropout to a researcher.
 The fact that Rev.\,Willman, graduated from University  of California at Berkeley with a degree in biology, gave up a more earthly pursuit
   to answer what he regarded as a higher call made a great impact on my mind.
 There is not a word with which I can express my gratitude to this family
   just like there is not a word I can use to express my gratitude to my own parents and family.
 The current work has equal weight to D-project as D(14.1) and,
  together, they completed our first step toward dynamical supersymmetric D-branes
      along the line of Ramond-Neveu-Schwarz superstring.
 It is thus dedicated to this family.
} % end-footnotesize
 
\end{titlepage}

%paper

\newpage
$ $

\vspace{-3em}
% \vspace{-4em}  % Re: -4cm for PC; -6cm for UT-Math-system

%short heading
\centerline{\sc
 $d=3+1$, $N=1$ Superspace-Time and SQFT from Tower Construction
 } %

\vspace{2em}

\baselineskip 14pt  %Re: 14pt for [11pt] style
                                      %Re: 15pt for [12pt] style.

\begin{flushleft}
{\Large\bf 0. Introduction and outline}
\end{flushleft}
 In Section 1 of [L-Y1] (D(14.1), arXiv:1808.05011 [math.DG])
  a $d=3+1$, $N=1$ superspace-time $\widehat{X}=(X, \widehat{\cal O}_X)$ was constructed
  from the aspect of complexified ${\Bbb Z}/2$-graded $C^\infty$-Algebraic Geometry.
 In this work\footnote{\makebox[17.6em][l]{\it Special acknowledgements from C.H.L.}
                                                              It's a pure coincidence, and a God-given luck to us, that in fall semester 2018
																{\it Girma Hailu} gave once again the topic course
																{\sl Physics 253cr Quantum Field Theory III}
                                                                  on supersymmetry [Hai]
																  at Department of Physics, Harvard University,
																  just after we had completed [L-Y1] (D(14.1)) in August.
															   This is the second time I sat in his topic courses on supersymmetry,
															   (the first time in fall semester 2013).
                                                               With half of the story
															     (i.e.\ the complexified ${\Bbb Z}/2$-graded $C^\infty$-Algebraic Geometry side)
															    more solidly in mind as the reference point,
																we have some slight intellectual leisure to pay more attention to
																 what we had still missed from the physics side.
															   It is his lectures on supersymmetry, two after-class long discussions, and
															    one email communication
															    that propelled us to re-examine
																 what we had taken as the foundation, i.e.\ superspaces,
																for the construction of a theory of fermionic D-branes
																along the line of  the Ramond-Neveu-Schwarz superstrings
															   and, in the end, led to the tower construction of superspaces in this work,
															    which upgrades and simplifies [L-Y1] (D(14.1)) tremendously.
															   Despite his generosity to stay behind the scene,
															     we regard him as a hidden coauthor who nourished the current work.
															   Sec.\,1.2 is particularly attributed to him.
                                                                             },     % end-footnote
 we extend the construction ibidem
  so that
    both the nature of a superspace-time from complexified ${\Bbb Z}/2$-graded $C^\infty$-Algebraic Geometry
	     and the matching of spin and statistics required of fields from Quantum Field Theory
	 can be consistently taken into account and harmoniously built into the construction.	
 While, as complexified super $C^\infty$-schemes,
  the towered superspace $\widehat{X}^{\widehat{\boxplus}}$ fibers over $\widehat{X}$ in [L-Y1] (D(14.1))
    with a canonical section and a canonical flat connection,
 it has a new sector, i.e.\ the {\it physics sector} $X^{\physics}$ ,
  whose structure sheaf ${\cal O}_X^{\physics}$ comes from superfields in the sense of physicists,
  e.g.,  Abdus Salam \& John Strathdee [S-S] (1978), and
  Julius Wess \& Jonathan Bagger [W-B] (1992),
  polished to take into account the unavoidable mathematical consequences of nilpotency of component fields
  once the spinorial component fields are assumed to be anticommuting and hence nilpotent.
 As a locally-ringed space,
  $X^{\physics}$ turns out to be a complexified $C^\infty$-scheme in its own right;  in particular,  it is purely even.
 For this reason,
  the parity of many physically relevant objects on the towered superspace $\widehat{X}^{\widehat{\boxplus}}$
  become purely even.
 In such situations, 	
   the sign-factor issues one has to deal with in [L-Y1] (D(14.1))
     due to the ${\Bbb Z}/2$-grading are all gone  and
   classical formulae in physics literature on supersymmetry
    become valid when suitably interpreted via a purge-evaluation map
	from ${\cal O}_X^{\physics}$ to $\widehat{\cal O}_X$.
 To demonstrate the validity of  above tower construction from complexified ${\Bbb Z}/2$-graded $C^\infty$-Algebraic Geometry
   to describe particle physicists' notion of superspaces and supersymmetric quantum field theories,
 the notion of $d=3+1$, $N=1$ chiral superfields and antichiral superfields and
 the construction of $d=3+1$, $N=1$ supersymmetric chiral matter theory and $U(1)$ gauge theory
   in Chap.\ V and Chap.\ VI of the standard textbook [W-B] of Wess \& Bagger are re-done
   on $\widehat{X}^{\widehat{\boxplus}}$.
 This completes for the current case
   a $C^\infty$-Algebraic Geometry language we sought for in [L-Y1: footnote 2] (D(14.1))
   that can directly link to the study of supersymmetry and supersymmetric quantum field theory in particle physics.
 
 Once generalized to the nonabelian case in all dimensions with extended $N\ge 2$ supersymmetry,
  this prepares us for a fundamental (as opposed to solitonic) description of
  super D-branes parallel to Ramond-Neveu-Schwarz fundamental superstrings

\bigskip
\bigskip

\noindent
{\bf Convention.}
Same as [L-Y1]  (D(14.1)).
In particular, references for standard notations, terminology, operations and facts are\;\;
 (1) algebraic geometry: [Hart];\; $C^{\infty}$-algebraic geometry: [Joy];\;\;
 (2) spinors and supersymmetry (mathematical aspect):  [Ch], [De], [D-F1], [D-F2], [Fr], [Harv], [S-W];\;\;
 (3) supersymmetry (physical aspect, especially $d=4$, $N=1$ case): [W-B], [G-G-R-S], [We];\,
            also [Argu], [Argy],  [Bi], [St], [S-S].

\bigskip
\bigskip
% \newpage
   
\begin{flushleft}
{\bf Outline}
\end{flushleft}
\nopagebreak
{\small
\baselineskip 12pt  %13pt
\begin{itemize}
 \item[0.]
  Introduction
  
 \item[1.]
  The $d=3+1$, $N=1$ towered superspace-time and its physics sector
   \vspace{-.6ex}
   \begin{itemize}
     \item[1.1]
	  Supermanifolds in the sense of complexified ${\Bbb Z}/2$-graded $C^\infty$-Algebraic Geometry
	
     \item[1.2]
	  What should the function ring of a superspace be:
	  naturality from $C^\infty$-Algebraic Geometry aspect vs.\ naturality from Quantum Field Theory aspect

     \item[1.3]
	 The $d = 3+1$, $N = 1$ towered superspace-time $\widehat{X}^{\widehat{\boxplus}_l}$
	 	
	 \item[1.4]
     The physics sector $X^{\tinyphysics}$ of $\widehat{X}^{\widehat{\boxplus}_l}$	
	
     \item[1.5]
	 Purge-evaluation maps and the Fundamental Theorem on supersymmetric action\\ functionals	 	 
   \end{itemize}
   
 \item[2.]
  The chiral/antichiral theory on $X^{\tinyphysics}$ and Wess-Zumino model
   \vspace{-.6ex}
   \begin{itemize}     		
	 \item[2.1]
	 More on the chiral and the antichiral sector of $X^{\tinyphysics}$
	
     \item[2.2]
     Wess-Zumino model on $X$ in terms of $X^{\tinyphysics}$
   \end{itemize}
   
  \item[3.]
  Supersymmetric $U(1)$ gauge theory with matter on $X$ in terms of $X^{\tinyphysics}$
   \vspace{-.6ex}
   \begin{itemize}
     \item[3.1]
	 The bundle/sheaf context underlying a supersymmetric $U(1)$ gauge theory with matter
                         built from $X^{\tinyphysics}$
		
	 \item[3.2]
	 Pre-vector superfields in Wess-Zumino gauge
	
	 \item[3.3]
	 Supersymmetry transformations of a pre-vector superfield in Wess-Zumino gauge
	
	 \item[3.4]
	 From pre-vector superfields to vector superfields
	
	 \item[3.5]
	 Supersymmetric $U(1)$ gauge theory with matter on $X$ in terms of $X^{\tinyphysics}$
   \end{itemize}
 
 \item[]\hspace{-3.8ex}
   Appendix\;\;\;Notations, conventions, and identities in spinor calculus
   %
 %%%%%%----------------------------------------------------------------------------
 % \item[?.]
 %  ?????????????????
 %  %
 %  \vspace{-.6ex}
 %  \begin{itemize}	 	
 %    \item[?.1]	
 %     ????????????????
 %	
 %	\item[?.2]
 %	 ?????????????????
 %    %
 %	  \begin{itemize}
 %	   \item[?.2.1]
 %        ??????????????????????
 %	
 %	   \item[?.2.2]
 %		?????????????????????
 %		%
 %		\begin{itemize}
 %		  \item[?.2.2.1]
 %            ???????????????????????
 %
 %          \item[?.2.2.2]
 %            ???????????????????????
 %		\end{itemize}
 %	  \end{itemize}
 %  \end{itemize}
 %%%%%%%========------------------------==============--------------------------------
\end{itemize}
} %endsmall

\newpage

\section{The $d = 3+1$, $N = 1$ towered superspace-time and its physics sector}

\subsection{Supermanifolds in the sense of complexified ${\Bbb Z}/2$-graded $C^\infty$-Algebraic Geometry}

The notion of {\sl supermanifolds} in the sense of complexified ${\Bbb Z}/2$-graded $C^\infty$-Algebraic Geometry
  and a few basic objects we need
 are recalled in this subsection for the introduction of terminology and notations.
Details are referred to [L-Y1: Sec.\,1] D(14.1).

\bigskip

\begin{definition} {\bf [supermanifold/superscheme]}\; {\rm 		
 Given a (real smooth) manifold (in general, $C^\infty$-scheme) $M$,
  denote its structure sheaf of smooth functions by ${\cal O}_M$ and its complexification
  ${\cal O}_M^{\,\Bbb C}:= {\cal O}_M\otimes_{\Bbb R}{\Bbb C}$
   (i.e.\ the sheaf of complex-valued smooth functions on $M$).
 For any ${\cal O}_M^{\,\Bbb C}$-module ${\cal F}$ of finite rank,
  one can construct a {\it complexified ${\Bbb Z}/2$-graded $C^\infty$-scheme} (i.e.\ {\it supermanifold})
  $$
     \widehat{M}\;:=\; (M, \widehat{\cal O}_X )
  $$
  from ${\cal F}$ by taking the new {\it structure sheaf} on $M$
    to be the exterior ${\cal O}_M^{\,\Bbb C}$-algebra generated by ${\cal F}$\,:
  $$
    \widehat{\cal O}_M\; :=\;  \mbox{$\bigwedge$}^{\tinybullet}_{{\cal O}_M^{\,\Bbb C}}{\cal F}.
  $$
 This a locally-ringed space with the underlying topology $M$.
 We shall call ${\cal F}$ the {\it generating sheaf} of the supermanifold.
 This is a {${\Bbb Z}/2$-graded} ${\cal O}_M^{\,\Bbb C}$-algebra with
  the {\it even part}  and the {\it odd part} given respectively by
  $$
     \widehat{\cal O}_M^{\,\even}\;
	   :=\;  \mbox{$\bigwedge$}^{\even}_{{\cal O}_M^{\,\Bbb C}}{\cal F}\,,
	    \hspace{2em}
	 \widehat{\cal O}_M^{\,\odd}\;
	   :=\;  \mbox{$\bigwedge$}^{\odd}_{{\cal O}_M^{\,\Bbb C}}{\cal F}\,.
  $$
 The {\it $C^\infty$-hull} of $\widehat{\cal O}_M$
   is given by
  $$
   \mbox{\it $C^\infty$-hull}\,(\widehat{\cal O}_M)\;
    : =\;  {\cal O}_M \oplus  \mbox{$\bigwedge$}_{{\cal O}_M^{\,\Bbb C}}^{\,\even,\,\ge 2}\;
    \subset \;  \widehat{\cal O}_M\,.
  $$
  $\mbox{\it $C^\infty$-hull}\,(\widehat{\cal O}_M)$ is a sheaf of $C^\infty$-rings.
 This, by definition, gives a {\it partial $C^\infty$-ring structure} on $\widehat{\cal O}_M$.
 The set $\Gamma(\tinybullet)$ of global sections of these structure sheaves $(\tinybullet)$ on $M$ are denoted by
  $$
    C^\infty(M)\,,\hspace{2em}
    C^\infty(M)^{\Bbb C}\,,\hspace{2em}
    C^\infty(\widehat{X})\,,\hspace{2em}
    \mbox{\it $C^\infty$-hull}\,(C^\infty(\widehat{X}))
  $$
  respectively.
 Each carries a corresponding complexified, ${\Bbb Z}/2$-graded, $C^\infty$-or-partial-$C^\infty$-algebraic
  (whichever applicable) structure.
}\end{definition}

\medskip

\begin{definition} {\bf [(left) derivation of $C^{\infty}(\widehat{M})$]}\; {\rm
 (Cf.\ [L-Y1: Definition~1.3.2, footnote~7] (D(14.1)).)
 A {\it (left) derivation}   of $C^{\infty}(\widehat{M})$ over ${\Bbb C}$
  is a ${\Bbb Z}/2$-graded ${\Bbb C}$-linear operation\\
  $\xi: C^{\infty}(\widehat{M})\rightarrow C^{\infty}(\widehat{M})$
  on $C^{\infty}(\widehat{M})$
  that satisfies the ${\Bbb Z}/2$-graded Leibniz rule
  $$
    \xi(fg)\;=\; (\xi f)g\,+\, (-1)^{p(\xi)p(f)}f(\xi g)
  $$
  when in parity-homogeneous situations.
 The set $\Der_{\Bbb C}(\widehat{M}):=\Der_{\Bbb C}(C^{\infty}(\widehat{M}))$
     of derivations of $C^{\infty}(\widehat{M})$
	 is a (left) $C^{\infty}(\widehat{M})$-module,
	with $(a\xi)(\,{\LARGE \cdot}\,):= a (\xi(\,\cdot\,))$ and $p(a\xi):= p(a)+p(\xi)$
	for $a\in C^{\infty}(\widehat{M})$ and $\xi\in \Der_{\Bbb C}(\widehat{M})$.
}\end{definition}

\medskip

\begin{definition} {\bf [differential of $C^{\infty}(\widehat{M})$]}\; {\rm
(Cf.\ [L-Y1: Definition~1.3.6, footnote~10] (D(14.1)).)
 The bi-$C^{\infty}(\widehat{M})$-module
   $\Omega_{\widehat{M}}:= \Omega_{C^{\infty}(\widehat{M})}$
   of {\it differentials} of $C^{\infty}(\widehat{M})$ over ${\Bbb C}$
  is the quotient of
    the free bi-$C^{\infty}(\widehat{M})$-module
      	          generated by $d(f)$, $f\in C^{\infty}(\widehat{M})$,
   by the bi-$C^{\infty}(\widehat{M})$-submodule of relators generated by
   \begin{itemize}
    \item[(1)]
	 [{\it ${\Bbb C}$-linearity}\,]\hspace{1em}
	 $d(c_1f_1 + c_2f_2)-c_1 d(f_1) - c_2 d(f_2)$\,,\;
	  for $c_1, c_2\in {\Bbb C}$, $f_1, f_2\in C^{\infty}(\widehat{M})$;
	
    \item[(2)]
	 [{\it Leibniz rule}\,]\hspace{1em}
	  $d(f_1f_2)- (d(f_1))f_2 -f_1 d(f_2)$\,,\; for $f_1, f_2\in C^{\infty}(\widehat{M})$;
	
	\item[(3)]
	 [{\it chain-rule identities from the $C^{\infty}$-hull structure}\,]\hspace{1em}
	 $$
	    d(h(f_1,\,\cdots\,, f_l))- \sum_{k=1}^l (\partial_kh)(f_1,\,\cdots\,, f_l)\,d(f_k)
	 $$
	 for
	   $h\in C^{\infty}({\Bbb R}^l)$,
	   $f_1,\,\cdots\,, f_l
	          \in   C^{\infty}(\mbox{\it $C^\infty$-hull}\,(C^\infty(\widehat{M})))
		      \subset C^{\infty}(\widehat{M})$;
	 here, $\partial_kh$ is the partial derivative of $h\in C^{\infty}({\Bbb R}^l)$
	           with respect to the $k$-th argument.
   \end{itemize}
 The element of $\Omega_{\widehat{M}}$  associated to $d(f)$,
    $f\in C^{\infty}(\widehat{M})$, is denoted by $df$.
 Using Relators (2), one can convert $\Omega_{\widehat{M}}$ to
   either solely a left $C^{\infty}(\widehat{M})$-module
         or solely a right $C^{\infty}(\widehat{M})$-module.
		 		
 A differential of $C^{\infty}(\widehat{M})$  is also called synonymously
  a {\it $1$-form} on $\widehat{M}$.
  
 By construction, there is a built-in map
  $d:C^{\infty}(\widehat{M})\rightarrow \Omega_{\widehat{M}}$ defined by
  $f\mapsto df$.
}\end{definition}

\medskip

\begin{convention} $[$cohomological degree vs.\ parity\,$]$\; {\rm
(Cf.\ [L-Y1: Convention~1.3.5, footnote~9] (D(14.1)).)
 We treat elements $f$ of $C^{\infty}(\widehat{M})$ as of cohomological degree $0$
  and the exterior differential operator $d$ as of cohomological degree $1$ and {\it even}.
 In notation, $\chd(f)=0$ and $\chd(d)=1$, $p(d)=0$.
 Under such $({\Bbb Z}\times ({\Bbb Z}/2))$-bi-grading,
   $$
      ab = (-1)^{c.h.d(a)\,c.h.d(b)}(-1)^{p(a)p(b)}ba
   $$
   for objects $a, b$ homogeneous with respect to the bi-grading.
  Here, $a$ and $b$ are not necessarily of the same type.
 This is the convention that matches with the sign rules in
   [W-B: Chap.\ XII, Eq.'s (12.2), (12.3)]	of Wess \& Bagger.
}\end{convention}

\medskip

\begin{lemma} {\bf [evaluation of $\Omega_{\widehat{M}}$ on
       $\Der_{\Bbb C}(\widehat{M})$ from the right]}\;
{\rm (Cf.\ [L-Y1: Lemma~1.3.7, footnote~11] (D(14.1).))}	
 The specification
   $$
  (df)(\xi) \; :=\;  (\xi)^{\leftarrow}\!\!\!(df)\;   :=\; \xi(f)
   $$
   for $f\in C^{\infty}(\widehat{M})$ and $\xi\in \Der_{\Bbb C}(\widehat{M})$,
  defines an {\it evaluation} of $\Omega_{\widehat{M}}$ on $\Der_{\Bbb C}(\widehat{M})$
  from the right: \\
  for
    $\varpi=\sum_{i=1}^k a_i\,df_i\in \Omega_{\widehat{M}}$, with $a_i$ parity-homogeneous, and
	$\xi\in \Der_{\Bbb C}(\widehat{M})$ parity-homogeneous,
  $$
    \varpi(\xi)\;
	:=\;  (\xi)^{\leftarrow}\!\!\!\varpi\;
	:=\;  \sum_{i=1}^k (-1)^{p(\xi)p(a_i)}  a_i\, \xi(f_i)\,.
  $$
 This evaluation is (left) $C^{\infty}(\widehat{M})$-linear:
  $\varpi(a\xi)= a \varpi(\xi)$, for $a\in C^{\infty}(\widehat{M})$.
\end{lemma}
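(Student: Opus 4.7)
The plan is to show that the right-hand side $\sum_{i=1}^k (-1)^{p(\xi)p(a_i)} a_i\, \xi(f_i)$ depends only on the class $\varpi\in\Omega_{\widehat{M}}$ of the expression $\sum a_i\,df_i$, not on the chosen representative, and then to verify the stated left $C^{\infty}(\widehat{M})$-linearity. First I would fix a parity-homogeneous $\xi\in\Der_{\Bbb C}(\widehat{M})$ and define, on the underlying free bi-$C^{\infty}(\widehat{M})$-module (before quotienting by the relators of Definition~1.1.3), a pairing by sending the generator $d(f)$ to $\xi(f)$ and extending on parity-homogeneous elements by $(a\cdot d(f)\cdot b)(\xi)\,:=\,(-1)^{p(\xi)p(a)}\,a\,\xi(f)\,b$. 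The Koszul sign $(-1)^{p(\xi)p(a)}$ is precisely what is produced by formally commuting $\xi$ past the parity-homogeneous $a$; under Convention~1.1.4 the differential $d$ is even, so $d$ itself introduces no further sign. Well-definedness of the evaluation then reduces to checking that this pairing annihilates each of the three families of relators.

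For Relator~(1), ${\Bbb C}$-linearity of $d$ descends from the ${\Bbb C}$-linearity of $\xi$ built into Definition~1.1.2. For Relator~(2), evaluating $d(f_1 f_2)-d(f_1)\,f_2-f_1\,d(f_2)$ on $\xi$ gives
\[
  \xi(f_1 f_2)\;-\;\xi(f_1)\,f_2\;-\;(-1)^{p(\xi)p(f_1)}\,f_1\,\xi(f_2),
\]
which vanishes by the ${\Bbb Z}/2$-graded Leibniz rule of Definition~1.1.2. For Relator~(3), one uses that a derivation of the complexified ${\Bbb Z}/2$-graded $C^{\infty}$-scheme $\widehat{M}$ is, by convention, a derivation of the partial $C^{\infty}$-ring structure on $C^{\infty}(\widehat{M})$; so for $h\in C^{\infty}({\Bbb R}^l)$ and $f_1,\ldots,f_l$ lying in $\mbox{\it $C^\infty$-hull}\,(C^\infty(\widehat{M}))$, the chain rule $\xi(h(f_1,\ldots,f_l))=\sum_k(\partial_k h)(f_1,\ldots,f_l)\,\xi(f_k)$ is built in. Since the $f_k$ are even, all Koszul signs are trivial and the relator evaluates to zero.

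For the left $C^{\infty}(\widehat{M})$-linearity of the evaluation, direct substitution yields
\[
  \varpi(a\xi)\;=\;\sum_i(-1)^{(p(a)+p(\xi))p(a_i)}\,a_i\,a\,\xi(f_i),
  \qquad
  a\,\varpi(\xi)\;=\;\sum_i(-1)^{p(\xi)p(a_i)}\,a\,a_i\,\xi(f_i).
\]
Graded commutativity of $C^{\infty}(\widehat{M})$, i.e.\ $a\,a_i=(-1)^{p(a)p(a_i)}\,a_i\,a$, converts the second expression into the first term-by-term, so the two agree.

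The main obstacle is a purely bookkeeping one: one has to keep straight that the Koszul sign in the definition of the evaluation is consistent with the bi-$C^{\infty}(\widehat{M})$-module presentation of $\Omega_{\widehat{M}}$ in Definition~1.1.3, so that the two ways of writing the same differential (as $a\cdot df$ via the left action or as $df\cdot a'$ via the right action, for appropriate $a'$) yield the same value under the pairing. This is the point at which a sign error is most likely to creep in, but under the conventions of Convention~1.1.4 --- where $d$ is cohomologically odd but of even ${\Bbb Z}/2$-parity, and the only parities in play for the evaluation are those of $a_i$ and $\xi$ --- the sign counts reconcile, and the pairing passes to a well-defined, left $C^{\infty}(\widehat{M})$-linear map on $\Omega_{\widehat{M}}$.
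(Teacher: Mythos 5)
Your proof is correct and is essentially the argument the paper itself relies on: the lemma is recalled verbatim from [L-Y1: Lemma~1.3.7, footnote~11], and the proof there is exactly your relator check --- the Koszul-signed pairing $(a\cdot d(f)\cdot b)(\xi)=(-1)^{p(\xi)p(a)}a\,\xi(f)\,b$ annihilates Relators (1)--(3) by the ${\Bbb C}$-linearity of $\xi$, the ${\Bbb Z}/2$-graded Leibniz rule, and the chain rule on the (purely even) $C^\infty$-hull, after which left $C^{\infty}(\widehat{M})$-linearity follows from graded commutativity exactly as you compute. One small caution: Definition~1.1.2 as reproduced in this paper states only ${\Bbb C}$-linearity and the graded Leibniz rule, so your Relator-(3) step tacitly invokes the fuller definition in [L-Y1: Definition~1.3.2, footnote~7] (or, equivalently, the standard Hadamard-lemma argument that an algebraic derivation of $C^{\infty}(\widehat{M})$ automatically satisfies the chain rule on the even $C^\infty$-hull) --- with that convention cited, the proof is complete.
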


\bigskip

Higher tensors, in particular $k$-forms, on $\widehat{M}$ can also be defined.
See [L-Y1: Sec.\,1.3] (D(14.1)) for more details on the differential calculus on $\widehat{M}$.

\bigskip

\subsection{What should the function-ring of a superspace be:
         naturality from $C^\infty$-Algebraic Geometry aspect vs.\ naturality from Quantum Field Theory aspect}

Given the $4$-dimensional  Minkowski space-time $X={\Bbb R}^{3+1}$.
Let
 \begin{itemize}
  \item[$\cdot$]
    $P$ be the Lorentzian frame bundle over $X$ with the (flat) Levi-Civita connection,
	
  \item[$\cdot$]	
    $S^{\,\Bbb C}=S^\prime\oplus S^{\prime\prime}$ be the complexified Dirac-spinor bundle
	from the spinor representation of the Lorentz group,
   $S^{{\Bbb C},\vee}= S^{\prime\,\vee}\oplus S^{\prime\prime\,\vee}$ be the dual of $S^{\,\Bbb C}$.
   The corresponding sheave are denoted by
     ${\cal S}^{\,\Bbb C}$, ${\cal S}^\prime$, ${\cal S}^{\prime\prime}$  and
	 ${\cal S}^{\,{\Bbb C}, \vee}$, ${\cal S}^{\prime\,\vee}$, ${\cal S}^{\prime\prime\,\vee}$   respectively.
    These spinor bundles and sheaves are all equipped with a flat connection induced from that on $P$.
 \end{itemize}
A `{\sl $d=3+1$, $N=1$ superspace}' is meant to be a supermanifold/superscheme in the sense of Definition~1.1.1
                                                                                                                                               % Definition [supermanifold/superscheme]
 with the underlying topology $M=X$ and
 the generating sheaf ${\cal F}$ ``coming from" {\it one copy} of ${\cal S}^{\,\Bbb C}$,
   or ${\cal S}^{\,{\Bbb C}\,\vee}$
  to provide ``{\it one set of fermionic/anticommuting coordinates
  $(\theta^1,\theta^2, \bar{\theta}^{\dot{1}}, \bar{\theta}^{\dot{2}})$ on the superspace}".
In this subsection, we shall re-examine the notion of `{\sl superspace}' based on this definition and guided by the question
 \begin{itemize}
  \item[{\bf Q.}] {\bf [guiding/key]}\;\;
  {\it What should the function ring of a $d=3+1$, $N=1$ superspace be?}
 \end{itemize}
with care not only from the complexified ${\Bbb Z}/2$-graded $C^\infty$-Algebraic Geometry
 but also from the Quantum Field Theory.

\bigskip

\begin{flushleft}
{\bf Naturality from the complexified ${\Bbb Z}/2$-graded $C^\infty$-Algebraic Geometry aspect}
\end{flushleft}
In [L-Y1: Sec.\,1.2] (D(14.1)) a $d=3+1$, $N=1$ superspace $\widehat{X}$
 was constructed as a super $C^\infty$-scheme with complexification.
There we took $M=X$, ${\cal F}={\cal S}^{\,{\Bbb C},\vee}$.\footnote{See
                                                                      [L-Y1: footnote 4 in Sec.\.1.2] for an explanation of
																	    ${\cal S}^{\,\Bbb C}$ vs.\:${\cal S}^{\,{\Bbb C},\vee}$.
																	 In ibidem, to avoid carrying the dual `\,$^\vee$\,' everywhere
																	    and enough for the purpose there,
																	  we actually chose ${\cal F}={\cal S}^{\Bbb C}$
																	  instead of ${\cal S}^{{\Bbb C},\vee}$ for the simplicity of notation.
																	 For the current work, such distinction matters and we resume what  it should be;
																	  cf.\: [L-Y9] (D(11.4.1)) in Reference of ibidem.
																	 It turns out that the choice ${\cal F}={\cal S}^{\,{\Bbb C},\vee}$
																	  also matches better the convention in [W-B: Appendix A] of Wess \& Bagger
																	  since we write fermionic/anticommuting coordinates with upper spinor index
																	   $(\theta^1,\theta^2,
																	         \bar{\theta}^{\dot{1}},\bar{\theta}^{\dot{2}})$,
																	   rather than lower spinor index.
   																	                                              }  % end-footnote
A tuple of constant sections
  $(\theta,\bar{\theta}):= (\theta^1,\theta^2, \bar{\theta}^{\dot{1}}, \bar{\theta}^{\dot{2}})$
 from ${\cal S}^{\prime\,\vee}\oplus{\cal S}^{\prime\prime\,\vee}$
 with respect the built-in flat connection
 was chosen to serve as the fermionic coordinate functions on the superspace $\widehat{X}$.
Together with the standard coordinate functions $x:=(x^0,x^1,x^2,x^3)$ on $X$,
they generate (in the sense of complexified ${\Bbb Z}/2$-graded $C^\infty$-ring ) the function ring
 $$
   C^\infty(\widehat{X})\;
   =\; C^\infty(X)^{\Bbb C}[\theta^1,\theta^2,
          \bar{\theta}^{\dot{1}},\bar{\theta}^{\dot{2}}]^{\anticommuting}
 $$
 of $\widehat{X}$.
Here $[\,\cdots\,]^{\anticommuting}$ means `{\it polynomial ring (with coefficients in $C^\infty(X)^{\Bbb C}$)
    in anticommuting variables $\cdots$}'.
{\it Mathematically} from pure algebra,
 whatever ring that contains
    both $C^\infty(X)^{\Bbb C}$ and $\theta^1,\theta^2,\bar{\theta}^{\dot{1}},\bar{\theta}^{\dot{2}}$
 must contain
    $C^\infty(X)^{\Bbb C}[\theta^1,\theta^2,
                      \bar{\theta}^{\dot{1}},\bar{\theta}^{\dot{2}}]^{\anticommuting}$,
despite the fact that
 as $P$-modules, the corresponding structure sheaf contains not just Lorentz scalars:
 \begin{eqnarray*}
  \widehat{\cal O}_X
   & :=\:  &   \mbox{$\bigwedge$}^{\tinybullet}_{{\cal O}_X^{\,\Bbb C}}
          	({\cal S}^{\prime\,\vee}\oplus {\cal S}^{\prime\prime\,\vee}) \\
   & \simeq
       & {\cal O}_X^{\,\Bbb C}
	         \oplus ({\cal S}^{\prime\,\vee}\oplus {\cal S}^{\prime\prime\,\vee})
		     \oplus ({\cal O}_X^{\,\Bbb C}
					          \oplus {\cal S}^{\prime\,\vee}\otimes_{{\cal O}_X^{\,\Bbb C}}
							                  {\cal S}^{\prime\prime\,\vee}
							   \oplus {\cal O}_X^{\,\Bbb C})
		     \oplus  ({\cal S}^{\prime\,\vee}\oplus {\cal S}^{\prime\prime\,\vee})
			 \oplus {\cal O}_X^{\,\Bbb C}\,.
 \end{eqnarray*}
Note also that an $f\in C^\infty(\widehat{X})$
  in the $(\theta,\bar{\theta})$-expansion\footnote{Here,
                                                                                    since $f_{(\tinybullet)}$ are commuting,
                                                                                     one can write the coefficients either on the right or on the left.
																					We choose to write them on the right in order to match better
																					  with the later setting when we allow anticommuting coefficients as well.
                                                                                } % end-footnote
   \begin{eqnarray*}
     f & = &  f_{(0)}\,
	        +\, \sum_{\alpha}\theta^\alpha f_{(\alpha)}\,
			+\, \sum_{\dot{\beta}} \bar{\theta}^{\dot{\beta}} f_{(\dot{\beta})}\,
			+\, \theta^1\theta^2  f_{(12)}\,
			+\, \sum_{\alpha,\dot{\beta}}
			       \theta^\alpha\bar{\theta}^{\dot{\beta}} f_{(\alpha\dot{\beta})}\,  \\
        && 				
            +\, \bar{\theta}^{\dot{1}}\bar{\theta}^{\dot{2}}  f_{(\dot{1}\dot{2})}\,
			+\, \sum_{\dot{\beta}}
			        \theta^1\theta^2\bar{\theta}^{\dot{\beta}}  f_{(12\dot{\beta})}\,
			+\, \sum_\alpha
			        \theta^\alpha\theta^{\dot{1}}\theta^{\dot{2}}  f_{(\alpha\dot{1}\dot{2})}\,
		    +\, \theta^1\theta^2\theta^{\dot{1}}\theta^{\dot{2}} f_{(12\dot{1}\dot{2})}			
   \end{eqnarray*}
 has all its coefficients $f_{(\tinybullet)}\in C^\infty(X)^{\Bbb C}$ and, hence, {\it commuting}.

\bigskip

\begin{flushleft}
{\bf Naturality from the Quantum Field Theory aspect}
\end{flushleft}
For distinction,
  denote the would-be superspace by $?\widehat{X}$ and its function ring by $C^\infty(?\widehat{X})$.
Then it follows from the discussion in the previous theme that
 $$
   C^\infty(?\widehat{X})\;\supset\;
   C^\infty(X)^{\Bbb C}[\theta^1,\theta^2,
                      \bar{\theta}^{\dot{1}},\bar{\theta}^{\dot{2}}]^{\anticommuting}\,.
 $$
If we insist that $?\widehat{X}$ be an $N=1$ superspace, i.e.\
 the tuple $(\theta^1,\theta^2,\bar{\theta}^{\dot{1}},\bar{\theta}^{\dot{2}})$
 remains to serve as a maximal tuple of fermionic coordinates on $?\widehat{X}$,
then it is natural to assume that every $\breve{f}\in  C^\infty(?\widehat{X})$ remains to have
 an expansion in $(\theta,\bar{\theta})$:
  \begin{eqnarray*}
   \breve{f}  & = &  \breve{f}_{(0)}\,
	        +\, \sum_{\alpha}\theta^\alpha \breve{f}_{(\alpha)}\,
			+\, \sum_{\dot{\beta}} \bar{\theta}^{\dot{\beta}} \breve{f}_{(\dot{\beta})}\,
			+\, \theta^1\theta^2  \breve{f}_{(12)}\,
			+\, \sum_{\alpha,\dot{\beta}}
			       \theta^\alpha\bar{\theta}^{\dot{\beta}} \breve{f}_{(\alpha\dot{\beta})}\,  \\
        && 				
            +\, \bar{\theta}^{\dot{1}}\bar{\theta}^{\dot{2}}  \breve{f}_{(\dot{1}\dot{2})}\,
			+\, \sum_{\dot{\beta}}
			        \theta^1\theta^2\bar{\theta}^{\dot{\beta}}  \breve{f}_{(12\dot{\beta})}\,
			+\, \sum_\alpha
			        \theta^\alpha\theta^{\dot{1}}\theta^{\dot{2}}  \breve{f}_{(\alpha\dot{1}\dot{2})}\,
		    +\, \theta^1\theta^2\theta^{\dot{1}}\theta^{\dot{2}} \breve{f}_{(12\dot{1}\dot{2})}\,.			
  \end{eqnarray*}
The question is now
 \begin{itemize}
  \item[{\bf Q.}] {\bf [coefficients in $(\theta,\bar{\theta})$-expansion]}\;\;  {\it
   If a coefficient $\breve{f}_{(\tinybullet)}$ of an $\breve{f}\in C^\infty(?\widehat{X})$
	       does not lie in $C^\infty(X)^{\Bbb C}$,
    then in where should it lie?}
 \end{itemize}
In constructing $\widehat{X}$ as a complexified ${\Bbb Z}/2$-graded $C^\infty$-scheme in [L-Y1: Sec.\,1.2],
 we have used all what mathematics can offer.
The answer to the above question thus has to come from insights from physics.
 
From the Quantum Field Theory aspect,
  a physics-relevant element (i.e.\;{\it physics superfield}) $\breve{f}$ in $C^\infty(?\widehat{X})$
 must satisfy the following two basic requirements
 \begin{itemize}
  \item[(1)] [{\sl Lorentz scalar}\,]\hspace{1em}
    \parbox[t]{26em}{{\it $\breve{f}$ be a Lorentz scalar},
    i.e.\;$\breve{f}$ be a section of a trivial $P$-module over $X$.}

   \item[(2)] [{\sl Spin-Statistics Theorem}\,]\hspace{1em}
     \parbox[t]{24em}{\it Bosons be commuting and fermions be anticommuting.}
 \end{itemize}
Since we take
  $\theta^\alpha\in {\cal S}^{\prime\,\vee}$   and
  $\bar{\theta}^{\dot{\beta}}\in {\cal S}^{\prime\prime\,\vee}$,
Condition (1) suggests that
  \begin{itemize}
   \item[$\cdot$] {\it
    $\breve{f}_{(\alpha)}\in {\cal S}^\prime$    and
	$\breve{f}_{\dot{\beta}} \in {\cal S}^{\prime\prime}$; and
	$\breve{f}_{(\alpha\dot{1}\dot{2})}$ and $\breve{f}_{(12\dot{\beta})}$
	 lie in some isomorphic copy of ${\cal S}^\prime$, ${\cal S}^{\prime\prime}$\\ respectively.}
  \end{itemize}
Condition (2)  then comes along to imply that
 \begin{itemize}
  \item[$\cdot$] {\it
    The coefficients
     $\breve{f}_{(\alpha)}$, $\breve{f}_{\dot{\beta}} $,
     $\breve{f}_{(\alpha\dot{1}\dot{2})}$, $\breve{f}_{(12\dot{\beta})}$
	 in the $(\theta,\bar{\theta})$-expansion of $\breve{f}$
	 are themselves anticommuting.}	 	
  In other words, they take values on some Grassmann numbers as well.
  This renders $\breve{f}$ {\it purely even}(!).
  Furthermore, as the notation suggests, we require that
   $$
    \frac{\partial}{\partial\theta^\alpha}  \breve{f}_{(\mbox{\tiny $\bullet$})}\;
     =\;\frac{\partial}{\partial\bar{\theta}^{\dot{\beta}}} \breve{f}_{(\mbox{\tiny $\bullet$})}\;
	 =\; 0
   $$
   for all component fields $f_{(\mbox{\tiny $\bullet$})}$ in the expansion.
  In particular, while the fermionic component fields take values on Grassmann numbers,
   {\it they are independent of the existing fermionic coordinates $(\theta,\bar{\theta})$}.
 \end{itemize}
Clearly, such an $\breve{f}$ cannot lie in $C^\infty(\widehat{X})$:
The sought-for $C^\infty(?\widehat{X})$ is definitely larger than $C^\infty(\widehat{X})$ constructed.
%
%----------------------------------------------------------------------------------------------------------------- 
% \bigskip
%
% A construction of the function ring of a superspace from the $C^\infty$-Algebraic Geometry aspect
% was given in [L-Y1: Sec.\ 1] (D(14.1)),
% But for physicists, the naturality from Quantum Field Theory aspect cannot be ignored.
%%-------------===============------------------------------===========--------------------
 %
 \begin{itemize}
  \item[\bf Q.]  {\bf [${\Bbb C}$-${\Bbb Z}/2$-$C^\infty$-AG + QFT=?]}
   {\it Can one take the naturality from Quantum Field Theory aspect also into account
      in the construction of a superspace?}
 \end{itemize}
(Cf.\ [L-Y1: footnote 2] (D(14.1)).)
 
In the next subsection, we will take $\widehat{X}$ as the foundation and the starting point
  --- since it must be there mathematically ---
  and extend over it to answer the above question affirmatively.

\bigskip

\subsection{The $d = 3+1$, $N = 1$ towered superspace-time $\widehat{X}^{\widehat{\boxplus}_l}$}

We now proceed in two steps to answer Question [${\Bbb C}$-${\Bbb Z}/2$-$C^\infty$-AG + QFT=?] in Sec.\,1.2.
For the first step,
 physicists should be aware that when one tries to extend $C^\infty(\widehat{X})$ to another
 complexified ${\Bbb Z}/2$-graded $C^\infty$-ring $C^\infty(?\widehat{X})$ by adding `something',
 not only `something' appears in the new ring but also all those objects from taking $C^\infty$-closure come in as well.
No surprise that not all the elements in $C^\infty(?\widehat{X})$ have physical meaning,
  but their inclusion is a mathematical must.
And, hence, we have to allow them.
This is completed in this subsection.
Then comes the second step: the identification of those elements in $C^\infty(?\widehat{X})$
  that are physical and the justification that they are even and do form a compexified $C^\infty$-ring.
This is completed in the next subsection.

\bigskip

\begin{flushleft}
{\bf The $d=4$, $N=1$ towered superspace $\widehat{X}^{\widehat{\boxplus}_l}$ with $l$ field-theory levels}
\end{flushleft}
Recall the Weyl-spinor sheaves ${\cal S}^\prime$ and ${\cal S}^{\prime\prime}$ and their dual
 ${\cal S}^{\prime\,\vee}$, ${\cal S}^{\prime\prime\,\vee}$
 at the beginning of Sec.\,1.2.

\bigskip

\begin{definition}
{\bf [$d=4$, $N=1$ towered superspace $\widehat{X}^{\widehat{\boxplus}_l}$ with $l$ field-theory levels]}\;
{\rm
 The complexified ${\Bbb Z}/2$-graded $C^\infty$-scheme
   $$
     \widehat{X}^{\widehat{\boxplus}_l}\;
	  :=\; (X, \widehat{\cal O}_X^{\,\widehat{\boxplus}_l})\;
	  :=\; (X, \mbox{$\bigwedge$}_{{\cal O}_X^{\,\Bbb C}}^{\tinybullet}{\cal F})
   $$
 with
   $$
    {\cal F}\;
	    :=\;  ({\cal S}^{\prime\,\vee}_{\coordinates}\oplus {\cal S}^{\prime\prime\,\vee}_{\coordinates})
		        \oplus
				({\cal S}^{\prime\,\vee}_{\parameter}\oplus {\cal S}^{\prime\prime\,\vee}_{\parameter})
			    \oplus
				  \mbox{$\bigoplus$}_{i=1}^l
				     ({\cal S}^\prime_{\field, i}\oplus {\cal S}^{\prime\prime}_{\field, i})
   $$
  is called the {\it $d=4$, $N=1$ towered superspace with $l$ field-theory levels}.
 Here,
   all ${\cal S}^\prime_{\tinybullet}$
    (resp.\ ${\cal S}^{\prime\prime}_{\tinybullet}$,
	               ${\cal S}^{\prime\, \vee}_{\tinybullet}$, ${\cal S}^{\prime\prime\,\vee}_{\tinybullet}$)
   are copies\footnote{Mathematically
                                           this means that
                                          ${\cal S}^{\prime\,\vee}_{\mbox{\tiny\it coordinates }}$ is isomorphic to
										  ${\cal S}^{\prime\,\vee}$ with a fixed isomorphism; and similarly for all other spinor sheaves
										  that appear as direct summands of ${\cal F}$.
										      } % end-footnote
      of ${\cal S}^\prime$
    (resp.\  ${\cal S}^{\prime\prime}$,
	               ${\cal S}^{\prime\,\vee}$, ${\cal S}^{\prime\prime\,\vee}$).
 When $l$ is implicit in the problem, we will denote
  $\widehat{X}^{\widehat{\boxplus}_l}$ simply by $\widehat{X}^{\widehat{\boxplus}}$.
}\end{definition}

\bigskip

Note that, as an ${\cal O}_X^{\,\Bbb C}$-modules,
 \begin{eqnarray*}
  \lefteqn{
   \widehat{\cal O}_X^{\,\widehat{\boxplus}}\;
    =\;   \mbox{$\bigwedge$}_{{\cal O}_X^{\,\Bbb C}}^{\tinybullet}
	             ({\cal S}^{\prime\,\vee}_{\coordinates}
				         \oplus{\cal S}^{\prime\prime\,\vee}_{\coordinates})   }\\
   && \hspace{3em}						
			  \otimes_{{\cal O}_X^{\,\Bbb C}}
              \mbox{$\bigwedge$}_{{\cal O}_X^{\,\Bbb C}}^{\tinybullet}
	             ({\cal S}^{\prime\,\vee}_{\parameter}
				         \oplus{\cal S}^{\prime\prime\,\vee}_{\parameter})
			  \otimes_{{\cal O}_X^{\,\Bbb C}}
			   \mbox{$\bigotimes_{{\cal O}_X^{\,\Bbb C}}$}_{i=1}^l
                 \mbox{$\bigwedge$}_{{\cal O}_X^{\,\Bbb C}}^{\tinybullet}
	             ({\cal S}^\prime_{\field, i}\oplus{\cal S}^{\prime\prime}_{\field, i})\,.
 \end{eqnarray*}

\bigskip

\begin{definition-explanation} {\bf [levels of $\widehat{X}^{\widehat{\boxplus}_l}$]}\; {\rm
 (1)
 The $d=4$, $N=1$ superspace $\widehat{X}$ constructed in [L-Y1: Definition 1.2.5] is given by
  $$
   \widehat{X}\;
     :=\;  (X, \widehat{\cal O}_X
	                    := \mbox{$\bigwedge$}^{\tinybullet}
                                  ({\cal S}^{\prime\,\vee}_{\coordinates}
								         \oplus {\cal S}^{\prime\prime\,\vee}_{\coordinates}))
  $$
  in the current context; (cf.\ [L-Y1: footnote 4] (D(14.1))).
 Any collection of complex conjugate generating constant global sections,
  denoted collectively as
   $(\theta, \bar{\theta})
      :=(\theta^1, \theta^2; \bar{\theta}^{\dot{1}}, \bar{\theta}^{\dot{2}})$,
  of ${\cal S}^{\prime\,\vee}_{\coordinates}\oplus {\cal S}^{\prime\prime\,\vee}_{\coordinates}$
  serves as the {\it fermionic coordinate functions} on $\widehat{X}$,
  ([L-Y1: Definition 1.2.5] (D(14.1))).
 This explains the subscript `$\coordinates$'  in
   ${\cal S}^{\prime\,\vee}_{\coordinates}
       \oplus {\cal S}^{\prime\prime\,\vee}_{\coordinates}$.
 $\widehat{X}$ is called the {\it fundamental level} or
    the {\it ground level} of $\widehat{X}^{\widehat{\boxplus}_l}$.
	
 (2)
  When physicists working on supersymmetry introduce `Grassmann number' parameter
    $(\eta,\bar{\eta}):= (\eta^1, \eta^2, \bar{\eta}^{\dot{1}}, \bar{\eta}^{\dot{2}})$
    in their computation,
   these `Grassmann number' parameter are meant to be independent of anything else.
  Thus, they should be thought of as constant sections
    of another copy of ${\cal S}^{\prime\,\vee}\oplus {\cal S}^{\prime\prime\,\vee}$.\footnote{
	                                                                   Algebraic-Geometry-oriented readers may find something uncomfortable here.
                                                                       In Algebraic Geometry one usually begins with a choice of
																		 a ground field (in the sense of algebra) $\Bbbk$,
																		 for example, $\Bbbk={\Bbb C}$
																		    when studying Calabi-Yau spaces in string theory.
                                                                        Naively, here since we are dealing with
																		   complexified ${\Bbb Z}/2$-graded geometry,
																		   one would choose
																		   $\Bbbk={\Bbb C}[\eta,\bar{\eta}]^{\tinyanticommuting}$
																		   to begin with and all the rings are $\Bbbk$-algebras.
                                                                        However, while this has nothing wrong mathematically,
                                                                           it is misleading from the perspective of the physics side.	
                                                                        That has to do with how such Grassmann numbers/parameters
																		   are used in physics.
																		Most often they are to be paired with supersymmetry generators.
                                                                        In such case, supersymmetry generators are realized as a spinor representation
																		 of Lorentz group;  thus these parameters have to be realized as dual spinors.
																		It is for this reason we choose $(\eta,\bar{eta})$ from constant sections
																		  of (a copy of )
																		  ${\cal S}^{\prime\,\vee}\oplus {\cal S}^{\prime\prime\,\vee}$.
																		In particular, though one intends to think of
																		  ${\Bbb C}[\eta,\bar{\eta}]^{\tinyanticommuting}$
																		  as the ground ring in the problem, these parameters themselves are not
																		  Lorentz scalars.
																		 As we will see, nor is the function ring
																	$C^\infty(\widehat{X}^{\widehat{\boxplus}})^{\tinyphysics}$
																		   of the physics sector a
																		   ${\Bbb C}[\eta,\bar{\eta}]^{\tinyanticommuting}$-algebra
																		   since the former is purely even.
																		
																		For a topologically twisted theory,
																		   the twisted supersymmetry generators become Lorentz scalars.
																		 To match with this, the Grassmann numbers/parameters
																		    become Lorentz scalars as well.
																		  Only in such cases,
																		     the mathematical notion of ${\Bbb C}[\eta,\bar{\eta}]$
																			    as the ground ring    and
																			 physicists' use of $(\eta,\bar{\eta})$ match well.
																			
                                                                           This is why we take $\widehat{X}$ as the ground level
																		     and  the Grassmann number/parameter level comes next over it,
																			 not the other round.
	                                                                                 } % end-footnote
  This explains the subscript `$\parameter$'  in
   ${\cal S}^{\prime\,\vee}_{\parameter}
       \oplus {\cal S}^{\prime\prime\,\vee}_{\parameter}$.
 We say that
    ${\cal S}^{\prime\,\vee}_{\parameter}
          \oplus {\cal S}^{\prime\prime\,\vee}_{\parameter}$
	 contributes to the {\it Grassmann parameter level} of $\widehat{X}^{\widehat{\boxplus}_l}$.

 (3)	
 The fermionic coefficients of a physical superfield are themselves anticommuting and must correspond to sections
   of a copy of spinor sheaves.
 Physical superfields that  are associated to different types/classes/generations of particles should be thought of
  as sections of the Grassmann algebra generated by different copies of spinor sheaves.
 This explains the subscript `$\field$'  in
   ${\cal S}^\prime_{\field, i}
       \oplus {\cal S}^{\prime\prime}_{\field, i}$.
 We say that
    ${\cal S}^\prime_{\field, i}
          \oplus {\cal S}^{\prime\prime}_{\field, i}$
	 contributes to the {\it $i$-th field-theory level} of $\widehat{X}^{\widehat{\boxplus}_l}$.			
 The total level number $l$ is the number of distinct types of (particle, its superpartner)
   in a $d=3+1$, $N=1$ supersymmetric field theory one wants to construct.
 It can be different theory by theory.
}\end{definition-explanation}

\bigskip
	
By construction, there is a commutative diagram of
  ${\Bbb Z}/2$-grading-preserving ${\cal O}_X^{\,\Bbb C}$-algebra-homomorphisms
  $$
    \xymatrix{
    & \widehat{\cal O}_X^{\,\widehat{\boxplus}_l} \ar@{->>}[ld]_-{\iota^\sharp}  \\
    \widehat{\cal O}_X \ar@{=}[rr]
	  &&  \widehat{\cal O}_X \rule{0ex}{1.2em} \ar@{_{(}->}[lu]_-{\pi^\sharp}
   }
  $$
  that respect the partial $C^\infty$-ring structures.
 This gives a commutative diagram of morphisms of complexified super $C^\infty$-schemes
  $$
   \xymatrix{
    &&  \widehat{X}^{\widehat{\boxplus}_l} \ar@{->>}[rd]^-{\hat{\pi}}  \\
    & \;\widehat{X}\; \ar@{^{(}->}[ru]^-{\hat{\iota}} \ar@{=}[rr]
	    && \widehat{X}   &.
    }
  $$
 
The built-in flat connection on the ${\cal S}^\prime$, ${\cal S}^{\prime\prime}$ and their dual
  induces a flat connection on $\widehat{X}^{\widehat{\boxplus}_l}$ over $\widehat{X}$.
This defines a canonical inclusion
 $$
   \Der_{\Bbb C}(\widehat{X}) \;
   \hookrightarrow\; \Der_{\Bbb C}(\widehat{X}^{\widehat{\boxplus}_l})\,.
 $$
And any flow on $\widehat{X}$ lifts canonically to a  flow on $\widehat{X}^{\widehat{\boxplus}_l}$.
 
\bigskip

\begin{definition}
{\bf [derivation on $\widehat{X}$ applied to $C^\infty(\widehat{X}^{\widehat{\boxplus}_l})$]}\;
{\rm
 Let
   $\xi\in \Der_{\Bbb C}(\widehat{X})$  be a derivation on $\widehat{X}$ and
   $\breve{f}\in C^\infty(\widehat{X}^{\widehat{\boxplus}_l})$.
 Then we define $\xi\breve{f}\in C^\infty(\widehat{X}^{\widehat{\boxplus}_l})$
   via the canonical inclusion $\Der_{\Bbb C}(\widehat{X})
                                   \hookrightarrow \Der_{\Bbb C}(\widehat{X}^{\widehat{\boxplus}_l})$.
}\end{definition}

\bigskip

\begin{definition} {\bf [complex conjugation vs.\;twisted complex conjugation]}\; {\rm
 The complex conjugation
    $ \bar{\hspace{1.2ex}}:{\cal O}_X^{\,\Bbb C}\rightarrow {\cal O}_X^{\,\Bbb C}$ and
		${\cal S}^\prime\rightarrow {\cal S}^{\prime\prime}\,,\;
	       {\cal S}^{\prime\prime}\rightarrow {\cal S}^\prime$,
    of Weyl spinors
   extends canonically to a {\it complex conjugation}
   $$
     \bar{\hspace{1.2ex}}\;:\; \widehat{\cal O}_X^{\widehat{\boxplus}_l}\;
	    \longrightarrow\; \widehat{\cal O}_X^{\widehat{\boxplus}_l}\,,
   $$
   by setting
   \begin{itemize}	
	\item[(1)]
	 $\overline{\breve{f}+\breve{g}}\;=\; \bar{\breve{f}} + \bar{\breve{g}}$\,;
	
    \item[(2)]  $\overline{\breve{f}\breve{g}}\;=\; \bar{\breve{g}}\bar{\breve{f}}$\,.
   \end{itemize}
  and a	{\it twisted complex conjugation}
  $$
    ^\dag\;:\;  \widehat{\cal O}_X^{\widehat{\boxplus}_l}\;
	                    \longrightarrow\; \widehat{\cal O}_X^{\widehat{\boxplus}_l}\,,
  $$
  by setting
   \begin{itemize}
    \item[(0$^\prime$)]
	 $^\dag = \bar{\hspace{1.2em}}:
	       {\cal O}_X^{\,\Bbb C}\rightarrow {\cal O}_X^{\,\Bbb C}\,;
	      {\cal S}^\prime \rightarrow {\cal S}^{\prime\prime}\,,\;
	      {\cal S}^{\prime\prime}\rightarrow {\cal S}^\prime$\,;
	
	\item[(1$^\prime$)]
	 $(\breve{f}+\breve{g})^\dag\;=\; \breve{f}^\dag + \breve{g}^\dag$\,;
	
    \item[(2$^\prime$)]  $(\breve{f}\breve{g})^\dag\;=\; \breve{g}^\dag \breve{f}^\dag$\,.
   \end{itemize}
 Caution that the order of multiplication is preserved under the complex conjugate $\bar{\hspace{1.2ex}}$
   but is reversed under the twisted complex conjugate $^\dag$.
}\end{definition}
 
\medskip

\begin{definition} {\bf [standard coordinate functions on $\widehat{X}^{\widehat{\boxplus}_l}$]}\;
{\rm
 The standard coordinate functions $(x,\theta,\bar{\theta})$ on $\widehat{X}$
 extends uniquely to a tuple of coordinate functions
 $$
  (x^\mu, \theta^{\alpha},
      \bar{\theta}^{\dot{\beta}};
	  \eta^{\alpha^\prime}, \bar{\eta}^{\dot{\beta}^\prime};
	  \vartheta^1_{\gamma_1}, \bar{\vartheta}^1_{\dot{\delta}_1} ;\,
	   \cdots\,;
	  \vartheta^l_{\gamma_l}, \bar{\vartheta}^l_{\dot{\delta}_l}  )\;
	  =:\;   (x, \theta, \bar{\theta}, \eta, \bar{\eta},
	               \vartheta, \bar{\vartheta})\;
 $$
 on $\widehat{X}^{\widehat{\boxplus}_l}$ via
    the $\varepsilon$-tensor
     $\varepsilon:
 	    {\cal S}^\prime\otimes_{{\cal O}_X^{\,\Bbb C}}{\cal S}^\prime
 		      \rightarrow {\cal O}_X^{\,\Bbb C} $,\,
        ${\cal S}^{\prime\prime}\otimes_{{\cal O}_X^{\,\Bbb C}}{\cal S}^{\prime\prime}
 		      \rightarrow {\cal O}_X^{\,\Bbb C} $, 			
   and
   the fixed isomorphisms
  $ {\cal S}^\prime_{\tinybullet}\simeq {\cal S}^\prime$,
  ${\cal S}^{\prime\prime}_{\tinybullet}\simeq {\cal S}^{\prime\prime}$.

 Explicitly, regard
  ${\cal S}^{\prime\,\vee}_{\parameter}$ as a copy of ${\cal S}^{\prime\,\vee}_{\coordinates}$,
  ${\cal S}^{\prime\prime\,\vee}_{\parameter}$
      as a copy of ${\cal S}^{\prime\prime\,\vee}_{\coordinates}$,
  ${\cal S}^\prime_{\field, i}$
    as a copy of $({\cal S}^{\prime\,\vee}_{\coordinates})^\vee = {\cal S}^\prime_{\coordinates}$, and
  ${\cal S}^{\prime\prime}_{\field, i}$
   as a copy of $({\cal S}^{\prime\prime\,\vee}_{\coordinates})^\vee
      = {\cal S}^{\prime\prime}_{\coordinates}$
  under the fixed isomorphisms.
 Then,
   $(\eta^{\alpha^\prime},\bar{\eta}^{\dot{\beta}^\prime})
     = (\theta^{\alpha^\prime},\bar{\theta}^{\dot{\beta}^\prime})$ and
   $(\vartheta^i_{\alpha_i},\bar{\vartheta}^i_{\dot{\beta}_i})
       = (\theta_{\alpha_i}, \bar{\theta}_{\dot{\beta}_i})$
    for all $i$, 	 	
 where $\theta_\alpha = \sum_{\gamma}\varepsilon_{\alpha\gamma}\theta^\gamma$,  	
	         $\theta_{\dot{\beta}}
			   = \sum_{\dot{\delta}}\varepsilon_{\dot{\beta}\dot{\delta}}\bar{\theta}^{\dot{\delta}}$\,.  
			
 We shall call $(x, \theta, \bar{\theta}, \eta, \bar{\eta}, \vartheta, \bar{\vartheta})$		
   the {\it standard coordinate functions} on $\widehat{X}^{\widehat{\boxplus}_l}$.
}\end{definition}	

\bigskip

In terms of this,
 $$
   C^\infty(\widehat{X}^{\widehat{\boxplus}_l})\;
     =\;  C^\infty(X)^{\Bbb C}
	        [\theta, \bar{\theta}, \eta, \bar{\eta}, \vartheta, \bar{\vartheta}]^{\anticommuting}
 $$
and an $\breve{f}\in C^\infty(\widehat{X}^{\widehat{\boxplus}_l})$
 has a $(\theta,\bar{\theta})$-expansion
   \begin{eqnarray*}
     \breve{f} & = &  \breve{f}_{(0)}\,
	        +\, \sum_{\alpha}\theta^\alpha \breve{f}_{(\alpha)}\,
			+\, \sum_{\dot{\beta}} \bar{\theta}^{\dot{\beta}} \breve{f}_{(\dot{\beta})}\,
			+\, \theta^1\theta^2  \breve{f}_{(12)}\,
			+\, \sum_{\alpha,\dot{\beta}}
			       \theta^\alpha\bar{\theta}^{\dot{\beta}} \breve{f}_{(\alpha\dot{\beta})}\,  \\
        && 				
            +\, \bar{\theta}^{\dot{1}}\bar{\theta}^{\dot{2}}  \breve{f}_{(\dot{1}\dot{2})}\,
			+\, \sum_{\dot{\beta}}
			        \theta^1\theta^2\bar{\theta}^{\dot{\beta}}  \breve{f}_{(12\dot{\beta})}\,
			+\, \sum_\alpha
			        \theta^\alpha\theta^{\dot{1}}\theta^{\dot{2}}  \breve{f}_{(\alpha\dot{1}\dot{2})}\,
		    +\, \theta^1\theta^2\theta^{\dot{1}}\theta^{\dot{2}} \breve{f}_{(12\dot{1}\dot{2})}			
   \end{eqnarray*}
 with coefficients
   $\breve{f}_{(\tinybullet)}\in
       C^\infty(X)^{\Bbb C}[\eta,\bar{\eta},\vartheta,\bar{\vartheta}]^{\anticommuting}$.

\bigskip

\begin{flushleft}
{\bf The chiral, the antichiral, and the self-twisted-conjugate sector of $\widehat{X}^{\widehat{\boxplus}_l}$}
\end{flushleft}
Recall from [L-Y1: Sec.\,1.4] (D(14.1))
 the standard infinitesimal supersymmetry generators
 $$
   Q_{\alpha}\;
    =\;  \frac{\partial}{\partial \theta^{\alpha}}
            -\, \sqrt{-1}\,\sum_{\mu=0}^3 \sum_{\dot{\beta}=\dot{1}}^{\dot{2}}
			          \sigma^{\mu}_{\alpha\dot{\beta}}\bar{\theta}^{\dot{\beta}}
					   \frac{\partial}{\partial x^{\mu}}
    \hspace{2em}\mbox{and}\hspace{2em}
  \bar{Q}_{\dot{\beta}}	\;
    =\; -\, \frac{\partial}{\rule{0ex}{.8em}\partial \bar{\theta}^{\dot{\beta}}}\,
	      +\, \sqrt{-1}\sum_{\mu=0}^3 \sum_{\alpha=1}^2
		              \theta^{\alpha} \sigma^\mu_{\alpha\dot{\beta}}\frac{\partial}{\partial x^{\mu}}
 $$
 and derivations that are invariant under the flow that generate supersymmetries
 $$
   e_{\alpha^{\prime}}\;
    =\;  \frac{\partial}{\partial \theta^{\alpha}}
            +\, \sqrt{-1}\,\sum_{\mu=0}^3 \sum_{\dot{\beta}=\dot{1}}^{\dot{2}}
			          \sigma^{\mu}_{\alpha\dot{\beta}}\bar{\theta}^{\dot{\beta}}
					   \frac{\partial}{\partial x^{\mu}}
    \hspace{2em}\mbox{and}\hspace{2em}
  e_{\beta^{\prime\prime}}	\;
    =\; -\, \frac{\partial}{\rule{0ex}{.8em}\partial \bar{\theta}^{\dot{\beta}}}\,
	      -\, \sqrt{-1}\sum_{\mu=0}^3 \sum_{\alpha=1}^2
		              \theta^{\alpha} \sigma^\mu_{\alpha\dot{\beta}}\frac{\partial}{\partial x^{\mu}}\,.
 $$
Since
 $$
   \xi \eta^{\alpha^\prime}\;=\; \xi \bar{\eta}^{\dot{\beta}^\prime}\;
    =\; \xi\vartheta_{\gamma}\;=\; \xi\bar{\vartheta}_{\dot{\delta}}\;=\; 0\,,
 $$
 for all $\xi\in \Der_{\Bbb C}(\widehat{X})$,
the notion of
 {\it chiral functions}, {\it antichiral functions},  properties of chiral function ring, and antichiral function ring
 studied in [L-Y1: Sec.\,1.4] (D(14.1)) for $C^\infty(\widehat{X})$
 can be generalized immediately to parallel notions/objects
 for $C^\infty(\widehat{X}^{\widehat{\boxplus}_l})$
 with $f_{(\tinybullet)}$ ibidem replaced by $\breve{f}_{(\tinybullet)}$
 in the $(\theta,\bar{\theta})$-expansion
 for $\breve{f}\in C^\infty(\widehat{X}^{\widehat{\boxplus}_l})$.
 
\bigskip

\begin{definition-lemma}
{\bf [chiral sector $\widehat{X}^{\widehat{\boxplus}_l, \scriptsizech}$
           of $\widehat{X}^{\widehat{\boxplus}_l}$]}\; {\rm
 (1)
 $\; \breve{f}\in C^{\infty}(\widehat{X}^{\widehat{\boxplus}_l})$ is called {\it chiral}\,
       if $\;e_{1^{\prime\prime}}\breve{f}= e_{2^{\prime\prime}}\breve{f} =0\,$.
 The set of chiral functions on $\widehat{X}^{\widehat{\boxplus}_l}$ is a ${\Bbb C}$-subalgebra of
   $C^{\infty}(\widehat{X}^{\widehat{\boxplus}_l})$,
 called the {\it chiral function-ring } of $\widehat{X}^{\widehat{\boxplus}_l}$,
  denoted by $C^{\infty}(\widehat{X}^{\widehat{\boxplus}_l})^{\scriptsizech}$.
  
 (2)
 Replacing
   $\widehat{X}$ by $\widehat{U}$ for $U\subset X$ open   and
   $e_{1^{\prime\prime}}$, $e_{2^{\prime\prime}}$
     by   $e_{1^{\prime\prime}}|_{\widehat{U}}$, $e_{2^{\prime\prime}}|_{\widehat{U}}$
     in Item (1),
 one obtains the {\it sheaf  of chiral functions}
  $\widehat{\cal O}_X^{\,\widehat{\boxplus}_l,\scriptsizech}
    \subset \widehat{\cal O}_X^{\,\widehat{\boxplus}_l}$,
  also called the {\it chiral structure sheaf} of $\widehat{X}^{\widehat{\boxplus}_l}$.
 Denote the locally-ringed space $(X, {\cal O}_X^{\,\widehat{\boxplus}_l,\scriptsizech})$
   by $\widehat{X}^{\widehat{\boxplus}_l, \scriptsizech}$.
  
 (3) {\it
   The $C^\infty$-hull of $C^\infty(\widehat{X}^{\widehat{\boxplus}_l})$
    restricts to the $C^\infty$-hull of $C^\infty(\widehat{X}^{\widehat{\boxplus}_l})^{\scriptsizech}$
	and similarly for the localized version.
  This renders $\widehat{X}^{\widehat{\boxplus}_l,\scriptsizech}$	
    a complexified ${\Bbb Z}/2$-graded $C^\infty$-scheme.
	 }

 (4)	
 Similar to Definition~1.3.5
             % Definition [standard coordinate functions on $\widehat{X}^{\widehat{\boxplus}_l}$]
  under the fixed isomorphisms of spinor sheaves and the $\varepsilon$-tensor,
  the standard chiral coordinate functions $(x^\prime,\theta,\bar{\theta})$ on $\widehat{X}$,
   where
   $$
     x^{\prime\mu} \;
	 =\; x^\mu
	       + \sqrt{-1}\sum_{\alpha,\dot{\beta}}
		        \theta^\alpha \sigma^\mu_{\alpha\dot{\beta}}\bar{\theta}^{\dot{\beta}}\,,
   $$
  extends canonically to the {\it standard chiral coordinate functions}
			   $(x^\prime, \theta, \bar{\theta}, \eta, \bar{\eta},\vartheta,\bar{\vartheta})$
	 on $\widehat{X}^{\widehat{\boxplus}_l}$.
 {\it In terms of the standard chiral coordinate functions,
    an $\breve{f}\in C^\infty(\widehat{X}^{\widehat{\boxplus}_l})^{\scriptsizech}$
	 can be expressed as
    \begin{eqnarray*}
     \breve{f}
      & =  &  \breve{f}^\prime(x^\prime, \theta, \bar{\theta}, \eta,\bar{\eta}, \vartheta, \bar{\vartheta}) \\
      & =
	    &   \breve{f}^\prime_{(0)}(x^\prime, \eta, \bar{\eta}, \vartheta, \bar{\vartheta})\,
	        +\, \sum_\alpha    \theta^\alpha\,			
			   \breve{f}^\prime_{(\alpha)}(x^\prime, \eta,\bar{\eta}, \vartheta, \bar{\vartheta})
			+\, \theta^1\theta^2\,
			       \breve{f}^\prime_{(12)}(x^\prime, \eta,\bar{\eta}, \vartheta, \bar{\vartheta})\,,
    \end{eqnarray*}
     where
     $f^\prime_{(\tinybullet)}(x^\prime, \eta,\bar{\eta}, \vartheta, \bar{\vartheta})$
       are polynomials in
	     $\eta^\alpha$'s, $\bar{\eta}^{\dot{\beta}}$'s, $\vartheta^i_{\alpha_i}$'s ,
		 $\bar{\vartheta}^i_{\dot{\beta}_i}$'s
	   with coefficients smooth functions in $x^{\prime\mu}$'s.
    }			
}\end{definition-lemma}

\medskip

\begin{definition-lemma}
{\bf [antichiral sector $\widehat{X}^{\widehat{\boxplus}_l, \scriptsizeach}$
           of $\widehat{X}^{\widehat{\boxplus}_l}$]}\; {\rm
 (1)
 $\; \breve{f}\in C^{\infty}(\widehat{X}^{\widehat{\boxplus}_l})$ is called {\it antichiral}\,
       if $\;e_{1^\prime}\breve{f}= e_{2^\prime}\breve{f} =0\,$.
 The set of antichiral functions on $\widehat{X}^{\widehat{\boxplus}_l}$ is a ${\Bbb C}$-subalgebra of
   $C^{\infty}(\widehat{X}^{\widehat{\boxplus}_l})$,
 called the {\it antichiral function-ring } of $\widehat{X}^{\widehat{\boxplus}_l}$,
  denoted by $C^{\infty}(\widehat{X}^{\widehat{\boxplus}_l})^{\scriptsizeach}$.
  
 (2)
 Replacing
   $\widehat{X}$ by $\widehat{U}$ for $U\subset X$ open   and
   $e_{1^\prime}$, $e_{2^\prime}$
     by   $e_{1^\prime}|_{\widehat{U}}$, $e_{2^\prime}|_{\widehat{U}}$
     in Item (1),
 one obtains the {\it sheaf  of antichiral functions}
  $\widehat{\cal O}_X^{\,\widehat{\boxplus}_l,\scriptsizeach}
    \subset \widehat{\cal O}_X^{\,\widehat{\boxplus}_l}$,
  also called the {\it antichiral structure sheaf} of $\widehat{X}^{\widehat{\boxplus}_l}$.
 Denote the locally-ringed space $(X, {\cal O}_X^{\,\widehat{\boxplus}_l,\scriptsizeach})$
   by $\widehat{X}^{\widehat{\boxplus}_l, \scriptsizeach}$.
  
 (3) {\it
   The $C^\infty$-hull of $C^\infty(\widehat{X}^{\widehat{\boxplus}_l})$
    restricts to the $C^\infty$-hull of $C^\infty(\widehat{X}^{\widehat{\boxplus}_l})^{\scriptsizeach}$ 
	and similarly for the localized version.
  This renders $\widehat{X}^{\widehat{\boxplus}_l,\scriptsizeach}$	
    a complexified ${\Bbb Z}/2$-graded $C^\infty$-scheme.
	 }

 (4)	
 Similar to Definition~1.3.5
             % Definition [standard coordinate functions on $\widehat{X}^{\widehat{\boxplus}_l}$]
  under the fixed isomorphisms of spinor sheaves and the $\varepsilon$-tensor,
  the standard antichiral coordinate functions $(x^{\prime\prime},\theta,\bar{\theta})$ on $\widehat{X}$,
   where
   $$
     x^{\prime\prime\mu} \;
	 =\; x^\mu
	       - \sqrt{-1}\sum_{\alpha,\dot{\beta}}
		        \theta^\alpha \sigma^\mu_{\alpha\dot{\beta}}\bar{\theta}^{\dot{\beta}}\,,
   $$
  extends canonically to the {\it standard antichiral coordinate functions}
    $(x^{\prime\prime}, \theta, \bar{\theta}, \eta, \bar{\eta},\vartheta,\bar{\vartheta})$
	on $\widehat{X}^{\widehat{\boxplus}_l}$.
 {\it In terms of the standard antichiral coordinate functions,
    an $\breve{f}\in C^\infty(\widehat{X}^{\widehat{\boxplus}_l})^{\scriptsizeach}$
	 can be expressed as
    \begin{eqnarray*}
     \breve{f}
      & =  &  \breve{f}^{\prime\prime}
	                (x^{\prime\prime}, \theta, \bar{\theta}, \eta,\bar{\eta}, \vartheta, \bar{\vartheta}) \\
      & =
	    &   \breve{f}^{\prime\prime}_{(0)}
		        (x^{\prime\prime}, \eta, \bar{\eta}, \vartheta, \bar{\vartheta})\,
	        +\, \sum_{\dot{\beta}}    \bar{\theta}^{\dot{\beta}}\,			
			   \breve{f}^{\prime\prime}_{(\dot{\beta})}
			     (x^{\prime\prime}, \eta,\bar{\eta}, \vartheta, \bar{\vartheta})
			+\, \bar{\theta}^{\dot{1}}\bar{\theta}^{\dot{2}}\,
			       \breve{f}^{\prime\prime}_{(\dot{1}\dot{2})}
				    (x^{\prime\prime}, \eta,\bar{\eta}, \vartheta, \bar{\vartheta})\,,
    \end{eqnarray*}
     where
     $f^{\prime\prime}_{(\tinybullet)}(x^{\prime\prime}, \eta,\bar{\eta}, \vartheta, \bar{\vartheta})$
       are polynomials in
	     $\eta^\alpha$'s, $\bar{\eta}^{\dot{\beta}}$'s, $\vartheta^i_{\alpha_i}$'s,
		 $\bar{\vartheta}^i_{\dot{\beta}_i}$'s
	   with coefficients smooth functions in $x^{\prime\prime\mu}$'s.
	 }			
}\end{definition-lemma}
 
\bigskip

\noindent
We refer readers to ibidem for details.

By construction, one has the following diagram of inclusions of complexified ${\Bbb Z}/2$-graded $C^\infty$-rings
 $$
  \xymatrix{
     & C^\infty(\widehat{X}^{\widehat{\boxplus}_l}) &  \\
   C^\infty(\widehat{X}^{\widehat{\boxplus}_l})^{\scriptsizech} \ar@{^{(}->}[ur]
     & C^\infty(\widehat{X}) \ar@{^{(}->}[u]
	 &  C^\infty(\widehat{X}^{\widehat{\boxplus}_l})^{\scriptsizeach}
                 \ar@{_{(}->}[ul]	 \\
   C^\infty(\widehat{X})^{\scriptsizech}\rule{0ex}{1.2em}
          \ar@{^{(}->}[u]  \ar@{^{(}->}[ur]
     &&  C^\infty(\widehat{X})^{\scriptsizeach}\rule{0ex}{1.2em}
       	          \ar@{^{(}->}[u]       \ar@{_{(}->}[ul]\,. 	
	}
 $$
Which gives rise to
 the following diagram of dominant morphisms of complexified ${\Bbb Z}/2$-graded $C^\infty$-schemes
 $$
   \xymatrix{
     & \widehat{X}^{\widehat{\boxplus}_l}
	       \ar@{->>}[dl]     \ar@{->>}[d] \ar@{->>}[dr]     &  \\
    \widehat{X}^{\widehat{\boxplus}_l, \scriptsizech} \ar@{->>}[d]
	  & \widehat{X}   \ar@{->>}[dl]     \ar@{->>}[dr]
	  &  \widehat{X}^{\widehat{\boxplus}_l, \scriptsizeach} \ar@{->>}[d]	 \\
    \widehat{X}^{\scriptsizech}
     && \;\;\;\;\widehat{X}^{\scriptsizeach}\,.
	}
 $$

The following sector of $\widehat{X}^{\widehat{\boxplus}_l}$ is introduced
 for the purpose of studying supersymmetric $U(1)$ gauge theory on $X$ in Sec.\,3.
In some sense, it is the ``{\it real sector}"  of $\widehat{X}^{\widehat{\boxplus}_l}$:
 
\bigskip

\begin{definition} {\bf [self-twisted-conjugate sector of $\widehat{X}^{\widehat{\boxplus}_l}$]}\;
{\rm
 An element $\breve{f}\in C^\infty(\widehat{X}^{\widehat{\boxplus}_l})$ is
  called {\it self-twisted-conjugate } if $\breve{f}^\dag=\breve{f}$.
 The set of all self-twisted-conjugate elements in $ C^\infty(\widehat{X}^{\widehat{\boxplus}_l})$
   is denoted by $C^\infty(\widehat{X}^{\widehat{\boxplus}_l})^{\stc}$.
 Note that this is only a $C^\infty(X)$-module, not a $C^\infty(X)$-algebra.
}\end{definition}

\bigskip

\subsection{The physics sector $X^{\physics}$ of $\widehat{X}^{\widehat{\boxplus}_l}$}

We now proceed to identify the physical elements in $C^\infty(\widehat{X}^{\widehat{\boxplus}_l})$
 and construct the physics sector $X^{\physics}$ under $\widehat{X}^{\widehat{\boxplus}_l}$.

\bigskip

\begin{flushleft}	
{\bf The guide from [W-B] of Wess \& Bagger to identify the physical elements
           in each field-theory level of $C^\infty(\widehat{X}^{\widehat{\boxplus}_l})$}
\end{flushleft}
Assume that $\widehat{X}^{\widehat{\boxplus}}$ has only one field-theory level (i.e.\;$l=1$)  and
 suppress the parameter level.
We begin with the following question:\footnote{Notations
                                                                             and conventions here follow [W-B] of Wess \& Bagger
																			 as much as we can for the convenience of the illuminations.
																			  }  % end-footnote
 \begin{itemize}
  \item[{\bf Q.}]
  {\it What should the chiral functions in the sought-for physics sector be?\\
          Similarly, for antichiral functions?}
 \end{itemize}
 The answer, if taken from [W-B: Chap.\,V, Eq.\,(5.3) ] Wess \& Bagger, would be
   $$
     \Phi =A(y)+\sqrt{2} \theta\psi(y) +\theta\theta F(y)\,,
   $$
  where $y=x+ \sqrt{-1}\theta\sigma\bar{\theta}$ is the chiral coordinate.
 In this expression,
  $A$ and $F$  are {\it scalar} functions (and hence are commuting)
    while $\psi$ is a two-component spinor (whose components are thus anticommuting
	--- which renders $\theta\psi$ and hence $\Phi$ even).
 So far so good, until one starts to think deeper.
 Chiral functions are required to form a ring.
 In particular, the multiplication of two chiral functions should be also chiral:
 (cf.\;[W-B: Chap.\,V, Eq.\,(5.7)])
  {\small
  \begin{eqnarray*}
    \tilde{\Phi}\;\;:=\;\;\Phi_1 \Phi_2
	 & = & A_1(y)A_2(y)
          + \sqrt{2}\theta\,
		       \mbox{\large $($}
			      \phi_1(y)A_2(y) + A_1(y)\psi_2(y)
		       \mbox{\large $)$}  \\
     &&  \hspace{2em}		
		  + \theta\theta
		       \mbox{\Large $($}
			    A_1(y)F_2(y)+ A_2(y)F_1(y) -\psi_1(y)\psi_2(y)
			   \mbox{\Large $)$}   \\
     &\:=:&
       	\tilde{A}(y)+\sqrt{2} \theta \tilde{\psi}(y) +\theta\theta \tilde{F}(y)\,.
  \end{eqnarray*}}When   % end-small
 physicists say that $F$	is a scalar, it is referred to the fact that $F$ is a Lorentz scalar
   (i.e.\; sections of the associated bundle from the trivial representation of the principal Lorentz-group bundle).
 This says nothing about the {\it nilpotency}\footnote{\makebox[9em][l]{\it Note for physicists}
                                                                        An element $r\ne 0$ of a (either commutative or noncommutative) ring $R$
    																	  is called {\it nilpotent}
																		 if $r^{l+1}=0$ for some $l\ge 1$.
																		The minimal such $l$ is called the {\it nilpotency} of $r$.
																		In particular, any odd element of a ${\Bbb Z}/2$-graded ring is nilpotent.
																	  }  % end of footnote
	of $F$.
 The coefficient $\tilde{F}$ in this product reveals something peculiar:
   \begin{itemize}
    \item[\LARGE $\cdot$]
	 The summand $\psi_1(y)\psi_2(y)$ in the coefficient $\tilde{F}$ of the product $\Phi_1\Phi_2$
	  is nilpotent.
   \end{itemize}
 If we want to include $\Phi_1\Phi_2$ in the chiral ring, then
  we must allow $\tilde{F}$ to have nilpotent summand.
 Now, {\it if $F_1$ and $F_2$ themselves are not nilpotent, then
  this poses an issue}:
  \begin{itemize}
   \item[\LARGE $\cdot$]  {\it
    $\tilde{F}$ now has a non-nilpotent summand $A_1(y)F_2(y)+ A_2(y)F_1(y)$
 	  and a nilpotent summand $-\,\psi_1(y)\psi_2(y)$.
    These two summands are not like terms and
	  hence should be treated as different degrees of freedom.}\footnote{\makebox[9em][l]{\it Note for physicists}
	                                                                                 This is completely analogous to the statement
																					   that the expression $f+\sqrt{-1}g$, $f, g\in C^\infty({\Bbb R}^n)$,
                           															        has two degrees of freedom
																					  or that the ring ${\Bbb C}[t]/(t^{l+1})$ is $(l+1)$-dimensional
																					        as a ${\Bbb C}$-vector space.
																									  }  % end-footnote
  \end{itemize}
 But the chiral multiplet from the representation of $d=3+1$, $N=1$ supersymmetry algebra requires that
   the coefficient $F$  of $\theta\theta$ contribute as same degree of freedom as the total-$\theta$-degree-zero term $A$.
 Thus, the only way to avoid such a contradiction is demand that
   \begin{itemize}
    \item[\LARGE $\cdot$]
   {\it $F$ must be a nilpotent Lorentz scalar.}
   \end{itemize}
   
  It is in this way, one deduces that physical chiral functions must be of the form (in our notation)
   $$
    \breve{f}\;
	  =\; f^{(0)}_{(0)}(x^\prime)
	       + \sum_\alpha \theta^\alpha(\vartheta_\alpha f^{(\alpha)}_{(\alpha)}(x^\prime))
		   + \theta^1\theta^2 (\vartheta_1\vartheta_2 f^{(12)}_{(12)}(x^\prime))\,,
   $$
   where $(x^\prime,\theta,\bar{\theta})$ are chiral coordinate functions on $\widehat{X}$.
 Such expressions are now closed under multiplications and hence form a ring.
 They now agree with the chiral multiplet representation of the $d=3+1$, $N=1$ supersymmetry algebra.
    
 Similar reasoning for antichiral functions implies that physical antichiral functions must be of the form (in our notation)
   $$
    \breve{g}\;
	  =\; g^{(0)}_{(0)}(x^{\prime\prime})
	       + \sum_{\dot{\beta}} \bar{\theta}^{\dot{\beta}}
		        (\bar{\vartheta}_{\dot{\beta}}
				   g^{(\dot{\beta})}_{(\dot{\beta})}(x^{\prime\prime}))
		   + \bar{\theta}^{\dot{1}}\bar{\theta}^{\dot{2}}
  		     (\bar{\vartheta}_{\dot{1}}\bar{\vartheta}_{\dot{2}}
			     g^{(\dot{1}\dot{2})}_{(\dot{1}\dot{2})}(x^{\prime\prime}))\,,
   $$
   where $(x^{\prime\prime},\theta,\bar{\theta})$ are antichiral coordinate functions on $\widehat{X}$.
 
 The physical function-ring must contain both the physical chiral function ring and the physical antichiral function ring,
   and hence their product as well.
 Multiplying a physical chiral $\breve{f}$ and a physical antichiral $\breve{g}$ as given above
  and re-expressing the product $\breve{f}\breve{g}$ in the standard coordinate functions
   $(x,\theta,\bar{\theta})$ on $\widehat{X}$
  leads to the notion of {\it physical superfields} in Definition~1.4.1 in the next theme.
                                                                                       % Definition [superfield in physical sector/physical superfield]
																
 From the above illumination,
   one sees that the construction is a minimal one:
 {\it We only include those that are required to make a ring, beginning with the following two demands}
      \begin{itemize}
	   \item[(1)]
	    {\it Chiral functions must from a ring; so does antichiral functions.}
	
	   \item[(2)]
	   {\it A chiral superfield must match the chiral multiplet and an antichiral superfield must match antichiral multiplet
	         in representations of $d=3+1$, $N=1$ supersymmetry algebra.}
      \end{itemize}	
 Thus, as long as physical relevance is concerned,
  $C^\infty(X^{\physics})$ in Definition~1.4.1 is unique.
                                                           % Definition [superfield in physical sector/physical superfield]

\bigskip
  
The above reasoning and construction works field-theory level by field-theory level.
Once the physical elements of $C^\infty(\widehat{X}^{\widehat{\boxplus}})$  from each field-theory level
  are identified,
the complexified $C^\infty$-subring in $C^\infty(\widehat{X}^{\widehat{\boxplus}})$ generated by them
 should be the sought-for $C^\infty(\widehat{X}^{\widehat{\boxplus}})^{\physics}$.

\bigskip

\begin{flushleft}
{\bf The purely even physical structure sheaf ${\cal O}_X^{\physics}$ on $X$}
\end{flushleft}
For the simplicity of notations,
 we assume that $\widehat{X}^{\widehat{\boxplus}}$ has only one field-theory level (i.e.\;$l=1$)
  and suppress the parameter level of $\widehat{X}^{\widehat{\boxplus}}$.

Recall then the standard coordinate functions
 $$
   (x^0, x^1, x^2, x^3;
	       \theta^1, \theta^2, \bar{\theta}^{\dot{1}}, \bar{\theta}^{\dot{2}};
		   \vartheta_1,\vartheta_2, \bar{\vartheta}_{\dot{1}}, \bar{\vartheta}_{\dot{2}} )
 $$
 on $\widehat{X}^{\widehat{\boxplus}}$,
 denoted collectively by
  $(x, \theta,\bar{\theta},\vartheta,\bar{\vartheta})$ or
  $(x^\mu,
        \theta^\alpha, \bar{\theta}^{\dot{\beta}},
		\vartheta_{\gamma}, \bar{\vartheta}_{\dot{\delta}})
		                                             _{\mu, \alpha, \dot{\beta}, \gamma, \dot{\delta}}$.
  
\bigskip

\begin{definition} {\bf [superfield in physical sector/physical superfield]}\; {\rm
 An $\breve{f}\in C^{\infty}(\widehat{X}^{\widehat{\boxplus}})$ is called
   a {\it superfield in the physical sector} of $\widehat{X}^{\widehat{\boxplus}}$
  if, as a polynomial in the anticommuting coordinate-functions
   $(\theta,\bar{\theta},\vartheta,\bar{\vartheta})$,  it is of the following form
 {\small
 \begin{eqnarray*}
  \breve{f}
   & =  &
   \breve{f}_{(0)}
   + \sum_{\alpha}\theta^\alpha\breve{f}_{(\alpha)}
   + \sum_{\dot{\beta}}\bar{\theta}^{\dot{\beta}}\breve{f}_{(\dot{\beta})}
   + \theta^1\theta^2 \breve{f}_{(12)}
   + \sum_{\alpha,\dot{\beta}} \theta^\alpha\bar{\theta}^{\dot{\beta}}
           \breve{f}_{(\alpha\dot{\beta})}
   + \bar{\theta}^{\dot{1}}\bar{\theta}^{\dot{2}} \breve{f}_{(\dot{1}\dot{2})} \\
 && \hspace{4em}
   + \sum_{\dot{\beta}}\theta^1\theta^2\bar{\theta}^{\dot{\beta}}
           \breve{f}_{(12\dot{\beta})}
   + \sum_\alpha \theta^\alpha\bar{\theta}^{\dot{1}}\bar{\theta}^{\bar{\dot{2}}}
           \breve{f}_{(\alpha\dot{1}\dot{2})}
   + \theta^1\theta^2\bar{\theta}^{\dot{1}}\bar{\theta}^{\dot{2}}
           \breve{f}_{(12\dot{1}\dot{2})} \\
  &= &
    f^{(0)}_{(0)}
	+ \sum_{\alpha}\theta^\alpha\vartheta_\alpha f^{(\alpha)}_{(\alpha)}
	+ \sum_{\dot{\beta}}
	       \bar{\theta}^{\dot{\beta}}\bar{\vartheta}_{\dot{\beta}}
		     f^{(\dot{\beta})}_{(\dot{\beta})}  \\
  && \hspace{1em}			
    +\; \theta^1\theta_2\vartheta_1\vartheta_2 f^{(12)}_{(12)}
	+ \sum_{\alpha,\dot{\beta}}\theta^\alpha \bar{\theta}^{\dot{\beta}}
	      \left(\rule{0ex}{1.2em}\right.\!
		    \sum_\mu \sigma^\mu_{\alpha\dot{\beta}} f^{(0)}_{[\mu]}\,
			 +\, \vartheta_\alpha\bar{\vartheta}_{\dot{\beta}}
			        f^{(\alpha\dot{\beta})}_{(\alpha\dot{\beta})}
		  \!\left.\rule{0ex}{1.2em}\right)
    + \bar{\theta}^{\dot{1}}\bar{\theta}^{\dot{2}}
	   \bar{\vartheta}_{\dot{1}}\bar{\vartheta}_{\dot{2}}
	    f^{(\dot{1}\dot{2})}_{(\dot{1}\dot{2})}  \\
  && \hspace{1em}		
	+ \sum_{\dot{\beta}}\theta^1\theta^2\bar{\theta}^{\dot{\beta}}
	     \left(\rule{0ex}{1.2em}\right.\!
		   \sum_\alpha \vartheta_\alpha f^{(\alpha)}_{(12\dot{\beta})}\,
		    +\, \vartheta_1\vartheta_2\bar{\vartheta}_{\dot{\beta}}
			            f^{(12\dot{\beta})}_{(12\dot{\beta})}
		 \!\left.\rule{0ex}{1.2em}\right)
    + \sum_\alpha \theta^\alpha\bar{\theta}^{\dot{1}}\bar{\theta}^{\dot{2}}
	   \left(\rule{0ex}{1.2em}\right.\!
	     \sum_{\dot{\beta}} \bar{\vartheta}_{\dot{\beta}}
		     f^{(\dot{\beta})}_{(\alpha\dot{1}\dot{2})}\,
         +\, \vartheta_\alpha \bar{\vartheta}_{\dot{1}}\bar{\vartheta}_{\dot{2}}	
		           f^{(\alpha\dot{1}\dot{2})}_{(\alpha\dot{1}\dot{2})}
	   \!\left.\rule{0ex}{1.2em}\right)\\
  && \hspace{1em}
	+\; \theta^1\theta^2\bar{\theta}^{\dot{1}}\bar{\theta}^{\dot{2}}
	     \left(\rule{0ex}{1.2em}\right.\!
		  f^{(0)}_{(12\dot{1}\dot{2})}
		  + \sum_{\alpha,\dot{\beta}} \vartheta_\alpha\bar{\vartheta}_{\dot{\beta}}
		         f^{(\alpha\dot{\beta})}_{(12\dot{1}\dot{2})}
		  + \vartheta_1\vartheta_2\bar{\vartheta}_{\dot{1}}\bar{\vartheta}_{\dot{2}}
		        f^{(12\dot{1}\dot{2})}_{(12\dot{1}\dot{2})}
		 \!\left.\rule{0ex}{1.2em}\right)\,,
 \end{eqnarray*}
  }where %\endsmall
  $\alpha=1,2$; $\dot{\beta}=\dot{1}, \dot{2}$; $\mu=0,1,2,3$; and
  the {\it thirty-three} coefficients $f^{\tinybullet}_{\tinybullet}$
   of the $(\theta,\bar{\theta},\vartheta,\bar{\vartheta})$-monomial summands of $\breve{f}$
   are complex-valued functions on $X$:
 {\small
  $$
   f^{(0)}_{(0)};\,
   f^{(\alpha)}_{(\alpha)};\,  f^{(\dot{\beta})}_{(\dot{\beta})};\,
   f^{(12)}_{(12)};\,
   f^{(0)}_{([\mu])},  f^{(\alpha\dot{\beta})}_{(\alpha\dot{\beta})};\,
   f^{(\dot{1}\dot{2})}_{(\dot{1}\dot{2})};
   f^{(\alpha)}_{(12\dot{\beta})},  f^{(12\dot{\beta})}_{(12\dot{\beta})};
   f^{(\dot{\beta})}_{(\alpha\dot{1}\dot{2})},
      f^{(\alpha\dot{1}\dot{2})}_{(\alpha\dot{1}\dot{2})};
   f^{(0)}_{(12\dot{1}\dot{2})},  f^{(\alpha\dot{\beta})}_{12\dot{1}\dot{2}},
      f^{(12\dot{1}\dot{2})}_{(12\dot{1}\dot{2})}\;
    \in\;  C^\infty(X)^{\Bbb C}\,.
  $$}For  %end-small
 simplicity, such an $\breve{f}$ is also called a {\it physical superfield}
  on $\widehat{X}^{\widehat{\boxplus}}$.
}\end{definition}

\bigskip

\begin{lemma} {\bf [physical sector of $\widehat{X}^{\widehat{\boxplus}}$]}\;
 The collection of  physical superfields on $\widehat{X}^{\widehat{\boxplus}}$ as defined in Definition~1.4.1
                                                                                                      % Definition  [superfield in physical sector/physical superfield]
  is an even subring of the complexified ${\Bbb Z}/2$-graded $C^\infty$-ring
  $C^\infty(\widehat{X}^{\widehat{\boxplus}})$.
 Denote this subring
   (also  a $C^{\infty}(X)^{\Bbb C}$-subalgebra
   of $C^{\infty}(\widehat{X}^{\widehat{\boxplus}})$)
   by $C^\infty(\widehat{X}^{\widehat{\boxplus}})^{\physics}$.
 Then, the $C^\infty$-hull of $C^\infty(\widehat{X}^{\widehat{\boxplus}})$ restricts to the $C^\infty$-hull
   of $C^\infty(\widehat{X}^{\widehat{\boxplus}})^{\physics}$, which is given by
   $$
     \mbox{$C^\infty$-hull}\,(C^\infty(\widehat{X}^{\widehat{\boxplus}})^{\physics})\;
	   =\; \{\breve{f}\in C^\infty(\widehat{X}^{\widehat{\boxplus}})^{\physics}
	                                                                            \,|\, \breve{f}_{(0)}\in C^\infty(X)\}\,.
   $$																					
\end{lemma}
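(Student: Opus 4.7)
The plan is to establish the lemma in four ingredients:
(i) $C^\infty(\widehat X^{\widehat\boxplus})^{\physics}$ is closed under addition and $C^\infty(X)^{\,\Bbb C}$-scalar multiplication and contains~$1$;
(ii) every element is parity-even;
(iii) the set is closed under multiplication;
(iv) the $C^\infty$-hull restricts as claimed.
Items (i) and (ii) are immediate from the explicit form in Definition~1.4.1: the normal form is $C^\infty(X)^{\,\Bbb C}$-linear in its thirty-three free coefficients $f^{\tinybullet}_{\tinybullet}$ (so addition and scalar multiplication are performed coefficient-wise), setting $f^{(0)}_{(0)}=1$ with the rest zero gives the unit, and each of the $(\theta,\bar{\theta},\vartheta,\bar{\vartheta})$-monomials carries an even total number of odd factors (ranging from $0$ through $8$), hence is parity-even.

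The heart of the proof is multiplicative closure (iii). Here I would expand an arbitrary product $\breve f\cdot \breve g$ of two superfields in the normal form of Definition~1.4.1 and verify, $(\theta,\bar{\theta})$-monomial by $(\theta,\bar{\theta})$-monomial, that its coefficient again lies in the span of the permitted $(\vartheta,\bar{\vartheta})$-monomials listed there. The governing mechanism is index-pairing together with nilpotency of odd elements. For instance, the product of two ``chiral-linear'' summands $\theta^\alpha\vartheta_\alpha f^{(\alpha)}_{(\alpha)}$ and $\theta^\gamma\vartheta_\gamma g^{(\gamma)}_{(\gamma)}$ vanishes when $\alpha=\gamma$ (by $(\theta^\alpha)^2=0$) and yields exactly $-\theta^1\theta^2\vartheta_1\vartheta_2\cdot f^{(1)}_{(1)}g^{(2)}_{(2)}$ when $\alpha\neq\gamma$, fitting precisely the permitted $\theta^1\theta^2$-slot. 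A representative top-degree check: the product of two $\sum_\mu\sigma^\mu_{\alpha\dot{\beta}}\theta^\alpha\bar{\theta}^{\dot{\beta}}f^{(0)}_{[\mu]}$-summands survives only when $\theta$-indices and $\bar\theta$-indices are separately distinct, producing a $\theta^1\theta^2\bar{\theta}^{\dot{1}}\bar{\theta}^{\dot{2}}$-term whose coefficient is an ordinary non-nilpotent Lorentz scalar in $C^\infty(X)^{\,\Bbb C}$, which is exactly the role of $f^{(0)}_{(12\dot{1}\dot{2})}$ in the permitted form. The remaining cases reduce to systematic bookkeeping: the overwhelming majority of cross-terms vanish by $(\theta^\alpha)^2=(\bar\theta^{\dot\beta})^2=(\vartheta_\gamma)^2=(\bar\vartheta_{\dot\delta})^2=0$, and each surviving term lands, up to sign, into exactly one of the thirty-three permitted slots.

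For (iv), Definition~1.1.1 identifies $\mbox{\it $C^\infty$-hull}\,(C^\infty(\widehat X^{\widehat\boxplus}))$ with $C^\infty(X)\oplus \bigwedge_{{\cal O}_X^{\,\Bbb C}}^{\,\even,\,\ge 2}{\cal F}$. Since a physical superfield is already even, and every summand in its normal form other than the degree-zero piece $f^{(0)}_{(0)}$ contains at least two odd factors (hence lies in $\bigwedge^{\ge 2}$), intersecting $\mbox{\it $C^\infty$-hull}\,(C^\infty(\widehat X^{\widehat\boxplus}))$ with $C^\infty(\widehat X^{\widehat\boxplus})^{\physics}$ gives precisely those physical superfields whose degree-zero coefficient $\breve f_{(0)}=f^{(0)}_{(0)}$ lies in the real subring $C^\infty(X)\subset C^\infty(X)^{\,\Bbb C}$, matching the asserted description. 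The main obstacle is the multiplicative closure step: it is conceptually transparent but requires a careful combinatorial check that the intricate index-pairing in Definition~1.4.1 is exactly what is needed for the permitted forms to be stable under multiplication --- a necessity already previewed by the Wess-Bagger chiral-chiral product recounted in the motivating discussion of Sec.~1.4, which shows that non-nilpotency of $F$ would introduce an illegitimate second degree of freedom in the top coefficient.
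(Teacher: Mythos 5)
Your steps (i)--(iii) follow the same route as the paper: the paper's proof likewise reduces everything to the observation that, as a $C^\infty(X)^{\Bbb C}$-module, $C^\infty(\widehat{X}^{\widehat{\boxplus}})^{\physics}$ is generated by the thirty-three even $(\theta,\bar{\theta},\vartheta,\bar{\vartheta})$-monomials and that this generating set is closed under multiplication up to a sign factor; your monomial-by-monomial bookkeeping (products killed by $(\theta^\alpha)^2=(\bar{\theta}^{\dot{\beta}})^2=(\vartheta_\gamma)^2=(\bar{\vartheta}_{\dot{\delta}})^2=0$, survivors landing in a permitted slot) is exactly that verification carried out by hand, and your evenness argument is the paper's.

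There is, however, a genuine gap in your step (iv). What you prove there is only the set-theoretic identification $\mbox{\it $C^\infty$-hull}\,(C^\infty(\widehat{X}^{\widehat{\boxplus}}))\cap C^\infty(\widehat{X}^{\widehat{\boxplus}})^{\physics} = \{\breve{f}\,|\,\breve{f}_{(0)}\in C^\infty(X)\}$, which indeed follows from Definition~1.1.1 and the normal form. But the assertion that the $C^\infty$-hull \emph{restricts} to a $C^\infty$-hull of the physical sector says more: the identified subset must be stable under the partial $C^\infty$-ring operations, i.e.\ for $h\in C^\infty({\Bbb R}^l)$ and $\breve{f}_1,\ldots,\breve{f}_l$ in the subset, the element $h(\breve{f}_1,\ldots,\breve{f}_l)$ --- which a priori lies only in the hull of the big ring --- must again be a physical superfield with degree-zero component in $C^\infty(X)$. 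This is precisely where the paper invokes the two remaining properties of the thirty-three monomials: they are even, hence commute with one another, and all of them except $1$ are nilpotent. Consequently the Taylor expansion of $h$ about the tuple of degree-zero components $(\breve{f}_{1,(0)},\ldots,\breve{f}_{l,(0)})$ terminates after finitely many terms, and each term is a product, with real smooth coefficients, of elements of $C^\infty(\widehat{X}^{\widehat{\boxplus}})^{\physics}$, hence physical by your step (iii). You have every ingredient in hand (evenness, nilpotency, multiplicative closure) but never draw this inference; for the shape of the missing computation, see the paper's explicit treatment of the analogous point for the chiral sector in the proof of Lemma~2.1.2. Without it, your equality names the correct underlying set but does not establish that it carries the claimed partial $C^\infty$-ring structure.
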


\begin{proof}
 From the $(\theta,\bar{\theta}, \vartheta,\bar{\vartheta})$-expansion of a physical superfield $\breve{f}$,
  one concludes that it is even.
 That the set $C^\infty(\widehat{X}^{\widehat{\boxplus}})^{\physics}$ of physical superfields
     is a subring of $C^\infty(\widehat{X}^{\widehat{\boxplus}})$
   follows from the observation that
    as a $C^\infty(X)^{\Bbb C}$-module,
	$C^\infty(\widehat{X}^{\widehat{\boxplus}})^{\physics}$
 	is generated by thirty-three monomials\footnote{A
	                                                                    characterization of these
																		 $(\theta, \bar{\theta}, \vartheta, \bar{\vartheta})$-monomials
																		 is given as follows.
																		First, define a {\it balanced monomial} to be one whose
																		 $(\vartheta, \bar{\vartheta})$-factor matches exactly
																		    with the $(\theta,\bar{\theta})$-factor.
																		There are fifteen of them:
																		  $$
       1\,,\;																		
       \theta^\alpha\vartheta_\alpha\,,\;
       \bar{\theta}^{\dot{\beta}}\bar{\vartheta}_{\dot{\beta}}\,,\;
       \theta^1\theta^2\vartheta_1\vartheta_2\,,\;
       \theta^\alpha \bar{\theta}^{\dot{\beta}}\vartheta_\alpha\bar{\vartheta}_{\dot{\beta}}\,,\;
       \theta^1\theta^2\bar{\theta}^{\dot{\beta}}
	                                                                 \vartheta_1\vartheta_2\bar{\vartheta}_{\dot{\beta}}\,,\;																 
       \theta^\alpha\bar{\theta}^{\dot{1}}\bar{\theta}^{\dot{2}}
		                      \vartheta_\alpha \bar{\vartheta}_{\dot{1}}\bar{\vartheta}_{\dot{2}}\,,\;						  		 															
       \theta^1\theta^2\bar{\theta}^{\dot{1}}\bar{\theta}^{\dot{2}}	
		                                                              \vartheta_1\vartheta_2
			    													  \bar{\vartheta}_{\dot{1}}\bar{\vartheta}_{\dot{2}}\,,
																		  $$
																		     $\alpha=1,\,2,\;\dot{\beta}=\dot{1},\,\dot{2}$\,.
																		 Then, {\it reduce} from them
																		   by dropping a
																		   $\vartheta_{\gamma}\bar{\vartheta}_{\dot{\delta}}$-factor
																		   until there are no more such factors.
																		 For example,
																		   $$
             \theta^\alpha\bar{\theta}^{\dot{1}}\bar{\theta}^{\dot{2}}
		                           \vartheta_\alpha \bar{\vartheta}_{\dot{1}}\bar{\vartheta}_{\dot{2}}\;\;
                 \rightsquigarrow\;\;		
                       \theta^\alpha\bar{\theta}^{\dot{1}}\bar{\theta}^{\dot{2}}
		                                               \bar{\vartheta}_{\dot{\beta}}
		      \hspace{2em}\mbox{and}\hspace{2em}
             \theta^1\theta^2\bar{\theta}^{\dot{1}}\bar{\theta}^{\dot{2}}	
		                                                                \vartheta_1\vartheta_2
					        											  \bar{\vartheta}_{\dot{1}}\bar{\vartheta}_{\dot{2}}\;\;
				 \rightsquigarrow\;\;
		       \theta^1\theta^2\bar{\theta}^{\dot{1}}\bar{\theta}^{\dot{2}}		
			                                                     \vartheta_\alpha\bar{\vartheta}_{\dot{\beta}}\;\;
                 \rightsquigarrow\;\;
               \theta^1\theta^2\bar{\theta}^{\dot{1}}\bar{\theta}^{\dot{2}}\,.  		 				
																		   $$
																		    } % end-footnote
	$$
	 \left\{\begin{array}{c}
	   1\,,\;
	      \theta^\alpha\vartheta_\alpha\,, \;\;\;\;
	      \bar{\theta}^{\dot{\beta}}\bar{\vartheta}_{\dot{\beta}}\,,\;\;\;\;
          \theta^1\theta^2\vartheta_1\vartheta_2\,,\;\;\;\;
		  \theta^\alpha \bar{\theta}^{\dot{\beta}}\vartheta_\alpha\bar{\vartheta}_{\dot{\beta}}\,,\;\;\;\;
	   	     \theta^\alpha \bar{\theta}^{\dot{\beta}}\,,\;\;\;\;
		  \bar{\theta}^{\dot{1}}\bar{\theta}^{\dot{2}}
	                                                             \bar{\vartheta}_{\dot{1}}\bar{\vartheta}_{\dot{2}}\,,     \\
	   \theta^1\theta^2\bar{\theta}^{\dot{\beta}}
	                                                             \vartheta_1\vartheta_2\bar{\vartheta}_{\dot{\beta}}\,,\;\;\;\;
             \theta^1\theta^2\bar{\theta}^{\dot{\beta}}\vartheta_\alpha\,,\;\;\;\;
		  \theta^\alpha\bar{\theta}^{\dot{1}}\bar{\theta}^{\dot{2}}
		                  \vartheta_\alpha \bar{\vartheta}_{\dot{1}}\bar{\vartheta}_{\dot{2}}\,,\;\;\;\;
	         \theta^\alpha\bar{\theta}^{\dot{1}}\bar{\theta}^{\dot{2}}
			                                                     \bar{\vartheta}_{\dot{\beta}}\,,                 \\
       \theta^1\theta^2\bar{\theta}^{\dot{1}}\bar{\theta}^{\dot{2}}	
		                                                          \vartheta_1\vartheta_2
																  \bar{\vartheta}_{\dot{1}}\bar{\vartheta}_{\dot{2}}\,,\;\;\;\;
		  \theta^1\theta^2\bar{\theta}^{\dot{1}}\bar{\theta}^{\dot{2}}		
			                                                     \vartheta_\alpha\bar{\vartheta}_{\dot{\beta}}\,,\;\;\;\;													  
          \theta^1\theta^2\bar{\theta}^{\dot{1}}\bar{\theta}^{\dot{2}}  		
         \end{array}																
		 \right\}_{\alpha=1,\,2;\, \dot{\beta}=\dot{1},\,\dot{2}}
	$$
	and, up to a sign factor, this set is closed under multiplications.
  Finally, since these monomials are even, they commute with each other.
  Furthermore, except the monomial $1$, they are all nilpotent.
  This implies that
   the $C^\infty$-hull of $C^\infty(\widehat{X}^{\widehat{\boxplus}})$ restricts to the $C^\infty$-hull
    of $C^\infty(\widehat{X}^{\widehat{\boxplus}})^{\physics}$ and
   that the latter is given by
    $$
      \mbox{\it $C^\infty$-hull}\,(C^\infty(\widehat{X}^{\widehat{\boxplus}})^{\physics})\;
	   =\; \{\breve{f}\in C^\infty(\widehat{X}^{\widehat{\boxplus}})^{\physics}
	                                                                            \,|\, \breve{f}_{(0)}\in C^\infty(X)\}\,.
    $$																					
  This completes the proof.
  
\end{proof}

\bigskip

By localizing all the constructions and discussions to open sets of $X$, one obtains
a new complexified $C^\infty$-scheme supported on $X$:

\bigskip

\begin{definition} {\bf [$X^{\physics}$ as (purely even) complexified $C^\infty$-scheme]}\; {\rm
 Let ${\cal O}_X^{\physics}$ be the sheaf on $X$ associated to the assignment
  $U\mapsto C^{\infty}({\widehat{U}^{\widehat{\boxplus}}})^{\physics}$ for open sets $U$ of $X$.
 This is a sheaf of complexified $C^\infty$-rings on $X$.
 Denote by $X^{\physics}$
   the associate complexified $C^\infty$-scheme $(X,{\cal O}_X^{\physics})$.
 Note that $X^{\physics}$ is purely even\footnote{Here,
                                               we denote this scheme by $X^{\tinyphysics}$,
											     rather than $\widehat{X}^{\tinyphysics}$,
												 to emphasize that it is purely even.}.
}\end{definition}

\bigskip

By construction, one has the following commutative diagram of dominant morphisms
  (of complexified ${\Bbb Z}/2$-graded $C^\infty$-schemes,
     where both the odd part of $X^{\physics}$ and $X^{\Bbb C}$  are zero)
$$
  \xymatrix{
    \,&     & \;\;\widehat{X}^{\widehat{\boxplus}} \ar@{->>}[dl] \ar@{->>}[dd]  \ar@{->>}[dr]  \\
	& \widehat{X}\ar@{->>}[dr]    && \hspace{-1em}X^{\physics} \ar@{->>}[dl]  \\
	 & & \;\;X^{\Bbb C}&&.
 }
$$

\bigskip

The chiral sector of $\widehat{X}^{\widehat{\boxplus}}$ restricts to the chiral sector of $X^{\physics}$ and
the antichiral sector of $\widehat{X}^{\widehat{\boxplus}}$ restricts to the antichiral sector of $X^{\physics}$.
We will look at them more closely in Sec.\,2.1.

\bigskip

For $l\ge 2$,  as a $C^\infty(X)^{\Bbb C}$-module,
 the complexified $C^\infty$-subring $C^\infty(\widehat{X}^{\widehat{\boxplus}_l})^{\physics}$
  of $C^\infty(\widehat{X}^{\widehat{\boxplus}_l})$
 is generated by elements from the product of the sets from each field-theory level of
   $\widehat{X}^{\widehat{\boxplus}_l}$
 $$
	 \left\{\begin{array}{c}
	   1\,,\;
	      \theta^\alpha\vartheta^i_\alpha\,, \;\;\;\;
	      \bar{\theta}^{\dot{\beta}}\bar{\vartheta}^i_{\dot{\beta}}\,,\;\;\;\;
          \theta^1\theta^2\vartheta^i_1\vartheta^i_2\,,\;\;\;\;
		  \theta^\alpha \bar{\theta}^{\dot{\beta}}
		                               \vartheta^i_\alpha\bar{\vartheta}^i_{\dot{\beta}}\,,\;\;\;\;
	   	     \theta^\alpha \bar{\theta}^{\dot{\beta}}\,,\;\;\;\;
		  \bar{\theta}^{\dot{1}}\bar{\theta}^{\dot{2}}
	                                                             \bar{\vartheta}^i_{\dot{1}}\bar{\vartheta}^i_{\dot{2}}\,,     \\
	   \theta^1\theta^2\bar{\theta}^{\dot{\beta}}
	                                                             \vartheta^i_1\vartheta^i_2\bar{\vartheta}^i_{\dot{\beta}}\,,\;\;\;\;
             \theta^1\theta^2\bar{\theta}^{\dot{\beta}}\vartheta^i_\alpha\,,\;\;\;\;
		  \theta^\alpha\bar{\theta}^{\dot{1}}\bar{\theta}^{\dot{2}}
		                  \vartheta^i_\alpha \bar{\vartheta}^i_{\dot{1}}\bar{\vartheta}^i_{\dot{2}}\,,\;\;\;\;
	         \theta^\alpha\bar{\theta}^{\dot{1}}\bar{\theta}^{\dot{2}}
			                                                     \bar{\vartheta}^i_{\dot{\beta}}\,,                 \\
       \theta^1\theta^2\bar{\theta}^{\dot{1}}\bar{\theta}^{\dot{2}}	
		                                                          \vartheta^i_1\vartheta^i_2
																  \bar{\vartheta}^i_{\dot{1}}\bar{\vartheta}^i_{\dot{2}}\,,\;\;\;\;
		  \theta^1\theta^2\bar{\theta}^{\dot{1}}\bar{\theta}^{\dot{2}}		
			                                                     \vartheta^i_\alpha\bar{\vartheta}^i_{\dot{\beta}}\,,\;\;\;\;													  
          \theta^1\theta^2\bar{\theta}^{\dot{1}}\bar{\theta}^{\dot{2}}  		
         \end{array}																
		 \right\}_{\alpha=1,\,2;\, \dot{\beta}=\dot{1},\,\dot{2}}\,,
	$$
   $i=1,\,\ldots\,, l$\,.
 Its $C^\infty$-hull remains of the form
    $$
      \mbox{\it $C^\infty$-hull}\,(C^\infty(\widehat{X}^{\widehat{\boxplus}_l})^{\physics})\;
	   =\; \{\breve{f}\in C^\infty(\widehat{X}^{\widehat{\boxplus}_l})^{\physics}
	                                                                            \,|\, \breve{f}_{(0)}\in C^\infty(X)\}\,.
    $$			
However, clearly it becomes extremely messy to express
  a $(\theta,\bar{\theta},\vartheta,\bar{\vartheta})$-expansion
  of a general element $\breve{f}\in C^\infty(\widehat{X}^{\widehat{\boxplus}_l})^{\physics}$.
For the purpose of this work, we assume therefore for the rest of the work that $l=1$ for the simplicity of notations
  though there is no technical difficulty to generalize to the $l\ge 2$ case.
 Cf.\;{\sc Figure}~1-4-1; note that all the schemes involved
   $\widehat{X}\,,\; \widehat{X}^{\widehat{\boxplus}_l}\,,\; X^{\physics}	$
   have the same underlying topology $X$, which is indicated by the ``foggy/fuzzy" nature of the fermionic cloud
   carried by these schemes in the illustration.
 % %
 % \marginpar{\raggedright\tiny $\bullet$ {\sc Figure} 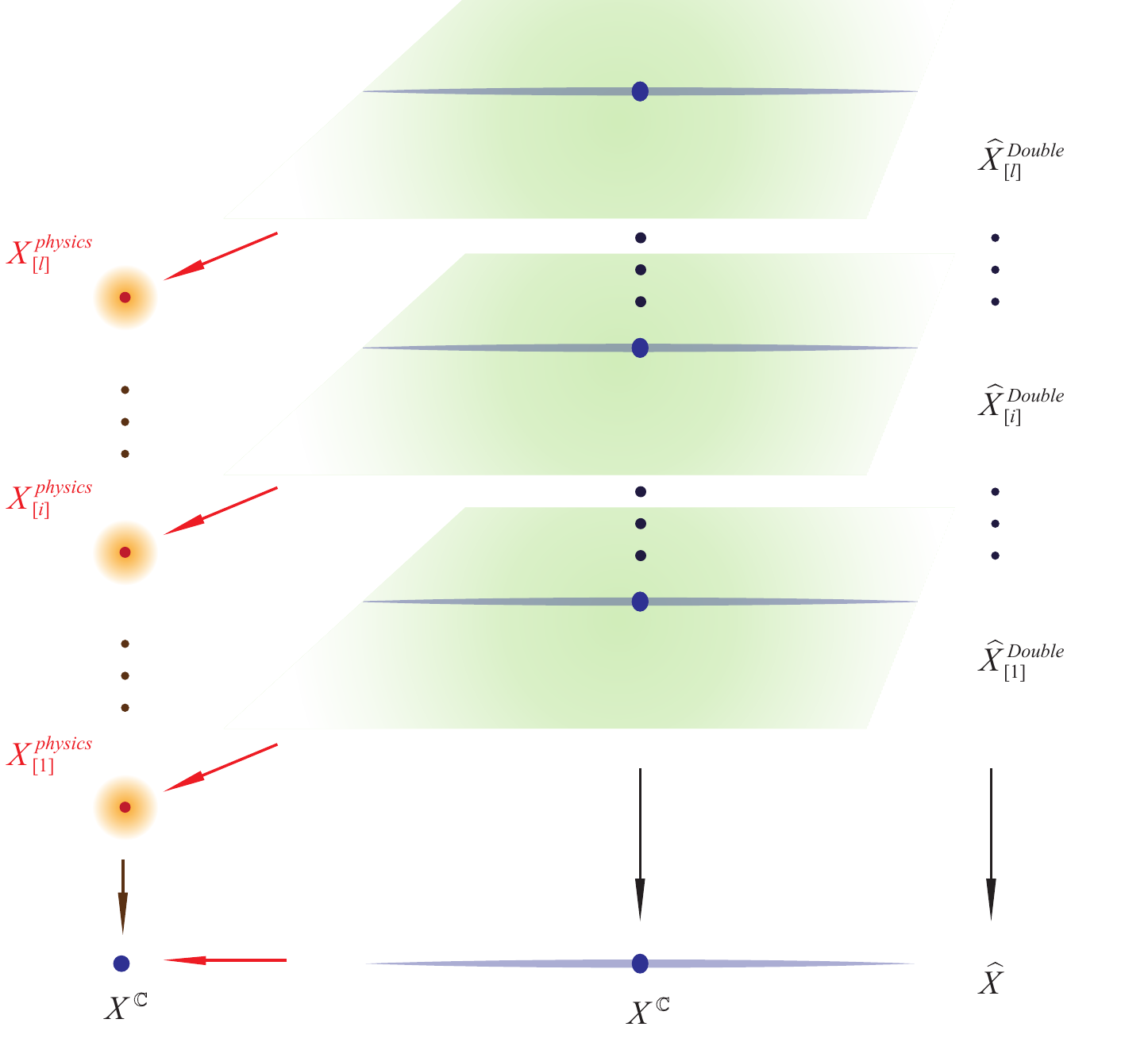}
 % %
  \begin{figure}[htbp]
 \bigskip
  \centering
  \includegraphics[width=0.80\textwidth]{towered-superspace.pdf}

  \bigskip
  \bigskip
 \centerline{\parbox{13cm}{\small\baselineskip 12pt
  {\sc Figure}~1-4-1.
  The space-time coordinate functions $x^\mu$, $\mu=0,1,2,3$,
     and the fermionic coordinate functions $\theta^\alpha$, $\bar{\theta}^{\dot{\beta}}$,
     $\alpha=1,2$, $\dot{\beta}=\dot{1}, \dot{2}$, 	
   generate the function ring of the fundamental superspace $\widehat{X}$
     as a complexified ${\Bbb Z}/2$-graded $C^\infty$-scheme.
  Over it sits a supertower with Grassmann-number level and other field-theory levels that are needed
    for the construction of supersymmetric quantum field theories.
  From the direct-sum expression of the generating sheaf 	
	\begin{eqnarray*}
    {\cal F}
	   & :=\:
	      &  ({\cal S}^{\prime\,\vee}_{\coordinates}\oplus {\cal S}^{\prime\prime\,\vee}_{\coordinates})
		        \oplus
		({\cal S}^{\prime\,\vee}_{\parameter}\oplus {\cal S}^{\prime\prime\,\vee}_{\parameter})\\
        &&\hspace{2em} 		
			    \oplus\,
				  \mbox{$\bigoplus$}_{i=1}^l
				     ({\cal S}^\prime_{\field, i}\oplus {\cal S}^{\prime\prime}_{\field, i})\\[1ex]
      & =
	      &  ({\cal S}^{\prime\,\vee}_{\coordinates}\oplus {\cal S}^{\prime\prime\,\vee}_{\coordinates})
		        \oplus
		({\cal S}^{\prime\,\vee}_{\parameter}\oplus {\cal S}^{\prime\prime\,\vee}_{\parameter})\\[.6ex]
        &&\hspace{2em} 		
			    \oplus\,
				  \mbox{$\bigoplus$}_{i=1}^l				
				    \frac{
					 ({\cal S}^{\prime\,\vee}_{\coordinates}\oplus {\cal S}^{\prime\prime\,\vee}_{\coordinates}
					       \oplus  {\cal S}^\prime_{\field, i}\oplus {\cal S}^{\prime\prime}_{\field, i})}
					{{\cal S}^{\prime\,\vee}_{\coordinates}\oplus {\cal S}^{\prime\prime\,\vee}_{\coordinates}}
   \end{eqnarray*}
   of the structure sheaf $\widehat{\cal O}_X^{\,\widehat{\boxplus}}$
     of $\widehat{X}^{\widehat{\boxplus}}$,
   one may think of each field-theory level as contributing
   a floor-$[i]$
    $$
	   \widehat{X}_{[i]}^{\tinyDouble}\; :=\;
	     \mbox{\Large $($}
		  X, \mbox{$\bigwedge$}_{{\cal O}_X^{\,\Bbb C}}^{\tinybullet}
		     ({\cal S}^{\prime\,\vee}_{\coordinates}\oplus {\cal S}^{\prime\prime\,\vee}_{\coordinates}
					       \oplus  {\cal S}^\prime_{\field, i}\oplus {\cal S}^{\prime\prime}_{\field, i})
		 \mbox{\Large $)$}
	$$
   over $\widehat{X}$  and
 these field-theory floors are glued by the ${\Bbb Z}/2$-graded version of
    fibered product over $\widehat{X}$ to give $\widehat{X}^{\widehat{\boxplus}}$.	
 Each field-theory floor $\widehat{X}_{[i]}^{\tinyDouble}$ has its own physics sector
   $X_{[i]}^{\tinyphysics}$ that is purely even.
 They generate the physics sector $X^{\tinyphysics}$	of $\widehat{X}^{\widehat{\boxplus}}$
    that is also purely even.
 This physics sector is where most of physics-relevant superfields lie.
  }}
\end{figure}
  
Also, recall that the Grassmann parameter level is introduced to serve the need
  when a discussion/computation/expression requires (or becomes more convenient to present with)
    the Grassmann parameter $(\eta,\bar{\eta})$.
This level will thus be suppressed from now on unless it is needed.

\bigskip
 
\subsection{Purge-evaluation maps and the Fundamental Theorem on supersymmetric action functionals}
 
In this subsection, we recast a fundamental theorem on supersymmetric action functionals
 (e.g.\ [Bi: Sec.\,4.3]) into a general form
 under the current (complexified, ${\Bbb Z}/2$-graded) $C^\infty$-Algebraic Geometry setting.

\bigskip
 
\begin{flushleft}
{\bf The need to get rid of nilpotency}
\end{flushleft}
So far so good.
But when one presses on to construct a supersymmetric action functional using the fermionic integration
 of the form
 $$
  \int_{\widehat{X}^{\widehat{\boxplus}}}
     d^4x\,
	 d\bar{\theta}^{\dot{2}}d\bar{\theta}^{\dot{1}}d\theta^2 d\theta^1 \,\breve{f}
 $$
 for $\breve{f}\in C^\infty(\widehat{X}^{\widehat{\boxplus}})^{\physics}$
  a derived physical superfield\footnote{See
                                                                         Sec.\,2.2 and Sec.\,3.5 for concrete examples.
													                      } % end-footnote
    constructed from more basic physical superfields in the problem,
  the result is
    $$
	  \int_X d^4x\,
	    \mbox{\LARGE$($}
		 f^{(0)}_{(12\dot{1}\dot{2})}
		  + \sum_{\alpha,\dot{\beta}} \vartheta_\alpha\bar{\vartheta}_{\dot{\beta}}
		         f^{(\alpha\dot{\beta})}_{(12\dot{1}\dot{2})}
		  + \vartheta_1\vartheta_2\bar{\vartheta}_{\dot{1}}\bar{\vartheta}_{\dot{2}}
		        f^{(12\dot{1}\dot{2})}_{(12\dot{1}\dot{2})}
		\mbox{\LARGE $)$}\,.
	$$
Mathematically, there is nothing wrong:
The above integral is nothing but
 $$
     \int_X d^4x\, f^{(0)}_{(12\dot{1}\dot{2})}
     + \sum_{\alpha,\dot{\beta}} \vartheta_\alpha\bar{\vartheta}_{\dot{\beta}}
	       \int_X d^4x\,f^{(\alpha\dot{\beta})}_{(12\dot{1}\dot{2})}
	 + \vartheta_1\vartheta_2\bar{\vartheta}_{\dot{1}}\bar{\vartheta}_{\dot{2}}
	 	   \int_X d^4x\, f^{(12\dot{1}\dot{2})}_{(12\dot{1}\dot{2})}\,.
 $$
Thus, for example, when one applies calculus of variations to it to derive the equations of motions of the component fields on $X$,
  since the six integrals are not like terms, each variation has to be set to zero:
 \begin{eqnarray*}
  \lefteqn{
    \delta
	  \int_X d^4x\,
	    \mbox{\LARGE$($}
		 f^{(0)}_{(12\dot{1}\dot{2})}
		  + \sum_{\alpha,\dot{\beta}} \vartheta_\alpha\bar{\vartheta}_{\dot{\beta}}
		         f^{(\alpha\dot{\beta})}_{(12\dot{1}\dot{2})}
		  + \vartheta_1\vartheta_2\bar{\vartheta}_{\dot{1}}\bar{\vartheta}_{\dot{2}}
		        f^{(12\dot{1}\dot{2})}_{(12\dot{1}\dot{2})}
		\mbox{\LARGE $)$}\;=\; 0    }\\
  && \Longrightarrow\;\;
           \delta \int_X d^4x\, f^{(0)}_{(12\dot{1}\dot{2})}\;=\; 0\,,\hspace{2em}
           \delta \int_X d^4x\,f^{(\alpha\dot{\beta})}_{(12\dot{1}\dot{2})}\;=\;0\,,\hspace{2em}
	 	   \delta \int_X d^4x\, f^{(12\dot{1}\dot{2})}_{(12\dot{1}\dot{2})}\;=\; 0\,.
 \end{eqnarray*}
{\it But this is not what physicists do!}
 In order to match what physicists do,
  the nilpotency ---  though necessary from the perspective of complexified ${\Bbb Z}/2$-graded
  $C^\infty$Algebraic Geometry --- has to be ``purged"  away in the end.
 So that after applying such a purge to
  $f^{(0)}_{(12\dot{1}\dot{2})}
		  + \sum_{\alpha,\dot{\beta}} \vartheta_\alpha\bar{\vartheta}_{\dot{\beta}}
		         f^{(\alpha\dot{\beta})}_{(12\dot{1}\dot{2})}
		  + \vartheta_1\vartheta_2\bar{\vartheta}_{\dot{1}}\bar{\vartheta}_{\dot{2}}
		        f^{(12\dot{1}\dot{2})}_{(12\dot{1}\dot{2})}$,
  all the nilpotency disappears and the result is in $C^\infty(X)^{\Bbb C}$.
Calculus of variation is then applied to one single integral, rather then six independent integrals.

\bigskip
 
\begin{flushleft}
{\bf Purge-evaluation maps and the Fundamental Theorem}
\end{flushleft}
Let\;\;
 ${\cal M}_{\field}\;
       :=\;  \mbox{$\bigwedge$}_{{\cal O}_X^{\,\Bbb C}}^{\tinybullet}
	             ({\cal S}^\prime_{\field}\oplus{\cal S}^{\prime\prime}_{\field})\;
       \subset\;  \widehat{\cal O}_X^{\,\widehat{\boxplus}}$.
 
\bigskip

\begin{definition} {\bf [purge-evaluation map]}\; {\rm
 A $(\theta,\bar{\theta})$-degree-preserving  ${\cal O}_X^{\,\Bbb C}$-module homomorphism
  ${\cal P}: \widehat{\cal O}_X^{\,\widehat{\boxplus}}\rightarrow   \widehat{\cal O}_X$
  that restricts to the identity map
     $\Id_{\widehat{\cal O}_X}: \widehat{\cal O}_X\rightarrow \widehat{\cal O}_X$    and
	 takes the fifteen $(\vartheta,\bar{\vartheta})$-monomials
	     $\vartheta_\alpha$, $\bar{\vartheta}_{\dot{\beta}}$, $\vartheta_1\vartheta_2$,
         $\vartheta_\alpha\bar{\vartheta}_{\dot{\beta}}$,
		 $\bar{\vartheta}_{\dot{1}}\bar{\vartheta}_{\dot{2}}$,
         $\vartheta_1\vartheta_2\bar{\vartheta}_{\dot{\beta}}$, 		
		 $\vartheta_\alpha\bar{\vartheta}_{\dot{1}}\bar{\vartheta}_{\dot{2}}$, and
		 $\vartheta_1\vartheta_2\bar{\vartheta}_{\dot{1}}\bar{\vartheta}_{\dot{2}}$		
	   in each component of fixed $(\theta,\bar{\theta})$-degree to constants in ${\Bbb C}$
  is called a {\it purge-evaluation map}\footnote{\makebox[20em][l]{\it
                                                                                             Remark on the naming and the formulation}\;\;
                            ${\cal P}$ has the effect of removing the second set of fermionic coordinate-functions
		    			    $(\vartheta,\vartheta)$ on $\widehat{X}^{\widehat{\boxplus}}$,
							 hence the name `{\it purge}'.
					        Note that the coefficients of each $(\theta,\bar{\theta})$-monomial summand 
							 of a function on $\widehat{X}^{\widehat{\boxplus}}$ are grouped into
							  sections of sheaves/bundles associated to some irreducible Lorentz representations.
					        There are built-in pairings of these sheaves/bundles in the problem and
					          these pairings define accordingly various natural evaluation maps
							  that in practice either specify or are incorporated into ${\cal P}$
							  to obtain Lorentz-invariant expressions, hence the name `{\it evaluation}'.
							
                            Also note that for the construction of supersymmetric gauge theories,
		                      the purge-evaluation map is defined not on ${\cal O}_X^{\tinyphysics}$,
			                  but another ${\cal O}_X^{\,\Bbb C}$-submodule
						      of $\widehat{\cal O}_X^{\,\widehat{\boxplus}}$.
                            The formulation given here is meant to be as general as possible in order to cover all situations
					          since the proof of the Fundamental Theorem goes the same.
						      }% end-footnote
}\end{definition}

\bigskip

Explicitly, for
 {\small
 \begin{eqnarray*}
  \breve{f}
   & =  &
   \breve{f}_{(0)}
   + \sum_{\alpha}\theta^\alpha\breve{f}_{(\alpha)}
   + \sum_{\dot{\beta}}\bar{\theta}^{\dot{\beta}}\breve{f}_{(\dot{\beta})}
   + \theta^1\theta^2 \breve{f}_{(12)}
   + \sum_{\alpha,\dot{\beta}} \theta^\alpha\bar{\theta}^{\dot{\beta}}
           \breve{f}_{(\alpha\dot{\beta})}
   + \bar{\theta}^{\dot{1}}\bar{\theta}^{\dot{2}} \breve{f}_{(\dot{1}\dot{2})} \\
 && \hspace{4em}
   + \sum_{\dot{\beta}}\theta^1\theta^2\bar{\theta}^{\dot{\beta}}
           \breve{f}_{(12\dot{\beta})}
   + \sum_\alpha \theta^\alpha\bar{\theta}^{\dot{1}}\bar{\theta}^{\bar{\dot{2}}}
           \breve{f}_{(\alpha\dot{1}\dot{2})}
   + \theta^1\theta^2\bar{\theta}^{\dot{1}}\bar{\theta}^{\dot{2}}
           \breve{f}_{(12\dot{1}\dot{2})}
 \end{eqnarray*}}$\in \widehat{\cal O}_X^{\,\widehat{\boxplus}}$,  % end-small
 {\small
 \begin{eqnarray*}
  {\cal P}(\breve{f})
   & =  &
   {\cal P}_{(0)}(\breve{f}_{(0)})
   + \sum_{\alpha}\theta^\alpha \cdot {\cal P}_{(\alpha)}(\breve{f}_{(\alpha)})
   + \sum_{\dot{\beta}}\bar{\theta}^{\dot{\beta}}
             \cdot{\cal P}_{(\dot{\beta})}(\breve{f}_{(\dot{\beta})})  \\
  && \hspace{2em}			
   +\, \theta^1\theta^2  \cdot {\cal P}_{(12)}(\breve{f}_{(12)})
   + \sum_{\alpha,\dot{\beta}} \theta^\alpha\bar{\theta}^{\dot{\beta}}
           \cdot{\cal P}_{(\alpha\dot{\beta})}(\breve{f}_{(\alpha\dot{\beta})})
   + \bar{\theta}^{\dot{1}}\bar{\theta}^{\dot{2}}
           \cdot{\cal P}_{(\dot{1}\dot{2})}(\breve{f}_{(\dot{1}\dot{2})}) \\
 && \hspace{2em}
   + \sum_{\dot{\beta}}\theta^1\theta^2\bar{\theta}^{\dot{\beta}}
           \cdot{\cal P}_{(12\dot{\beta})}(\breve{f}_{(12\dot{\beta})})
   + \sum_\alpha \theta^\alpha\bar{\theta}^{\dot{1}}\bar{\theta}^{\bar{\dot{2}}}
          \cdot{\cal P}_{(\alpha\dot{1}\dot{2})}( \breve{f}_{(\alpha\dot{1}\dot{2})})
   + \theta^1\theta^2\bar{\theta}^{\dot{1}}\bar{\theta}^{\dot{2}}
          \cdot{\cal P}_{(12\dot{1}\dot{2})}( \breve{f}_{(12\dot{1}\dot{2})})
 \end{eqnarray*}}$\in \widehat{\cal O}_X$,    % end-small
 where ${\cal P}_{(\tinybullet)}: {\cal M}_{\field} \rightarrow  {\cal O}_X^{\,\Bbb C}$
   are ${\cal O}_X^{\,\Bbb C}$-module homomorphisms
   that restricts to the identity map
           $\Id_{{\cal }_X^{\,\Bbb C}}:
		      {\cal O}_X^{\,\Bbb C}\rightarrow {\cal O}_X^{\,\Bbb C}$   and
  takes
	     $\vartheta_\alpha$, $\bar{\vartheta}_{\dot{\beta}}$, $\vartheta_1\vartheta_2$,
         $\vartheta_\alpha\bar{\vartheta}_{\dot{\beta}}$,
		 $\bar{\vartheta}_{\dot{1}}\bar{\vartheta}_{\dot{2}}$,
         $\vartheta_1\vartheta_2\bar{\vartheta}_{\dot{\beta}}$, 		
		 $\vartheta_\alpha\bar{\vartheta}_{\dot{1}}\bar{\vartheta}_{\dot{2}}$, and
		 $\vartheta_1\vartheta_2\bar{\vartheta}_{\dot{1}}\bar{\vartheta}_{\dot{2}}$		
	   to constants in ${\Bbb C}$.
When all ${\cal P}_{(\tinybullet)}$ are identical, we say that ${\cal P}$ is {\it uniform}
  and will denote  all ${\cal P}_{(\tinybullet)}$ simply by ${\cal P}$.
In particular, for any ${\cal P}$,
 each ${\cal P}_{(\tinybullet)}$ defines a uniform purge-evaluation map
  by replacing all other component ${\cal P}_{(\tinybullet^\prime)}$ by ${\cal P}_{(\tinybullet)}$.
We denote the resulting purge-evaluation map simply by ${\cal P}_{(\tinybullet)}$.
  
Note that in practice,  ${\cal P}$ restricts to a non-trivial purge-evaluation map
 ${\cal P}: {\cal O}_X^{\physics}\rightarrow \widehat{\cal O}_X$
 and hence can never be a ${\Bbb Z}/2$-graded-${\cal O}_X^{\,\Bbb C}$-algebra homomorphism.

\bigskip

\begin{lemma} {\bf [property of ${\cal P}$]}\;
 A purge-evaluation map ${\cal P}:
   \widehat{\cal O}_X^{\,\widehat{\boxplus}}\rightarrow \widehat{\cal O}_X$
  satisfies the following properties:
 (1) Let ${\cal P}$ be uniform and
             let $\xi\in \Der_{\Bbb C}(\widehat{X})$ be a derivation on $\widehat{X}$.
            Then\\ ${\cal P}(\xi\breve{f})= \xi {\cal P}(\breve{f})$
			   for $\breve{f}\in \widehat{\cal O}_X^{\,\widehat{\boxplus}} $. \\
 (2) $\int\! d\bar{\theta}^{\dot{2}}d\bar{\theta}^{\dot{1}} d\theta^2d\theta^1
              {\cal P}(\breve{f})
			  = \int\! d\bar{\theta}^{\dot{2}}d\bar{\theta}^{\dot{1}} d\theta^2d\theta^1
                  {\cal P}_{(12\dot{1}\dot{2})}(\breve{f})
			  = {\cal P}_{(12\dot{1}\dot{2})}(
			          \int\! d\bar{\theta}^{\dot{2}}d\bar{\theta}^{\dot{1}} d\theta^2d\theta^1
					    \breve{f})$
			 for $\breve{f}\in \widehat{\cal O}_X^{\,\widehat{\boxplus}} $.\;\;
 (3) For $\breve{f}\in \widehat{\cal O}_X^{\,\widehat{\boxplus}} $ chiral (resp.\ antichiral),
              $\int\! d\theta^2d\theta^1  {\cal P}(\breve{f})
			     = \int\! d\theta^2d\theta^1  {\cal P}_{(12)}(\breve{f})
			     = {\cal P}_{(12)}(\int\! d\theta^2d\theta^1 \breve{f})$\;
             (resp.\;
			      $\int\! d\bar{\theta}^{\dot{2}}d\bar{\theta}^{\dot{1}} {\cal P}(\breve{f})
				     = \int\! d\bar{\theta}^{\dot{2}}d\bar{\theta}^{\dot{1}}
					               {\cal P}_{(\dot{1}\dot{2})}(\breve{f})
			         = {\cal P}_{(\dot{1}\dot{2})}(\int\! d\theta^2d\theta^1 \breve{f})$).					  
\end{lemma}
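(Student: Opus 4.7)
The plan is to prove all three statements by expanding $\breve{f}$ in its $(\theta,\bar{\theta})$-expansion
 $$
  \breve{f}\;=\;\sum_I \Theta^I \breve{f}_I\,,
 $$
 where $I$ ranges over the sixteen multi-indices
  $(0), (\alpha), (\dot{\beta}), (12), (\alpha\dot{\beta}), (\dot{1}\dot{2}),(12\dot{\beta}),(\alpha\dot{1}\dot{2}),(12\dot{1}\dot{2})$,
  $\Theta^I$ is the corresponding $(\theta,\bar{\theta})$-monomial, and
  each coefficient $\breve{f}_I$ further decomposes as
  $\sum_J \breve{f}_{I,J}(x)V^J$ with $V^J$ a $(\vartheta,\bar{\vartheta})$-monomial and $\breve{f}_{I,J}\in C^\infty(X)^{\Bbb C}$.
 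Under this decomposition, ${\cal P}$ acts on the right factor of each summand, preserving the $\Theta^I$-slot, and sends each $V^J$ to a complex constant (namely $1$ if $J$ is empty, ${\cal P}_{(I)}(V^J)\in {\Bbb C}$ otherwise).

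For (1), the key point is the flat-connection origin of the inclusion $\Der_{\Bbb C}(\widehat{X})\hookrightarrow \Der_{\Bbb C}(\widehat{X}^{\widehat{\boxplus}_l})$ of Definition~1.3.3, which enforces $\xi\vartheta_\gamma=\xi\bar{\vartheta}_{\dot{\delta}}=0$ for every $\xi\in \Der_{\Bbb C}(\widehat{X})$. Therefore the ${\Bbb Z}/2$-graded Leibniz rule collapses when $\xi$ is pushed past a $V^J$: $\xi(\Theta^I \breve{f}_{I,J}\cdot V^J) =\xi(\Theta^I \breve{f}_{I,J})\cdot V^J$. Applying ${\cal P}$ afterwards pulls the constants ${\cal P}(V^J)$ out of the way of $\xi$ (which acts ${\Bbb C}$-linearly), while applying ${\cal P}$ first and then $\xi$ does the same thing. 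Since ${\cal P}$ is \emph{uniform}, the same ${\cal P}_0$ is used in every $(\theta,\bar{\theta})$-slot, which is crucial: an odd $\xi$ like $Q_\alpha$ can shuffle the $\Theta^I$'s among themselves, so without uniformity, the $\Theta^I$-component of ${\cal P}(\xi\breve{f})$ would use a different evaluation map than the corresponding component of $\xi{\cal P}(\breve{f})$. Under uniformity these two expressions coincide term by term.

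For (2) and (3), the Berezin integrals are simply projections onto the top $(\theta,\bar{\theta})$-degree component: $\int d\bar{\theta}^{\dot{2}}d\bar{\theta}^{\dot{1}}d\theta^2 d\theta^1$ returns $\breve{f}_{(12\dot{1}\dot{2})}$, and similarly $\int d\theta^2 d\theta^1$ returns $\breve{f}_{(12)}$ on a chiral $\breve{f}$, while $\int d\bar{\theta}^{\dot{2}}d\bar{\theta}^{\dot{1}}$ returns $\breve{f}_{(\dot{1}\dot{2})}$ on an antichiral one. Since ${\cal P}$ is $(\theta,\bar{\theta})$-degree-preserving, this projection commutes with ${\cal P}$, and the result is the image under the single component ${\cal P}_{(12\dot{1}\dot{2})}$, $ {\cal P}_{(12)}$, or ${\cal P}_{(\dot{1}\dot{2})}$ respectively of that top coefficient. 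The second equality in each formula then amounts to the tautology that, after projection to one fixed $\Theta^I$-slot, replacing ${\cal P}$ by the uniform map built out of ${\cal P}_{(I)}$ changes nothing in that slot.

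The main obstacle I expect is bookkeeping of parity signs when $\xi$ is odd and the coefficients $\breve{f}_{I,J}$ have nontrivial parity of their own (through $V^J$ of odd length), as well as carefully justifying that the extended action of $\xi$ on $\widehat{\cal O}_X^{\widehat{\boxplus}}$ really does annihilate $\vartheta,\bar{\vartheta}$---this is where one needs Definition~1.3.3 and the flat connection on ${\cal S}'_{\text{field},i}\oplus{\cal S}''_{\text{field},i}$ to be invoked explicitly. Once that is in place, (1) follows from the graded Leibniz rule and uniformity, and (2), (3) reduce to the definition of Berezin integration as the top-degree coefficient extractor.
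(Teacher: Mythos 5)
Your proposal is correct and follows essentially the same route as the paper's own proof: Statement (1) from the facts that $\xi\in \Der_{\Bbb C}(\widehat{X})$ annihilates the $(\vartheta,\bar{\vartheta})$'s while the uniform ${\cal P}$ acts $(\theta,\bar{\theta})$-degree by degree sending the $(\vartheta,\bar{\vartheta})$-monomials to constants, and Statements (2), (3) directly from fermionic integration being extraction of the top $(\theta,\bar{\theta})$-coefficient (in chiral/antichiral coordinates for (3)), which commutes with the degree-preserving ${\cal P}$. Your explicit observation that uniformity is what saves (1) when an odd $\xi$ such as $Q_\alpha$ shuffles coefficients between $(\theta,\bar{\theta})$-slots is exactly the content the paper's terser argument leaves implicit.
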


\begin{proof}
 Statement (1) follows from
    the fact that $\xi\in \Der_{\Bbb C}(\widehat{X})$
    has no $(\vartheta, \bar{\vartheta})$-dependence
    while the uniform ${\cal P}$ applies to $\breve{f}$
	    $(\theta,\bar{\theta})$-degree by $(\theta,\bar{\theta})$-degree
      with ${\cal P}
	              (\vartheta_1^{\epsilon_1}\vartheta_2^{\epsilon_2}
				      \bar{\vartheta}_{\dot{1}}^{\epsilon_3}
					  \bar{\vartheta}_{\dot{2}}^{\epsilon_4})$ constant, $\epsilon_i=0$ or $1$,
   and hence $\xi({\cal P}(\breve{f}_{(\tinybullet)}))
                          = {\cal P}(\xi\, \breve{f}_{(\tinybullet)})$.					
  Statement (2) and Statement (3) follow from the definition of fermionic integration on $\widehat{X}$.
\end{proof}

\bigskip

\begin{theorem} {\bf [fundamental: supersymmetric functional]}\;
 Let ${\cal P}$ be a uniform purge-evaluation map.
 Then, up to a boundary term\footnote{Though
                                                    ignored in the current work, it should be noted that 
													such a boundary term becomes an important part of understanding
													 when one studies supersymmetric quantum field theory with boundary.
                                                       } % end-footnote
  on $X$,
 (1)
    $S_1(\breve{f})
	  := \int_{\widehat{X}}d^4x\,
	          d\bar{\theta}^{\dot{2}}d\bar{\theta}^{\dot{1}} d\theta^2 d\theta^1
			  {\cal P}(\breve{f})$
      is a functional on $C^\infty(\widehat{X}^{\widehat{\boxplus}})$ that is invariant under supersymmetries;\\
 (2)
    $S_2(\breve{f})
	   := \int_{\widehat{X}}d^4x\, d\theta^2 d\theta^1 {\cal P}(\breve{f})$
	(resp.\
	    $S_3(\breve{f})
		   :=\int_{\widehat{X}}d^4x\, d\bar{\theta}^{\dot{2}} d\bar{\theta}^{\dot{1}}
		         {\cal P}(\breve{f})$)
	 is a functional on\\   $C^\infty(\widehat{X}^{\widehat{\boxplus}})^{\scriptsizech}$
    (resp.\ $C^\infty(\widehat{X}^{\widehat{\boxplus}})^{\scriptsizeach}$)
	  that is invariant under supersymmetries.
\end{theorem}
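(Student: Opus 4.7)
The plan is to reduce each statement to the classical Berezin-integral argument that the $D$- and $F$-components (respectively $\bar{F}$-components) of a superfield transform into total space-time divergences under supersymmetry, with Lemma~1.5.2(1) supplying the bridge between the physics sector built from ${\cal P}$ and the ordinary superspace calculus on $\widehat{X}$.

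First, I observe that the supersymmetry generators $Q_\alpha$ and $\bar{Q}_{\dot{\beta}}$ involve only $\partial/\partial\theta$, $\partial/\partial\bar{\theta}$, and $\partial/\partial x^\mu$, so they lie in $\Der_{\Bbb C}(\widehat{X})$; by the canonical inclusion $\Der_{\Bbb C}(\widehat{X})\hookrightarrow \Der_{\Bbb C}(\widehat{X}^{\widehat{\boxplus}})$ of Definition~1.3.3 they also act on $C^\infty(\widehat{X}^{\widehat{\boxplus}})$. Supersymmetry-invariance of $S_1$ (resp.\ $S_2$, $S_3$) is the assertion that $S_1(Q_\alpha\breve{f})$ and $S_1(\bar{Q}_{\dot{\beta}}\breve{f})$ (and the analogous quantities for $S_2$, $S_3$) are boundary integrals on $X$. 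Since ${\cal P}$ is uniform, Lemma~1.5.2(1) lets me move ${\cal P}$ past each generator to obtain ${\cal P}(Q_\alpha\breve{f}) = Q_\alpha{\cal P}(\breve{f})$ and ${\cal P}(\bar{Q}_{\dot{\beta}}\breve{f}) = \bar{Q}_{\dot{\beta}}{\cal P}(\breve{f})$. Writing $g := {\cal P}(\breve{f})\in \widehat{\cal O}_X$, the problem reduces to showing that the appropriate fermionic integrals of $Q_\alpha g$ and $\bar{Q}_{\dot{\beta}}g$ are $\partial_\mu$-divergences on $X$.

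For (1), I decompose $Q_\alpha = \partial/\partial\theta^\alpha - \sqrt{-1}\,\sigma^\mu_{\alpha\dot{\beta}}\bar{\theta}^{\dot{\beta}}\partial_\mu$. The first summand strictly lowers $\theta$-degree, so the $\theta^1\theta^2\bar{\theta}^{\dot{1}}\bar{\theta}^{\dot{2}}$-coefficient of $(\partial/\partial\theta^\alpha)g$ vanishes and its $\int d\bar{\theta}^{\dot{2}}d\bar{\theta}^{\dot{1}}d\theta^2 d\theta^1$ is zero. For the second summand, pulling the constant $\sigma^\mu_{\alpha\dot{\beta}}$ and the even operator $\partial_\mu$ past the Berezin integration produces $-\sqrt{-1}\,\sigma^\mu_{\alpha\dot{\beta}}\,\partial_\mu\!\int d\bar{\theta}^{\dot{2}}d\bar{\theta}^{\dot{1}}d\theta^2 d\theta^1\,\bar{\theta}^{\dot{\beta}} g$, whose $\int d^4x$ is a boundary term; the $\bar{Q}_{\dot{\beta}}$-variation is symmetric. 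For (2), I work in the standard chiral coordinates $(x',\theta,\bar{\theta})$ of Definition-Lemma~1.3.7, in which $g$ is $\bar{\theta}$-independent; a short chain-rule computation (using $x^{\prime\mu} = x^\mu + \sqrt{-1}\,\theta^\alpha\sigma^\mu_{\alpha\dot{\gamma}}\bar{\theta}^{\dot{\gamma}}$) gives $Q_\alpha = \partial/\partial\theta^\alpha$ and $\bar{Q}_{\dot{\beta}}g = 2\sqrt{-1}\,\theta^\alpha\sigma^\mu_{\alpha\dot{\beta}}\partial g/\partial x^{\prime\mu}$, so $\int d\theta^2 d\theta^1 Q_\alpha g = 0$ by the degree argument, while $\int d\theta^2 d\theta^1\bar{Q}_{\dot{\beta}}g$ is a space-time divergence, hence a boundary term after $\int d^4x$. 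The $S_3$-statement is handled symmetrically in antichiral coordinates.

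The main subtlety to monitor is not the Berezin identities themselves but the passage to chiral/antichiral coordinates: one must verify that the change $x\mapsto x'$ (resp.\ $x\mapsto x''$) preserves the product measure $d^4x\,d\theta^2 d\theta^1$ (resp.\ $d^4x\,d\bar{\theta}^{\dot{2}}d\bar{\theta}^{\dot{1}}$) modulo contributions absorbable into the boundary term, so that the ``boundary term'' interpretation truly survives the coordinate change and what is being proved is genuinely supersymmetry-invariance on the chiral and antichiral sectors. Granted this and the uniformity hypothesis on ${\cal P}$ that powers Lemma~1.5.2(1), the theorem follows from the two elementary Berezin-integration identities outlined above.
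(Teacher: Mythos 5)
Your proposal is correct and is essentially the paper's own argument: both proofs use the uniformity of ${\cal P}$ via Lemma~1.5.2\,(1) to move ${\cal P}$ past $Q_\alpha$ and $\bar{Q}_{\dot{\beta}}$, kill the $\partial/\partial\theta$ (resp.\ $\partial/\partial\bar{\theta}$) piece by the Berezin top-degree argument, and exhibit what remains as a total space-time divergence $\int_X d^4x\,\sum_\mu\partial_\mu(\,\cdots)$, i.e.\ a boundary term. The only deviation is in Statement (2), where the paper never changes coordinates but instead uses the operator identity $\bar{Q}_{\dot{\beta}} = e_{\beta^{\prime\prime}} + 2\sqrt{-1}\,\sum_{\alpha,\mu}\theta^\alpha\sigma^\mu_{\alpha\dot{\beta}}\partial_\mu$ together with $e_{\beta^{\prime\prime}}\breve{f}=0$ — exactly the relation your chain-rule computation in chiral coordinates produces — so the measure-change subtlety you flag never arises (and is in any case harmless, the shear $(x,\theta,\bar{\theta})\mapsto(x^\prime,\theta,\bar{\theta})$ having Berezinian $1$, with the discrepancy itself a total derivative).
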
		

\begin{proof}
 For Statement (1),
   since $Q_{\alpha}, \bar{Q}_{\dot{\beta}}\in \Der_{\Bbb C}(\widehat{X})$,
   it follows
     the invariance of
	     $d^4x\,d\bar{\theta}^{\dot{2}}d\bar{\theta}^{\dot{1}}d\theta^2 d\theta^1$,
         $d^4x d\theta^2 d\theta^1$,  and
		 $d^4x d\bar{\theta}^{\dot{2}} d\bar{\theta}^{\dot{1}}$
		 under the flow that generates supersymmetries,
     Lemma~1.5.2, and basic calculus
                % Lemma [property of ${\cal P}$]
   that
   \begin{eqnarray*}
     \delta_{Q_\alpha}S_1(\breve{f})
	   & :=\: & \int_{\widehat{X}}d^4x\,
	                         d\bar{\theta}^{\dot{2}} d\bar{\theta}^{\dot{1}} d\theta^2 d\theta^1\,
	                       {\cal P}(Q_\alpha\breve{f})\;
			    =\;  \int_{\widehat{X}}d^4x\,
				        d\bar{\theta}^{\dot{2}} d\bar{\theta}^{\dot{1}}  d\theta^2 d\theta^1\,
	                         Q_{\alpha}{\cal P}(\breve{f})\\
       & = & 	-\sqrt{-1}\,
	               \int_{\widehat{X}}d^4x\,
				     d\bar{\theta}^{\dot{2}} d\bar{\theta}^{\dot{1}} d\theta^2 d\theta^1\,
	                \sum_{\dot{\beta},\,\mu}
					  \sigma^\mu_{\alpha\dot{\beta}}\bar{\theta}^{\dot{\beta}}
					  \partial_\mu ({\cal P}(\breve{f}))		\\
       & = & -\sqrt{-1}\,
                  \int_X	d^4x \sum_{\mu}\partial_\mu
                    \left(\rule{0ex}{1.2em}\right.				
					\int d\bar{\theta}^{\dot{2}} d\bar{\theta}^{\dot{1}}d\theta^2 d\theta^1\,
					   \sum_{\dot{\beta}}
					      \sigma^\mu_{\alpha\dot{\beta}}\bar{\theta}^{\dot{\beta}}
						  {\cal P}(\breve{f})
					\left.\rule{0ex}{1.2em}\right)\;\;
				=\;\; -\sqrt{-1}\int_X dB_{\alpha}\,,
   \end{eqnarray*}
   where
     $B_{\alpha}= B_\alpha^0 dx^1\wedge dx^2\wedge dx^3
	                                   - B_\alpha^1 dx^0\wedge dx^2\wedge dx^3
									  + B_\alpha^2 dx^0\wedge dx^1\wedge dx^3
									   - B_\alpha^3 dx^0\wedge dx^1\wedge dx^2$
      is a $3$-form on $X$ with
    $$
	  B_\alpha^\mu \;
	    =\;  \int d\bar{\theta}^{\dot{2}} d\bar{\theta}^{\dot{1}}d\theta^2 d\theta^1\,
					   \sum_{\dot{\beta}}
					      \sigma^\mu_{\alpha\dot{\beta}}\bar{\theta}^{\dot{\beta}}
						  {\cal P}(\breve{f})\,.
	$$
 The proof that  $\delta_{Q_{\alpha}}S(\breve{f})$ is also a boundary term is similar.

 For Statement (2),
  note that  for $\breve{f}$ chiral,
  $\delta_{Q_\alpha}S_2(\breve{f})=0$ always, for $\alpha=1,2$, and,
  thus one only needs to check the variation
  $\delta_{\bar{Q}_{\dot{\beta}}}S_2(\breve{f})$:
   \begin{eqnarray*}
     \delta_{\bar{Q}_{\dot{\beta}}}S_2(\breve{f})
	   & :=\: & \int_{\widehat{X}}d^4x\, d\theta^2 d\theta^1\,
	                       {\cal P}(\bar{Q}_{\dot{\beta}} \breve{f})\;
                =\; \int_{\widehat{X}}d^4x\, d\theta^2 d\theta^1\,
	                       {\cal P}(
						       (e_{\beta^{\prime\prime}}\,
                                  + \, 2\sqrt{-1}\mbox{$\sum$}_{\alpha, \mu}
								          \theta^\alpha \sigma^\mu_{\alpha\dot{\beta}}\partial_\mu													   
							   )\breve{f})                                                   \\
       & = & 	2 \sqrt{-1}\,
	               \int_{\widehat{X}}d^4x\, d\theta^2 d\theta^1\,
	                \sum_{\alpha,\,\mu}
					   \theta^\alpha \sigma^\mu_{\alpha\dot{\beta}}
					   \partial_\mu ({\cal P}(\breve{f}))		\\
       & = & 2\sqrt{-1}\,
                  \int_X	d^4x \sum_{\mu}\partial_\mu
                    \left(\rule{0ex}{1.2em}\right.				
					\int d\theta^2 d\theta^1\,
					   \sum_\alpha
					      \theta^\alpha \sigma^\mu_{\alpha\dot{\beta}} {\cal P}(\breve{f})
					\left.\rule{0ex}{1.2em}\right)\;\;
				=\;\; 2\sqrt{-1}\int_X dC_{\dot{\beta}}\,,
   \end{eqnarray*}
     where
     $C_{\dot{\beta}}= C_{\dot{\beta}}^0 dx^1\wedge dx^2\wedge dx^3
	                                   - C_{\dot{\beta}}^1 dx^0\wedge dx^2\wedge dx^3
									  + C_{\dot{\beta}}^2 dx^0\wedge dx^1\wedge dx^3
									   - C_{\dot{\beta}}^3 dx^0\wedge dx^1\wedge dx^2$
      is a $3$-form on $X$ with
    $$
	  C_{\dot{\beta}}^\mu \;
	    =\;  \int d\theta^2 d\theta^1\,
					\sum_\alpha
					      \theta^\alpha \sigma^\mu_{\alpha\dot{\beta}}{\cal P}(\breve{f})\,.
	$$
  For $\breve{f}$ antichiral,
   $\delta_{\bar{Q}_{\dot{\beta}}}S_3(\breve{f})=0$ always,
       for $\dot{\beta}=\dot{1}, \dot{2}$,    and
   the variation $\delta_{Q_\alpha}S_3(\breve{f})$, $\alpha=1,2$,
     can be computed similarly to show that it is a boundary term on $X$.
   
  This completes the proof.
  
\end{proof}
  
\bigskip

\begin{remark} $[$Lorentz invariance and
                                 ${\cal P}$-dependence of variations of components under supersymmetry $]$\; {\rm
 (1)
 For applications to particle physics,
  one takes the real-part of the complex-valued functional $S(\tinybullet)$ (if it is not already real)
    to get the action functional for the component fields
    $f^{\tinybullet}_{\tinybullet}$ of $\breve{f}$
  and requires in addition that the action functional be Lorentz-invariant,
      which is usually automatic
	  when ${\cal P}$ comes from natural evaluation maps built-into the problem.
	
 (2)
 Note also that
  from the equalities
    $Q_\alpha\,{\cal P}(\breve{f})= {\cal P}(Q_\alpha \breve{f})$,
	$\bar{Q}_{\dot{\beta}}\,{\cal P}(\breve{f})
	   = {\cal P}(\bar{Q}_{\dot{\beta}} \breve{f})$,
	$\alpha=1,2$, $\dot{\beta}=\dot{1}, \dot{2}$,
  for a uniform purge-evaluation map ${\cal P}$,  											
 the variation under supersymmetry of component fields of $\breve{f}$ for a physics model
  depends on the choice of ${\cal P}$ as well.
}\end{remark}

\bigskip

\section{The chiral/antichiral theory on $X^{\physics}$ and Wess-Zumino model}

Having made the effort to build a platform from complexified ${\Bbb Z}/2$-graded $C^\infty$-Algebraic Geometry
 that incorporates basic requirements from Quantum Field Theory,
one would like to know whether all the well-established supersymmetric quantum field theories in physics 
 fit into the setting.
In this work, we make a humble start to re-look at two earliest constructed, most basic
   supersymmetric quantum field theories in physics:
 the Wess-Zumino model (current section) and the supersymmetric $U(1)$ gauge theory with matter (the next section)
 to justify the validity.

\bigskip

\subsection{More on the chiral and the antichiral sector of $X^{\physics}$}
		
As a preparation to study the Wess-Zumino model,
 some further details of the chiral sector and the antichiral sector of $X^{\physics}$
 are given in this subsection.

\bigskip
 
\begin{flushleft}
{\bf The chiral sector of $X^{\physics}$}
\end{flushleft}
\begin{definition} {\bf [chiral function-ring \& chiral structure sheaf of $X^{\physics}$]}\; {\rm
 (1) An $\breve{f}\in C^{\infty}(X^{\physics})$ is called {\it chiral}
             if $e_{1^{\prime\prime}}\breve{f}= e_{2^{\prime\prime}}\breve{f}=0$.
 (2) As the addition and the multiplication of two chiral functions remain chiral,
             the set of chiral functions on $X^{\physics}$ form a ring, called the {\it chiral function-ring}
			 of $X^{\physics}$, denoted by $C^{\infty}(X^{\physics})^{\scriptsizech}$.
 (3) Localizing Item (1) and Item (2) to open sets of $X$,
             one obtains a sheaf ${\cal O}_X^{\physics, \scriptsizech}$
             of chiral functions on $X^{\physics}$ from the assignment
			 $U\mapsto C^{\infty}(U^{\physics})^{\scriptsizech}$ for $U$ open sets of $X$.
            ${\cal O}_X^{\physics, \scriptsizech}$ is call the {\it chiral structure sheaf} of $X^{\physics}$.
		    By construction, ${\cal O}_X^{\physics, \scriptsizech}\subset {\cal O}_X^{\physics}$ 	
			 as ${\cal O}_X^{\,\Bbb C}$-algebras.
}\end{definition}

\medskip

\begin{lemma} {\bf [$C^\infty$-hull of $C^\infty(X^{\physics})^{\scriptsizech}$]}\;
  The $C^\infty$-hull $\,C^\infty$-{\it hull}\,$(C^\infty(X^{\physics}))$
    of\\ $C^\infty(X^{\physics})$
   restricts to the $C^\infty$-hull $\,C^\infty$-{\it hull}\,$(C^\infty(X^{\physics})^{\scriptsizech})$ 
     of $C^\infty(X^{\physics})^{\scriptsizech}$.
 Thus,\\ $C^\infty(X^{\physics})^{\scriptsizech}$ is a complexified $C^\infty$-ring.
\end{lemma}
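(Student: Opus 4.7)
The plan is to show that if $\breve{f}_1, \ldots, \breve{f}_l \in C^\infty(X^{\physics})^{\scriptsizech}$ all lie in $C^\infty$-hull$(C^\infty(X^{\physics}))$ and $h \in C^\infty({\Bbb R}^l)$, then $h(\breve{f}_1, \ldots, \breve{f}_l)$, computed inside the ambient hull, is again chiral. From this closure property both the restriction claim and the complexified $C^\infty$-ring structure on $C^\infty(X^{\physics})^{\scriptsizech}$ follow at once.

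First I would use Lemma~1.4.2 to package the hypothesis: the condition $\breve{f}_{i,(0)} \in C^\infty(X)$ says the $(\theta,\bar\theta)$-constant term of each $\breve{f}_i$ is real-valued. Decompose $\breve{f}_i = \breve{f}_{i,(0)} + \widetilde{n}_i$ in standard coordinates; the remainder $\widetilde{n}_i$ lies in the ideal of elements of strictly positive total $(\theta,\bar\theta,\vartheta,\bar\vartheta)$-degree, which is nilpotent with nilpotency bounded by the total anticommuting degree of $\widehat{\cal O}_X^{\,\widehat{\boxplus}}$. The formal Taylor expansion
$$
h(\breve{f}_1, \ldots, \breve{f}_l) \;=\; \sum_{\boldsymbol{k} \in {\Bbb N}^l} \frac{1}{\boldsymbol{k}!}\,(\partial^{\boldsymbol{k}} h)\bigl(\breve{f}_{1,(0)}, \ldots, \breve{f}_{l,(0)}\bigr) \prod_{i=1}^l \widetilde{n}_i^{k_i}
$$
is therefore a finite sum, defines the value of $h$ inside the ambient hull, and has leading component $h(\breve{f}_{1,(0)}, \ldots, \breve{f}_{l,(0)}) \in C^\infty(X)$, which already places the result back in $C^\infty$-hull$(C^\infty(X^{\physics}))$.

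For the chirality step I would invoke the $C^\infty$-ring chain rule for the derivations $e_{1^{\prime\prime}}, e_{2^{\prime\prime}} \in \Der_{\Bbb C}(\widehat{X}) \hookrightarrow \Der_{\Bbb C}(\widehat{X}^{\widehat{\boxplus}})$:
$$
e_{\dot\beta^{\prime\prime}}\,h(\breve{f}_1, \ldots, \breve{f}_l) \;=\; \sum_{i=1}^l (\partial_i h)(\breve{f}_1, \ldots, \breve{f}_l)\cdot e_{\dot\beta^{\prime\prime}} \breve{f}_i.
$$
Since each input is chiral, $e_{\dot\beta^{\prime\prime}} \breve{f}_i = 0$ and the right side vanishes, so $h(\breve{f}_1, \ldots, \breve{f}_l) \in C^\infty(X^{\physics})^{\scriptsizech}$. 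Localizing the entire discussion to open sets $U \subset X$ upgrades the ring statement to the corresponding statement about the sheaf ${\cal O}_X^{\physics,\scriptsizech}$ and makes it into a sheaf of complexified $C^\infty$-rings.

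The main obstacle is justifying the chain rule for $e_{\dot\beta^{\prime\prime}}$ on the ambient hull, since Definition~1.1.2 only imposes the ${\Bbb Z}/2$-graded Leibniz rule. For a first-order differential operator like $e_{\dot\beta^{\prime\prime}} = -\frac{\partial}{\partial\bar\theta^{\dot\beta}} - \sqrt{-1}\sum_{\alpha,\mu}\theta^\alpha\sigma^\mu_{\alpha\dot\beta}\frac{\partial}{\partial x^\mu}$ with smooth coefficients, however, the chain rule can be checked directly by applying $e_{\dot\beta^{\prime\prime}}$ term-by-term to the finite Taylor sum displayed above: the bosonic piece $\frac{\partial}{\partial x^\mu}$ satisfies the smooth chain rule tautologically on $C^\infty(X)$-valued Taylor coefficients, while the fermionic piece $\frac{\partial}{\partial\bar\theta^{\dot\beta}}$ annihilates those coefficients and propagates through the nilpotent products by Leibniz; re-collecting the result reproduces $\sum_i (\partial_i h)(\breve{f}_1, \ldots, \breve{f}_l)\cdot e_{\dot\beta^{\prime\prime}}\breve{f}_i$. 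Equivalently, one can derive the chain rule from its analogue for the exterior differential $d$ stipulated in Definition~1.1.3~(3) via the duality pairing of Lemma~1.1.5, and that finishes the proof.
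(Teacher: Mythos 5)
Your proposal is correct, but it proves the lemma by a genuinely different route than the paper. The paper's proof passes to the chiral coordinate functions $(x^\prime,\theta^\prime,\bar{\theta}^\prime,\vartheta^\prime,\bar{\vartheta}^\prime)$, uses the structural form of a chiral function from Definition/Lemma~1.3.6(4), namely $\breve{f}_i=f^{\prime(0)}_{i,(0)}(x^\prime)+\sum_\alpha\theta^{\prime\alpha}\vartheta^\prime_\alpha f^{\prime(\alpha)}_{i,(\alpha)}(x^\prime)+\theta^{\prime 1}\theta^{\prime 2}\vartheta^\prime_1\vartheta^\prime_2 f^{\prime(12)}_{i,(12)}(x^\prime)$, and then writes out the Taylor expansion of $h(\breve{f}_1,\ldots,\breve{f}_l)$ via [L-Y1: Lemma~1.1.3]: since any triple product of the nilpotent parts vanishes, the expansion truncates at second order, and after a short spinor-calculus manipulation ($\sum_{\alpha,\gamma}\varepsilon^{\alpha\gamma}\varepsilon_{\alpha\gamma}$-type collection of the quadratic term) the result is \emph{manifestly} of the chiral form again. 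You instead stay in standard coordinates and prove chirality abstractly: the chiral subring is the common kernel of $e_{1^{\prime\prime}},e_{2^{\prime\prime}}$, and derivations commute with the $C^\infty$-operations, so the kernel is automatically closed under the hull. Your one genuine obligation --- which you correctly identify --- is that Definition~1.1.2 only stipulates the ${\Bbb Z}/2$-graded Leibniz rule, not the chain rule; both of your justifications are sound. The term-by-term check telescopes as you say (using $e_{\dot{\beta}^{\prime\prime}}\widetilde{n}_i=-\,e_{\dot{\beta}^{\prime\prime}}\breve{f}_{i,(0)}$, with no sign complications since every $\breve{f}_i$ and every coefficient $(\partial^{\boldsymbol{k}}h)(\breve{f}_{1,(0)},\ldots,\breve{f}_{l,(0)})$ is even), and the alternative appeal to the chain-rule relators of Definition~1.1.3(3) together with the evaluation pairing of Lemma~1.1.5 also works, though you should note that the evaluation carries the factor $(-1)^{p(\xi)p(a)}$ --- harmless here only because the coefficients $(\partial_k h)(\breve{f}_1,\ldots,\breve{f}_l)$ are even while $e_{\dot{\beta}^{\prime\prime}}$ is odd. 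As for what each approach buys: yours is more conceptual and transfers verbatim to the antichiral case (Lemma~2.1.5, replacing $e_{\dot{\beta}^{\prime\prime}}$ by $e_{\alpha^\prime}$) and indeed to any sector cut out by derivations from $\Der_{\Bbb C}(\widehat{X})$; the paper's explicit computation is longer but produces a closed formula for $h(\breve{f}_1,\ldots,\breve{f}_l)$ in chiral coordinates, of the kind it reuses in the Wess-Zumino computations of Sec.\,2.2.
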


\begin{proof}
 Let $h\in C^{\infty}({\Bbb R}^l)$  and
   $\breve{f}_1,\,\cdots\,, \breve{f}_l
      \in C^{\infty}(X^{\physics})^{\scriptsizech}\cap C^{\infty}$-{\it hull}\,$(X^{\physics})$.
 Then
     it follows from Definition/Lemma~1.3.6
                         % Definition/Lemma [chiral sector $\widehat{X}^{\widehat{\boxplus}_l, \scriptsizech}$
                         %                                 of $\widehat{X}^{\widehat{\boxplus}_l}$]
    that
	in terms of the chiral  coordinate-functions
    $(x^\prime, \theta^\prime, \bar{\theta}^\prime, \vartheta^\prime, \bar{\vartheta}^\prime )$
   on $\widehat{X}^{\widehat{\boxplus}}$,\\
   $\breve{f}_i=f^{\prime (0)}_{i,(0)}(x^\prime)
             + \sum_\alpha \theta^\prime\,\!^\alpha\vartheta^\prime_{\alpha}
			        f^{\prime\,(\alpha)}_{i,(\alpha)}(x^\prime)
			 + \theta^\prime\,\!^1\theta^\prime\,\!^2\vartheta^\prime_1\vartheta^\prime_2
				    f^{\prime\,(12)}_{i,(12)}(x^\prime)\,$,
   for $i=1,\,\ldots\,,\, l$, and hence\\
   $h(\breve{f}_1,\,\cdots\,,\, \breve{f}_l)
       \in C^{\infty}$-{\it hull}\,$(C^{\infty}(X^{\physics}))$
   can be expressed as
   {\small
   \begin{eqnarray*}
    \lefteqn{
     h(f^{\prime (0)}_{1,(0)}(x^{\prime}),\,
	                                                             \cdots\,, f^{\prime (0)}_{l,(0)}(x^\prime)   )  }\\
   &&																				
     +	 \sum_{k=1}^l
      (\partial_kh)  (f^{\prime (0)}_{1,(0)}(x^{\prime}),\,
	                                      \cdots\,, f^{\prime (0)}_{l,(0)}(x^\prime)   )
		   \cdot
           \left(\rule{0ex}{1.2em}\right.\!
		     \sum_\alpha \theta^\prime\,\!^\alpha\vartheta^\prime_\alpha
		       f^{\prime\,(\alpha)}_{k,(\alpha)}(x^\prime)
		   + \theta^{\prime}\,\!^1 \theta^{\prime}\,\!^2  \vartheta^\prime_1 \vartheta^\prime_2
		         f^{\prime\,(12)}_{k,(12)}(x^\prime)
		   \!\left.\rule{0ex}{1.2em}\right)		     \\
    &&		
	 -\, \theta^{\prime 1}\theta^{\prime 2}\vartheta^\prime_1\vartheta^\prime_2
	    \sum_{k_1, k_2=1}^l
      (\partial_{k_1}\partial_{k_2}h)  (f^{\prime (0)}_{1,(0)}(x^{\prime}),\,
		                                                                                 \cdots\,, f^{\prime (0)}_{l,(0)}(x^\prime) )
        \cdot																						
		     f^{\prime (1)}_{k_1, (1)}(x^\prime)
			 f^{\prime (2)}_{k_2, (2)}(x^\prime)\,,
   \end{eqnarray*}}which   %end-small
   is chiral.		
  Here, the following computation and a change of dummy indices $k_1$ and $k_2$ are used to obtain the last term
  {\small
   \begin{eqnarray*}
     \lefteqn{
     \sum_{\alpha,\gamma}
	     \theta^{\prime\alpha}\vartheta^\prime_\alpha
		 \theta^{\prime\gamma}\vartheta^\prime_\gamma
    	 f^{\prime (\alpha)}_{k_1,(\alpha)}(x^\prime)
 		 f^{\prime (\gamma)}_{k_2, (\gamma)}(x^\prime)    \;\;
        =\;\; \theta^{\prime 1}\theta^{\prime 2}\vartheta^\prime_1\vartheta^\prime_2
	             \sum_{\alpha,\gamma}
				     \varepsilon^{\alpha\gamma}\varepsilon_{\alpha\gamma}
					 	 f^{\prime (\alpha)}_{k_1,(\alpha)}(x^\prime)
 		                 f^{\prime (\gamma)}_{k_2, (\gamma)}(x^\prime)  }\\
		&&
		 =\; -\, \theta^{\prime 1}\theta^{\prime 2}\vartheta^\prime_1\vartheta^\prime_2
			            \cdot
						 \left(\rule{0ex}{1.2em}\right.\!
						   f^{\prime (1)}_{k_1,(1)}(x^\prime)
						    f^{\prime (2)}_{k_2, (2)}(x^\prime)
							+f^{\prime (2)}_{k_1,(2)}(x^\prime)
							   f^{\prime (1)}_{k_2, (1)}(x^\prime)
						  \!\left.\rule{0ex}{1.2em}\right)  \hspace{6em}
   \end{eqnarray*}}This  % end-small
  proves the lemma.
   
\end{proof}

\medskip

\begin{lemma} {\bf [chiral function on $X^{\physics}$ in terms of
                                            $(x,\theta,\bar{\theta},\vartheta,\bar{\vartheta})$]}\;
 In terms of the standard coordinate functions $(x,\theta, \bar{\theta},\vartheta,\bar{\vartheta})$
   on $\widehat{X}^{\widehat{\boxplus}}$,
  a chiral function $\breve{f}$ on $X^{\physics} $ is determined by the four components
     $f^{(0)}_{(0)}$, $f^{(\alpha)}_{(\alpha)}$, and
	 $f^{(12)}_{(12)}$,  $\alpha=1,2$,
   of $\breve{f}$
   via the following formula
   \begin{eqnarray*}
     \breve{f} & = &
	   f_{(0)}^{(0)}(x)
	   + \sum_\alpha \theta^\alpha\vartheta_\alpha f_{(\alpha)}^{(\alpha)}(x)
	   + \theta^1\theta^2\vartheta_1\vartheta_2 f_{(12)}^{(12)}(x)
       + \sqrt{-1} \sum_{\alpha,\dot{\beta}}
	          \theta^\alpha\bar{\theta}^{\dot{\beta}}	
			   \sum_\mu
			     \sigma^\mu_{\alpha\dot{\beta}}\partial_\mu f_{(0)}^{(0)}(x) \\
      && \hspace{1em}				
       + \sqrt{-1}\sum_{\dot{\beta}, \mu}				
	        \theta^1\theta^2\bar{\theta}^{\dot{\beta}}
			 \mbox{\Large $($}
			    \vartheta_1\sigma^\mu_{2\dot{\beta}}\partial_\mu f_{(1)}^{(1)}(x)
			     - \vartheta_2\sigma^\mu_{1\dot{\beta}}\partial_\mu f_{(2)}^{(2)}(x)
			 \mbox{\Large $)$}
       - \theta^1\theta^2\bar{\theta}^{\dot{1}}\bar{\theta}^{\dot{2}}\,
	        \square f_{(0)}^{(0)}(x)\,,
   \end{eqnarray*}
  where\footnote{In
                                [L-Y1: Lemma 1.4.14], we define the Laplacian $\square$ as
								$-\sum_\mu\partial^\mu\partial_\mu= \partial_0^2-\partial_1^2-\partial_2^2-\partial_3^2$.
                               Here, we recover the convention used in [W-B] of Wess \& Bagger.
								 } % end-footnote            								
    $\square := \sum_\mu \partial^\mu\partial_\mu = - \partial_0^2+\partial_1^2+\partial_2^2+\partial_3^2$.
\end{lemma}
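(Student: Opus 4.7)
The plan is to start from the characterization of chiral functions in chiral coordinates, then translate to the standard coordinates $(x,\theta,\bar\theta,\vartheta,\bar\vartheta)$ via the substitution $x'^{\mu}=x^{\mu}+\sqrt{-1}\sum_{\alpha,\dot\beta}\theta^{\alpha}\sigma^{\mu}_{\alpha\dot\beta}\bar\theta^{\dot\beta}$, using the $C^{\infty}$-hull relations from Definition 1.1.3 to Taylor-expand and nilpotency to truncate.

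First, by Definition/Lemma 1.3.6(4) and the physics-sector constraints described in Sec.~1.4 (chiral functions in the physics sector are exactly those chiral functions whose $(\vartheta,\bar\vartheta)$-coefficients in each $(\theta,\bar\theta)$-slot are of the minimal balanced type), we may write a chiral $\breve f\in C^{\infty}(X^{\tinyphysics})^{\scriptsizech}$ in the standard chiral coordinate-functions as
\[
\breve f\;=\;f^{(0)}_{(0)}(x')\,+\,\sum_{\alpha}\theta^{\alpha}\vartheta_{\alpha}\,f^{(\alpha)}_{(\alpha)}(x')\,+\,\theta^1\theta^2\vartheta_1\vartheta_2\,f^{(12)}_{(12)}(x'),
\]
with $f^{(0)}_{(0)},f^{(\alpha)}_{(\alpha)},f^{(12)}_{(12)}\in C^{\infty}(X)^{\,\Bbb C}$. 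This identifies the four data that determine $\breve f$, and also shows why no other component fields can appear for a chiral element of $C^{\infty}(X^{\tinyphysics})$.

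Next, I would substitute $x'^{\mu}=x^{\mu}+\sqrt{-1}\,\theta\sigma^{\mu}\bar\theta$ into each of $f^{(0)}_{(0)}(x')$, $f^{(\alpha)}_{(\alpha)}(x')$, $f^{(12)}_{(12)}(x')$ and expand by the chain rule built into $\mbox{\it $C^\infty$-hull}\,(C^{\infty}(\widehat X^{\widehat{\boxplus}}))$ (Relator (3) in Definition 1.1.3). Since the shift $\delta x^{\mu}:=\sqrt{-1}\,\theta\sigma^{\mu}\bar\theta$ is quadratic in the nilpotent $(\theta,\bar\theta)$, only the terms of order $0$, $1$ and $2$ in $\delta x$ can survive when multiplied against $1$, $\theta^{\alpha}\vartheta_{\alpha}$, or $\theta^1\theta^2\vartheta_1\vartheta_2$ respectively; all higher Taylor terms are killed by nilpotency of the $\theta$'s (at most two $\theta$'s can appear). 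For the $f^{(\alpha)}_{(\alpha)}$ term, the antisymmetry $\theta^{\alpha}\theta^{\gamma}=\varepsilon^{\alpha\gamma}\theta^1\theta^2$ then forces exactly the combination $\vartheta_1\sigma^{\mu}_{2\dot\beta}\partial_{\mu}f^{(1)}_{(1)}-\vartheta_2\sigma^{\mu}_{1\dot\beta}\partial_{\mu}f^{(2)}_{(2)}$ that appears in the stated formula. For the $f^{(12)}_{(12)}$ term, only the zeroth-order piece can contribute because any further $\theta$ would land on an already-saturated $\theta^1\theta^2$.

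The only nontrivial step is the second-order piece in the expansion of $f^{(0)}_{(0)}(x')$, which produces
\[
-\tfrac12\sum_{\mu,\nu}(\theta\sigma^{\mu}\bar\theta)(\theta\sigma^{\nu}\bar\theta)\,\partial_{\mu}\partial_{\nu}f^{(0)}_{(0)}(x).
\]
Here I would invoke the standard spinor identity
\[
(\theta\sigma^{\mu}\bar\theta)(\theta\sigma^{\nu}\bar\theta)\;=\;\tfrac12\,\theta^1\theta^2\bar\theta^{\dot1}\bar\theta^{\dot2}\,\eta^{\mu\nu}
\]
(up to the sign fixed by the Appendix's spinor conventions) to collapse the double sum to a single d'Alembertian $\square=\sum_{\mu}\partial^{\mu}\partial_{\mu}$, yielding the final summand $-\theta^1\theta^2\bar\theta^{\dot1}\bar\theta^{\dot2}\,\square f^{(0)}_{(0)}(x)$. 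Collecting the three expansions gives exactly the claimed formula.

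The main obstacle, and the only place requiring genuine calculation, is bookkeeping the signs in the quadratic-in-$\delta x$ contribution so that it matches the $\square$ convention stated in the footnote. Everything else is a direct application of the $C^{\infty}$-hull chain rule, nilpotency, and the Lemma 2.1.2 guarantee that the Taylor expansion makes sense inside the $C^{\infty}$-hull of $C^{\infty}(X^{\tinyphysics})^{\scriptsizech}$.
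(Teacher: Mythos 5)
Your proposal is correct and takes essentially the same route as the paper's first proof: write the chiral $\breve f$ in the chiral coordinate functions using the physics-sector characterization, substitute $x^{\prime\mu}=x^\mu+\sqrt{-1}\sum_{\alpha,\dot\beta}\theta^\alpha\sigma^\mu_{\alpha\dot\beta}\bar\theta^{\dot\beta}$, Taylor-expand via the $C^\infty$-hull structure (the paper invokes [L-Y1: Lemma 1.1.3], with nilpotency truncating at second order), and collapse the quadratic term with the spinor identity $(\theta\sigma^\mu\bar\theta)(\theta\sigma^\nu\bar\theta)\propto\eta^{\mu\nu}\,\theta^1\theta^2\bar\theta^{\dot 1}\bar\theta^{\dot 2}$ to produce the $\square f^{(0)}_{(0)}$ term. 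The paper also records a second proof (directly solving the overdetermined system $e_{1^{\prime\prime}}\breve f=e_{2^{\prime\prime}}\breve f=0$, kept because it generalizes to sections of a bundle with connection), but your argument is exactly its ``slick'' first proof.
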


\begin{proof}
 Similar to the proof [L-Y1: Lemma 1.4.14], one can prove the statement in two ways.
 The first proof is slick
   while the second proof can be generalized to the situation that involves sections of a bundle with a connection
    when one studies supersymmetric D-branes.
 
 \medskip
 
 \noindent
 $(a)$ {\it First proof}
 
 \smallskip
 
 \noindent
 In terms of the chiral coordinate functions
   $(x^\prime, \theta^\prime, \bar{\theta}^\prime, \vartheta^\prime, \bar{\vartheta}^\prime)
      :=(x+\sqrt{-1}\theta\boldsigma\bar{\theta}^t, \theta,\bar{\theta}, \vartheta, \bar{\theta})$
   on $\widehat{X}^{\widehat{\boxplus}}$	
 a chiral function $\breve{f}$ on $X^{\physics}$ can be written as
 $$
   \breve{f}\;
   =\;  f^{\prime (0)}_{(0)}(x^\prime)\,
	           +\, \sum_\alpha  \theta^\prime\,\!^\alpha \vartheta^\prime_\alpha\,
			            f^{\prime (\alpha)}_{(\alpha)}(x^\prime)\,			                                                                         
			   +\,\theta^\prime\,\!^1 \theta^\prime\,\!^2\vartheta^\prime_1 \vartheta^\prime_2
			           f^{\prime (12)}_{(12)}(x^\prime)
 $$
 where
   $f^\prime_{(0)},\,  f^\prime_{(\alpha)},\,  f^\prime_{(12)}
       \in C^\infty({\Bbb R}^4)^{\Bbb C}$.
 [L-Y1: Lemma~1.1.3]
     applied to the real and the imaginary component of
                   % Lemma [$C^{\infty}$ evaluation after nilpotent perturbation]
        $f^{\prime (0)}_{(0)},\,  f^{\prime (\alpha)}_{(\alpha)},\,
		   f^{\prime (12)}_{(12)}$
  gives		
  \begin{eqnarray*}
   f^{\prime (\tinybullet)}_{(\tinybullet)}(x^{\prime})
   & = &
    f^{\prime (\tinybullet)}_{(\tinybullet)}
     (x+\sqrt{-1}\theta\boldsigma\bar{\theta}^t)   \\
   & = & f^{\prime (\tinybullet)}_{(\tinybullet)}(x)\,
        +\,\sqrt{-1}\,  \sum_\mu (\partial_{\mu}f^{\prime (\tinybullet)}_{(\tinybullet)})(x)\,
		         \theta\sigma^\mu \bar{\theta}^t\,
		-\,\frac{1}{2}\,\sum_{\mu,\nu}
		     (\partial_\mu\partial_\nu f^{\prime (\tinybullet)}_{(\tinybullet)})(x)\,
			     (\theta\sigma^\mu\bar{\theta}^t)(\theta\sigma^\nu\bar{\theta}^t)\,.
  \end{eqnarray*}
 The claim follows from
  applying the expansion to
     $f^{\prime (0)}_{(0)}(x^\prime)$,
	 $f^{\prime (\alpha)}_{(\alpha)}(x^\prime)$,  and
     $f^{\prime (12)}_{(12)}(x^\prime)$,
  collecting like terms in $(\theta,\bar{\theta})$,  simplifying them via spinor calculus, and
  re-denoting $f^{\prime (\tinybullet)}_{(\tinybullet)}(x)$
    by $f^{(\tinybullet)}_{(\tinybullet)}(x)$.
    
 \bigskip

 \noindent
 $(b)$  {\it Second proof}
 
 \medskip
 
 \noindent
 This is done by directly solving the system of differential equations obtained from the chiral condition on $\breve{f}$.
 
 Let
 {\small
 \begin{eqnarray*}
  \breve{f}
   & =  &
   \breve{f}_{(0)}
   + \sum_{\alpha}\theta^\alpha\breve{f}_{(\alpha)}
   + \sum_{\dot{\beta}}\bar{\theta}^{\dot{\beta}}\breve{f}_{(\dot{\beta})}
   + \theta^1\theta^2 \breve{f}_{(12)}
   + \sum_{\alpha,\dot{\beta}} \theta^\alpha\bar{\theta}^{\dot{\beta}}
           \breve{f}_{(\alpha\dot{\beta})}
   + \bar{\theta}^{\dot{1}}\bar{\theta}^{\dot{2}} \breve{f}_{(\dot{1}\dot{2})} \\
 && \hspace{4em}
   + \sum_{\dot{\beta}}\theta^1\theta^2\bar{\theta}^{\dot{\beta}}
           \breve{f}_{(12\dot{\beta})}
   + \sum_\alpha \theta^\alpha\bar{\theta}^{\dot{1}}\bar{\theta}^{\bar{\dot{2}}}
           \breve{f}_{(\alpha\dot{1}\dot{2})}
   + \theta^1\theta^2\bar{\theta}^{\dot{1}}\bar{\theta}^{\dot{2}}
           \breve{f}_{(12\dot{1}\dot{2})} \\
  &= &
    f^{(0)}_{(0)}
	+ \sum_{\alpha}\theta^\alpha\vartheta_\alpha f^{(\alpha)}_{(\alpha)}
	+ \sum_{\dot{\beta}}
	       \bar{\theta}^{\dot{\beta}}\bar{\vartheta}_{\dot{\beta}}
		     f^{(\dot{\beta})}_{(\dot{\beta})}  \\
  && \hspace{1em}			
    +\; \theta^1\theta^2\vartheta_1\vartheta_2 f^{(12)}_{(12)}
	+ \sum_{\alpha,\dot{\beta}}\theta^\alpha \bar{\theta}^{\dot{\beta}}
	      \left(\rule{0ex}{1.2em}\right.\!
		    \sum_\mu \sigma^\mu_{\alpha\dot{\beta}} f^{(0)}_{[\mu]}\,
			 +\, \vartheta_\alpha\bar{\vartheta}_{\dot{\beta}}
			        f^{(\alpha\dot{\beta})}_{(\alpha\dot{\beta})}
		  \!\left.\rule{0ex}{1.2em}\right)
    + \bar{\theta}^{\dot{1}}\bar{\theta}^{\dot{2}}
	   \bar{\vartheta}_{\dot{1}}\bar{\vartheta}_{\dot{2}}
	    f^{(\dot{1}\dot{2})}_{(\dot{1}\dot{2})}  \\
  && \hspace{1em}		
	+ \sum_{\dot{\beta}}\theta^1\theta^2\bar{\theta}^{\dot{\beta}}
	     \left(\rule{0ex}{1.2em}\right.\!
		   \sum_\alpha \vartheta_\alpha f^{(\alpha)}_{(12\dot{\beta})}\,
		    +\, \vartheta_1\vartheta_2\bar{\vartheta}_{\dot{\beta}}
			            f^{(12\dot{\beta})}_{(12\dot{\beta})}
		 \!\left.\rule{0ex}{1.2em}\right)
    + \sum_\alpha \theta^\alpha\bar{\theta}^{\dot{1}}\bar{\theta}^{\dot{2}}
	   \left(\rule{0ex}{1.2em}\right.\!
	     \sum_{\dot{\beta}} \bar{\vartheta}_{\dot{\beta}}
		     f^{(\dot{\beta})}_{(\alpha\dot{1}\dot{2})}\,
         +\, \vartheta_\alpha \bar{\vartheta}_{\dot{1}}\bar{\vartheta}_{\dot{2}}	
		           f^{(\alpha\dot{1}\dot{2})}_{(\alpha\dot{1}\dot{2})}
	   \!\left.\rule{0ex}{1.2em}\right)\\
  && \hspace{1em}
	+\; \theta^1\theta^2\bar{\theta}^{\dot{1}}\bar{\theta}^{\dot{2}}
	     \left(\rule{0ex}{1.2em}\right.\!
		  f^{(0)}_{(12\dot{1}\dot{2})}
		  + \sum_{\alpha,\dot{\beta}} \vartheta_\alpha\bar{\vartheta}_{\dot{\beta}}
		         f^{(\alpha\dot{\beta})}_{(12\dot{1}\dot{2})}
		  + \vartheta_1\vartheta_2\bar{\vartheta}_{\dot{1}}\bar{\vartheta}_{\dot{2}}
		        f^{(12\dot{1}\dot{2})}_{(12\dot{1}\dot{2})}
		 \!\left.\rule{0ex}{1.2em}\right)    \\
   & \in & C^\infty(X^{\physics})\,.		
 \end{eqnarray*}}Then %\endsmal
 a straightforward computation with applications of basic identities in spinor calculus (cf.\ Appendix) gives
 {\small
 \begin{eqnarray*}
  - e_{\beta^{\prime\prime}}\breve{f}
    & = &   \left(\rule{0ex}{1.2em}\right.\!
                  \frac{\partial}{\partial \bar{\theta}^{\dot{\beta}}}
				   + \sqrt{-1}\sum_{\alpha,\mu}
				         \theta^\alpha \sigma^\mu_{\alpha\dot{\beta}} \partial_\mu
	             \!\left.\rule{0ex}{1.2em}\right)
				 \breve{f}\\
    & = &	\breve{f}_{(\dot{\beta})}	
               + \sum_\alpha \theta^\alpha
                    \left(\rule{0ex}{1.2em}\right.\!			
					 -\breve{f}_{(\alpha\dot{\beta})}
					 + \sqrt{-1}\sum_\mu \sigma^\mu_{\alpha\dot{\beta}}
					       \partial_\mu \breve{f}_{(0)}
					\!\left.\rule{0ex}{1.2em}\right)
               - \sum_{\dot{\delta}} \bar{\theta}^{\dot{\delta}} 			
			         \varepsilon_{\dot{\beta}\dot{\delta}}\breve{f}_{(\dot{1}\dot{2})} \\
    &&  +\, \theta^1\theta^2
                  \left(\rule{0ex}{1.2em}\right.\!	
				    \breve{f}_{(12\dot{\beta})}
					 + \sqrt{-1}\sum_{\alpha,\gamma,\mu}
					       \varepsilon^{\alpha\gamma}\sigma^\mu_{\alpha\dot{\beta}}
						      \partial_\mu \breve{f}_{(\gamma)}
				  \!\left.\rule{0ex}{1.2em}\right)	
	       +\, \sum_{\alpha, \dot{\delta}}
	          \theta^\alpha\bar{\theta}^{\dot{\delta}}
			   \left(\rule{0ex}{1.2em}\right.\!
			    \varepsilon_{\dot{\beta}\dot{\gamma}}\breve{f}_{(\alpha\dot{1}\dot{2})}
				+ \sqrt{-1}\sum_\mu \sigma^\mu_{\alpha\dot{\beta}}
				      \partial_\mu\breve{f}_{(\dot{\delta})}
	           \!\left.\rule{0ex}{1.2em}\right)                                       \\
    && +\, \sum_{\dot{\delta}}
	             \theta^1\theta^2\bar{\theta}^{\dot{\delta}}
				   \left(\rule{0ex}{1.2em}\right.\!
	                 -\varepsilon_{\dot{\beta}\dot{\delta}}\breve{f}_{(12\dot{1}\dot{2})}
					  + \sqrt{-1}\sum_{\alpha,\gamma,\mu}
					       \varepsilon^{\alpha\gamma} \sigma^\mu_{\alpha\dot{\beta}}
						   \partial_\mu\breve{f}_{(\gamma\dot{\delta})}
				   \!\left.\rule{0ex}{1.2em}\right)
           +\, \sqrt{-1}\sum_{\alpha,\mu}
	              \theta^\alpha\bar{\theta}^{\dot{1}}\bar{\theta}^{\dot{2}}
				   \sigma^\mu_{\alpha\dot{\beta}}\partial_\mu\breve{f}_{(\dot{1}\dot{2})}   \\				  
    &&  +\,  \sqrt{-1}\,
			    \theta^1\theta^2\bar{\theta}^{\dot{1}}\bar{\theta}^{\dot{2}}
				  \sum_{\alpha,\gamma,\mu}
				   \varepsilon^{\alpha\gamma}\sigma^\mu_{\alpha\dot{\beta}}
				      \partial_\mu\breve{f}_{(\gamma\dot{1}\dot{2})}\,,				
 \end{eqnarray*}}$\beta^{\prime\prime}=1^{\prime\prime}, 2^{\prime\prime}$. % end-small
 Similar to [L-Y1: proof of Lemma 1.4.14],
 solving the overdetermined system of equations
  on components $f^{\tinybullet}_{\tinybullet}$ of $\breve{f}$
  obtained by setting $e_{1^{\prime\prime}}\breve{f}=e_{2^{\prime\prime}}\breve{f}=0$,
  now with applications of basic identities from spinor calculus (cf.\ Appendix),
 proves the statement.
  
\end{proof}

\bigskip

\begin{flushleft}
{\bf The antichiral sector of $X^{\physics}$}
\end{flushleft}
Similar arguments to the previous theme give the results for the antichiral sector of $X^{\physics}$.
 
\bigskip

\begin{definition} {\bf [antichiral function-ring \& antichiral structure sheaf of $X^{\physics}$]}\; {\rm
 (1) An $\breve{f}\in C^{\infty}(X^{\physics})$ is called {\it antichiral}
             if $e_{1^\prime}\breve{f}= e_{2^\prime}\breve{f}=0$.
 (2) As the addition and the multiplication of two antichiral functions remain antichiral,
             the set of antichiral functions on $X^{\physics}$ form a ring, called the {\it antichiral function-ring}
			 of $X^{\physics}$, denoted by $C^{\infty}(X^{\physics})^{\scriptsizeach}$.
 (3) Localizing Item (1) and Item (2) to open sets of $X$,
             one obtains a sheaf ${\cal O}_X^{\physics, \scriptsizeach}$
             of antichiral functions on $X^{\physics}$ from the assignment
			 $U\mapsto C^{\infty}(U^{\physics})^{\scriptsizeach}$ for $U$ open sets of $X$.
            ${\cal O}_X^{\physics, \scriptsizeach}$ is call the {\it antichiral structure sheaf}
			  of $X^{\physics}$.
		    By construction, ${\cal O}_X^{\physics, \scriptsizeach}\subset {\cal O}_X^{\physics}$ 	
			 as ${\cal O}_X^{\,\Bbb C}$-algebras.
}\end{definition}

\medskip

\begin{lemma} {\bf [$C^\infty$-hull of $C^\infty(X^{\physics})^{\scriptsizeach}$]}\;
  The $C^\infty$-hull $\,C^\infty$-{\it hull}\,$(C^\infty(X^{\physics}))$
    of\\ $C^\infty(X^{\physics})$
   restricts to the $C^\infty$-hull $\,C^\infty$-{\it hull}\,$(C^\infty(X^{\physics})^{\scriptsizeach})$ 
     of $C^\infty(X^{\physics})^{\scriptsizeach}$.
 Thus,\\ $C^\infty(X^{\physics})^{\scriptsizeach}$ is a complexified $C^\infty$-ring.
\end{lemma}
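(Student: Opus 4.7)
The plan is to mirror the proof of the chiral analogue Lemma~2.1.2, with chiral coordinate functions replaced throughout by the antichiral coordinate functions introduced in Definition/Lemma~1.3.8. The key observation is that antichirality, like chirality, is preserved under $C^\infty$-closure because the antichiral coordinate functions $(x'',\theta,\bar\theta,\eta,\bar\eta,\vartheta,\bar\vartheta)$ with $x''^\mu=x^\mu-\sqrt{-1}\sum_{\alpha,\dot\beta}\theta^\alpha\sigma^\mu_{\alpha\dot\beta}\bar\theta^{\dot\beta}$ decouple the antichiral structure from the Lorentz-scalar structure at the level of the $(\theta,\bar\theta,\vartheta,\bar\vartheta)$-expansion: an antichiral $\breve f$ in the physics sector has the compact form
\[
\breve f \;=\; f''^{(0)}_{(0)}(x'')
+\sum_{\dot\beta}\bar\theta^{\dot\beta}\bar\vartheta_{\dot\beta}
   f''^{(\dot\beta)}_{(\dot\beta)}(x'')
+\bar\theta^{\dot 1}\bar\theta^{\dot 2}\bar\vartheta_{\dot 1}\bar\vartheta_{\dot 2}
   f''^{(\dot 1\dot 2)}_{(\dot 1\dot 2)}(x'')\,.
\]

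The steps proceed as follows. First I take $h\in C^\infty(\mathbb{R}^l)$ and antichiral elements $\breve f_1,\ldots,\breve f_l\in C^\infty(X^{\physics})^{\scriptsizeach}\cap C^\infty\text{-hull}(C^\infty(X^{\physics}))$ and write each $\breve f_i$ in the above form in antichiral coordinates. Second I apply the chain-rule identity (Relator~(3) in Definition~1.1.3) to $h(\breve f_1,\ldots,\breve f_l)$, expanding in a Taylor-like series. Here the nilpotency of the anticommuting coordinate functions truncates the expansion after the second-order term, so only $h$, $\partial_k h$, and $\partial_{k_1}\partial_{k_2}h$ evaluated at the bosonic components $(f''^{(0)}_{i,(0)}(x''))_i$ appear. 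Third I collect like terms in $(\bar\theta,\bar\vartheta)$ and use $\varepsilon^{\dot\alpha\dot\gamma}\varepsilon_{\dot\alpha\dot\gamma}=-2$ together with the identity
\[
\bar\theta^{\dot\beta}\bar\theta^{\dot\delta}\bar\vartheta_{\dot\beta}\bar\vartheta_{\dot\delta}
   \;=\;-\,\bar\theta^{\dot 1}\bar\theta^{\dot 2}\bar\vartheta_{\dot 1}\bar\vartheta_{\dot 2}\cdot
   (\text{symmetrization in the free indices})
\]
exactly as in the chiral case, to see that the resulting expression is again of the standard antichiral form. Since every term in the expansion depends only on $x''$ and $(\bar\theta,\bar\vartheta)$, the result is annihilated by $e_{1^{\prime}}$ and $e_{2^{\prime}}$, hence lies in $C^\infty(X^{\physics})^{\scriptsizeach}$.

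The main (and only) obstacle will be verifying that the second-order term in the chain-rule expansion remains of the correct antichiral shape after the spinor-algebraic collapse; this is the antichiral counterpart of the computation displayed at the end of the proof of Lemma~2.1.2 and proceeds by the same manipulation with $\varepsilon$-tensors. No new mathematical input beyond Lemma~2.1.2 is needed: the argument is fully symmetric between the dotted and undotted spinor indices and between the choices of $x'$ versus $x''$ as bosonic coordinates, so verbatim translation suffices. Localizing to open $U\subset X$ at the end upgrades the statement to a statement about sheaves, confirming that ${\cal O}_X^{\physics,\scriptsizeach}$ is a sheaf of complexified $C^\infty$-rings.
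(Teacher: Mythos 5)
Your proposal is correct and coincides with the paper's own (implicit) argument: the paper gives no separate proof of this lemma, stating only that ``similar arguments to the previous theme'' (i.e.\ the proof of the chiral Lemma~2.1.2) apply, and your write-up is precisely that dotted-index translation --- antichiral coordinates $(x'',\theta,\bar\theta,\vartheta,\bar\vartheta)$ in place of chiral ones, truncation of the chain-rule expansion at second order by nilpotency, and the $\varepsilon$-tensor collapse of the quadratic cross-terms onto the $\bar\theta^{\dot 1}\bar\theta^{\dot 2}\bar\vartheta_{\dot 1}\bar\vartheta_{\dot 2}$-monomial. (Only trivia: the antichiral coordinate functions are introduced in Definition/Lemma~1.3.7, not 1.3.8, and the relevant per-term identity is $\varepsilon^{\dot\beta\dot\delta}\varepsilon_{\dot\beta\dot\delta}=-1$ for each fixed off-diagonal index pair rather than the fully contracted $-2$.)
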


\medskip

\begin{lemma} {\bf [antichiral function on $X^{\physics}$ in terms of
                                            $(x,\theta,\bar{\theta},\vartheta,\bar{\vartheta})$]}\;
 In terms of the standard coordinate functions $(x,\theta, \bar{\theta},\vartheta,\bar{\vartheta})$
   on $\widehat{X}^{\widehat{\boxplus}}$,
  an antichiral function $\breve{f}$ on $X^{\physics} $ is determined by the four components
     $f^{(0)}_{(0)}$, $f^{(\dot{\beta})}_{(\dot{\beta})}$, and
	 $f^{(\dot{1}\dot{2})}_{(\dot{1}\dot{2})}$,  $\dot{\beta}=\dot{1}, \dot{2}$,
   of $\breve{f}$
   via the following formula
   \begin{eqnarray*}
     \breve{f} & = &
	   f_{(0)}^{(0)}(x)
	   + \sum_{\dot{\beta}} \bar{\theta}^{\dot{\beta}}\bar{\vartheta}_{\dot{\beta}}
	           f_{(\dot{\beta})}^{(\dot{\beta})}(x)
	   + \bar{\theta}^{\dot{1}}\bar{\theta}^{\dot{2}}
	      \bar{\vartheta}_{\dot{1}}\bar{\vartheta}_{\dot{2}}
		   f_{(\dot{1}\dot{2})}^{(\dot{1}\dot{2})}(x)
       - \sqrt{-1} \sum_{\alpha,\dot{\beta}}
	          \theta^\alpha\bar{\theta}^{\dot{\beta}}	
			   \sum_\mu
			     \sigma^\mu_{\alpha\dot{\beta}}\partial_\mu f_{(0)}^{(0)}(x) \\
      && \hspace{1em}				
       + \sqrt{-1}\sum_{\alpha, \mu}				
	        \theta^\alpha\bar{\theta}^{\dot{1}} \bar{\theta}^{\dot{2}}
			  \mbox{\Large $($}
			     \bar{\vartheta}_{\dot{1}}
			      \sigma^\mu_{\alpha\dot{2}}\partial_\mu f_{(\dot{1})}^{(\dot{1})}(x)
			     - \bar{\vartheta}_{\dot{2}}\sigma^\mu_{\alpha\dot{1}}
				                             \partial_\mu f_{(\dot{2})}^{(\dot{2})}(x)
			  \mbox{\Large $)$}
       - \theta^1\theta^2\bar{\theta}^{\dot{1}}\bar{\theta}^{\dot{2}}\,
	        \square f_{(0)}^{(0)}(x)\,,
   \end{eqnarray*}
   where $\square := -\partial_0^2+\partial_1^2+\partial_2^2+\partial_3^2$.
\end{lemma}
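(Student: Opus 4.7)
The proof will mirror, in a completely symmetric fashion, the two arguments given for Lemma~2.1.3. The cleanest route is the first (``slick'') proof, executed in the antichiral coordinate-functions
\[
 (x^{\prime\prime}, \theta, \bar{\theta}, \vartheta, \bar{\vartheta})\,,
 \hspace{1em} x^{\prime\prime\mu}\;=\;x^\mu - \sqrt{-1}\,\sum_{\alpha,\dot{\beta}}
  \theta^\alpha\sigma^\mu_{\alpha\dot{\beta}}\bar{\theta}^{\dot{\beta}}\,,
\]
on $\widehat{X}^{\widehat{\boxplus}}$ (cf.\ Definition/Lemma~1.3.7). In these coordinates an antichiral physical superfield takes the form
\[
 \breve{f}\;=\;f^{\prime\prime\,(0)}_{(0)}(x^{\prime\prime})
  +\sum_{\dot{\beta}}\bar{\theta}^{\dot{\beta}}\bar{\vartheta}_{\dot{\beta}}
      f^{\prime\prime\,(\dot{\beta})}_{(\dot{\beta})}(x^{\prime\prime})
  +\bar{\theta}^{\dot{1}}\bar{\theta}^{\dot{2}}
      \bar{\vartheta}_{\dot{1}}\bar{\vartheta}_{\dot{2}}
      f^{\prime\prime\,(\dot{1}\dot{2})}_{(\dot{1}\dot{2})}(x^{\prime\prime})\,,
\]
by the antichiral analogue of Definition/Lemma~1.3.7(4) together with the characterization of physical superfields in Definition~1.4.1.

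Next, the plan is to Taylor-expand each $f^{\prime\prime\,(\tinybullet)}_{(\tinybullet)}(x^{\prime\prime})$ around $x$ using $x^{\prime\prime\mu}=x^\mu-\sqrt{-1}\,\theta\sigma^\mu\bar{\theta}^t$, invoking the (real-variable) identity from [L-Y1: Lemma~1.1.3] applied to the real and imaginary components of each scalar coefficient. Only the second-order term survives past the degree-two level because $(\theta\sigma^\mu\bar{\theta}^t)(\theta\sigma^\nu\bar{\theta}^t)$ has nilpotency governed by the two anticommuting Weyl pairs, exactly as in the chiral case. The sign flip from $+\sqrt{-1}$ to $-\sqrt{-1}$ in $x^{\prime\prime}$ (relative to $x^\prime$) is precisely what will deliver the opposite sign in the $\theta^\alpha\bar{\theta}^{\dot{\beta}}$-term of the target expansion. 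Collecting like terms in $(\theta,\bar{\theta})$ and simplifying via the standard spinor identities listed in the Appendix (contractions with $\varepsilon$, the $\sigma$-matrix identities, and $\sum_\mu \sigma^\mu_{\alpha\dot{\beta}}\sigma_\mu^{\gamma\dot{\delta}}$-relations), one obtains the stated formula after finally relabeling $f^{\prime\prime\,(\tinybullet)}_{(\tinybullet)}(x)$ as $f^{(\tinybullet)}_{(\tinybullet)}(x)$.

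Alternatively, and this serves as a cross-check that I would carry out in parallel, one can imitate the second proof of Lemma~2.1.3 by directly solving the overdetermined system $e_{1^\prime}\breve{f}=e_{2^\prime}\breve{f}=0$ on the full $(\theta,\bar{\theta},\vartheta,\bar{\vartheta})$-expansion of an arbitrary $\breve{f}\in C^\infty(X^{\physics})$. Applying
\[
 e_{\alpha^\prime}\;=\;\frac{\partial}{\partial\theta^\alpha}
  +\sqrt{-1}\sum_{\mu,\dot{\beta}}\sigma^\mu_{\alpha\dot{\beta}}
     \bar{\theta}^{\dot{\beta}}\partial_\mu
\]
and setting each $(\theta,\bar{\theta})$-monomial coefficient to zero yields, level by level, linear relations that express the nine nontrivial components in terms of the three free data $f^{(0)}_{(0)}$, $f^{(\dot{\beta})}_{(\dot{\beta})}$, $f^{(\dot{1}\dot{2})}_{(\dot{1}\dot{2})}$; the nilpotency-respecting constraints of Definition~1.4.1 then force all the apparently new mixed components to assemble into $\square f^{(0)}_{(0)}$ and first-order $\partial_\mu f^{(\dot{\beta})}_{(\dot{\beta})}$ expressions, matching the formula.

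The only step that warrants care, rather than being genuinely difficult, is the bookkeeping of sign factors and spinor-index contractions when Taylor-expanding $(\theta\sigma^\mu\bar{\theta}^t)(\theta\sigma^\nu\bar{\theta}^t)$ against $\partial_\mu\partial_\nu f^{\prime\prime\,(0)}_{(0)}$: here one needs the identity $(\theta\sigma^\mu\bar{\theta}^t)(\theta\sigma^\nu\bar{\theta}^t)=\tfrac{1}{2}\theta\theta\,\bar{\theta}\bar{\theta}\,\eta^{\mu\nu}$ from spinor calculus to collapse the sum into a single $-\theta^1\theta^2\bar{\theta}^{\dot{1}}\bar{\theta}^{\dot{2}}\square f^{(0)}_{(0)}$ term (where $\square=-\partial_0^2+\partial_1^2+\partial_2^2+\partial_3^2$ as now used in [W-B]). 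This sign—together with the one arising in the Taylor expansion of the $f^{\prime\prime\,(\dot{\beta})}_{(\dot{\beta})}$-terms—is the sole place where an error can enter, and it can be verified independently by applying the twisted complex conjugation $^\dag$ of Definition~1.3.4 to the chiral formula of Lemma~2.1.3, which exchanges dotted and undotted indices and produces precisely the claimed formula.
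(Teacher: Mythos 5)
Your proposal is correct and matches the paper's own argument: the paper's sketch of proof offers precisely your two routes, namely expanding $\breve{f}$ in the antichiral coordinate-functions $(x^{\prime\prime},\theta,\bar{\theta},\vartheta,\bar{\vartheta})$ and Taylor-expanding $f^{\prime\prime\,(\tinybullet)}_{(\tinybullet)}(x^{\prime\prime})$ about $x$ via [L-Y1: Lemma~1.1.3], or else solving the overdetermined system $e_{1^\prime}\breve{f}=e_{2^\prime}\breve{f}=0$ from the displayed expansion of $e_{\alpha^\prime}\breve{f}$. Your additional cross-check by applying $^\dag$ to the chiral formula is also sound and is in effect what the paper does in Sec.\,2.2 (via Lemma~2.1.7) when computing $\breve{f}^\dag$.
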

 
\medskip

\noindent
\makebox[7em][l]{\it Sketch of proof.}
 This follows from
 either the expansion of
  $$
   \breve{f}\;
   =\;  f^{\prime\prime (0)}_{(0)}(x^{\prime\prime})\,
	           +\, \sum_{\dot{\beta}}
			          \bar{\theta}^{\prime\prime \dot{\beta}}
					  \bar{\vartheta}^{\prime\prime}_{\dot{\beta}}\,
			            f^{\prime\prime (\dot{\beta})}_{(\dot{\beta})}(x^{\prime\prime})\,
			   +\,\bar{\theta}^{\prime\prime \dot{1} }\bar{\theta}^{\prime\prime \dot{2}}
				   \bar{\vartheta}^{\prime\prime}_{\dot{1}}
				   \bar{\vartheta}^{\prime\prime}_{\dot{2}}
			       f^{\prime\prime (\dot{1}\dot{2})}_{(\dot{1}\dot{2})}(x^{\prime\prime})\,,
  $$
    where
    $(x^{\prime\prime}, \theta^{\prime\prime}, \bar{\theta}^{\prime\prime},
         \vartheta^{\prime\prime}, \bar{\vartheta}^{\prime\prime})
	    :=(x-\sqrt{-1}\theta\boldsigma\bar{\theta}^t, \theta,\bar{\theta}, \vartheta, \bar{\theta})$
     is the antichiral coordinate-functions on $\widehat{X}^{\widehat{\boxplus}}$, 	  	
   via the complexified $C^\infty$-ring structure of $C^\infty(X^{\physics})^{\scriptsizeach}$
 or solving the overdetermined system of equations on components $f^{\tinybullet}_{\tinybullet}$
  of $\breve{f}$ from setting to zero
  {\small
  \begin{eqnarray*}
   e_{\alpha^\prime}\breve{f}
    & = &   \left(\rule{0ex}{1.2em}\right.\!
                  \frac{\partial}{\partial \theta^\alpha}
				   + \sqrt{-1}\sum_{\dot{\beta},\mu}
				         \sigma^\mu_{\alpha\dot{\beta}}\bar{\theta}^{\dot{\beta}}
						 \partial_\mu
	             \!\left.\rule{0ex}{1.2em}\right)
				 \breve{f}\\
	& = & \breve{f}_{(\alpha)}
	         - \sum_\gamma \theta^\gamma \varepsilon_{\alpha\gamma}\breve{f}_{(12)}
			 + \sum_{\dot{\beta}}\bar{\theta}^{\dot{\beta}}
			      \left(\rule{0ex}{1.2em}\right.\!
				    \breve{f}_{(\alpha\dot{\beta})}
					+ \sqrt{-1}\,\sum_\mu\sigma^\mu_{\alpha\dot{\beta}}
					   \partial_\mu\breve{f}_{(0)}
				  \!\left.\rule{0ex}{1.2em}\right) \\
    && - \sum_{\gamma,\dot{\beta}}
              \theta^\gamma\bar{\theta}^{\dot{\beta}}
                   \left(\rule{0ex}{1.2em}\right.\!
                    \varepsilon_{\alpha\gamma}\breve{f}_{(12\dot{\beta})}
					+ \sqrt{-1}\sum_\mu \sigma^\mu_{\alpha\dot{\beta}}
					    \partial_\mu \breve{f}_{(\gamma)}
                   \!\left.\rule{0ex}{1.2em}\right)
           + \bar{\theta}^{\dot{1}}\bar{\theta}^{\dot{2}}
                   \left(\rule{0ex}{1.2em}\right.\!
                    \breve{f}_{(\alpha\dot{1}\dot{2})}
					+ \sqrt{-1}\sum_{\dot{\beta}, \dot{\delta}, \mu}
					    \varepsilon^{\dot{\beta}\dot{\delta}}\sigma^\mu_{\alpha\dot{\beta}}
						 \partial_\mu \breve{f}_{(\dot{\delta})}
                   \!\left.\rule{0ex}{1.2em}\right)                   \\
    && +\,\sqrt{-1}\sum_{\dot{\beta},\mu}
	        \theta^1\theta^2\bar{\theta}^{\dot{\beta}}
			  \sigma^\mu_{\alpha\dot{\beta}}\partial_\mu\breve{f}_{(12)}
		   -\, \sum_{\gamma}
              \theta^\gamma\bar{\theta}^{\dot{1}}\bar{\theta}^{\dot{2}}
                   \left(\rule{0ex}{1.2em}\right.\!
                    \varepsilon_{\alpha\gamma}\breve{f}_{(12\dot{1}\dot{2})}
					+ \sqrt{-1}\sum_{\dot{\beta}, \dot{\delta}, \mu}
             			  \varepsilon^{\dot{\beta}\dot{\delta}}\sigma^\mu_{\alpha\dot{\beta}}
					    \partial_\mu \breve{f}_{(\gamma\dot{\delta})}
                   \!\left.\rule{0ex}{1.2em}\right)                       \\
	&& +\,\sqrt{-1}\, \theta^1\theta^2\bar{\theta}^{\dot{1}}\bar{\theta}^{\dot{2}}
	                \sum_{\dot{\beta}, \dot{\delta}, \mu}
					 \varepsilon^{\dot{\beta}\dot{\delta}}\sigma^\mu_{\alpha\dot{\beta}}
					  \partial_\mu\breve{f}_{(12\dot{\delta})}\,,
  \end{eqnarray*}
  }for %\end-small
  $\alpha^{\prime}=1^\prime, 2^\prime$.
  
\noindent\hspace{40.7em}$\square$

\bigskip

\begin{flushleft}
{\bf $C^\infty(X^{\physics})\subset C^\infty(X^{\widehat{\boxplus}})$
           under the twisted complex conjugation}
\end{flushleft}
\begin{lemma} {\bf [$C^\infty(X^{\physics})\subset C^\infty(X^{\widehat{\boxplus}})$
              under twisted complex conjugation]}\\
 The twisted complex conjugation $(\tinybullet)^\dag$ on $C^\infty(\widehat{X}^{\widehat{\boxplus}})$
   leaves $C^\infty(X^{\physics})$ invariant and takes\\
   $C^\infty(X^{\physics})^{\scriptsizech}$ to $C^\infty(X^{\physics})^{\scriptsizeach}$
    and vice versa.
\end{lemma}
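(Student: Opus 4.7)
The plan is to verify both claims directly from the explicit descriptions already established. The proof is conceptually simple but the sign bookkeeping in the second claim is the main technical point.

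For the first claim, I will use the fact from Lemma~1.4.2 that $C^{\infty}(X^{\tinyphysics})$ is generated as a $C^{\infty}(X)^{\,\Bbb C}$-module by thirty-three explicit $(\theta,\bar{\theta},\vartheta,\bar{\vartheta})$-monomials. By Definition~1.3.4, the twisted conjugation is the $\Bbb R$-linear anti-homomorphism which conjugates ${\cal O}_X^{\,\Bbb C}$-coefficients, swaps ${\cal S}^\prime\leftrightarrow{\cal S}^{\prime\prime}$ (hence $\theta^\alpha\mapsto\bar{\theta}^{\dot{\alpha}}$ and $\vartheta_\alpha\mapsto\bar{\vartheta}_{\dot{\alpha}}$, and vice versa), and reverses products. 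Direct inspection shows that each of the thirty-three listed monomials is sent (up to a sign from reordering anticommuting factors) to another generator on the same list; for instance $(\theta^\alpha\vartheta_\alpha)^\dag = \bar{\vartheta}_{\dot{\alpha}}\bar{\theta}^{\dot{\alpha}} = -\bar{\theta}^{\dot{\alpha}}\bar{\vartheta}_{\dot{\alpha}}$, $(\theta^\alpha\bar{\theta}^{\dot{\beta}})^\dag = \theta^\beta\bar{\theta}^{\dot{\alpha}}$, and the balanced monomial $\theta^1\theta^2\vartheta_1\vartheta_2$ maps to $\bar{\theta}^{\dot{1}}\bar{\theta}^{\dot{2}}\bar{\vartheta}_{\dot{1}}\bar{\vartheta}_{\dot{2}}$ up to sign. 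Combined with complex conjugation on the $C^{\infty}(X)^{\,\Bbb C}$-coefficients $f^{\tinybullet}_{\tinybullet}$, this yields $\breve{f}^\dag\in C^{\infty}(X^{\tinyphysics})$ for every $\breve{f}\in C^{\infty}(X^{\tinyphysics})$.

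For the second claim, the cleanest route is to apply $(\bullet)^\dag$ term by term to the canonical chiral expansion of Lemma~2.1.3
$$
\breve{f} \;=\; f_{(0)}^{(0)}(x) + \sum_\alpha \theta^\alpha\vartheta_\alpha f_{(\alpha)}^{(\alpha)}(x) + \theta^1\theta^2\vartheta_1\vartheta_2 f_{(12)}^{(12)}(x) + \cdots
$$
using properties (i)--(iii) above together with the Hermiticity/reality of the Pauli entries $\sigma^\mu_{\alpha\dot{\beta}}$ in the Wess--Bagger conventions of the Appendix. A direct computation shows that the resulting $\breve{f}^\dag$ has precisely the antichiral form catalogued in Lemma~2.1.6, with the coefficient identifications
$(f_{(0)}^{(0)})^\dag=\bar{f}_{(0)}^{(0)}$, $(f_{(\alpha)}^{(\alpha)})^\dag\leftrightarrow\bar{f}_{(\dot{\alpha})}^{(\dot{\alpha})}$, and $(f_{(12)}^{(12)})^\dag\leftrightarrow\bar{f}_{(\dot{1}\dot{2})}^{(\dot{1}\dot{2})}$ up to the signs dictated by the conventions in the Appendix; the symmetric computation with Lemma~2.1.6 and Lemma~2.1.3 handles an antichiral $\breve{f}$. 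A conceptually equivalent route is to establish the intertwining identity $(e_{\beta^{\prime\prime}}\breve{f})^\dag = \pm\, e_{\dot{\beta}^{\prime}}(\breve{f}^\dag)$ (and the conjugate identity for $e_{\alpha^\prime}$) by conjugating the explicit formulas for these derivations and invoking the reality of $\sigma^\mu$, after which the interchange of the chiral and antichiral conditions is immediate.

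The main obstacle in both approaches is the sign bookkeeping in the second claim: since $(\bullet)^\dag$ reverses the multiplication order while $\partial/\partial\theta^\alpha$ and $\partial/\partial\bar{\theta}^{\dot{\beta}}$ are \emph{left} derivations, care is required in commuting these across the anticommuting generators, which is governed by the $\Bbb Z/2$-graded Leibniz rule and the Wess--Bagger spinor identities listed in the Appendix. Once these sign conventions are pinned down, both claims follow transparently, the first from the explicit generating monomials of Lemma~1.4.2 and the second from the explicit chiral/antichiral normal forms of Lemmas~2.1.3 and~2.1.6.
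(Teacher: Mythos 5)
Your proposal is correct, and on the first claim it is the paper's proof verbatim in spirit: the paper likewise observes that the generating $(\theta,\bar{\theta},\vartheta,\bar{\vartheta})$-monomials of $C^\infty(X^{\physics})$ as a sub-$C^\infty(X)^{\Bbb C}$-module are closed under $(\tinybullet)^\dag$ up to a $(-1)$-factor. On the second claim your routes prove the same thing but execute it differently. The paper does \emph{not} conjugate the expanded standard-coordinate expression of Lemma~2.1.3; instead it writes a chiral $\breve{f}$ in its compact three-term chiral-coordinate normal form, with all arguments equal to $x+\sqrt{-1}\,\theta\boldsigma\bar{\theta}^t$, and applies $(\tinybullet)^\dag$ once. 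The entire computation then reduces to the single observation that $\theta\boldsigma\bar{\theta}^t$ is self-twisted-conjugate, so each argument becomes $x-\sqrt{-1}\,\theta\boldsigma\bar{\theta}^t$ --- precisely the antichiral coordinate --- and $\breve{f}^\dag$ is manifestly in the antichiral normal form, with essentially no term-by-term sign chase; the sign bookkeeping you identify as the main technical obstacle is exactly what this choice of coordinates eliminates, so your primary route works but pays a cost the paper avoids. Your alternative route via the intertwining identity $(e_{\beta^{\prime\prime}}\breve{f})^\dag = \pm\, e_{\alpha^\prime}(\breve{f}^\dag)$ is a genuinely different and clean argument not taken by the paper (it is the superspace analogue of $\overline{D_\alpha\Phi}=\bar{D}_{\dot{\alpha}}\bar{\Phi}$), and it is valid; one small caution applying to both your routes: what is needed is the Hermiticity $\overline{\sigma^\mu_{\alpha\dot{\beta}}}=\sigma^\mu_{\beta\dot{\alpha}}$ of the Pauli matrices, not entrywise reality ($\sigma^2$ has imaginary entries), so your phrase ``reality of $\sigma^\mu$'' should be read in that sense --- it is also what underlies the self-conjugacy of $\theta\boldsigma\bar{\theta}^t$ on which the paper's one-line version of the computation rests.
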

 
\medskip
 
\begin{proof}
 Up to a $(-1)$-factor,
  the set
  $$
    \left\{
	 \begin{array}{l}
	   1\,, \; \theta^\alpha\vartheta_\alpha\,,\;
	    \bar{\theta}^{\dot{\beta}}\bar{\vartheta}_{\dot{\beta}}\,,\;
	    \theta^1\theta^2\vartheta_1\vartheta_2\,,\;
		 \theta^\alpha\bar{\theta}^{\dot{\beta}}\,,\;
		     \theta^\alpha\bar{\theta}^{\dot{\beta}}
			    \vartheta_\alpha\bar{\vartheta}_{\dot{\beta}}\,,\;
		 \bar{\theta}^{\dot{1}}\bar{\theta}^{\dot{2}}
		     \bar{\vartheta}_{\dot{1}}\bar{\vartheta}_{\dot{2}}\,, \\
		 \theta^1\theta^2\bar{\theta}^{\dot{\beta}}\vartheta_\alpha\,,\;
		     \theta^1\theta^2\bar{\theta}^{\dot{\beta}}
			     \vartheta_1\vartheta_2\bar{\vartheta}_{\dot{\beta}}\,,\;
         \theta^\alpha\bar{\theta}^{\dot{1}}\bar{\theta}^{\dot{2}}
                \bar{\vartheta}_{\dot{\beta}}\,,\;
             \theta^\alpha\bar{\theta}^{\dot{1}}\bar{\theta}^{\dot{2}}
                \vartheta_\alpha\bar{\vartheta}_{\dot{1}}\bar{\vartheta}_{\dot{2}}\,,\\
		 \theta^1\theta^2\bar{\theta}^{\dot{1}}\bar{\theta}^{\dot{2}}\,,\;		
		     \theta^1\theta^2\bar{\theta}^{\dot{1}}\bar{\theta}^{\dot{2}}
			   \vartheta_\alpha\bar{\vartheta}_{\dot{\beta}}\,,\;
		 	 \theta^1\theta^2\bar{\theta}^{\dot{1}}\bar{\theta}^{\dot{2}}
			   \vartheta_1\vartheta_2\bar{\vartheta}_{\dot{1}}\bar{\vartheta}_{\dot{2}}
	 \end{array} 	
   \right\}_{\alpha, \dot{\beta}}
  $$
  of generating $(\theta,\bar{\theta},\vartheta, \bar{\vartheta})$-monomials
  for $C^\infty(X^{\physics})$ in $C^\infty(X^{\widehat{\boxplus}})$
   as a sub-$C^\infty(X)^{\Bbb C}$-module
  is closed under the twisted complex conjugation $(\tinybullet)^\dag$.
 This implies that $C^\infty(X^{\physics})$ is $\dag$-invariant in $C^\infty(X^{\widehat{\boxplus}})$.
  
 Since
   a chiral function on $X^{\physics}$ is of the form
   {\small
   $$
   \breve{f}\;
   =\;  f^{(0)}_{(0)}(x+\sqrt{-1}\theta\boldsigma\bar{\theta}^t)\,
	           +\, \sum_\alpha  \theta^\alpha \vartheta_\alpha\,
			            f^{(\alpha)}_{(\alpha)}(x+\sqrt{-1}\theta\boldsigma\bar{\theta}^t)\,			                                                                         
			   +\,\theta^1 \theta^2\vartheta_1 \vartheta_2
			           f^{(12)}_{(12)}(x+\sqrt{-1}\theta\boldsigma\bar{\theta}^t)\,,
  $$}for   % \end-small
  some
    $f^{(0)}_{(0)}, f^{(\alpha)}_{(\alpha)}, f^{(12)}_{(12)}
	  \in C^\infty(X)^{\Bbb C}$, $\alpha=1,2$,
 its twisted complex conjugate must be of the form
  {\small
  \begin{eqnarray*}
   \breve{f}^\dag  & =
     &  \overline{f^{(0)}_{(0)}}(x-\sqrt{-1}(\theta\boldsigma\bar{\theta}^t)^\dag)\,
	           +\, \sum_\alpha  (\theta^\alpha \vartheta_\alpha)^\dag
			          \cdot \overline{f^{(\alpha)}_{(\alpha)}}
						     (x-\sqrt{-1}(\theta\boldsigma\bar{\theta}^t)^\dag)   \\[-1.2ex]
       && \hspace{16em}							
			   +\,(\theta^1 \theta^2\vartheta_1 \vartheta_2)^\dag			
			         \cdot \overline{f^{(12)}_{(12)}}
						  (x-\sqrt{-1}(\theta\boldsigma\bar{\theta}^t)^\dag)      \\
     & = &   \overline{f^{(0)}_{(0)}}(x-\sqrt{-1}\theta\boldsigma\bar{\theta}^t)\,
	           -\, \sum_{\dot{\alpha}}
			          \bar{\theta}^{\dot{\alpha}} \bar{\vartheta}_{\dot{\alpha}}\,
			            \cdot \overline{f^{(\alpha)}_{(\alpha)}}
						    (x-\sqrt{-1}\theta\boldsigma\bar{\theta}^t)\,			                                                                         
			   +\,\bar{\theta}^{\dot{1}} \bar{\theta}^{\dot{2}}
			         \bar{\vartheta}_{\dot{1}} \bar{\vartheta}_{\dot{2}}
			          \cdot
					  \overline{f^{(12)}_{(12)}}(x-\sqrt{-1}\theta\boldsigma\bar{\theta}^t)\,,
  \end{eqnarray*}}which % end-small
   is antichiral.
 Similarly for the converse
   $(\tinybullet)^\dag: C^\infty(X^{\physics})^{\scriptsizeach}
      \rightarrow C^\infty(X^{\physics})^{\scriptsizech}$.

 This proves the lemma.	  	
 
\end{proof}

\bigskip

\subsection{Wess-Zumino model on $X$ in terms of $X^{\physics}$}

We now proceed to construct the Wess-Zumino model ([W-Z]; also [W-B: Chap.\,V]) under the current setting.

\bigskip

\begin{flushleft}
{\bf Relevant basic computations/formulae}
\end{flushleft}
Let
   \begin{eqnarray*}
     \breve{f} & = &
	   f_{(0)}^{(0)}(x)
	   + \sum_\alpha \theta^\alpha\vartheta_\alpha f_{(\alpha)}^{(\alpha)}(x)
	   + \theta^1\theta^2\vartheta_1\vartheta_2 f_{(12)}^{(12)}(x)
       + \sqrt{-1} \sum_{\alpha,\dot{\beta}}
	          \theta^\alpha\bar{\theta}^{\dot{\beta}}	
			   \sum_\mu
			     \sigma^\mu_{\alpha\dot{\beta}}\partial_\mu f_{(0)}^{(0)}(x) \\
      && \hspace{1em}				
       + \sqrt{-1}\sum_{\dot{\beta}, \mu}				
	        \theta^1\theta^2\bar{\theta}^{\dot{\beta}}
			 \mbox{\Large $($}
			   \vartheta_1\sigma^\mu_{2\dot{\beta}}\partial_\mu f_{(1)}^{(1)}(x)
			     - \vartheta_2\sigma^\mu_{1\dot{\beta}}\partial_\mu f_{(2)}^{(2)}(x)
			 \mbox{\Large $)$}
       - \theta^1\theta^2\bar{\theta}^{\dot{1}}\bar{\theta}^{\dot{2}}\,
	        \square f_{(0)}^{(0)}(x)\,,
   \end{eqnarray*}
  be a chiral function on $X^{\physics}$, determined by
   $(f^{(0)}_{(0)},  f^{(\alpha)}_{(\alpha)}, f^{(12)}_{(12)} )_{\alpha}$.
It follows from Lemma~2.1.7
                    % Lemma  [$C^\infty(X^{\physics})\subset C^\infty(X^{\widehat{\boxplus}})$
					%                 under twisted complex conjugation]
 that its twisted complex conjugate $\breve{f}^\dag$ is
  the antichiral function on $X^{\physics}$
     determined by
	   $(f^{\dag (0)}_{(0)},
	         f^{\dag (\dot{\beta})}_{(\dot{\beta})},
			 f^{\dag (\dot{1}\dot{2})}_{(\dot{1}\dot{2})})_{\dot{\beta}}
		 = (\overline{f_{(0)}^{(0)}}, -\,\overline{f^{(\beta)}_{(\beta)}},
		       \overline{f^{(12)}_{(12)}})$,
		where
		   $\overline{f_{(\tinybullet)}^{(\tinybullet)}}$
     	  	is the complex conjugate of $f^{(\tinybullet)}_{(\tinybullet)}\in C^\infty(X)^{\,\Bbb C}$.
 Explicitly,
   \begin{eqnarray*}
     \breve{f}^\dag & = &
	   \overline{f_{(0)}^{(0)}}(x)
	   - \sum_{\dot{\beta}} \bar{\theta}^{\dot{\beta}}\bar{\vartheta}_{\dot{\beta}}
	           \overline{f_{(\beta)}^{(\beta)}}(x)
	   + \bar{\theta}^{\dot{1}}\bar{\theta}^{\dot{2}}
	      \bar{\vartheta}_{\dot{1}}\bar{\vartheta}_{\dot{2}}
		               \overline{f_{(12)}^{(12)}}(x)
       - \sqrt{-1} \sum_{\alpha,\dot{\beta}}
	          \theta^\alpha\bar{\theta}^{\dot{\beta}}	
			   \sum_\mu
			     \sigma^\mu_{\alpha\dot{\beta}}
				       \partial_\mu \overline{f_{(0)}^{(0)}}(x) \\
      && \hspace{1em}				
       -\, \sqrt{-1}\sum_{\alpha, \mu}				
	        \theta^\alpha\bar{\theta}^{\dot{1}} \bar{\theta}^{\dot{2}}
			 \mbox{\Large $($}
			    \bar{\vartheta}_{\dot{1}}
			      \sigma^\mu_{\alpha\dot{2}} \partial_\mu \overline{f_{(1)}^{(1)}}(x)
			     - \bar{\vartheta}_{\dot{2}}\sigma^\mu_{\alpha\dot{1}}
				                             \partial_\mu \overline{f_{(2)}^{(2)}}(x)
			 \mbox{\Large $)$}
       - \theta^1\theta^2\bar{\theta}^{\dot{1}}\bar{\theta}^{\dot{2}}\,
	        \square \overline{f_{(0)}^{(0)}}(x)\,.
   \end{eqnarray*}

Consequently, (recall that $\square:= -\partial_0^2+\partial_1^2+\partial_2^2+\partial_3^2$\,)
 {\small
 \begin{eqnarray*}
   \breve{f}^\dag \breve{f} & =  & \breve{f} \breve{f}^\dag \\
    & = & \overline{f^{(0)}_{(0)}}(x)f^{(0)}_{(0)}(x)
	            + \sum_{\alpha}\theta^\alpha\vartheta_\alpha
                      \overline{f^{(0)}_{(0)}}(x) f^{(\alpha)}_{(\alpha)}(x)
			    - \sum_{\dot{\beta}}\bar{\theta}^{\dot{\beta}}\bar{\vartheta}_{\dot{\beta}}
				      \overline{f^{(\beta)}_{(\beta)}}(x)f^{(0)}_{(0)}(x)
                + \theta^1\theta^2\vartheta_1\vartheta_2
				      \overline{f^{(0)}_{(0)}}(x) f^{(12)}_{(12)}(x)\\
    &&	+ \sum_{\alpha,\dot{\beta}}
	             \theta^\alpha\bar{\theta}^{\dot{\beta}}
                \left(\rule{0ex}{1.2em}\right.\!
                  \sqrt{-1}\sum_\mu \sigma^\mu_{\alpha\dot{\beta}}
				   \mbox{\Large $($}
 				    \overline{f^{(0)}_{(0)}}(x) \partial_\mu f^{(0)}_{(0)}(x)
				       - \partial_\mu \overline{f^{(0)}_{(0)}}(x) f^{(0)}_{(0)}(x)
				   \mbox{\Large $)$}
					+ \vartheta_\alpha \bar{\vartheta}_{\dot{\beta}}
					    \overline{f^{(\beta)}_{(\beta)}}(x)f^{(\alpha)}_{(\alpha)}(x)
                \!\left.\rule{0ex}{1.2em}\right)	  \\				
    && +\; \bar{\theta}^{\dot{1}}\bar{\theta}^{\dot{2}}
	            \bar{\vartheta}^{\dot{1}}\bar{\vartheta}^{\dot{2}}\,
                  \overline{f^{(12)}_{(12)}}(x)f^{(0)}_{(0)}(x)				\\
	&&   + \sum_{\dot{\beta}}\theta^1\theta^2\bar{\theta}^{\dot{\beta}}
	             \left(\rule{0ex}{1.2em}\right.\!
				  \sqrt{-1}\sum_\mu
				       \left(\rule{0ex}{1.2em}\right.\!					
						  \vartheta^2 \sigma^\mu_{2\dot{\beta}}						
						      \mbox{\Large $($}		
							    - \overline{f^{(0)}_{(0)}}(x) \partial_\mu f^{(1)}_{(1)}(x)
							      + \partial_\mu\overline{f^{(0)}_{(0)}}(x)f^{(1)}_{(1)}(x)
							 \mbox{\Large $)$} \\[-2ex]
    &&	\hspace{16em}						
						 +\; \vartheta^1\sigma^\mu_{1\dot{\beta}}
						     \mbox{\Large $($}
						       - \overline{f^{(0)}_{(0)}}(x)\partial_\mu f^{(2)}_{(2)}(x)
							   + \partial_\mu \overline{f^{(0)}_{(0)}}(x) f^{(2)}_{(2)}(x)
                             \mbox{\Large $)$}						
					   \!\left.\rule{0ex}{1.2em}\right)  \\
        && \hspace{8em}					
			   -\; \vartheta_1\vartheta_2\bar{\vartheta}_{\dot{\beta}}
			           \overline{f^{(\beta)}_{(\beta)}}(x)f^{(12)}_{(12)}(x)
				 \!\left.\rule{0ex}{1.2em}\right)             \\[1.2ex]
	&&   + \sum_{\alpha}\theta^\alpha\bar{\theta}^{\dot{1}}\bar{\theta}^{\dot{2}}
	             \left(\rule{0ex}{1.2em}\right.\!
				  \sqrt{-1}\sum_\mu
				       \left(\rule{0ex}{1.2em}\right.\!					
						  \bar{\vartheta}^{\dot{2}}\sigma^\mu_{\alpha\dot{2}}						     
						      \mbox{\Large $($}		
							    \partial_\mu \overline{f^{(1)}_{(1)}}(x) f^{(0)}_{(0)}(x)
							      - \overline{f^{(1)}_{(1)}}(x)\partial_\mu f^{(0)}_{(0)}(x)
							 \mbox{\Large $)$} \\[-2ex]
    &&	\hspace{16em}						
						 +\; \bar{\vartheta}^{\dot{1}}\sigma^\mu_{\alpha\dot{1}}
						     \mbox{\Large $($}
						       \partial_\mu \overline{f^{(2)}_{(2)}}(x)f^{(0)}_{(0)}(x)
							   - \overline{f^{(2)}_{(2)}}(x) \partial_\mu f^{(0)}_{(0)}(x)
                             \mbox{\Large $)$}						
					   \!\left.\rule{0ex}{1.2em}\right)  \\
        && \hspace{8em}					
			   +\; \vartheta_\alpha \bar{\vartheta}_{\dot{1}}\bar{\vartheta}_{\dot{2}}
			           \overline{f^{(12)}_{(12)}}(x)f^{(\alpha)}_{(\alpha)}(x)
				 \!\left.\rule{0ex}{1.2em}\right)             \\[1.2ex]	
    && +\; \theta^1\theta^2\bar{\theta}^{\dot{1}}\bar{\theta}^{\dot{2}}
	             \left(\rule{0ex}{1.2em}\right.\!
	              \mbox{\Large $($}
				   -\,\square\overline{f^{(0)}_{(0)}}(x)\cdot f^{(0)}_{(0)}(x)
				    -\, \overline{f^{(0)}_{(0)}}(x)\cdot\square f^{(0)}_{(0)}(x)
					+\,2 \sum_\mu
				             \partial_\mu \overline{f^{(0)}_{(0)}}(x)\,
							      \partial^\mu f^{(0)}_{(0)}(x)
				  \mbox{\Large $)$} \\
       && \hspace{6em}				
               + \sum_{\alpha,\dot{\beta}}	   \vartheta_\alpha\bar{\vartheta}_{\dot{\beta}}
			        \mbox{\Large $($}
			          \overline{f^{(\beta)}_{(\beta)}}(x)
					         \cdot \sqrt{-1}\sum_\mu\bar{\sigma}^{\mu,\dot{\beta}\alpha}
							   \partial_\mu f^{(\alpha)}_{(\alpha)}(x)   \\[-2ex]
           && \hspace{16em}							
                     -\, f^{(\alpha)}_{(\alpha)}(x)
					     \cdot	\sqrt{-1}\sum_\mu \bar{\sigma}^{\mu, \dot{\beta}\alpha}
						   \partial_\mu \overline{f^{(\beta)}_{(\beta)}}(x)
					\mbox{\Large $)$}					\\
    && 	\hspace{6em}
	         +\, \vartheta_1\vartheta_2\bar{\vartheta}_{\dot{1}}\bar{\vartheta}_{\dot{2}}
			        \overline{f^{(12)}_{(12)}}(x)f^{(12)}_{(12)}(x)
           	    \!\left.\rule{0ex}{1.2em}\right)    \,;
 \end{eqnarray*}}  % end-small
{\small
 \begin{eqnarray*}
  \breve{f}^2 & =
    & f^{(0)}_{(0)}(x)^2
	   + 2 \sum_\alpha\theta^\alpha\vartheta_\alpha
	           f^{(0)}_{(0)}(x) f^{(\alpha)}_{(\alpha)}(x)
	   + 2\, \theta^1\theta^2\vartheta_1\vartheta_2
		      \mbox{\Large $($}
			   f^{(0)}_{(0)}(x) f^{(12)}_{(12)}(x)
			   - f^{(1)}_{(1)}(x) f^{(2)}_{(2)}(x)
			  \mbox{\Large $)$}\\	
    &&  +\, (\mbox{terms of $\bar{\theta}$-degree $\ge 1$})\, ;
 \end{eqnarray*}}     % end-small
{\small
 \begin{eqnarray*}
  \breve{f}^3 & =
    & f^{(0)}_{(0)}(x)^3
	   + 3 \sum_\alpha\theta^\alpha\vartheta_\alpha
	           f^{(0)}_{(0)}(x)^2 f^{(\alpha)}_{(\alpha)}(x)
	   + 3\, \theta^1\theta^2\vartheta_1\vartheta_2
		      \mbox{\Large $($}
			   f^{(0)}_{(0)}(x)^2 f^{(12)}_{(12)}(x)
			   - 2\, f^{(0)}_{(0)}(x)f^{(1)}_{(1)}(x)f^{(2)}_{(2)}(x)
			  \mbox{\Large $)$}\\	
    &&  +\, (\mbox{terms of $\bar{\theta}$-degree $\ge 1$})\, ;
 \end{eqnarray*}}     % end-small
{\small
 \begin{eqnarray*}
  (\breve{f}^\dag)^2 & =
    & \overline{f^{(0)}_{(0)}}(x)^2
	   - 2 \sum_{\dot{\beta}}\bar{\theta}^{\dot{\beta}}\bar{\vartheta}_{\dot{\beta}}
	           \overline{f^{(0)}_{(0)}}(x)\overline{f^{(\beta)}_{(\beta)}}(x)
	   + 2\, \bar{\theta}^{\dot{1}}\bar{\theta}^{\dot{2}}
	            \bar{\vartheta}_{\dot{1}}\bar{\vartheta}_{\dot{2}}
		      \mbox{\Large $($}
			   \overline{f^{(0)}_{(0)}}(x) \overline{f^{(12)}_{(12)}}(x)
			   - \overline{f^{(1)}_{(1)}}(x) \overline{f^{(2)}_{(2)}}(x)
			  \mbox{\Large $)$}\\	
    &&  +\, (\mbox{terms of $\theta$-degree $\ge 1$})\, ;
 \end{eqnarray*}    } % end-small
{\small
 \begin{eqnarray*}
  (\breve{f}^\dag)^3 & =
    & \overline{f^{(0)}_{(0)}}(x)^3
	   - 3 \sum_{\dot{\beta}}\bar{\theta}^{\dot{\beta}}\bar{\vartheta}_{\dot{\beta}}
	          \overline{f^{(0)}_{(0)}}(x)^2
			  \overline{f^{(\beta)}_{(\beta)}}(x)
	   + 3\, \bar{\theta}^{\dot{1}}\bar{\theta}^{\dot{2}}
	            \bar{\vartheta}_{\dot{1}}\bar{\vartheta}_{\dot{2}}
		      \mbox{\Large $($}
			   \overline{f^{(0)}_{(0)}}(x)^2    \overline{f^{(12)}_{(12)}}(x)
			   - 2\, \overline{f^{(0)}_{(0)}}(x)\overline{f^{(1)}_{(1)}}(x)
			           \overline{f^{(2)}_{(2)}}(x)
			  \mbox{\Large $)$}\\	
    &&  +\, (\mbox{terms of $\theta$-degree $\ge 1$})\,.
 \end{eqnarray*}}     % end-small
 
\noindent
Basic identities in spinor calculus used in the above computations to render the spinor indices
 paired up more elegantly are all collected in the Appendix;
see, e.g.,  [W-B: Chap.\,III, Exercises (1) \& (2); Appendices A \& B] for a more complete list.

\bigskip

\begin{flushleft}
{\bf The standard purge-evaluation/index-contracting map ${\cal P}$ with respect to
          $(\theta,\bar{\theta},\vartheta,\bar{\vartheta})$}
\end{flushleft}
Given a uniform purge-evaluation map
 ${\cal P}:C^\infty(X^{\physics})\rightarrow C^\infty(\widehat{X})$,
 let $\breve{f}\in C^\infty(X^{\physics})$.
Then the variation of ${\cal P}(\breve{f})$
  under the infinitesimal supersymmetry generators $Q_\alpha$'s and $\bar{Q}_{\dot{\beta}}$
  takes the form
  {\small
  \begin{eqnarray*}
   Q_\alpha {\cal P}(\breve{f})
    & :=\: &  \mbox{\Large $($}
	                  \mbox{\Large $\frac{\partial}{\partial\theta^\alpha}$}
		                - \sqrt{-1}\sum_{\dot{\beta},\mu}
			          \sigma^\mu_{\alpha\dot{\beta}}\bar{\theta}^{\dot{\beta}}\partial_\mu
	               \mbox{\Large $)$}\,
  		          {\cal P}(\breve{f})   \\
    & = &{\cal P}(\vartheta_\alpha)f^{(\alpha)}_{(\alpha)}(x)
	       - \theta_\alpha {\cal P}(\vartheta_1\vartheta_2)f^{(12)}_{(12)}(x) \\
    &&	+\sum_{\dot{\beta}}\bar{\theta}^{\dot{\beta}}
                \left(\rule{0ex}{1.2em}\right.\!
				  \sum_\mu \sigma^\mu_{\alpha\dot{\beta}}
				     \mbox{\Large $($}
					  f^{(0)}_{[\mu]}(x)   -\sqrt{-1}\partial_\mu f^{(0)}_{(0)}(x)
					 \mbox{\Large $)$}
			     + {\cal P}(\vartheta_\alpha\bar{\vartheta}_{\dot{\beta}})
				        f^{(\alpha\dot{\beta})}_{(\alpha\dot{\beta})}(x)
                \!\left.\rule{0ex}{1.2em}\right)		\\
    && + \sum_{\gamma,\dot{\beta}}
	           \theta^\gamma\bar{\theta}^{\dot{\beta}}
                \left(\rule{0ex}{1.2em}\right.\!
                 - \varepsilon_{\alpha\gamma}
				      \mbox{\Large $($}
					    \sum_{\gamma^\prime}
						  {\cal P}(\vartheta_{\gamma^\prime})
						                f^{(\gamma^\prime)}_{(12\dot{\beta})}(x)
				      + {\cal P}(\vartheta_1\vartheta_2\bar{\vartheta}_{\dot{\beta}})
                              f^{(12\dot{\beta})}_{(12\dot{\beta})}(x)										      
					  \mbox{\Large $)$}\\[-1.8ex]
       && \hspace{18em}					
				 + \sqrt{-1}{\cal P}(\vartheta_\gamma)
				      \sum_\mu\sigma^\mu_{\alpha\dot{\beta}}
					                      \partial_\mu f^{(\gamma)}_{(\gamma)}(x)
                \!\left.\rule{0ex}{1.2em}\right)				\\
    && +\, \bar{\theta}^{\dot{1}}\bar{\theta}^{\dot{2}}				
	            \left(\rule{0ex}{1.2em}\right.\!
				 \sum_{\dot{\delta}}{\cal P}(\bar{\vartheta}_{\dot{\delta}})
				   \mbox{\Large $($}
				     f^{(\dot{\delta})}_{(\alpha\dot{1}\dot{2})}(x)
					- \sqrt{-1}\sum_{\dot{\beta},\mu}
					     \varepsilon^{\dot{\beta}\dot{\delta}}\sigma^\mu_{\alpha\dot{\beta}}
						  \partial_\mu f^{(\dot{\delta})}_{(\dot{\delta})}(x)					
				   \mbox{\Large $)$}
                 + {\cal P}(\vartheta_\alpha
				                           \bar{\vartheta}_{\dot{1}}\bar{\vartheta}_{\dot{2}})
				         f^{(\alpha\dot{1}\dot{2})}_{(\alpha\dot{1}\dot{2})}(x) 				
				\!\left.\rule{0ex}{1.2em}\right) \\
    && -  \sum_{\dot{\beta}}\theta^1\theta^2\bar{\theta}^{\dot{\beta}}
	            {\cal P}(\vartheta_1\vartheta_2)\cdot
				  \sqrt{-1}\sum_\mu
				    \sigma_{\alpha\dot{\beta}}\partial_\mu f^{(12)}_{(12)}(x)\\
    && + \sum_\gamma \theta^\gamma\bar{\theta}^{\dot{1}}\bar{\theta}^{\dot{2}}
	           \left(\rule{0ex}{1.2em}\right.\!
			    -\varepsilon_{\alpha\gamma}
				  \mbox{\Large $($}
				   f^{(0)}_{(12\dot{1}\dot{2})}(x)
				   + \sum_{\gamma^\prime,\dot{\delta}}
				        {\cal P}(\vartheta_{\gamma^\prime}\bar{\vartheta}_{\dot{\delta}})
						     f^{(\gamma^\prime)}_{(12\dot{1}\dot{2})}(x)
				   + {\cal P}(\vartheta_1\vartheta_2
				                             \bar{\vartheta}_{\dot{1}}\bar{\vartheta}_{\dot{2}})
                             f^{(12\dot{1}\dot{2})}_{(12\dot{1}\dot{2})}(x)	
				  \mbox{\Large $)$}  \\
        && \hspace{8em}				
				+\, \sqrt{-1}\sum_{\dot{\beta},\dot{\delta},\mu}
				    \varepsilon^{\dot{\beta}\dot{\delta}}\sigma^\mu_{\alpha\dot{\beta}}
					    \mbox{\Large $($}
						 \sum_\nu\sigma^\nu_{\gamma\dot{\delta}}\partial_\mu f^{(0)}_{[\nu]}(x)
						  + {\cal P}(\vartheta_\gamma \bar{\vartheta}_{\dot{\delta}})
						        \partial_\mu f^{(\gamma\dot{\delta})}_{(\gamma\dot{\delta})}(x)
						\mbox{\Large $)$}
			   \!\left.\rule{0ex}{1.2em}\right)   \\
    && -\, \theta^1\theta^2\bar{\theta}^{\dot{1}}\bar{\theta}^{\dot{2}}\cdot
	            \sqrt{-1}\sum_{\dot{\beta},\dot{\delta},\mu}
				  \varepsilon^{\dot{\beta}\dot{\delta}}\sigma^\mu_{\alpha\dot{\beta}}
				   \mbox{\Large $($}
				     \sum_\gamma {\cal P}(\vartheta_\gamma)
					   \partial_\mu f^{(\gamma)}_{(12\dot{\delta})}(x)
					   + {\cal P}(\vartheta_1\vartheta_2\bar{\vartheta}_{\dot{\delta}})
					         \partial_\mu f^{(12\dot{\delta})}_{(12\dot{\delta})}(x)
				   \mbox{\Large $)$}
  \end{eqnarray*}}and % end-small
 {\small
 \begin{eqnarray*}
  \bar{Q}_{\dot{\beta}} {\cal P}(\breve{f})
    & :=\: &  \mbox{\Large $($}
	                -\,\mbox{\Large $\frac{\partial}{\partial\bar{\theta}^{\dot{\beta}}}$}
					+ \sqrt{-1}\sum_{\alpha,\mu}
					                          \theta^\alpha\sigma^\mu_{\alpha\dot{\beta}}\partial_\mu
	               \mbox{\Large $)$}\,{\cal P}(\breve{f})\\				
    & =
	  & -\,{\cal P}(\bar{\vartheta}_{\dot{\beta}})f^{(\dot{\beta})}_{(\dot{\beta})}
	      + \sum_\alpha \theta^\alpha
               \left(\rule{0ex}{1.2em}\right.\!
                \sum_\mu\sigma^\mu_{\alpha\dot{\beta}}
				  \mbox{\Large $($}
				   f^{(0)}_{[\mu]}
				   + \sqrt{-1}\partial_\mu f^{(0)}_{(0)}(x)
				  \mbox{\Large $)$}\,
				  + {\cal P}(\vartheta_\alpha\bar{\vartheta}_{\dot{\beta}})
				          f^{(\alpha\dot{\beta})}_{(\alpha\dot{\beta})}(x)
               \!\left.\rule{0ex}{1.2em}\right)			   \\
    && +\,\bar{\theta}_{\dot{\beta}}
	           {\cal P}(\bar{\vartheta}_{\dot{1}}\bar{\vartheta}_{\dot{2}})
                  f^{(\dot{1}\dot{2})}_{(\dot{1}\dot{2})}(x)	\\		
    && +\theta^1\theta^2
	          \left(\rule{0ex}{1.2em}\right.\!
			    \sum_\gamma {\cal P}(\vartheta_\gamma)
			      \mbox{\Large $($}
				    -\,f^{(\gamma)}_{(12\dot{\beta})}(x)
				    +\sqrt{-1}\sum_{\alpha,\mu}
				        \varepsilon^{\alpha\gamma}\sigma^\mu_{\alpha\dot{\beta}}
					     \partial_\mu f^{(\gamma)}_{(\gamma)}(x)
				  \mbox{\Large $)$}\,
			  -\, {\cal P}(\vartheta_1\vartheta_2\bar{\vartheta}_{\dot{\beta}})	
			            f^{(12\dot{\beta})}_{(12\dot{\beta})}(x)
			  \!\left.\rule{0ex}{1.2em}\right)\\				
    && +\sum_{\alpha,\dot{\delta}}
	          \theta^\alpha\bar{\theta}^{\dot{\delta}}
              \left(\rule{0ex}{1.2em}\right.\!
                -\,\varepsilon_{\dot{\beta}\dot{\delta}}
				      \mbox{\Large $($}
					   \sum_{\dot{\delta}^\prime}
					      {\cal P}(\bar{\vartheta}_{\dot{\delta}^\prime})
						    f^{(\dot{\delta}^\prime)}_{(\alpha\dot{1}\dot{2})}(x)
							+ {\cal P}(\vartheta_\alpha\bar{\vartheta}_{\dot{1}}
							                          \bar{\vartheta}_{\dot{2}})
								    f^{(\alpha\dot{1}\dot{2})}_{(\alpha\dot{1}\dot{2})}(x)
					  \mbox{\Large $)$}\\[-1.2ex]
         && \hspace{18em}					
				+ \sqrt{-1}{\cal P}(\bar{\vartheta}_{\dot{\delta}})
				    \sum_\mu \sigma^\mu_{\alpha\dot{\beta}}
					                     \partial_\mu f^{(\dot{\delta})}_{(\dot{\delta})}(x)
              \!\left.\rule{0ex}{1.2em}\right)	\\
    && + \sum_{\dot{\delta}}
               \theta^1\theta^2\bar{\theta}^{\dot{\delta}}	
			    \left(\rule{0ex}{1.2em}\right.\!
				  \varepsilon_{\dot{\beta}\dot{\delta}}
				    \mbox{\Large $($}
					  f^{(0)}_{(12\dot{1}\dot{2})}(x)
					  + \sum_{\gamma,\dot{\delta}^\prime}
					       {\cal P}(\vartheta_\gamma\bar{\vartheta}_{\dot{\delta}^\prime})
						     f^{(\gamma\dot{\delta}^\prime)}_{(12\dot{1}\dot{2})}(x)
					  + {\cal P}(\vartheta_1\vartheta_2
					                            \bar{\vartheta}_{\dot{1}}\bar{\vartheta}_{\dot{2}})
							f^{(12\dot{1}\dot{2})}_{(12\dot{1}\dot{2})}(x)
					\mbox{\Large $)$} \\
       && \hspace{8em}					
	         +\sqrt{-1}\sum_{\alpha,\gamma,\mu}
			      \varepsilon^{\alpha\gamma}\sigma^\mu_{\alpha\dot{\beta}}
				    \mbox{\Large $($}
					  \sum_\nu\sigma^\nu_{\gamma\dot{\delta}}
					     \partial_\mu f^{(0)}_{[\nu]}(x)
					  + {\cal P}(\vartheta_\gamma\bar{\vartheta}_{\dot{\delta}})
					        f^{(\gamma\dot{\delta})}_{(\gamma\dot{\delta})}(x)
					\mbox{\Large $)$}
				\!\left.\rule{0ex}{1.2em}\right) \\
    && +\sum_\alpha
	           \theta^\alpha\bar{\theta}^{\dot{1}}\bar{\theta}^{\dot{2}}
	            \cdot
				\sqrt{-1}\sigma^\mu_{\alpha\dot{\beta}}
				 {\cal P}(\bar{\vartheta}_{\dot{1}}\bar{\vartheta}_{\dot{2}})
				  \partial_\mu f^{(\dot{1}\dot{2})}_{(\dot{1}\dot{2})}(x)\\
	&& +\,\theta^1\theta^2\bar{\theta}^{\dot{1}}\bar{\theta}^{\dot{2}}
                \cdot
                \sqrt{-1}\sum_{\alpha,\gamma,\mu}
                 \varepsilon^{\alpha\gamma}\sigma^\mu_{\alpha\dot{\beta}}
  				   \mbox{\Large $($}
				    \sum_{\dot{\delta}}{\cal P}(\bar{\vartheta}_{\dot{\delta}})\,
					   \partial_\mu f^{(\dot{\delta})}_{(\gamma\dot{1}\dot{2})}(x)
					+ {\cal P}(\vartheta_{\gamma}
					                          \bar{\vartheta}_{\dot{1}}\bar{\vartheta}_{\dot{2}})\,
                          \partial_\mu f^{(\gamma\dot{1}\dot{2})}_{(\gamma\dot{1}\dot{2})}(x)
				   \mbox{\Large $)$}\,.				
 \end{eqnarray*}
 }%\end{small}

A comparison of
  $\breve{f}^\dag\breve{f}$
   (resp.\ $\breve{f}^2$,
                $\breve{f}^3$,  and
				$Q_\alpha {\cal P}(\breve{f})$ \&
				    $\bar{Q}_{\dot{\beta}}{\cal P}(\breve{f})$)
  with [W-B: Chap.\,V: Eq.\,(5.9)]
   (resp.\ [ibidem: Eq.\,(5.7)], [ibidem: Eq.\.(5.8)], [ibidem: Eq.\.(3.10)])
     of Julius Wess and Jonathan Bagger
  motivates the following definition:

\bigskip

\begin{definition} {\bf [standard purge-evaluation/index-contracting map]}\; {\rm
 The purge-evaluation map
    ${\cal P}: C^\infty(X^{\physics})\rightarrow C^\infty(\widehat{X})$
  that takes
 {\small
 \begin{eqnarray*}
  \breve{f}
  &= &
    f^{(0)}_{(0)}
	+ \sum_{\alpha}\theta^\alpha\vartheta_\alpha f^{(\alpha)}_{(\alpha)}
	+ \sum_{\dot{\beta}}
	       \bar{\theta}^{\dot{\beta}}\bar{\vartheta}_{\dot{\beta}}
		     f^{(\dot{\beta})}_{(\dot{\beta})}  \\
  && \hspace{1em}			
    +\; \theta^1\theta^2\vartheta_1\vartheta_2 f^{(12)}_{(12)}
	+ \sum_{\alpha,\dot{\beta}}\theta^\alpha \bar{\theta}^{\dot{\beta}}
	      \left(\rule{0ex}{1.2em}\right.\!
		    \sum_\mu \sigma^\mu_{\alpha\dot{\beta}} f^{(0)}_{[\mu]}\,
			 +\, \vartheta_\alpha\bar{\vartheta}_{\dot{\beta}}
			        f^{(\alpha\dot{\beta})}_{(\alpha\dot{\beta})}
		  \!\left.\rule{0ex}{1.2em}\right)
    + \bar{\theta}^{\dot{1}}\bar{\theta}^{\dot{2}}
	   \bar{\vartheta}_{\dot{1}}\bar{\vartheta}_{\dot{2}}
	    f^{(\dot{1}\dot{2})}_{(\dot{1}\dot{2})}  \\
  && \hspace{1em}		
	+ \sum_{\dot{\beta}}\theta^1\theta^2\bar{\theta}^{\dot{\beta}}
	     \left(\rule{0ex}{1.2em}\right.\!
		   \sum_\alpha \vartheta_\alpha f^{(\alpha)}_{(12\dot{\beta})}\,
		    +\, \vartheta_1\vartheta_2\bar{\vartheta}_{\dot{\beta}}
			            f^{(12\dot{\beta})}_{(12\dot{\beta})}
		 \!\left.\rule{0ex}{1.2em}\right)
    + \sum_\alpha \theta^\alpha\bar{\theta}^{\dot{1}}\bar{\theta}^{\dot{2}}
	   \left(\rule{0ex}{1.2em}\right.\!
	     \sum_{\dot{\beta}} \bar{\vartheta}_{\dot{\beta}}
		     f^{(\dot{\beta})}_{(\alpha\dot{1}\dot{2})}\,
         +\, \vartheta_\alpha \bar{\vartheta}_{\dot{1}}\bar{\vartheta}_{\dot{2}}	
		           f^{(\alpha\dot{1}\dot{2})}_{(\alpha\dot{1}\dot{2})}
	   \!\left.\rule{0ex}{1.2em}\right)\\
  && \hspace{1em}
	+\; \theta^1\theta^2\bar{\theta}^{\dot{1}}\bar{\theta}^{\dot{2}}
	     \left(\rule{0ex}{1.2em}\right.\!
		  f^{(0)}_{(12\dot{1}\dot{2})}
		  + \sum_{\alpha,\dot{\beta}} \vartheta_\alpha\bar{\vartheta}_{\dot{\beta}}
		         f^{(\alpha\dot{\beta})}_{(12\dot{1}\dot{2})}
		  + \vartheta_1\vartheta_2\bar{\vartheta}_{\dot{1}}\bar{\vartheta}_{\dot{2}}
		        f^{(12\dot{1}\dot{2})}_{(12\dot{1}\dot{2})}
		 \!\left.\rule{0ex}{1.2em}\right)   		
 \end{eqnarray*}}to %\endsmal
 {\small
 \begin{eqnarray*}
  {\cal P}(\breve{f})
  &= &
    f^{(0)}_{(0)}
	+ \sum_{\alpha}\theta^\alpha f^{(\alpha)}_{(\alpha)}
	+ \sum_{\dot{\beta}}
	       \bar{\theta}^{\dot{\beta}}
		     f^{(\dot{\beta})}_{(\dot{\beta})}  \\
  && \hspace{1em}			
    +\; \theta^1\theta^2 f^{(12)}_{(12)}
	+ \sum_{\alpha,\dot{\beta}}\theta^\alpha \bar{\theta}^{\dot{\beta}}
	      \left(\rule{0ex}{1.2em}\right.\!
		    \sum_\mu \sigma^\mu_{\alpha\dot{\beta}} f^{(0)}_{[\mu]}\,
			 +\, f^{(\alpha\dot{\beta})}_{(\alpha\dot{\beta})}
		  \!\left.\rule{0ex}{1.2em}\right)
    + \bar{\theta}^{\dot{1}}\bar{\theta}^{\dot{2}}
	    f^{(\dot{1}\dot{2})}_{(\dot{1}\dot{2})}  \\
  && \hspace{1em}		
	+ \sum_{\dot{\beta}}\theta^1\theta^2\bar{\theta}^{\dot{\beta}}
	     \left(\rule{0ex}{1.2em}\right.\!
		   \sum_\alpha  f^{(\alpha)}_{(12\dot{\beta})}\,
		    +\, f^{(12\dot{\beta})}_{(12\dot{\beta})}
		 \!\left.\rule{0ex}{1.2em}\right)
    + \sum_\alpha \theta^\alpha\bar{\theta}^{\dot{1}}\bar{\theta}^{\dot{2}}
	   \left(\rule{0ex}{1.2em}\right.\!
	     \sum_{\dot{\beta}}
		     f^{(\dot{\beta})}_{(\alpha\dot{1}\dot{2})}\,
         +\, f^{(\alpha\dot{1}\dot{2})}_{(\alpha\dot{1}\dot{2})}
	   \!\left.\rule{0ex}{1.2em}\right)\\
  && \hspace{1em}
	+\; \theta^1\theta^2\bar{\theta}^{\dot{1}}\bar{\theta}^{\dot{2}}
	     \left(\rule{0ex}{1.2em}\right.\!
		  f^{(0)}_{(12\dot{1}\dot{2})}
		  + \sum_{\alpha,\dot{\beta}}
		         f^{(\alpha\dot{\beta})}_{(12\dot{1}\dot{2})}
		  +  f^{(12\dot{1}\dot{2})}_{(12\dot{1}\dot{2})}
		 \!\left.\rule{0ex}{1.2em}\right)    \\
 &\:=: &
    f_{(0)}
	+ \sum_{\alpha}\theta^\alpha f_{(\alpha)}
	+ \sum_{\dot{\beta}}
	       \bar{\theta}^{\dot{\beta}}
		     f_{(\dot{\beta})}  \\
  && \hspace{1em}			
    +\; \theta^1\theta^2 f_{(12)}
	+ \sum_{\alpha,\dot{\beta}}\theta^\alpha \bar{\theta}^{\dot{\beta}}
	      \left(\rule{0ex}{1.2em}\right.\!
		    \sum_\mu \sigma^\mu_{\alpha\dot{\beta}} f_{[\mu]}\,
			 +\, f^\prime_{(\alpha\dot{\beta})}
		  \!\left.\rule{0ex}{1.2em}\right)
    + \bar{\theta}^{\dot{1}}\bar{\theta}^{\dot{2}}
	    f_{(\dot{1}\dot{2})}  \\
  && \hspace{1em}		
	+ \sum_{\dot{\beta}}\theta^1\theta^2\bar{\theta}^{\dot{\beta}}
	     \left(\rule{0ex}{1.2em}\right.\!
		   f^\prime_{(12\dot{\beta})}
		   + f^{\prime\prime}_{(12\dot{\beta})}\,
		   + f^{\prime\prime\prime}_{(12\dot{\beta})}
		 \!\left.\rule{0ex}{1.2em}\right)
    + \sum_\alpha \theta^\alpha\bar{\theta}^{\dot{1}}\bar{\theta}^{\dot{2}}
	   \left(\rule{0ex}{1.2em}\right.\!
		  f^\prime_{(\alpha\dot{1}\dot{2})}\,
		 +\, f^{\prime\prime}_{(\alpha\dot{1}\dot{2})}\,
         +\, f^{\prime\prime\prime}_{(\alpha\dot{1}\dot{2})}
	   \!\left.\rule{0ex}{1.2em}\right)\\
  && \hspace{1em}
	+\; \theta^1\theta^2\bar{\theta}^{\dot{1}}\bar{\theta}^{\dot{2}}
	     \left(\rule{0ex}{1.2em}\right.\!
		  f^\prime_{(12\dot{1}\dot{2})}
		  +  f^{\prime\prime}_{(12\dot{1}\dot{2})}		
		  +  f^{\prime\prime\prime}_{(12\dot{1}\dot{2})}
		  +  f^{\prime\prime\prime\prime}_{(12\dot{1}\dot{2})}
		  +  f^{\prime\prime\prime\prime\prime}_{(12\dot{1}\dot{2})}
		  +  f^{\prime\prime\prime\prime\prime\prime}_{(12\dot{1}\dot{2})}
		 \!\left.\rule{0ex}{1.2em}\right)    \\[1.8ex]
 &\:=: &
    f_{(0)}
	+ \sum_{\alpha}\theta^\alpha f_{(\alpha)}
	+ \sum_{\dot{\beta}}
	       \bar{\theta}^{\dot{\beta}}
		     f_{(\dot{\beta})}
    +\; \theta^1\theta^2 f_{(12)}
	+ \sum_{\alpha,\dot{\beta}}\theta^\alpha \bar{\theta}^{\dot{\beta}}
	     f_{(\alpha\dot{\beta})}
    + \bar{\theta}^{\dot{1}}\bar{\theta}^{\dot{2}} f_{(\dot{1}\dot{2})}  \\
  && \hspace{1em}		
	+ \sum_{\dot{\beta}}\theta^1\theta^2\bar{\theta}^{\dot{\beta}}f_{(12\dot{\beta})}
    + \sum_\alpha \theta^\alpha\bar{\theta}^{\dot{1}}\bar{\theta}^{\dot{2}}
	        f_{(\alpha\dot{1}\dot{2})}
	+ \theta^1\theta^2\bar{\theta}^{\dot{1}}\bar{\theta}^{\dot{2}}
  		    f_{(12\dot{1}\dot{2})}
 \end{eqnarray*}}%\endsmal
 
 \noindent
 is called the {\it standard purge-evaluation map}
   with respect to $(\theta,\bar{\theta},\vartheta,\bar{\vartheta})$.\footnote{Here,
                                        for example, we use
										  $f^{\prime}_{(12\dot{\beta})}$,
										  $f^{\prime\prime}_{(12\dot{\beta})}$,
										  $f^{\prime\prime\prime}_{(12\dot{\beta})}$
										 to distinguish
										  ${\cal P}(\vartheta_1f^{(1)}_{(12\dot{\beta})})$,
	                                      ${\cal P}(\vartheta_2f^{(2)}_{(12\dot{\beta})})$,
                                          ${\cal P}(\vartheta_1\vartheta_2\bar{\theta}_{\dot{\beta}}
										       f^{(12\dot{\beta})}_{(12\dot{\beta})})$.
                                        Since we use no further detail of this sum in this work,
										  we leave it unspecified which is which.
										Similarly for
										  $f^\prime_{(12\dot{1}\dot{2})}$,
										  $f^{\prime\prime}_{(12\dot{1}\dot{2})}$,
										  $f^{\prime\prime\prime}_{(12\dot{1}\dot{2})}$,
										  $f^{\prime\prime\prime\prime}_{(12\dot{1}\dot{2})}$,
										  $f^{\prime\prime\prime\prime\prime}_{(12\dot{1}\dot{2})}$,
										  $f^{\prime\prime\prime\prime\prime\prime}_{(12\dot{1}\dot{2})}$.
                                        } % end-footnote
 Note that in this process, the lower fermionic indices of $(\vartheta,\bar{\vartheta})$
   in each $(\theta,\bar{\theta},\vartheta,\bar{\theta})$-monomial
   are contracted out\footnote{The fermionic indices of a coefficient (e.g.\ $f^{(12)}_{(12)}$)
                                       behave like the indices of a tensor.
                                     When performing the standard purge-evaluation map ${\cal P}$,
									   paired lower-and-upper fermionic indices
									     (e.g.\ the lower $1$ and $2$ in  $\vartheta_1\vartheta_2$ versus
										             the upper $1$ and $2$ in $f^{(12)}_{(12)}$)
									   should drop out from the index structure via evaluation of a dual spinor on a spinor.
									 The true index structure should be the leftover indices.
									 (E.g.\ $\vartheta_1\vartheta_2\, f^{(12)}_{(12)}
									                 \stackrel{\cal P}{\longrightarrow}
													 f^{(12)}_{(12)}=: f_{(12)}$.)													 
									                            }% end-footnote
   with the same upper fermionic indices of the coefficient
   $f^{(\tinybullet)}_{(\tinybullet)}$ and we call ${\cal P}$ also
   the {\it standard index-contracting map}.
}\end{definition}

%--------------------------------
% \bigskip
%
% Under the standard purge-evaluation map ${\cal P}$,
% $Q_\alpha {\cal P}(\breve{f})$ and $\overline{Q}_{\dot{\beta}}$
% now take the same form as that in [L-Y1:  ] (D(14.1)).
%==================

\bigskip

\begin{flushleft}
{\bf A supersymmetric action functional for chiral multiplets via the Fundamental Theorem}
\end{flushleft}
A chiral function $\breve{f}$ on $X^{\physics}$ contains four independent component
 $f^{(0)}_{(0)}, f^{(\alpha)}_{(\alpha)}, f^{(12)}_{(12)}
   \in C^\infty(X)^{\Bbb C}$,
 $\alpha=1,2$.
It follows from Theorem~1.5.3
                    % Theorem [fundamental: supersymmetric functional]
 that
 {\small
 \begin{eqnarray*}
  \lefteqn{S_1(f^{(0)}_{(0)}; f^{(1)}_{(1)}, f^{(2)}_{(2)};
                                  f^{(12)}_{(12)})\;
     =\; S_1(f_{(0)}; f_{(1)}, f_{(2)};  f_{(12)})								  } \\
   &&    :=\;  \int_{\widehat{X}}d^4x\,
                       d\bar{\theta}^{\dot{2}} d\bar{\theta}^{\dot{1}} d\theta^2 d\theta^1\,
	           {\cal P}(\breve{f}^\dag \breve{f})\,
			+\, \int_{\widehat{X}}d^4x\,d\theta^2 d\theta^1\,
			    {\cal P}\mbox{\Large $($}
				                     \lambda \breve{f}
				                      + \mbox{\large $\frac{1}{2}$}m \breve{f}^2
                                      + \mbox{\large $\frac{1}{3}$}g \breve{f}^3
								   \mbox{\Large $)$}  \\
   && \hspace{16em}
            +\, \int_{\widehat{X}}d^4x\,d\bar{\theta}^{\dot{2}} d\bar{\theta}^{\dot{1}}\,
			    {\cal P}\mbox{\Large $($}
				                     \bar{\lambda} \breve{f}^\dag
				                      + \mbox{\large $\frac{1}{2}$}\bar{m} (\breve{f}^\dag)^2
                                      + \mbox{\large $\frac{1}{3}$}\bar{g} (\breve{f}^\dag)^3
								   \mbox{\Large $)$}
 \end{eqnarray*}}gives  % end-small
 a functional of chiral multiplets
 $(f^{(0)}_{(0)}; f^{(1)}_{(1)}, f^{(2)}_{(2)}; f^{(12)}_{(12)})$
 that is invariant under supersymmetries up to a boundary term on $X$.

Explicitly, up to boundary terms on $X$, this is the action functional (cf.\ [W-B: Chap.\,V, Eq.\,(5.11)])
 {\small
 \begin{eqnarray*}
  \lefteqn{S_1(f_{(0)}, f_{(1)}, f_{(2)}, f_{(12)})}\\
  && =\; \int_X d^4x
	             \left(\rule{0ex}{1.2em}\right.\!	
					4\, \sum_\mu
				            \partial_\mu \overline{f_{(0)}}(x)\,  \partial^\mu f_{(0)}(x) \\
       && \hspace{6em}				
               + \sum_{\alpha,\dot{\beta}}	
			        \mbox{\Large $($}
			          -\,\overline{f_{(\beta)}}(x)
					         \cdot \sqrt{-1}\sum_\mu\bar{\sigma}^{\mu,\dot{\beta}\alpha}
							   \partial_\mu f_{(\alpha)}(x)
                     +\, f_{(\alpha)}(x)
					     \cdot	\sqrt{-1}\sum_\mu \bar{\sigma}^{\mu, \dot{\beta}\alpha}
						   \partial_\mu \overline{f_{(\beta)}}(x)
					\mbox{\Large $)$}					\\
    && 	\hspace{6em}
	         +\,\overline{f_{(12)}}(x)f_{(12)}(x)
           	    \!\left.\rule{0ex}{1.2em}\right)                               \\
   && \hspace{1.6em}
       +\, \int_X d^4x
               \left(\rule{0ex}{1.2em}\right.\!
                  \lambda f_{(12)}(x)
                     + m\, \mbox{\Large $($}
			              f_{(0)}(x) f_{(12)}(x)- f_{(1)}(x) f_{(2)}(x)
			                   \mbox{\Large $)$}\\
         && \hspace{6em}
		          + g\, \mbox{\Large $($}
			          f_{(0)}(x)^2 f_{(12)}(x)
			           - 2\, f_{(0)}(x)f_{(1)}(x)f_{(2)}(x)
			          \mbox{\Large $)$}
                + (\mbox{complex conjugate})					
			\left.\rule{0ex}{1.2em}\right)\,.
 \end{eqnarray*}}The  % end-small
index structure of this explicit expression implies that this functional is indeed Lorentz invariant.

\bigskip

This is what underlies [W-B: Chap.\,V] of Wess \& Bagger
  from the aspect of $C^\infty$-Algebraic Geometry.

\bigskip

\section{Supersymmetric $U(1)$ gauge theory with matter on $X$ in terms of $X^{\physics}$}

In this section, we reproduce the supersymmetric $U(1)$ gauge theory with matter
 in [W-B: Chap.\,VI \& $U(1)$ part of Chap.\,VII] of Wess \& Bagger
 from the (complexified ${\Bbb Z}/2$-graded) $C^\infty$-Algebraic Geometry setting in Sec.\,1.

\bigskip

\subsection{The bundle/sheaf context underlying a supersymmetric $U(1)$ gauge\\ theory with matter
                         built from $X^{\physics}$}

On the mathematics side,
 a gauge theory usually begins with the setup of a principal bundle, associate bundles from representations,
  connections and their curvature tensor,
 e.g.\;[D-K: Sec.\,2.1].
Such a geometric setup remains there for supersymmetric gauge theories,
  only that there are rarely mentioned or brought to front in physics literature.
When mathematicians (or mathematics-oriented physicists) attempt to set such geometry up,
 the precise setting depends on how the notion of `superspace' is defined in their context.
 
In this subsection, we give such a geometric setup for supersymmetric $U(1)$-gauge theory with matter
 in the language of (complexified ${\Bbb Z}/2$-graded) $C^\infty$-Algebraic Geometry.
As nilpotent objects (e.g.\ $\theta$, $\bar{\theta}$, $\vartheta$, $\bar{\vartheta}$)
 are everywhere in our problem,
 it is more convenient to use the language of sheaves, rather than bundles.
This is the sheaf theory in (complexified ${\Bbb Z}/2$-graded) $C^\infty$-Algebraic Geometry behind the scene
 for [W-B: Chap.\,VI \& $U(1)$ part of Chap.\,VII] of Wess \& Bagger.
It can be generalized to the higher rank, nonabelian case.

\vspace{6em}
%\bigskip
						
\begin{flushleft}						
{\bf The built-in principal bundle/sheaf and all that}
\end{flushleft}
The multiplicative group of invertible elements of ${\cal O}_X^{\physics}$
 defines a principal sheaf ${\cal O}_X^{\physics, \times}$ over $X^{\physics}$.
It corresponds to the sheaf of sections of a tautological principal ${\Bbb C}^{\times}$-bundle
 $\boldsymbol{P}^{\,{\Bbb C}^\times}$ over $X^{\physics}$.
Note that an $\breve{f}\in {\cal O}_X^{\physics}$ is invertible if and only if
 its $(\theta,\bar{\theta},\vartheta,\bar{\theta})$-degree-zero component is invertible, i.e.\;
 $f^{(0)}_{(0)}\in {\cal O}_X^{\,{\Bbb C}, \times}$.
Thus,
 $$
  {\cal O}_X^{\physics, \times}\;
    =\;  \{\breve{f}\in {\cal O}_X^{\physics}\,|\,
	              f^{(0)}_{(0)}\in {\cal O}_X^{\,{\Bbb C}, \times}\}\,.
 $$
This is the {\it tautological principal ${\Bbb C}^\times$-sheaf} on $X^{\physics}$,
 where ${\Bbb C}^\times := ({\Bbb C}-\{0\}, \times)$ is the multiplicative group of ${\Bbb C}$.
Since ${\Bbb C}^\times$ is abelian,
 the adjoint representation of of ${\Bbb C}^\times$ on ${\Bbb C}$ as the associated Lie algebra
 is trivial.
This realizes ${\cal O}_X^{\physics}$
 as the associated sheaf of Lie algebras of ${\cal O}_X^{\physics, \times}$.
The {\it exponential map} is given explicitly by
 $$
  \begin{array}{cccccc}
  e:= \Exponential  & :
    & {\cal O}_X^{\physics}  & \longrightarrow  &  {\cal O}_X^{\physics,\times}\\[1.2ex]
  && \breve{f} & \longmapsto
                              &  e^{f^{(0)}_{(0)}}
							            \cdot \sum_{l=0}^4 \frac{1}{l!}(\breve{f}_{(\ge 1)})^l\,.
  \end{array}
 $$
Here,
 $\breve{f}=f^{(0)}_{(0)}+\breve{f}_{(\ge 1)}$,
  with
     $f^{(0)}_{(0)}\in {\cal O}_X^{\,\Bbb C}$  and
	 $\breve{f}_{(\ge 1)}$ the nilpotent part of $\breve{f}$,   and
 note that $(\breve{f}_{(\ge 1)})^5=0$ for $\breve{f}\in {\cal O}_X^{\tinyphysics}$.
Note also that when $f^{(0)}_{(0)}$ is real, this is compatible with the $C^\infty$-hull structure
 of ${\cal O}_X^{\physics}$.
Its local inverse near the identity section $1\in {\cal O}_X^{\physics,\times}$
 defines the {\it Log map}:
  $$
  \begin{array}{cccccc}
   \Log  & :
    & {\cal O}_X^{\physics, \times}  & \longrightarrow  &  {\cal O}_X^{\physics}\\[1.2ex]
  && \breve{f} & \longmapsto
                              & \log f^{(0)}_{(0)}
							      + \sum_{l=1}^4  \frac{(-1)^l}{l}
								      ({f^{(0)}_{(0)}}^{-1}\breve{f}_{(\ge 1)})^l\,.
  \end{array}
 $$
This is also compatible with the $C^\infty$-hull structure of
 ${\cal O}_X^{\physics}\supset {\cal O}_X^{\physics, \times}$
 when $f^{(0)}_{(0)}$ is real.

\bigskip

\begin{definition}  {\bf [physics-related principal subsheaves in ${\cal O}_X^{\physics, \times}$]}\; {\rm
 There are four physics-related principal sheaves of subgroups in the tautological principal
   ${\Bbb C}^\times$-sheaf ${\cal O}_X^{\physics, \times}$ over $X^{\physics}$.
 Each is characterized by its associated sheaf of Lie subalgebras in ${\cal O}_X^{\physics}$:
 \begin{itemize}
  \item[(1)]  \makebox[23.4em][l]{[{\it tautological chiral principal ${\Bbb C}^\times$-sheaf
                    ${\cal O}_X^{\physics,\scriptsizech, \times}$}\,]}
   Note that
    $$
     {\cal O}_X^{\physics, \times,\scriptsizech}\;
		:=\; {\cal O}_X^{\physics,\times}\cap {\cal O}_X^{\physics, \scriptsizech}
		= {\cal O}_X^{\physics, \scriptsizech, \times}\,,
	$$
	where ${\cal O}_X^{\physics, \scriptsizech, \times}$
	   is the sheaf of group of invertible elements in ${\cal O}_X^{\physics, \scriptsizech}$.
   This defines the {\it tautological chiral principal ${\Bbb C}^\times$-sheaf} on $X^{\physics}$,
     whose associated sheaf of Lie algebras is ${\cal O}_X^{\physics, \scriptsizech}$,
     (which is	the same as $\sqrt{-1}\cdot{\cal O}_X^{\physics, \scriptsizech}$).
   The exponential map $\Exponential$ restricts to
   $e=\Exponential: {\cal O}_X^{\physics, \scriptsizech}
          \rightarrow {\cal O}_X^{\physics,\scriptsizech, \times} $.
     
  \item[(2)] \makebox[25.6em][l]{[{\it tautological antichiral principal ${\Bbb C}^\times$-sheaf
                    ${\cal O}_X^{\physics, \scriptsizeach, \times}$}\,]}
   Note that
    $$
     {\cal O}_X^{\physics, \times,\scriptsizeach}\;
		:=\; {\cal O}_X^{\physics,\times}\cap {\cal O}_X^{\physics, \scriptsizeach}
		= {\cal O}_X^{\physics, \scriptsizeach, \times}\,,
	$$
	where ${\cal O}_X^{\physics, \scriptsizeach, \times}$
	   is the sheaf of group of invertible elements in ${\cal O}_X^{\physics, \scriptsizeach}$.
   This defines the {\it tautological antichiral principal ${\Bbb C}^\times$-sheaf} on $X^{\physics}$, 
     whose associated sheaf of Lie algebras is ${\cal O}_X^{\physics, \scriptsizeach}$,
     (which is	the same as $\sqrt{-1}\cdot{\cal O}_X^{\physics, \scriptsizeach}$).
   The exponential map $\Exponential$ restricts to
   $e=\Exponential: {\cal O}_X^{\physics, \scriptsizeach}
          \rightarrow {\cal O}_X^{\physics,\scriptsizeach, \times} $.
 
  \item[(3)] \makebox[30.4em][l]{[{\it tautological principal $U(1)$-sheaf
                        ${\cal O}_X^{\physics,\flat, U(1)}$ and its descendants}\,]}
  Let\footnote{Caution
                           that ${\cal O}_X^{\tinyphysics,\flat}$ is only an ${\cal O}_X^{\,\Bbb C}$-submodule,
                           {\it not} an ${\cal O}_X^{\,\Bbb C}$-subalgebra, 						   
			  			    of ${\cal O}_X^{\tinyphysics}$.
		   				  It s not closed under the multiplication of sections.						   
                             }  % end-footnote
   $$
     {\cal O}_X^{\physics,\flat}\;
	   :=\; \left\{\breve{f}\in {\cal O}_X^{\physics}\left|
	              \begin{array}{l}
				    \mbox{$\breve{f}$ is of the form\;
					   {\small (in the standard coordinate}
					        }   \\
					  \mbox{\small  functions $(x,\theta,\bar{\theta}, \vartheta, \bar{\vartheta})$
						    on $\widehat{X}^{\widehat{\boxplus}}$)
							}\\[1.2ex]
				    \mbox{\small
					  $f^{(0)}_{(0)}
					  + \sum_\alpha \theta^\alpha\vartheta_\alpha\, f^{(\alpha)}_{(\alpha)}
					  + \sum_{\dot{\beta}}
                             \bar{\theta}^{\dot{\beta}}\bar{\vartheta}_{\dot{\beta}}\,
							   f^{(\dot{\beta})}_{(\dot{\beta})}
					  + \theta^1\theta^2\vartheta_1\vartheta_2\, f^{(12)}_{(12)}$
					     }\\[1.2ex] % end-small
					\mbox{\small
					 $+\, \sum_{\alpha,\dot{\beta}}\theta^\alpha\bar{\theta}^{\dot{\beta}}
					          \sum_\mu\sigma^\mu_{\alpha\dot{\beta}}\,f^{(0)}_{[\mu]}
					 +  \bar{\theta}^{\dot{1}}\bar{\theta}^{\dot{2}}
					     \bar{\vartheta}_{\dot{1}}\bar{\vartheta}_{\dot{2}}\,
						   f^{(\dot{1}\dot{2})}_{(\dot{1}\dot{2})}$
					     }\\[1.2ex] % end-small 	
					\mbox{\small
					  $+\, \sum_{\dot{\beta}}\theta^1\theta^2\bar{\theta}^{\dot{\beta}}
                           \sum_\alpha \vartheta_\alpha\, f^{(\alpha)}_{(12\dot{\beta})}					  
					    + \sum_\alpha \theta^\alpha\bar{\theta}^{\dot{1}}\bar{\theta}^{\dot{2}}
						     \sum_{\dot{\beta}}\bar{\vartheta}_{\dot{\beta}}
							   f^{(\dot{\beta})}_{(\alpha\dot{1}\dot{2})}$
							      } \\[1.2ex]
					 \mbox{\small
					  $+\, \theta^1\theta^2\bar{\theta}^{\dot{1}}\bar{\theta}^{\dot{2}}
						      f^{(0)}_{(12\dot{1}\dot{2})}$\,;
					     }\\[1.8ex]
                   \mbox{namely, {\small
				      $f^{(\alpha\dot{\beta})}_{(\alpha\dot{\beta})}
					   = f^{(12\dot{\beta})}_{(12\dot{\beta})}
					   = f^{(\alpha\dot{1}\dot{2})}_{(\alpha\dot{1}\dot{2})}
					   = f^{(\alpha\dot{\beta})}_{(12\dot{1}\dot{2})}
					   = f^{(12\dot{1}\dot{2})}_{(12\dot{1}\dot{2})}
					   = 0$
				                   }}
				  \end{array}
                                              \right. \right\}
   $$
  and
   $$
  {\cal O}_X^{\physics,\flat,\stc}\;
    :=\; \{\breve{s}\in {\cal O}_X^{\physics, \flat} \,|\, \breve{s}^\dag = \breve{s}  \}\;
	 \subset {\cal O}_X^{\physics}
   $$
   as a sub-${\cal O}_X$-module.\footnote{Here,
                                                                           {\it stc}\, stands for {\it self-twisted-complex-conjugate}
                                                                               } % end-footnote
  The image of the restriction of $\Exponential$ to $\sqrt{-1}\cdot{\cal O}_X^{\physics,\flat, \stc}$ 
    is a sheaf of subgroups in ${\cal O}_X^{\physics, \times}$
   whose restriction to $X^{\Bbb C}$ is a sheaf of sections of a principal $U(1)$-bundle over $X$.
  Denote this image in ${\cal O}_X^{\physics, \times}$ by ${\cal O}_X^{\physics,\flat, U(1)}$ and
   call it the {\it tautological principal $U(1)$-sheaf on $X^{\physics}$}.
  This corresponds to the principal $U(1)$-bundle in the gauge theory we are going to study.
  The construction realizes
    $\sqrt{-1}\cdot {\cal O}_X^{\physics,\flat,\stc}$
    as the associated sheaf of Lie algebras of ${\cal O}_X^{\physics,\flat, U(1)}$.
  One can impose further ${\Bbb R}$-linear constraints on ${\cal O}_X^{\physics,\flat, \stc}$ to obtain
 {\it descendants} of ${\cal O}_X^{\physics, \flat, U(1)}$
   via the restriction of the exponential map $\Exponential$ on
  $\sqrt{-1}\cdot {\cal O}_X^{\physics,\flat,\stc}$.
  
  \item[(4)] \makebox[30em][l]{[{\it tautological principal ${\Bbb R}^\times$-sheaf
                        ${\cal O}_X^{\physics,\flat, {\Bbb R}^\times}$ and its descendants}\,]}
  The image of the restriction of $\Exponential$ to ${\cal O}_X^{\physics,\flat, \stc}$
    is a sheaf of subgroups in ${\cal O}_X^{\physics, \times}$
   whose restriction to $X^{\Bbb C}$ is a sheaf of sections of a principal ${\Bbb R}^\times$-bundle over $X$.
  Denote this image in ${\cal O}_X^{\physics, \times}$
    by ${\cal O}_X^{\physics,\flat, {\Bbb R}^\times}$ and
   call it the {\it tautological principal ${\Bbb R}^\times$-sheaf on $X^{\physics}$}.
  The construction realizes ${\cal O}_X^{\physics,\flat,\stc}$
    as the associated sheaf of Lie algebras of ${\cal O}_X^{\physics,\flat, {\Bbb R}^\times}$.
  One can impose further ${\Bbb R}$-linear constraints on ${\cal O}_X^{\physics,\flat, \stc}$ to obtain
 {\it descendants} of ${\cal O}_X^{\physics, \flat, {\Bbb R}^\times}$
 via the restriction of the exponential map $\Exponential$ on ${\cal O}_X^{\physics,\flat,\stc}$.
 \end{itemize}
}\end{definition}

\medskip

\begin{definition} {\bf [${\cal O}_X^{\physics, \scriptsizech}$,
              ${\cal O}_X^{\physics, \scriptsizeach}$, ${\cal O}_X^{\physics,\flat,{\Bbb R}^\times}$
			  as ${\cal O}_X^{\physics,\scriptsizech,\times}$-modules]}\; {\rm
 For each $e_m\in {\Bbb R}$,\footnote{\makebox[11.6em][l]{\it Note for mathematicians}
                                                          The number $e_m$ is the {\it electric charge} of the chiral matter fields
  														   realized as global sections of ${\cal O}_X^{\tinyphysics, ch}$
														   in the supersymmetric $U(1)$ gauge theory with matter.
                                                             }  % end-footnote			
  we shall consider the following action of the tautological principal ${\Bbb C}^\times$-sheaf
   ${\cal O}_X^{\physics,\scriptsizech,\times}$
  on ${\cal O}_X^{\physics, \scriptsizech}$,
              ${\cal O}_X^{\physics, \scriptsizeach}$, and ${\cal O}_X^{\physics,\flat,U(1)}$.
 This turns them into ${\cal O}_X^{\physics,\scriptsizech,\times}$-modules.
 \begin{itemize}
  \item[(1)] \makebox[7em][l]{[$\,{\cal O}_X^{\physics, \scriptsizech}\,$]}
   \parbox[t]{31.6em}{Left
     multiplication in ${\cal O}_X^{\physics}$ by the section:\;
      $e^{\sqrt{-1}\,e_m \breve{\Lambda}}\cdot {\cal O}_X^{\physics,\scriptsizech}$. \\
	 Note that this leaves
	  ${\cal O}_X^{\physics,\scriptsizech}\subset {\cal O}_X^{\physics}$ invariant.}

  \item[(2)] \makebox[7em][l]{[$\,{\cal O}_X^{\physics, \scriptsizeach}\,$]}
   \parbox[t]{31.6em}{Right
     multiplication in ${\cal O}_X^{\physics}$ by the twisted complex conjugate of the section:
     ${\cal O}_X^{\physics,\scriptsizeach}\cdot e^{-\,\sqrt{-1}\,e_m\breve{\Lambda}^\dag}$.\\
	 Note that this leaves
	  ${\cal O}_X^{\physics,\scriptsizeach}\subset {\cal O}_X^{\physics}$ invariant.}

  \item[(3)] \makebox[7em][l]{[$\,{\cal O}_X^{\physics,\flat, {\Bbb R}^\times}\,$]}
   \parbox[t]{31.6em}{Left
    multiplication in ${\cal O}_X^{\physics}$ by the inverse of the twisted complex conjugate of the section
     in accompany with right multiplication in ${\cal O}_X^{\physics}$ by the inverse of the section:	
     $e^{\sqrt{-1}\,e_m\breve{\Lambda}^\dag}
	   \cdot {\cal O}_X^{\physics,\flat, {\Bbb R}^\times}
	   \cdot e^{-\,\sqrt{-1}\,e_m \breve{\Lambda}}$.\\
    Note that this leaves
	  ${\cal O}_X^{\physics,\flat,{\Bbb R}^\times}\subset {\cal O}_X^{\physics}$ invariant.}
 \end{itemize}
 Here
  a section of ${\cal O}_X^{\physics, ch,\times}$ is expressed
    in terms of its associated sheaf of Lie algebras as $e^{\sqrt{-1}\breve{\Lambda}}$
    with $\breve{\Lambda}\in {\cal O}_X^{\physics, \scriptsizech}$   and
  a section of ${\cal O}_X^{\physics,\flat, {\Bbb R}^\times}$  is expressed
    in terms of its associated sheaf of Lie algebras as $e^{\breve{V}}$
    with $\breve{V}\in {\cal O}_X^{\physics,\flat,\stc}$.
 From the gauge-theoretical aspect,
  ${\cal O}_X^{\physics, \scriptsizech,\times}$ plays the role of
  the {\it sheaf of gauge symmetries} in the problem.
}\end{definition}

\bigskip

For all the discussions below until the last theme
 `{\sl A supersymmetric action functional for $U(1)$ gauge theory with matter on $X$}' of Sec.\;3.5,
 we will set $e_m=1$ so that we don't have to carry the symbol all along.
By replacing $\breve{\Lambda}$ with $e_m\breve{\Lambda}$, we recover the charge $e_m$ case.

\bigskip

\begin{definition} {\bf [${\cal O}_X^{\physics}$ \& $\widehat{\cal O}_X^{\,\widehat{\boxplus}}$
                                             as ${\cal O}_X^{\physics,\scriptsizech,\times}$-modules]}\; {\rm
 The same three operations (1), (2), (3) in Definition~3.1.2
                                             % Definition [${\cal O}_X^{\physics, \scriptsizech}$,
                                             %                  ${\cal O}_X^{\physics, \scriptsizeach}$, 
											 %                  ${\cal O}_X^{\physics,\flat,{\Bbb R}^\times}$
			                                 %                  as ${\cal O}_X^{\physics,\scriptsizech,\times}$-modules]
   realize both ${\cal O}_X^{\physics}$ and $\widehat{\cal O}_X^{\,\widehat{\boxplus}}$
   as {\it left} (cf.\;(1)), {\it right} (cf.\;(2)), {\it bi-} (cf.\;(3))
   ${\cal O}_X^{\physics,\scriptsizech,\times}$-modules respectively.
}\end{definition}

\bigskip

\begin{flushleft}
{\bf (Even left) connections and their curvature tensor}
\end{flushleft}
The notion of `connection' in [L-Y1: Sec.\,2.1] can be adapted here.
However, there are two opposing factors ahead of us:
 \begin{itemize}
  \item[($+$)]
   Since physics focuses on ${\cal O}_X^{\physics}$, which is purely even,
     all the complication due to the ${\Bbb Z}/2$-grading that we have to address in ibidem is gone.
   Thus, one {\it only needs to consider even left connections}.
	
  \item[($-$)]	
   Since a connection is a generalization of the exterior differential operator $d$ and
    $\Der_{\Bbb C}(\widehat{X})$ does not leave ${\cal O}_X^{\physics}$ invariant,
   one {\it cannot just consider a connection on an ${\cal O}_X^{\physics}$-module alone}.
 \end{itemize}
Based on the physical  applications in practice, with both of the above two factors taken into account, 
 one is led to consider even left connections $\nabla$
 on a full $\widehat{\cal O}^{\,\widehat{\boxplus}}_X$-module $\widehat{\cal E}$.
It won't necessarily leave an ${\cal O}_X^{\physics}$-submodule ${\cal F}$ of $\widehat{\cal E}$ invariant
 but this is okay as long as we  know where $\nabla\!_\xi s$ is in $\widehat{\cal E}$
 for all $s\in {\cal F}$ and $\xi\in \Der_{\Bbb C}(\widehat{X})$.
 
\bigskip

\begin{definition} {\bf [even left connection on
  $\widehat{\cal O}_X^{\,\widehat{\boxplus}}$-module]}\;
{\rm (Cf.\;[L-Y1:Definition 2.1.2].)
 Let $\widehat{\cal E}$ be an $\widehat{\cal O}_X^{\,\widehat{\boxplus}}$-module.
 An {\it even left connection} $\widehat{\nabla}$ on $\widehat{\cal E}$
    is a ${\Bbb C}$-bilinear pairing	
  $$
    \begin{array}{ccccc}
	 \widehat{\nabla} & : & {\cal T}_{\widehat{X}^{\widehat{\boxplus}}} \times \widehat{\cal E}
	     & \longrightarrow   & \widehat{\cal E}  \\[1.2ex]
    && (\xi, s)              &  \longmapsto    &  \widehat{\nabla}\!_{\xi}s		 	
	\end{array}
  $$
  such that
	\begin{itemize}
	 \item[(1)]  [{\it $\widehat{\cal O}_X$-linearity
	                                   in the ${\cal T}_{\widehat{X}^{\widehat{\boxplus}}}$-argument}]\\[.6ex]	
	  $\mbox{\hspace{1em}}$
	  $\widehat{\nabla}\!_{f_1\xi_1 + f_2\xi_2}s\;
	     =\; f_1 \widehat{\nabla}\!_{\xi_1}s + f_2 \widehat{\nabla}\!_{\xi_2}s$, \hspace{1em}
      for $f_1, f_2 \in \widehat{\cal O}_X^{\,\widehat{\boxplus}}$,
	       $\xi_1, \xi_2 \in {\cal T}_{\widehat{X}^{\widehat{\boxplus}}}$,  and
	       $s\in \widehat{\cal E}$;

     \item[(2)]  [{\it ${\Bbb C}$-linearity in the $\widehat{\cal E}$-argument}]\\[.6ex]
	 $\mbox{\hspace{1em}}$
	 $\widehat{\nabla}\!_\xi(c_1s_1+c_2s_2)\;
	     =\;  c_1 \widehat{\nabla}\!_\xi s_1 + c_2 \widehat{\nabla}\!_\xi s_2$, \hspace{1em}
	  for $c_1, c_2\in {\Bbb C}$, $\xi\in {\cal T}_{\widehat{X}^{\widehat{\boxplus}}}$, and
	       $s_1, s_2\in \widehat{\cal E}$;
	
	 \item[(3)] [{\it ${\Bbb Z}/2$-graded Leibniz rule in the $\widehat{\cal E}$-argument}]\footnote{In
	                                       [L-Y1: Definition 2.1.2 ], a left connection on $\widehat{\cal E}$ is required to satisfy
	                                       the  generalized ${\Bbb Z}/2$-graded Leibniz rule in the $\widehat{\cal E}$-argument:\;
	                                         $\widehat{\nabla}_\xi(fs)\;
	                                            =\; (\xi f)s
	                                              + (-1)^{p(f)p(\xi)}\,f\cdot\,\!^{\varsigma_{\!f}}\!
											                    (\widehat{\nabla})_\xi s$,
                                                for $f\in \widehat{\cal O}_X^{\,\widehat{\boxplus}}$, 
												      $\xi\in {\cal T}_{\widehat{X}^{\widehat{\boxplus}}}$ parity homogeneous
	                                                  and $s\in\widehat{\cal E}$,
                                             where
	                                               $^{\varsigma_{\!f}}\!(\widehat{\nabla})$
                                                      is the parity-conjugation of $\widehat{\nabla}$ induced by $f$;
                                                  i.e., 	
	                                               $^{\varsigma_{\!f}}\!(\widehat{\nabla})
	                                                      = \widehat{\nabla}$,  if $f$ is even, or
		                                           $\,\!^{\varsigma}\widehat{\nabla}
			                                                := \mbox{(even part of $\widehat{\nabla}$)}\,
					                                          -\, \mbox{(odd part of $\widehat{\nabla}$)}$  if $f$ is odd;
                                               (cf.\ [L-Y1: Definition~1.3.1]).
											                     % D(14.1):
                                                                 % Definition [parity-conjugation]							  
											When $\widehat{\nabla}$ is even,
											    $^{\varsigma_{\!f}}\!(\widehat{\nabla})=\widehat{\nabla}$ always
											 and the general ${\Bbb Z}/2$-graded Leibniz rule reduces to
 											  the ${\Bbb Z}$-graded Leibniz rule.											  
												                                                                     }\\[.6ex]   % end-footnote
	 $\mbox{\hspace{1em}}$
	 $\widehat{\nabla}\!_\xi(fs)\;
	   =\; (\xi f)s
	           + (-1)^{p(f)p(\xi)}\,f\cdot  \widehat{\nabla}\!_\xi s$,\\[.6ex]
      for $f\in \widehat{\cal O}_X^{\,\boxplus}$,
	       $\xi\in {\cal T}_{\widehat{X}^{\widehat{\boxplus}}}$ parity homogeneous
	       and $s\in\widehat{\cal E}$.
	\end{itemize}
  As an operation on the pairs $(\xi, s)$,
   a connection $\nabla$ on $\widehat{\cal E}$ is applied to $\xi$ from the right
   while applied to $s$ from the left;\footnote{In the ${\Bbb Z}/2$-graded world,
                                                                     it is instructive to denote $\widehat{\nabla}_{\!\xi}s$ as
																	  $\xi \widehat{\nabla} s$ or $_{\xi}\!\widehat{\nabla} s$
																	 (though we do not adopt it as a regularly used notation in this work).
																	In particular,
  																	  from $_{f\xi}\!\widehat{\nabla} s$
																	  to $f (\,\!_{\xi}\!\widehat{\nabla} s)$,
                                                                     $f$ and $\widehat{\nabla}$ do {\it not} pass each other.
                                                                      }  % end-footnote
  cf.\ [L-Y1: Lemma~1.3.7 \& Remark~1.3.8].
                   % D(14.1):
                   % Lemma [evaluation of $\Omega_{\widehat{X}}$ on $\Der_{\Bbb C}(\widehat{X})$
				   %                 from the right]
				   % Remark [evaluation of $\Omega_{\widehat{X}}$ on $\Der_{\Bbb C}(\widehat{X})$
                   % 				  from the right vs. from the left]
}\end{definition}

\bigskip

Note that since $\widehat{\nabla}$ is even, the parity of $\widehat{\nabla}\!_\xi$ is the same as that of $\xi$.

\bigskip

\begin{lemma-definition} {\bf [curvature tensor of (even left) connection]}\; {\rm
 (Cf.\ [L-Y1: Lemma/Definition~2.1.9].)
 Continuing Definition ~3.1.4.
               % Definition [even left connection on $\widehat{\cal O}_X^{\,\widehat{\boxplus}}$-module]
 {\it Let $\widehat{\nabla}$ be an even left connection on $\widehat{\cal E}$.
 Then the correspondence
  $$
   F^{\widehat{\nabla}}\;:\;
   (\xi_1,\xi_2;  s)\;\longmapsto\;
        \mbox{\Large $($}
		  [\widehat{\nabla}_{\!\xi_1}, \widehat{\nabla}_{\!\xi_2}\}\,
	         -\,\widehat{\nabla}_{[\xi_1,\xi_2\}}
		\mbox{\Large $)$}\, s\,,
  $$
  for $\xi_1,\xi_2\in \Der_{\Bbb C}(\widehat{X}^{\widehat{\boxplus}})$ parity-homogeneous and
       $s\in \widehat{\cal E}$,
  defines an $\Endsheaf_{\widehat{\cal O}_X^{\,\widehat{\boxplus}}}(\widehat{\cal E})$-valued
   $2$-tensor on $\widehat{X}^{\widehat{\boxplus}}$.}
 {\rm We shall call $F^{\widehat{\nabla}}$ thus defined
     the {\it curvature tensor} on $\widehat{X}^{\widehat{\boxplus}}$
	   associated to the even left connection $\widehat{\nabla}$ on $\widehat{\cal E}$.}			     
}\end{lemma-definition}

\medskip

\begin{proof}
 This is a special case of [L-Y1: Lemma/Definition~2.1.9] with the odd part of $\widehat{\nabla}$ vanishes.
                                                    %  D(14.1)
											        % Lemma/Definition [curvature $2$-tensor associated to left connection]
 Using the ${\Bbb Z}/2$-graded Leibniz rule,
  one can show straightforwardly that,
  for  $f$, $\xi_1$, $\xi_2$ parity-homogeneous,
  \begin{eqnarray*}
    \mbox{\Large $($}
     [\widehat{\nabla}\!_{f\xi_1}, \widehat{\nabla}\!_{\xi_2}   \}
        -\widehat{\nabla}\!_{[\xi_1,\xi_2\}}	
    \mbox{\Large $)$}s
     & = &  f\cdot
       \mbox{\Large $($}
         [\widehat{\nabla}\!_{\xi_1}, \widehat{\nabla}\!_{\xi_2}   \}
        -\widehat{\nabla}\!_{[\xi_1,\xi_2\}}	
	   \mbox{\Large $)$}s\,,                  \\[1.2ex]
    \mbox{\Large $($}
     [\widehat{\nabla}\!_{\xi_1}, \widehat{\nabla}\!_{f\xi_2}   \}
        -\widehat{\nabla}\!_{[\xi_1,\xi_2\}}	
    \mbox{\Large $)$}s
     & = &  (-1)^{p(f)p(\xi_1)}\,f\cdot
       \mbox{\Large $($}
         [\widehat{\nabla}\!_{\xi_1}, \widehat{\nabla}\!_{\xi_2}   \}
        -\widehat{\nabla}\!_{[\xi_1,\xi_2\}}	
	   \mbox{\Large $)$}s\,,                  \\[1.2ex]
    \mbox{\Large $($}
     [\widehat{\nabla}\!_{\xi_1}, \widehat{\nabla}\!_{\xi_2}   \}
        -\widehat{\nabla}\!_{[\xi_1,\xi_2\}}	
    \mbox{\Large $)$}(fs)
     & = & (-1)^{p(f)(p(\xi_1)+ p(\xi_2))} f\cdot
       \mbox{\Large $($}
         [\widehat{\nabla}\!_{\xi_1}, \widehat{\nabla}\!_{\xi_2}   \}
        -\widehat{\nabla}\!_{[\xi_1,\xi_2\}}	
	   \mbox{\Large $)$}s\,.
  \end{eqnarray*}
 This proves the lemma.
 
\end{proof}

\bigskip

Since in this work, we only address even left connections, we will simply call them {\it connections}.

\bigskip

\begin{definition} {\bf [connection on ${\cal O}_X^{\physics}$-submodule - abuse]}\; {\rm
 Though in general
  a connection $\widehat{\nabla}$ on an $\widehat{\cal O}_X^{\,\widehat{\boxplus}}$-module
   $\widehat{\cal E}$ does not leave a ${\cal O}_X^{\physics}$-submodule ${\cal F}$ invariant
  and hence does not restrict to a connection on ${\cal F}$,
  for $\xi\in \Der_{\Bbb C}(\widehat{X}^{\widehat{\boxplus}})$  and $s\in {\cal F}$
  one does know where $\widehat{\nabla}\!_\xi s$  goes in $\widehat{\cal E}$.
 Furthermore, in all our applications, $\xi\in \Der_{\Bbb C}(\widehat{X})$ and hence
  $$
   \widehat{\nabla}\!_{\xi}\,:\; {\cal F}\;
     \longrightarrow\;  \widehat{\cal O}_X\cdot{\cal F}
  $$
  in $\widehat{\cal E}$.
 For the convenience of terminology, we will still call $\widehat{\nabla}$ a {\it connection on ${\cal F}$}
   with the understanding that it may not take values in ${\cal F}$ alone.
}\end{definition}

\bigskip

\begin{flushleft}
{\bf Pre-vector superfields and their associated (even left) connections }
\end{flushleft}
\begin{definition} {\bf [pre-vector superfield]}\; {\rm
 A global section
   $$
     \breve{V}\; \in\;  \Gamma({\cal O}_X^{\physics,\flat,\stc})\;
      =:\;  C^\infty(X^{\physics})^{\flat,\stc}
  $$
  is called a {\it pre-vector superfield} on $X^{\physics}$.
}\end{definition}

\bigskip

For physicists working on supersymmetric gauge theories,
 the following class of even left connections (adapted to the current $U(1)$ case) is the major concern. 

\bigskip

\begin{definition} {\bf [(even left) connection associated to pre-vector superfield]}\; {\rm
 With the above setting,
   let $\breve{V}\in C^\infty(X^{\physics})^{\flat,\stc}$ be a pre-vector superfield on $X^{\physics}$.
 Then, one can define
   an (even left) connection $\widehat{\nabla}^{\breve{V}}$
   on $\widehat{\cal O}_X^{\,\widehat{\boxplus}}$ (as a left ${\cal O}_X^{\physics}$-module)
   associated to $\breve{V}$ as follows.
 \begin{itemize}
  \item[(1)]
  Firstly, we acquire the compatibility with the chiral structure on ${\cal O}_X^{\physics}$ by
   setting
     $$
	    \widehat{\nabla}^{\breve{V}}_{e_{\beta^{\prime\prime}}}\;
		 :=\;  e_{\beta^{\prime\prime}}\,.
	 $$
 
  \item[(2)]
  Secondly, we set\footnote{The
                                             	 choice of using whether
												   $ e^{-\breve{V}}\circ e_{\alpha^\prime}\circ e^{\breve{V}}$ or
												   $e^{\breve{V}}\circ e_{\alpha^\prime}\circ e^{-\breve{V}}$
												   as the definition of $\widehat{\nabla}^{\breve{V}}_{e_{\alpha^\prime}}$
												   is dictated by how one would construct the action functional
												   for the gauge-invariant kinetic term for the chiral superfield
												   in the supersymmetric $U(1)$-gauge theory with matter.
												 The former is consistent with the setting in Sec.\,3.5
												   while the latter isn't.
												 Cf.\ Lemma~3.2.6 vs.\;Sec.\,3.5, theme '{\it Explicit computations/formulae}'.
											                                                       } % end-footnote
     $$
	   \widehat{\nabla}^{\breve{V}}_{e_{\alpha^\prime}}\;
	    :=\;  e^{-\breve{V}}\circ e_{\alpha^\prime}\circ e^{\breve{V}}\;
		  =\; e_{\alpha^\prime}\,+\, e^{-\breve{V}}(e_{\alpha^\prime}e^{\breve{V}})\,.
     $$
  Thus, in a way $V$ is an indication of the twisting of the original antichiral structure of ${\cal O}_X^{\physics}$
    to the one selected by $\widehat{\nabla}_{e_{\alpha^\prime}}$.
   
  \item[(3)]
  Finally, we set
   $$
      \widehat{\nabla}^{\breve{V}}_{e_{\mu}} \; =\;
	   \mbox{\Large $\frac{\sqrt{-1}}{2}$}
	    \sum_{\alpha,\dot{\beta}} \breve{\sigma}_\mu^{\alpha\dot{\beta}}
		   \{\widehat{\nabla}^{\breve{V}}_{e_{\alpha^\prime}},
		        \widehat{\nabla}^{\breve{V}}_{e_{\beta^{\prime\prime}}}\}\,,
   $$
   where
	  $\breve{\sigma}_{\mu}
	     =(\breve{\sigma}_{\mu}^{\alpha\dot{\beta}})_{\alpha\dot{\beta}}$
	  with
	 {\footnotesize
     $$
      \breve{\sigma}_0\;:=\;
       \frac{1}{2}\left[\!\begin{array}{rr} -1 & 0 \\ 0 & -1\end{array}\!\right]\!,\;\;
	  \breve{\sigma}_1\;:=\;
       \frac{1}{2}\left[\!\begin{array}{rr} 0 & 1 \\ 1 & 0\end{array}\!\right]\!,\;\;
      \breve{\sigma}_2\;:=\;
       \frac{1}{2}
	    \left[\!\begin{array}{rr} 0 & \sqrt{-1} \\ -\sqrt{-1} & 0\end{array}\!\right]\!,\;\;	
      \breve{\sigma}_3\;:=\;
       \frac{1}{2}\left[\!\begin{array}{rr} 1 & 0 \\ 0 & -1\end{array}\!\right]\,.
     $$}This % end-footnotesiez
	 is indeed a flatness condition on the curvature of $\widehat{\nabla}^{\breve{V}}$
	 in the fermionic directions $(e_{\alpha^\prime}, e_{\beta^{\prime\prime}})$.
   (Cf.\;Lemma~3.1.9 for the precise statement.)
          % Lemma [flatness of $\nabla^{\breve{V}}$ along fermionic directions]
 \end{itemize}
 Since $\breve{V}$ is even, $\widehat{\nabla}^{\breve{V}}$ as defined is even as well.
 In this way
   a pre-vector superfield $\breve{V}\in C^\infty(X^{\physics})^{\flat,\stc}$ determines an even left connection
   $\widehat{\nabla}^{\breve{V}}$ on $\widehat{\cal O}_X^{\,\widehat{\boxplus}}$.
 $\widehat{\nabla}^{\breve{V}}$ is called
  the {\it connection} on $\widehat{\cal O}_X^{\,\widehat{\boxplus}}$ {\it associated to $\breve{V}$};   
 cf.\;Definition~3.1.6.
                     % Definition [connection on ${\cal O}_X^{\physics}$-submodule - abuse]
 For simplicity of notations, we often denote $\widehat{\nabla}^{\breve{V}}$ by $\widehat{\nabla}$,
  keeping $\breve{V}$ implicit.
}\end{definition}

\medskip

\begin{lemma} {\bf [flatness of $\nabla^{\breve{V}}$ along fermionic directions]}\;
 Let $\widehat{\nabla}=\widehat{\nabla}^{\breve{V}}$ be the connection
   on $\widehat{\cal O}_X^{\,\widehat{\boxplus}}$
    associated to a pre-vector superfield $\breve{V}$.
 Let $F^{\widehat{\nabla}}$ be the curvature $2$-tensor of $\widehat{\nabla}$ and
 denote
  $F^{\widehat{\nabla}}(e_{\alpha^\prime}, e_{\beta^\prime})$
  (resp.\
      $F^{\widehat{\nabla}}(e_{\alpha^{\prime\prime}}, e_{\beta^{\prime\prime}})$,
	  $F^{\widehat{\nabla}}(e_{\alpha^\prime}, e_{\beta^{\prime\prime}})$)
 by
  $F^{\widehat{\nabla}}_{\alpha^\prime\beta^\prime}$
   (resp.\  $F^{\widehat{\nabla}}_{\alpha^{\prime\prime}\beta^{\prime\prime}}$,
                  $F^{\widehat{\nabla}}_{\alpha^\prime\beta^{\prime\prime}}$).		
 Then
   with respect to the supersymmetrically invariant coframe $(e^I)_I$ on $\widehat{X}$,
  the components of the curvature tensor $F^{\widehat{\nabla}}$ of $\widehat{\nabla}$
  in purely fermionic directions all vanish:
  for $\alpha^\prime, \beta^\prime=1^\prime, 2^\prime$ and
       $\alpha^{\prime\prime}, \beta^{\prime\prime}=1^{\prime\prime}, 2^{\prime\prime}$,
 $$
   F^{\widehat{\nabla}}_{\alpha^{\prime}\beta^{\prime}}\;
   =\; F^{\widehat{\nabla}}_{\alpha^{\prime\prime}\beta^{\prime\prime}}\;
   =\; F^{\widehat{\nabla}}_{\alpha^{\prime}\beta^{\prime\prime}}\;=\; 0\,.
 $$
\end{lemma}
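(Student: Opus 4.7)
My plan is to verify the vanishing of $F^{\widehat{\nabla}}_{\alpha^\prime\beta^\prime}$, $F^{\widehat{\nabla}}_{\alpha^{\prime\prime}\beta^{\prime\prime}}$, and $F^{\widehat{\nabla}}_{\alpha^\prime\beta^{\prime\prime}}$ separately, exploiting the fact that $(e_{\alpha^\prime})_\alpha$ and $(e_{\beta^{\prime\prime}})_{\dot\beta}$ are odd derivations with known anticommutator algebra (recalled from [L-Y1, Sec.\,1.4]), namely $\{e_{\alpha^\prime},e_{\beta^\prime}\} = 0$, $\{e_{\alpha^{\prime\prime}},e_{\beta^{\prime\prime}}\} = 0$, and $\{e_{\alpha^\prime},e_{\beta^{\prime\prime}}\} = -2\sqrt{-1}\sum_\mu \sigma^\mu_{\alpha\dot\beta}\,e_\mu$. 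Here the key observation is that $\breve V$ is even (since it lies in $C^\infty(X^{\tinyphysics})^{\flat,\stc}\subset {\cal O}^{\tinyphysics}_X$), so multiplication by $e^{\pm\breve V}$ carries no parity sign and satisfies $e^{\breve V}\cdot e^{-\breve V}=1$.

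\noindent
\textbf{Case $\boldsymbol{F^{\widehat{\nabla}}_{\alpha^\prime\beta^\prime}}$.}\; Using $\widehat{\nabla}_{e_{\alpha^\prime}} = e^{-\breve V}\!\circ e_{\alpha^\prime}\!\circ e^{\breve V}$ as a composition of operators, the middle factors telescope:
\[
\widehat{\nabla}_{e_{\alpha^\prime}}\circ\widehat{\nabla}_{e_{\beta^\prime}}
\;=\; e^{-\breve V}\!\circ e_{\alpha^\prime}\!\circ e_{\beta^\prime}\!\circ e^{\breve V},
\]
so $\{\widehat{\nabla}_{e_{\alpha^\prime}},\widehat{\nabla}_{e_{\beta^\prime}}\}= e^{-\breve V}\!\circ \{e_{\alpha^\prime},e_{\beta^\prime}\}\!\circ e^{\breve V} = 0$. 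Since $[e_{\alpha^\prime},e_{\beta^\prime}\}=0$, also $\widehat{\nabla}_{[e_{\alpha^\prime},e_{\beta^\prime}\}}=0$, giving $F^{\widehat{\nabla}}_{\alpha^\prime\beta^\prime}=0$. \textbf{Case $\boldsymbol{F^{\widehat{\nabla}}_{\alpha^{\prime\prime}\beta^{\prime\prime}}}$.}\; By the defining Item~(1), $\widehat{\nabla}_{e_{\beta^{\prime\prime}}} = e_{\beta^{\prime\prime}}$, so $\{\widehat{\nabla}_{e_{\alpha^{\prime\prime}}},\widehat{\nabla}_{e_{\beta^{\prime\prime}}}\}=\{e_{\alpha^{\prime\prime}},e_{\beta^{\prime\prime}}\}=0=\widehat{\nabla}_{[e_{\alpha^{\prime\prime}},e_{\beta^{\prime\prime}}\}}$.

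\noindent
\textbf{Case $\boldsymbol{F^{\widehat{\nabla}}_{\alpha^\prime\beta^{\prime\prime}}}$ (the main step).}\; Here the vanishing is not automatic; it is precisely the content of the definition of $\widehat{\nabla}_{e_\mu}$ in Item~(3). Using $[e_{\alpha^\prime},e_{\beta^{\prime\prime}}\}=-2\sqrt{-1}\sum_\mu \sigma^\mu_{\alpha\dot\beta}\,e_\mu$ and substituting the formula for $\widehat{\nabla}_{e_\mu}$,
\[
\widehat{\nabla}_{[e_{\alpha^\prime},e_{\beta^{\prime\prime}}\}}
\;=\; -2\sqrt{-1}\sum_\mu \sigma^\mu_{\alpha\dot\beta}\,\widehat{\nabla}_{e_\mu}
\;=\; \sum_{\gamma,\dot\delta}\Bigl(\sum_\mu \sigma^\mu_{\alpha\dot\beta}\,\breve\sigma_\mu^{\gamma\dot\delta}\Bigr)
       \{\widehat{\nabla}_{e_{\gamma^\prime}},\widehat{\nabla}_{e_{\delta^{\prime\prime}}}\}.
\]
The remaining ingredient is the Fierz-type completeness relation $\sum_\mu \sigma^\mu_{\alpha\dot\beta}\,\breve\sigma_\mu^{\gamma\dot\delta} = \delta^\gamma_\alpha\,\delta^{\dot\delta}_{\dot\beta}$, which is exactly what the matrices $\breve\sigma_\mu$ in Item~(3) are chosen to satisfy (they are the normalized conjugate-spin matrices $\frac{1}{2}\bar\sigma_\mu$ in the Wess--Bagger convention of the Appendix). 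Applying this identity collapses the right-hand side to $\{\widehat{\nabla}_{e_{\alpha^\prime}},\widehat{\nabla}_{e_{\beta^{\prime\prime}}}\}$, and hence $F^{\widehat{\nabla}}_{\alpha^\prime\beta^{\prime\prime}}=0$.

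\noindent
\textbf{Anticipated main obstacle.}\; The only non-formal step is verifying the completeness/Fierz identity $\sum_\mu \sigma^\mu_{\alpha\dot\beta}\,\breve\sigma_\mu^{\gamma\dot\delta}=\delta^\gamma_\alpha\delta^{\dot\delta}_{\dot\beta}$ with the explicit $\breve\sigma_\mu$ given in Definition~3.1.8. This is a direct matrix computation with the four $2\times 2$ matrices listed, and it is what forces the specific normalizations (the $\frac{1}{2}$ and the sign on $\breve\sigma_0$) in Item~(3); once this identity is in hand the proof of the lemma is essentially algebraic bookkeeping, and the remark inserted in Definition~3.1.8 that Item~(3) ``\emph{is indeed a flatness condition on the curvature in the fermionic directions}'' becomes a theorem. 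I would place this Fierz verification as a short sub-lemma (or cross-reference to the Appendix) before assembling the three cases above into the final statement.
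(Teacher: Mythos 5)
Your proof is correct and follows essentially the same route as the paper's: the telescoping $e^{-\breve{V}}\circ\{e_{\alpha^\prime},e_{\beta^\prime}\}\circ e^{\breve{V}}=0$ for the primed pair, the defining identity $\widehat{\nabla}_{e_{\beta^{\prime\prime}}}=e_{\beta^{\prime\prime}}$ for the double-primed pair, and the built-in design of $\widehat{\nabla}_{e_\mu}$ for the mixed pair. Where the paper dismisses the mixed case ``by tautology'' (the $\breve{\sigma}_\mu$ being chosen precisely by solving $\{e_{\alpha^\prime},e_{\beta^{\prime\prime}}\}=-2\sqrt{-1}\sum_\mu\sigma^\mu_{\alpha\dot{\beta}}\,e_\mu$ for $\widehat{\nabla}_{e_\mu}$), you usefully make explicit the completeness identity $\sum_\mu\sigma^\mu_{\alpha\dot{\beta}}\,\breve{\sigma}_\mu^{\gamma\dot{\delta}}=\delta^\gamma_\alpha\delta^{\dot{\delta}}_{\dot{\beta}}$ on which that tautology silently rests (the dual of the trace identity $\sum_{\alpha,\dot{\beta}}\breve{\sigma}_\mu^{\alpha\dot{\beta}}\sigma^\nu_{\alpha\dot{\beta}}=\delta_\mu^\nu$ the paper invokes later in Lemma 3.2.6), and it indeed checks against the matrices of Definition 3.1.8.
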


\medskip

\begin{proof}
 $F^{\widehat{\nabla}}_{\alpha^{\prime\prime}\beta^{\prime\prime}}
   = \{e_{\alpha^{\prime\prime}}, e_{\beta^{\prime\prime}}\}=0$.
 $F^{\widehat{\nabla}}_{\alpha^\prime \beta^\prime}
   =    e^{-\breve{V}}\circ
           \{e_{\alpha^{\prime\prime}}, e_{\beta^{\prime\prime}}\}
		   \circ e^{\breve{V} }=0$.
 And\\
   $F^{\widehat{\nabla}}_{\alpha^{\prime}\beta^{\prime\prime}}
      = \{\widehat{\nabla}_{\!e_{\alpha}^{\prime}},
		          \widehat{\nabla}_{\!e_{\beta^{\prime\prime}}}\}
	        - \widehat{\nabla}_{ \{e_{\alpha^{\prime}}, e_{\beta^{\prime\prime}}\}}
	  =0$
   by tautology
  since
   $\widehat{\nabla}_{ \{e_{\alpha^{\prime}}, e_{\beta^{\prime\prime}}\}}
      = -2\sqrt{-1}\,\sum_{\mu}\sigma^{\mu}_{\alpha\dot{\beta}}\,
	          \widehat{\nabla}_{\!e_{\mu}}$
    and the design of $\widehat{\nabla}_{\!e_{\mu}}$
	 as a ${\Bbb C}$-combination of
	 $\widehat{\nabla}_{ \{e_{\alpha^{\prime}}, e_{\beta^{\prime\prime}}\}}$'s	
	 comes exactly from solving $F^{\widehat{\nabla}}_{\alpha^{\prime}\beta^{\prime\prime}}=0$.
 
\end{proof}
 
\bigskip

With the super version of the standard geometry for a $U(1)$ gauge theory $X^{\physics}$ provided,
 we can now redo [W-B: Chap.\,VI] of Wess \& Bagger
 to construct a supersymmetric $U(1)$ gauge theory on $X$.
Readers are referred also to, e.g., [Argu: Sec.\,4.3] of {\sl Riccardo Argurio}
 for  a very detailed physicists' treatment of the topics in the next two subsections.
By comparison, one can see that
  the notion of the physics sector $X^{\physics}$ of $\widehat{X}^{\widehat{\boxplus}}$
	   from (complexified ${\Bbb Z}/2$-graded) $C^\infty$-Algebraic Geometry   and
    the purge-evaluation map ${\cal P}: C^\infty(X^{\physics})\rightarrow C^\infty(\widehat{X})$	   	   
  together
 really fits particle physicists' language of and ways of playing with supersymmetries.

\bigskip

\subsection{Pre-vector superfields in Wess-Zumino gauge}

Most of the discussions in [W-B: Chap.\,VI] of Wess \& Bagger for vector superfields
 hold for pre-vector superfields as well.

\bigskip

\begin{flushleft}
{\bf Gauge transformations of a pre-vector superfield}
\end{flushleft}
Recall from Definition~3.1.2
                                % Definition [${\cal O}_X^{\physics, \scriptsizech}$,
                                %                  ${\cal O}_X^{\physics, \scriptsizeach}$,
								%                  ${\cal O}_X^{\physics,\flat,{\Bbb R}^\times}$
			                    %                  as ${\cal O}_X^{\physics,\scriptsizech,\times}$-modules]
 that
 under a gauge transformation specified by a chiral superfield
  $\breve{\Lambda}\in {\cal O}_X^{\physics, \scriptsizech}$,
  a pre-vector superfield $\breve{V}$ transforms as
  $$
    \breve{V}\;\longrightarrow\;
	   \breve{V}+\delta_{\breve{\Lambda}}\breve{V}\,
	     :=\,\breve{V}- \sqrt{-1}(\Lambda-\Lambda^\dag)\,.
  $$
Explicitly, let\footnote{From
                                            now on, we keep the $x$-dependence of
						                    $f^{\tinybullet}_{\tinybullet}\in C^\infty(X)^{\Bbb C}$ implicit
						                     to declutter the notations.}
											 % end-footnote
 %
 {\small
 \begin{eqnarray*}
  \breve{V} & =
    &  V^{(0)}_{(0)}
         + \sum_\alpha \theta^\alpha\vartheta_\alpha  V^{(\alpha)}_{(\alpha)}
		 -  \sum_{\dot{\beta}} \bar{\theta}^{\dot{\beta}}\bar{\vartheta}_{\dot{\beta}}
		       \overline{V^{(\beta)}_{(\beta)}}     \\
    && 			
	   +\, \theta^1\theta^2\vartheta_1\vartheta_2 V^{(12)}_{(12)}
	   + \sum_{\alpha,\dot{\beta}} \theta^\alpha \bar{\theta}^{\dot{\beta}}
               \sum_\mu \sigma^\mu_{\alpha\dot{\beta}} V^{(0)}_{[\mu]}
       + \bar{\theta}^{\dot{1}}\bar{\theta}^{\dot{2}}
              \bar{\vartheta}_{\dot{1}}\bar{\vartheta}_{\dot{2}}		
                \overline{V^{(12)}_{(12)}}			\\
    &&
	   +\, \sum_{\dot{\beta}} \theta^1\theta^2 \bar{\theta}^{\dot{\beta}}
              \sum_\alpha \vartheta_\alpha V^{(\alpha)}_{(12\dot{\beta})}	
	   +   \sum_\alpha \theta^\alpha \bar{\theta}^{\dot{1}}\bar{\theta}^{\dot{2}}
       	      \sum_{\dot{\beta}}\bar{\vartheta}_{\dot{\beta}}
                   \overline{V^{(\beta)}_{(12\dot{\alpha})}}	
       +  \theta^1\theta^2\bar{\theta}^{\dot{1}}\bar{\theta}^{\dot{2}}
	          V^{(0)}_{(12\dot{1}\dot{2})}      \\
    & \in & C^\infty(X^{\physics})^{\flat,\stc}
 \end{eqnarray*}}and\footnote{In Sec.\,2.1
                                                   we express the coefficient of the $\theta^1\theta^2\bar{\theta}^{\dot{\beta}}$-term
												   (resp.\
												    the $\theta^\alpha\bar{\theta}^{\dot{1}}\bar{\theta}^{\dot{2}}$-term)
													of a chiral superfield (resp.\ antichiral superfield) in an expanded form.
                                                   Here, it is more convenient to express them in the summation form.
                                           } % end-small %end-footnote
{\small
   \begin{eqnarray*}
     \breve{\Lambda} & = &
	   \Lambda_{(0)}^{(0)}
	   + \sum_\alpha \theta^\alpha\vartheta_\alpha  \Lambda_{(\alpha)}^{(\alpha)}
	   + \theta^1\theta^2\vartheta_1\vartheta_2 \Lambda_{(12)}^{(12)}
       + \sqrt{-1} \sum_{\alpha,\dot{\beta}}
	          \theta^\alpha\bar{\theta}^{\dot{\beta}}	
			   \sum_\mu
			     \sigma^\mu_{\alpha\dot{\beta}}\partial_\mu \Lambda_{(0)}^{(0)}\\
      && \hspace{1em}				
       -\, \sqrt{-1}\sum_{\dot{\beta}, \mu}				
	        \theta^1\theta^2\bar{\theta}^{\dot{\beta}}
			 \sum_{\alpha,\gamma,\mu} \varepsilon^{\alpha\gamma} \sigma^\mu_{\alpha\dot{\beta}}
			       \partial_\mu (\vartheta_\gamma \Lambda^{(\gamma)}_{(\gamma)})
			 %-------------------------------------------------------------------------
			 % [EQUIVALENT EXPESSION of sum]   `minus' of
			 % \mbox{\Large $($}
			 %   \vartheta_1\sigma^\mu_{2\dot{\beta}}\partial_\mu \Lambda_{(1)}^{(1)}
			 %    - \vartheta_2\sigma^\mu_{1\dot{\beta}}\partial_\mu \Lambda_{(2)}^{(2)}
			 % \mbox{\Large $)$}
			 %------============----------------------------------------------			 			
       + \theta^1\theta^2\bar{\theta}^{\dot{1}}\bar{\theta}^{\dot{2}}\,
	      -\,\square \Lambda_{(0)}^{(0)}     \\
    & \in & C^\infty(X^{\physics})^{\scriptsizech}\,,			
   \end{eqnarray*}}where % end-small
  $\square := -\partial_0^2+\partial_1^2+\partial_2^2+\partial_3^2$.
The twisted complex conjugate $\breve{\Lambda}^\dag$ of $\breve{\Lambda}$ is given by
 {\small
   \begin{eqnarray*}
     \breve{\Lambda}^\dag & = &
	   \overline{\Lambda_{(0)}^{(0)}}
	   - \sum_{\dot{\beta}} \bar{\theta}^{\dot{\beta}}\bar{\vartheta}_{\dot{\beta}}
	           \overline{\Lambda_{(\beta)}^{(\beta)}}
	   + \bar{\theta}^{\dot{1}}\bar{\theta}^{\dot{2}}
	      \bar{\vartheta}_{\dot{1}}\bar{\vartheta}_{\dot{2}}
		               \overline{\Lambda_{(12)}^{(12)}}
       - \sqrt{-1} \sum_{\alpha,\dot{\beta}}
	          \theta^\alpha\bar{\theta}^{\dot{\beta}}	
			   \sum_\mu
			     \sigma^\mu_{\alpha\dot{\beta}}
				       \partial_\mu \overline{\Lambda_{(0)}^{(0)}}    \\
      && \hspace{1em}				
       +\, \sqrt{-1}\sum_{\alpha, \mu}				
	        \theta^\alpha\bar{\theta}^{\dot{1}} \bar{\theta}^{\dot{2}}
		     \sum_{\dot{\beta}, \dot{\delta}, \mu} 	
			   \varepsilon^{\dot{\beta}\dot{\delta}} \sigma^\mu_{\alpha\dot{\beta}}
			     \partial_\mu
				   (\bar{\vartheta}_{\dot{\delta}} \overline{\Lambda^{(\delta)}_{(\delta)}})
			 %---------------------------------------------------------------------------------------------------
			 % [EQUIVALENT EXPRESSION of sum] `minus' of
		 	 % \mbox{\Large $($}
			 %   \bar{\vartheta}_{\dot{1}}
			 %     \sigma^\mu_{\alpha\dot{2}} \partial_\mu \overline{\Lambda_{(1)}^{(1)}}
			 %    - \bar{\vartheta}_{\dot{2}}\sigma^\mu_{\alpha\dot{1}}
			 %	                             \partial_\mu \overline{\Lambda_{(2)}^{(2)}}
			 % \mbox{\Large $)$}
			 %---------==================--------------------------====------------------------
       -\, \theta^1\theta^2\bar{\theta}^{\dot{1}}\bar{\theta}^{\dot{2}}\,
	        \square \overline{\Lambda_{(0)}^{(0)}}       \\
	 & \in & C^\infty(X^{\physics})^{\scriptsizeach}\,.
   \end{eqnarray*}}  % end-small
Then,
 {\small
 \begin{eqnarray*}
  \lefteqn{\breve{V}+\delta_{\breve{\Lambda}}\breve{V}
    := \breve{V} -\sqrt{-1} (\breve{\Lambda}-\breve{\Lambda}^\dag)   }\\
 && =\;
     \mbox{\Large $($}
      V^{(0)}_{(0)} +\delta_{\breve{\Lambda}}V^{(0)}_{(0)})
	 \mbox{\Large $)$}
       + \sum_\alpha \theta^\alpha\vartheta_\alpha
	        \mbox{\Large $($}
	           V^{(\alpha)}_{(\alpha)}
			     +\delta_{\breve{\Lambda}}V^{(\alpha)}_{(\alpha)}
			\mbox{\Large $)$}
	   - \sum_{\dot{\beta}} \bar{\theta}^{\dot{\beta}}\bar{\vartheta}_{\dot{\beta}}
	        \mbox{\Large $($}
		       \overline{V^{(\beta)}_{(\beta)}}
			     +\delta_{\breve{\Lambda}}\overline{V^{(\beta)}_{(\beta)}}
			\mbox{\Large $)$}                \\
    && \hspace{2em}			
	   +\, \theta^1\theta^2\vartheta_1\vartheta_2
	          \mbox{\Large $($}
	           V^{(12)}_{(12)} +\delta_{\breve{\Lambda}}V^{(12)}_{(12)}
			  \mbox{\Large $)$}
	   + \sum_{\alpha,\dot{\beta}} \theta^\alpha \bar{\theta}^{\dot{\beta}}
               \sum_\mu \sigma^\mu_{\alpha\dot{\beta}}
			     \mbox{\Large $($}
				  V^{(0)}_{[\mu]} +\delta_{\breve{\Lambda}}V^{(0)}_{([\mu])} 				 					   
				 \mbox{\Large $)$}       \\
    && \hspace{2em} 				
       +\, \bar{\theta}^{\dot{1}}\bar{\theta}^{\dot{2}}
              \bar{\vartheta}_{\dot{1}}\bar{\vartheta}_{\dot{2}}		
			   \mbox{\Large $($}
                \overline{V^{(12)}_{(12)}}
				+\delta_{\breve{\Lambda}}\overline{V^{(12)}_{(12)}}
			   \mbox{\Large $)$}	     \\[1ex]
    && \hspace{2em} 			
	   +\, \sum_{\dot{\beta}} \theta^1\theta^2 \bar{\theta}^{\dot{\beta}}
	          \sum_\gamma \vartheta_\gamma
	            \mbox{\Large $($}				
				      V^{(\gamma)}_{(12\dot{\beta})}
                       +\delta_{\breve{\Lambda}}V^{(\gamma)}_{(12\dot{\beta})}
  			    \mbox{\Large $)$}			
	   +\, \sum_\alpha \theta^\alpha \bar{\theta}^{\dot{1}}\bar{\theta}^{\dot{2}}	
	         \sum_{\dot{\delta}}
       	      \bar{\vartheta}_{\dot{\delta}}
	           \mbox{\Large $($}	
                     \overline{V^{(\delta)}_{(12\dot{\alpha})}}	
					   +\delta_{\breve{\Lambda}}\overline{V^{(\delta)}_{(12\dot{\alpha})}}					 
			   \mbox{\Large $)$}             \\				
    && \hspace{2em}				
       +\, \theta^1\theta^2\bar{\theta}^{\dot{1}}\bar{\theta}^{\dot{2}}
	         \mbox{\Large $($}
	          V^{(0)}_{(12\dot{1}\dot{2})}
			   +\delta_{\breve{\Lambda}}V^{(0)}_{(12\dot{1}\dot{2})}
             \mbox{\Large $)$}			  \\[1ex]			
   &&  =\;
     \mbox{\Large $($}
      V^{(0)}_{(0)}
	   - \sqrt{-1}(\Lambda^{(0)}_{(0)}-\overline{\Lambda^{(0)}_{(0)}})
	 \mbox{\Large $)$}
       + \sum_\alpha \theta^\alpha\vartheta_\alpha
	        \mbox{\Large $($}
	           V^{(\alpha)}_{(\alpha)}
			   - \sqrt{-1} \Lambda^{(\alpha)}_{(\alpha)}
			\mbox{\Large $)$}
	   - \sum_{\dot{\beta}} \bar{\theta}^{\dot{\beta}}\bar{\vartheta}_{\dot{\beta}}
	        \mbox{\Large $($}
		       \overline{V^{(\beta)}_{(\beta)}}
			  + \sqrt{-1}\,\overline{\Lambda^{(\beta)}_{(\beta)}}
			\mbox{\Large $)$}                \\
    && \hspace{2em}			
	   +\, \theta^1\theta^2\vartheta_1\vartheta_2
	          \mbox{\Large $($}
	           V^{(12)}_{(12)} - \sqrt{-1} \Lambda^{(12)}_{(12)}
			  \mbox{\Large $)$}
	   + \sum_{\alpha,\dot{\beta}} \theta^\alpha \bar{\theta}^{\dot{\beta}}
               \sum_\mu \sigma^\mu_{\alpha\dot{\beta}}
			     \mbox{\Large $($}
				  V^{(0)}_{[\mu]}
				  + \partial_\mu
				       \mbox{\large $($}
					     \Lambda^{(0)}_{(0)} + \overline{\Lambda^{(0)}_{(0)}}
					   \mbox{\large $)$}
				 \mbox{\Large $)$}       \\
    && \hspace{2em} 				
       +\, \bar{\theta}^{\dot{1}}\bar{\theta}^{\dot{2}}
              \bar{\vartheta}_{\dot{1}}\bar{\vartheta}_{\dot{2}}		
			   \mbox{\Large $($}
                \overline{V^{(12)}_{(12)}} + \sqrt{-1}\,\overline{\Lambda^{(12)}_{(12)}}
			   \mbox{\Large $)$}	     \\[1ex]
    && \hspace{2em} 			
	   +\, \sum_{\dot{\beta}} \theta^1\theta^2 \bar{\theta}^{\dot{\beta}}
	          \mbox{\Large $($}	
			   \sum_\gamma\vartheta_\gamma V^{(\gamma)}_{(12\dot{\beta})}
               - \sum_{\alpha, \gamma, \mu}
			        \varepsilon^{\alpha\gamma}\sigma^\mu_{\alpha\dot{\beta}}
					  \partial_\mu (\vartheta_\gamma \Lambda^{(\gamma)}_{(\gamma)})		
			  \mbox{\Large $)$}			   \\
    && \hspace{2em}
	   +\, \sum_\alpha \theta^\alpha \bar{\theta}^{\dot{1}}\bar{\theta}^{\dot{2}}
	          \mbox{\Large $($}	
			   \sum_{\dot{\delta}} \bar{\vartheta}_{\dot{\delta}}
                  \overline{V^{(\delta)}_{(12\dot{\alpha})}}	
			  - \sum_{\dot{\beta}, \dot{\delta}, \mu}
                     \varepsilon^{\dot{\beta}\dot{\delta}}	 \sigma^\mu_{\alpha\dot{\beta}}
					             \partial_\mu
								  (\bar{\vartheta}_{\dot{\delta}}\overline{\Lambda^{(\delta)}_{(\delta)}})				
			  \mbox{\Large $)$}             \\				
    && \hspace{2em}				
       +\, \theta^1\theta^2\bar{\theta}^{\dot{1}}\bar{\theta}^{\dot{2}}
	         \mbox{\Large $($}
	          V^{(0)}_{(12\dot{1}\dot{2})}
			  +\sqrt{-1}\,\square(\Lambda^{(0)}_{(0)}- \overline{\Lambda^{(0)}_{(0)}})
             \mbox{\Large $)$}			  \\
    && \in\;  C^\infty(X^{\physics})^{\flat,\stc}\,.
 \end{eqnarray*}
  } % end-small

A comparison of
    $\delta_{\breve{\Lambda}}V^{(0)}_{12\dot{1}\dot{2}}$
      against
    $\delta_{\breve{\Lambda}}V^{(0)}_{(0)}$
 and
    $\delta_{\breve{\Lambda}}V^{(\alpha)}_{(12\dot{\beta})}$
	  against
	$\delta_{\breve{\Lambda}}V^{(\alpha)}_{(\alpha)}$
 implies that
if one expresses a pre-vector superfield $\breve{V}\in C^\infty(X^{\physics})^{\flat,\stc}$
 in the following shifted form
 (cf.\ [W-B: Chap.\,VI, Eq.\,(6.2)], which can always be made)
{\small
 \begin{eqnarray*}
  \breve{V} & =
    &  V^{(0)}_{(0)}
         + \sum_\alpha \theta^\alpha\vartheta_\alpha  V^{(\alpha)}_{(\alpha)}
		 -  \sum_{\dot{\beta}} \bar{\theta}^{\dot{\beta}}\bar{\vartheta}_{\dot{\beta}}
		       \overline{V^{(\beta)}_{(\beta)}}     \\
    && 			
	   +\, \theta^1\theta^2\vartheta_1\vartheta_2 V^{(12)}_{(12)}
	   + \sum_{\alpha,\dot{\beta}} \theta^\alpha \bar{\theta}^{\dot{\beta}}
               \sum_\mu \sigma^\mu_{\alpha\dot{\beta}} V^{(0)}_{[\mu]}
       + \bar{\theta}^{\dot{1}}\bar{\theta}^{\dot{2}}
              \bar{\vartheta}_{\dot{1}}\bar{\vartheta}_{\dot{2}}		
                \overline{V^{(12)}_{(12)}}			\\
    &&
	   +\, \sum_{\dot{\beta}} \theta^1\theta^2 \bar{\theta}^{\dot{\beta}}
	           \mbox{\Large $($}
			     \sum_\gamma \vartheta_\gamma  V^{(\gamma)}_{(12\dot{\beta})}						
			      -\,\sqrt{-1}\sum_{\alpha, \gamma, \mu}
				        \varepsilon^{\alpha\gamma}\sigma^\mu_{\alpha\dot{\beta}}
						  \partial_\mu (\vartheta_\gamma V^{(\gamma)}_{(\gamma)})		
			   \mbox{\Large $)$}\\
    &&			
	   +   \sum_\alpha \theta^\alpha \bar{\theta}^{\dot{1}}\bar{\theta}^{\dot{2}}
	          \mbox{\Large $($}
			    \sum_{\dot{\delta}}  \bar{\vartheta}_{\dot{\delta}}
                   \overline{V^{(\delta)}_{(12\dot{\alpha})}}					
				    + \sqrt{-1}\sum_{\dot{\beta}, \dot{\delta}, \mu}
					     \varepsilon^{\dot{\beta}\dot{\delta}} \sigma^\mu_{\alpha\dot{\beta}}						  
					      \partial_\mu
						    (\bar{\vartheta}_{\dot{\delta}} \overline{V^{(\delta)}_{(\delta)}})
			  \mbox{\Large $)$}      \\
    &&				
       +\;  \theta^1\theta^2\bar{\theta}^{\dot{1}}\bar{\theta}^{\dot{2}}\,
	           \mbox{\large $($}
			     V^{(0)}_{(12\dot{1}\dot{2})} -\,  \square V^{(0)}_{(0)}
			   \mbox{\large $)$}\,,
 \end{eqnarray*}}then   % end-small
 the component-functions
   $V^{(\alpha)}_{(12\dot{\beta})}$, $\overline{V^{(\alpha)}_{(12\dot{\beta})}}$,
   $V^{(0)}_{(12\dot{1}\dot{2})}$, $\alpha=1,2$, $\dot{\beta}=\dot{1},\dot{2}$,
  are invariant under gauge transformations.

\bigskip

\begin{definition} {\bf [pre-vector superfield in the shifted expression]}\; {\rm
 We call a pre-vector superfield expressed in the above shifted form
 a {\it pre-vector superfield in the shifted expression.}
}\end{definition}

\medskip

\begin{lemma}{\bf [$\Bbb R$-linearity]}\;
 An ${\Bbb R}$-linear combination\footnote{However,
                                                               caution that a $C^\infty(X)$-linear combination of pre-vector superfields
															    in general is not directly a pre-vector superfield in the shifted expression.
															   One has to convert it accordingly.
															   }  % end-footnote
  of pre-vector superfields in the shifted expression is also a pre-vector superfield in the shifted expression.
\end{lemma}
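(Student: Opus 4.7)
The plan is to reduce the statement to two separate verifications: (a) that $\mathbb{R}$-linear combinations preserve the ambient space $C^\infty(X^{\tinyphysics})^{\flat,\stc}$ of pre-vector superfields, and (b) that the four shift-relations singled out in the display preceding Definition~3.2.1 are each $\mathbb{R}$-linear in the ``free'' component-fields. Once (a) and (b) are in place, the lemma is immediate: writing any finite combination $\breve{V} = \sum_i c_i \breve{V}_i$ with $c_i\in\mathbb{R}$, one just pulls the sum inside the shift expressions.

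For (a), I would first note that $C^\infty(X^{\tinyphysics})^{\flat}$ is by construction a $C^\infty(X)^{\,\Bbb C}$-submodule of $C^\infty(X^{\tinyphysics})$, so in particular it is closed under $\mathbb{R}$-linear combinations. For the self-twisted-conjugacy condition $\breve{V}^\dag = \breve{V}$, I would compute
$$
    \mbox{\large $($} \mbox{$\sum$}_i c_i \breve{V}_i \mbox{\large $)$}^\dag\; =\; \mbox{$\sum$}_i \bar{c}_i\, \breve{V}_i^\dag\; =\; \mbox{$\sum$}_i c_i \breve{V}_i\,,
$$
where the first equality uses Definition~1.3.4 (the twisted complex conjugation is anti-linear over $\Bbb C$) and the second uses $c_i\in\mathbb{R}$ together with $\breve{V}_i^\dag=\breve{V}_i$. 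This is precisely where the $\mathbb{R}$-linearity (as opposed to $\mathbb{C}$-linearity) enters: multiplying an stc element by $\sqrt{-1}$ would produce $(\sqrt{-1}\breve{V})^\dag = -\sqrt{-1}\breve{V}\ne \sqrt{-1}\breve{V}$ in general.

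For (b), I would match the shifted-expression template term by term against the generic $(\theta,\bar{\theta},\vartheta,\bar{\vartheta})$-expansion displayed just before Definition~3.2.1. The template is characterized by three identities relating the top-level coefficients to lower ones, of the form
\begin{itemize}
 \item[(i)] the $\theta^1\theta^2\bar{\theta}^{\dot{\beta}}$-coefficient equals $\sum_\gamma \vartheta_\gamma V^{(\gamma)}_{(12\dot{\beta})} - \sqrt{-1}\sum_{\alpha,\gamma,\mu}\varepsilon^{\alpha\gamma}\sigma^\mu_{\alpha\dot{\beta}}\partial_\mu(\vartheta_\gamma V^{(\gamma)}_{(\gamma)})$;
 \item[(ii)] the $\theta^\alpha\bar{\theta}^{\dot{1}}\bar{\theta}^{\dot{2}}$-coefficient has the analogous form with $\overline{V^{(\delta)}_{(\delta)}}$;
 \item[(iii)] the top $\theta^1\theta^2\bar{\theta}^{\dot{1}}\bar{\theta}^{\dot{2}}$-coefficient equals $V^{(0)}_{(12\dot{1}\dot{2})} - \square V^{(0)}_{(0)}$.
\end{itemize}
Each of the shift-operators $V^{(\gamma)}_{(\gamma)}\mapsto -\sqrt{-1}\sum\varepsilon^{\alpha\gamma}\sigma^\mu_{\alpha\dot{\beta}}\partial_\mu(\vartheta_\gamma V^{(\gamma)}_{(\gamma)})$ and $V^{(0)}_{(0)}\mapsto -\square V^{(0)}_{(0)}$ is visibly $\mathbb{C}$-linear, hence a fortiori $\mathbb{R}$-linear, in the component-field on which it acts; and the ``free'' components $V^{(\gamma)}_{(12\dot{\beta})}, \overline{V^{(\delta)}_{(12\dot{\alpha})}}, V^{(0)}_{(12\dot{1}\dot{2})}$ add termwise under linear combinations.

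The only place that needs any care at all is the book-keeping in step (b)(ii): one must verify that the shift-term in the $\theta^\alpha\bar{\theta}^{\dot 1}\bar{\theta}^{\dot 2}$-coefficient involves $\overline{V^{(\delta)}_{(\delta)}}$ and not $V^{(\delta)}_{(\delta)}$, so that when the combination $\sum c_i \breve{V}_i$ is reassembled, the factor $c_i$ (being real) passes through the complex conjugation without damage. With that checked, collecting identities (i)--(iii) for the coefficients of $\sum_i c_i \breve{V}_i$ with the free components $\sum_i c_i V_i^{(\gamma)}_{(12\dot{\beta})}$, $\sum_i c_i \overline{V_i^{(\delta)}_{(12\dot{\alpha})}}$, $\sum_i c_i V_i^{(0)}_{(12\dot{1}\dot{2})}$ completes the proof. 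I expect no genuine obstacle: the lemma is in essence an observation that the shifted expression is carved out of $C^\infty(X^{\tinyphysics})^{\flat,\stc}$ by an $\mathbb{R}$-linear normal form, and normal forms of this kind are automatically preserved by the linear operations over the base field.
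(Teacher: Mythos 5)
Your proof is correct, and since the paper states this lemma without any proof (treating it as immediate from the display preceding Definition~3.2.1), your verification supplies exactly the implicit reasoning: closure of $C^\infty(X^{\physics})^{\flat,\stc}$ under real combinations via the anti-linearity of the twisted conjugation $^\dag$ together with constant-coefficient linearity of the shift operators built from $\partial_\mu$ and $\square$, with the genuine pressure points (reality of the $c_i$ passing through $\dag$ and through the conjugated components $\overline{V^{(\delta)}_{(\delta)}}$ in the shift terms) correctly identified. Your argument also accounts for the lemma's footnote: a non-constant $f\in C^\infty(X)$ fails to commute with the derivatives in the shift terms, so a $C^\infty(X)$-linear combination is not directly in the shifted expression.
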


\bigskip
  
It follows that, for a given pre-vector superfield $\breve{V}$ in the shifted expression,
 if one chooses $\breve{\Lambda}$ with
  $$
    \Imaginary \Lambda^{(0)}_{(0)}\;
	  =\; -\, \mbox{\large $\frac{1}{2}$}\,V^{(0)}_{(0)}\,,\hspace{2em}
	\Lambda^{(\alpha)}_{(\alpha)}\;
	  =\; -\,\sqrt{-1}\,V^{(\alpha)}_{(\alpha)}\,,\hspace{2em}
    \Lambda^{(12)}_{(12)}\;
	  =\; -\,\sqrt{-1}\,V^{(12)}_{(12)}\,,
  $$
 which always exists,
 then after the gauge transformation specified by $\breve{\Lambda}$,
 $\breve{V}$ becomes
 {\small
 \begin{eqnarray*}
   \breve{V}^\prime
    &= &  			
	  \sum_{\alpha,\dot{\beta}} \theta^\alpha \bar{\theta}^{\dot{\beta}}
               \sum_\mu \sigma^\mu_{\alpha\dot{\beta}}
			     \mbox{\Large $($}
				  V^{(0)}_{[\mu]}
				  + 2\,\partial_\mu \Real \Lambda^{(0)}_{(0)}				 					
				 \mbox{\Large $)$}       \\			
    && \hspace{2em} 			
	   +\, \sum_{\dot{\beta}} \theta^1\theta^2 \bar{\theta}^{\dot{\beta}}			
			    \sum_\alpha \vartheta_\alpha
				      V^{(\alpha)}_{(12\dot{\beta})}					  					
	   +\, \sum_\alpha \theta^\alpha \bar{\theta}^{\dot{1}}\bar{\theta}^{\dot{2}}	   			
       	         \sum_{\dot{\beta}}\bar{\vartheta}_{\dot{\beta}}
                     \overline{V^{(\beta)}_{(12\dot{\alpha})}}						
       +\, \theta^1\theta^2\bar{\theta}^{\dot{1}}\bar{\theta}^{\dot{2}}\, 	
	          V^{(0)}_{(12\dot{1}\dot{2})}\,.		
 \end{eqnarray*}
  } % end-small
 
We summarize the above discussion into the following definition and lemmas:

\bigskip

\begin{definition-lemma} {\bf [pre-vector superfield in Wess-Zumino gauge]}\; {\rm
 We call a pre-vector superfield $\breve{V}\in C^\infty(X^{\physics})^{\flat,\stc}$
  that is in the following form
 {\small
  $$
  \breve{V}\; =\;
     \sum_{\alpha,\dot{\beta}} \theta^\alpha \bar{\theta}^{\dot{\beta}}
               \sum_\mu \sigma^\mu_{\alpha\dot{\beta}} V^{(0)}_{[\mu]}
	   +\, \sum_{\dot{\beta}} \theta^1\theta^2 \bar{\theta}^{\dot{\beta}}
              \sum_\alpha \vartheta_\alpha V^{(\alpha)}_{(12\dot{\beta})}	
	   +   \sum_\alpha \theta^\alpha \bar{\theta}^{\dot{1}}\bar{\theta}^{\dot{2}}
       	      \sum_{\dot{\beta}}\bar{\vartheta}_{\dot{\beta}}
                   \overline{V^{(\beta)}_{(12\dot{\alpha})}}	
       +  \theta^1\theta^2\bar{\theta}^{\dot{1}}\bar{\theta}^{\dot{2}}
	          V^{(0)}_{(12\dot{1}\dot{2})}
  $$}a % end-small
  {\it pre-vector superfield in Wess-Zumino gauge}.
 
 {\it Given any  pre-vector $\breve{V}$, there exists a unique chiral superfield $\breve{\Lambda}$
          depending on $\breve{V}$ with $\Real \Lambda^{(0)}_{(0)}=0$
	 such that
	   the gauge transformation specified by $\breve{\Lambda}$ takes $\breve{V}$
	   to a pre-vector superfield in Wess-Zumino gauge.}
}\end{definition-lemma}

\medskip

\begin{lemma} {\bf [naturality]}\;
 (1)
 The set of pre-vector superfields in Wess-Zumino gauge is a $C^\infty(X)$-submodule of
     $C^\infty(X^{\physics})^{\flat,\stc}$.
 (2)
 If a pre-vector superfield $\breve{V}$ expressed in terms of the standard coordinate functions
     $(x,\theta,\bar{\theta},\vartheta,\bar{\theta})$
	 on $\widehat{X}^{\widehat{\boxplus}}$ is in Wess-Zumino gauge,
 then it remains in Wess-Zumino gauge when re-expressed in terms of
    the chiral coordinate functions $(x^\prime,\theta,\bar{\theta},\vartheta,\bar{\theta})$ or
	the antichiral coordinate functions $(x^{\prime\prime},\theta,\bar{\theta},\vartheta,\bar{\theta})$ 
     on $\widehat{X}^{\widehat{\boxplus}}$.
\end{lemma}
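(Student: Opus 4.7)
For Statement (1), the plan is to verify directly that the defining conditions for a pre-vector superfield to lie in Wess-Zumino gauge are preserved under $C^\infty(X)$-linear combinations. By Definition/Lemma~3.2.3, a pre-vector superfield $\breve{V}\in C^\infty(X^{\physics})^{\flat,\stc}$ is in Wess-Zumino gauge precisely when the component functions
\[
V^{(0)}_{(0)},\;  V^{(\alpha)}_{(\alpha)},\; V^{(\dot{\beta})}_{(\dot{\beta})},\; V^{(12)}_{(12)},\; V^{(\dot{1}\dot{2})}_{(\dot{1}\dot{2})},\; V^{(\alpha\dot{\beta})}_{(\alpha\dot{\beta})}
\]
all vanish; the remaining three independent components are $V^{(0)}_{[\mu]}$, $V^{(\alpha)}_{(12\dot{\beta})}$ (together with its conjugate $\overline{V^{(\alpha)}_{(12\dot{\beta})}}$), and $V^{(0)}_{(12\dot{1}\dot{2})}$. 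These vanishing conditions are each a set of $C^\infty(X)^{\Bbb C}$-linear (hence $C^\infty(X)$-linear) constraints on the component fields, and so are closed under addition and under multiplication by elements of $C^\infty(X)$. Since elements of $C^\infty(X)$ are real-valued, they commute with the twisted complex conjugation $(\tinybullet)^\dag$, so the stc condition is also preserved. This proves Statement (1).

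For Statement (2), the plan is to substitute $x^\mu = x^{\prime\mu} - \sqrt{-1}\sum_{\alpha,\dot{\beta}}\theta^\alpha\sigma^\mu_{\alpha\dot{\beta}}\bar{\theta}^{\dot{\beta}}$ into each component function $V^{(\tinybullet)}_{(\tinybullet)}(x)$ appearing in the Wess-Zumino expression of $\breve{V}$, and track the resulting $(\theta,\bar{\theta})$-degrees, noting that $(\vartheta,\bar{\vartheta})$ coordinate functions are unaffected by passing between $(x, x', x'')$. The key observation is that, in analogy with Lemma~2.1.3, the Taylor expansion
\[
V^{(\tinybullet)}_{(\tinybullet)}(x)\;
 =\; V^{(\tinybullet)}_{(\tinybullet)}(x')
 \,-\,\sqrt{-1}\sum_{\nu}(\partial_\nu V^{(\tinybullet)}_{(\tinybullet)})(x')\,(\theta\sigma^\nu\bar{\theta}^t)
 \,-\,\mbox{\large $\frac{1}{2}$}\sum_{\nu,\rho}(\partial_\nu\partial_\rho V^{(\tinybullet)}_{(\tinybullet)})(x')\,(\theta\sigma^\nu\bar{\theta}^t)(\theta\sigma^\rho\bar{\theta}^t)
\]
strictly \emph{raises} the $(\theta,\bar{\theta})$-bidegree by $(1,1)$ per correction, while the $(\theta,\bar{\theta})$-prefactor of each WZ summand already has bidegree $\geq (1,1)$.

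Carrying this out term by term: the $\theta^\alpha\bar{\theta}^{\dot{\beta}}\sigma^\mu_{\alpha\dot{\beta}}V^{(0)}_{[\mu]}(x)$ summand produces, under the substitution, the original $(1,1)$-bidegree term in $(x')$ plus a $(2,2)$-bidegree correction proportional to $\theta^1\theta^2\bar{\theta}^{\dot{1}}\bar{\theta}^{\dot{2}}\cdot\partial(\cdots)$, while the $(3,3)$-bidegree correction vanishes identically by nilpotency. The summands with prefactors $\theta^1\theta^2\bar{\theta}^{\dot{\beta}}\vartheta_\alpha$, $\theta^\alpha\bar{\theta}^{\dot{1}}\bar{\theta}^{\dot{2}}\bar{\vartheta}_{\dot{\beta}}$, and $\theta^1\theta^2\bar{\theta}^{\dot{1}}\bar{\theta}^{\dot{2}}$ absorb only the zeroth-order term of the Taylor expansion, since any correction would push the $\theta$- or $\bar{\theta}$-degree beyond $2$.

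Collecting the results, the re-expressed $\breve{V}$ has exactly the Wess-Zumino form in $(x',\theta,\bar{\theta},\vartheta,\bar{\vartheta})$ coordinates, with the new $V^{\prime\,(0)}_{(12\dot{1}\dot{2})}(x')$ being $V^{(0)}_{(12\dot{1}\dot{2})}(x')$ plus an explicit derivative term in $V^{(0)}_{[\mu]}$, and the gaugino and vector components unchanged in their functional dependence. The argument for the antichiral coordinates $(x'', \theta, \bar{\theta}, \vartheta, \bar{\vartheta})$ is word-for-word parallel, using $x^\mu = x^{\prime\prime\mu}+\sqrt{-1}\theta\sigma^\mu\bar{\theta}^t$. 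The main bookkeeping obstacle is verifying that no $(\vartheta,\bar{\vartheta})$-structure forbidden by WZ gauge is created --- this is immediate since the substitution involves only $(x,\theta,\bar{\theta})$ and the WZ-gauge vanishing conditions on $V^{(\alpha)}_{(\alpha)}$, $V^{(\alpha\dot{\beta})}_{(\alpha\dot{\beta})}$ etc.\ concern $(\vartheta,\bar{\vartheta})$-factors that are never produced by the Taylor correction.
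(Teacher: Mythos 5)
Your proof is correct and follows essentially the same route as the paper's: Statement (1) by observing that the Wess-Zumino form imposes $C^\infty(X)$-linear vanishing conditions on components (the paper simply declares this clear), and Statement (2) by substituting $x = x' \mp \sqrt{-1}\,\theta\boldsigma\bar{\theta}^t$, expanding via the $C^\infty$-hull structure, and noting that the $(\theta,\bar{\theta})$-bidegree bookkeeping forces all Taylor corrections into the top $\theta^1\theta^2\bar{\theta}^{\dot{1}}\bar{\theta}^{\dot{2}}$-coefficient. The only difference is cosmetic: the paper also records the explicit shifted top component $V^{(0)}_{(12\dot{1}\dot{2})}(x') \pm 2\sqrt{-1}\,\partial^\mu V^{(0)}_{[\mu]}(x')$, which your argument implies but does not compute.
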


\medskip

\begin{proof}
 Statement (1) is clear. We focus on Statement (2).

 Recall that $x^\prime = x+\sqrt{-1}\theta\boldsigma\bar{\theta}^t$.
 When in Wess-Zumino gauge,
  a pre-vector superfield $\breve{V}$ in terms of the standard coordinate functions
    $(x,\theta,\bar{\theta},\vartheta, \bar{\theta})$
  is written as
 {\small
  \begin{eqnarray*}
   \lefteqn{
    \breve{V}\; =\;
     \sum_{\alpha,\dot{\beta}} \theta^\alpha \bar{\theta}^{\dot{\beta}}
               \sum_\mu \sigma^\mu_{\alpha\dot{\beta}} V^{(0)}_{[\mu]}(x)
	   +\, \sum_{\dot{\beta}} \theta^1\theta^2 \bar{\theta}^{\dot{\beta}}
              \sum_\alpha \vartheta_\alpha V^{(\alpha)}_{(12\dot{\beta})}(x)  }\\
     && \hspace{6em}			
	   +   \sum_\alpha \theta^\alpha \bar{\theta}^{\dot{1}}\bar{\theta}^{\dot{2}}
       	      \sum_{\dot{\beta}}\bar{\vartheta}_{\dot{\beta}}
                   \overline{V^{(\beta)}_{(12\dot{\alpha})}}(x)	
       +  \theta^1\theta^2\bar{\theta}^{\dot{1}}\bar{\theta}^{\dot{2}}
	          V^{(0)}_{(12\dot{1}\dot{2})}(x)
  \end{eqnarray*}}To % end-small
 re-express $\breve{V}$ in terms of
    the chiral coordinate functions $(x^\prime,\theta,\bar{\theta},\vartheta,\bar{\theta})$
	on $\widehat{X}^{\widehat{\boxplus}}$,
 one substitutes $x$ in $V^{\tinybullet}_{\tinybullet}(x)$
   by $x^\prime-\sqrt{-1}\theta\boldsigma\bar{\theta}^t$  and use the $C^\infty$-hull structure of
   $C^\infty(\widehat{X})$ to expand it in $x^\prime$.
 Due to the product structure of $\theta^\alpha$, $\bar{\theta}^{\dot{\beta}}$,
  this will only influence the coefficient of $\theta^1\theta^2\bar{\theta}^{\dot{1}}\bar{\theta}^{\dot{2}}$
   and, hence, keep the pre-vector superfield in Wess-Zumino gauge.
 Explicitly, the result after collecting like terms is
 {\small
  \begin{eqnarray*}
   \lefteqn{
    \breve{V}\; =\;
     \sum_{\alpha,\dot{\beta}} \theta^\alpha \bar{\theta}^{\dot{\beta}}
               \sum_\mu \sigma^\mu_{\alpha\dot{\beta}} V^{(0)}_{[\mu]}(x^\prime)
	   +\, \sum_{\dot{\beta}} \theta^1\theta^2 \bar{\theta}^{\dot{\beta}}
              \sum_\alpha \vartheta_\alpha V^{(\alpha)}_{(12\dot{\beta})}(x^\prime)  }\\
     && \hspace{6em}			
	   +   \sum_\alpha \theta^\alpha \bar{\theta}^{\dot{1}}\bar{\theta}^{\dot{2}}
       	      \sum_{\dot{\beta}}\bar{\vartheta}_{\dot{\beta}}
                   \overline{V^{(\beta)}_{(12\dot{\alpha})}}(x^\prime)	
       +  \theta^1\theta^2\bar{\theta}^{\dot{1}}\bar{\theta}^{\dot{2}}
	         \mbox{\Large $($}
	          V^{(0)}_{(12\dot{1}\dot{2})}(x^\prime)
			   + 2\sqrt{-1}\,\partial^\mu V^{(0)}_{[\mu]}(x^\prime)
			 \mbox{\Large $)$}\,.
  \end{eqnarray*}}% end-small

 Similar argument goes
 when re-expressing $\breve{V}$ in terms of the antichiral coordinate functions
    $(x^{\prime\prime},\theta,\bar{\theta},\vartheta,\bar{\theta})$,
         with $x^{\prime\prime} = x-\sqrt{-1}\theta\boldsigma\bar{\theta}^t, \theta,\bar{\theta}$,
    on $\widehat{X}^{\widehat{\boxplus}}$.
 The explicit expression is given by
 {\small
  \begin{eqnarray*}
   \lefteqn{
    \breve{V}\; =\;
     \sum_{\alpha,\dot{\beta}} \theta^\alpha \bar{\theta}^{\dot{\beta}}
               \sum_\mu \sigma^\mu_{\alpha\dot{\beta}} V^{(0)}_{[\mu]}(x^{\prime\prime})
	   +\, \sum_{\dot{\beta}} \theta^1\theta^2 \bar{\theta}^{\dot{\beta}}
              \sum_\alpha \vartheta_\alpha V^{(\alpha)}_{(12\dot{\beta})}(x^{\prime\prime})  }\\
     && \hspace{6em}			
	   +   \sum_\alpha \theta^\alpha \bar{\theta}^{\dot{1}}\bar{\theta}^{\dot{2}}
       	      \sum_{\dot{\beta}}\bar{\vartheta}_{\dot{\beta}}
                   \overline{V^{(\beta)}_{(12\dot{\alpha})}}(x^{\prime\prime})	
       +  \theta^1\theta^2\bar{\theta}^{\dot{1}}\bar{\theta}^{\dot{2}}
	         \mbox{\Large $($}
	          V^{(0)}_{(12\dot{1}\dot{2})}(x^{\prime\prime})
			   - 2\sqrt{-1}\,\partial^\mu V^{(0)}_{[\mu]}(x^{\prime\prime})
			 \mbox{\Large $)$}\,.
  \end{eqnarray*}}% end-small

 This completes the proof.
	
\end{proof}

\medskip

\begin{lemma} {\bf [representative in Wess-Zuminio gauge]}\;
 Any pre-vector superfield\\ $\breve{V}\in C^\infty(X^{\physics})^{\flat,\stc}$ can be transformed to
  a pre-vector superfield in Wess-Zumino gauge by a gauge transformation.
\end{lemma}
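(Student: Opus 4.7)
The plan is to read off explicit component equations that a gauge parameter $\breve{\Lambda}\in C^{\infty}(X^{\tinyphysics})^{\scriptsizech}$ must satisfy in order to kill the three ``non-Wess-Zumino'' components of a given pre-vector superfield, and then to exhibit a $\breve{\Lambda}$ solving them. Concretely, I will first put $\breve{V}\in C^\infty(X^{\tinyphysics})^{\flat,\stc}$ into its shifted expression (Definition~3.2.1), so that only the six components $V^{(0)}_{(0)}$, $V^{(\alpha)}_{(\alpha)}$, $V^{(12)}_{(12)}$ and their twisted complex conjugates, the vector component $V^{(0)}_{[\mu]}$, the ``gaugino'' components $V^{(\alpha)}_{(12\dot{\beta})}$ and $\overline{V^{(\alpha)}_{(12\dot{\beta})}}$, and the ``D''-component $V^{(0)}_{(12\dot{1}\dot{2})}$ appear. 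Being in Wess-Zumino gauge means precisely that $V^{(0)}_{(0)}$, $V^{(\alpha)}_{(\alpha)}$, $V^{(12)}_{(12)}$ are all zero.

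From the component-by-component description of $\breve{V}+\delta_{\breve{\Lambda}}\breve{V}=\breve{V}-\sqrt{-1}(\breve{\Lambda}-\breve{\Lambda}^\dag)$ computed just before Definition-Lemma~3.2.3, one reads off the three shifts
\[
 \delta V^{(0)}_{(0)}\;=\;-\sqrt{-1}(\Lambda^{(0)}_{(0)}-\overline{\Lambda^{(0)}_{(0)}})\;=\;2\Imaginary \Lambda^{(0)}_{(0)}\,,\quad
 \delta V^{(\alpha)}_{(\alpha)}\;=\;-\sqrt{-1}\Lambda^{(\alpha)}_{(\alpha)}\,,\quad
 \delta V^{(12)}_{(12)}\;=\;-\sqrt{-1}\Lambda^{(12)}_{(12)}\,.
\]
The next step is to solve these three linear equations in the independent components of $\breve{\Lambda}$ by declaring
\[
 \Imaginary \Lambda^{(0)}_{(0)}\;=\;-\tfrac{1}{2}V^{(0)}_{(0)}\,,\qquad
 \Lambda^{(\alpha)}_{(\alpha)}\;=\;-\sqrt{-1}\,V^{(\alpha)}_{(\alpha)}\,,\qquad
 \Lambda^{(12)}_{(12)}\;=\;-\sqrt{-1}\,V^{(12)}_{(12)}\,,
\]
with $\Real\Lambda^{(0)}_{(0)}$ chosen freely (e.g.\ zero). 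By Definition/Lemma~2.1.2, a chiral function on $X^{\tinyphysics}$ is uniquely determined by these three free components, so such a $\breve{\Lambda}\in C^\infty(X^{\tinyphysics})^{\scriptsizech}$ exists; the other (derived) components of $\breve{\Lambda}$, $\breve{\Lambda}^\dag$ only affect the shifted $V^{(0)}_{[\mu]}$, $V^{(\alpha)}_{(12\dot{\beta})}$, $\overline{V^{(\alpha)}_{(12\dot{\beta})}}$, $V^{(0)}_{(12\dot{1}\dot{2})}$ components, which are precisely the components allowed by Wess-Zumino gauge.

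Finally, I substitute this $\breve{\Lambda}$ back into the explicit gauge-transformed expansion of $\breve{V}+\delta_{\breve{\Lambda}}\breve{V}$ written out in the previous theme and confirm that the three components $V^{(0)}_{(0)}$, $V^{(\alpha)}_{(\alpha)}$, $V^{(12)}_{(12)}$ are killed while the remaining components remain of the type allowed by Definition/Lemma~3.2.3, giving a pre-vector superfield in Wess-Zumino gauge. There is no real obstacle: the calculation is essentially bookkeeping once the shifted expression is in hand, and the only mild subtlety is the residual gauge freedom encoded in $\Real\Lambda^{(0)}_{(0)}$, which explains why the $\breve{\Lambda}$ here is unique up to this datum (matching the uniqueness clause of Definition/Lemma~3.2.3).
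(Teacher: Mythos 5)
Your proposal is correct and takes essentially the same route as the paper: the paper's proof of this lemma is exactly the inline computation preceding Definition/Lemma~3.2.3, namely the component expansion of $\breve{V}-\sqrt{-1}(\breve{\Lambda}-\breve{\Lambda}^\dag)$ for $\breve{V}$ in the shifted expression followed by the same choice $\Imaginary \Lambda^{(0)}_{(0)}=-\frac{1}{2}V^{(0)}_{(0)}$, $\Lambda^{(\alpha)}_{(\alpha)}=-\sqrt{-1}\,V^{(\alpha)}_{(\alpha)}$, $\Lambda^{(12)}_{(12)}=-\sqrt{-1}\,V^{(12)}_{(12)}$, which exists because a chiral function is freely determined by these components (that is Lemma~2.1.3, not Definition~2.1.2 as you cite). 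Your extra remarks on the residual freedom in $\Real \Lambda^{(0)}_{(0)}$ match the uniqueness clause of Definition/Lemma~3.2.3, so there is nothing substantively different.
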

 
\bigskip

In the shifted expression, once a pre-vector superfield is rendered a pre-vector superfield $\breve{f}$
   in Wess-Zumino gauge,
 a gauge transformation specified by $\breve{\Lambda}$ with
  $$
    \Imaginary \Lambda^{(0)}_{(0)}\;
	=\; \Lambda^{(\alpha)}_{(\alpha)}\;
    =\; \Lambda^{(12)}_{(12)}\; =\; 	0\,,
  $$
 (i.e.\
   $$
     \breve{\Lambda}\;=\;	
	   \Lambda_{(0)}^{(0)}
	   + \sqrt{-1} \sum_{\alpha,\dot{\beta}}
	          \theta^\alpha\bar{\theta}^{\dot{\beta}}	
			   \sum_\mu
			     \sigma^\mu_{\alpha\dot{\beta}}\partial_\mu \Lambda_{(0)}^{(0)}				    			
       -\, \theta^1\theta^2\bar{\theta}^{\dot{1}}\bar{\theta}^{\dot{2}}\,
	        \square \Lambda_{(0)}^{(0)}     	
   $$
   with $\Lambda^{(0)}_{(0)}$ real-valued)
 will send $\breve{f}$ to another $\breve{f}^\prime$ still in Wess-Zumino gauge
  with only the components $V^{(0)}_{[\mu]}$ of $\breve{V}$ transformed,
  by
   $$
     V^{(0)}_{[\mu]}\;\longrightarrow\;
	 V^{(0)}_{[\mu]} \,+\, 2\,\partial_\mu \Lambda^{(0)}_{(0)}\,.
   $$
   
\bigskip

\begin{lemma}
{\bf [restriction $\nabla^{\breve{V}}$ of $\widehat{\nabla}^{\breve{V}}$ to $X^{\Bbb C}$]}\;
 Let $\breve{V}$ be a pre-vector superfield in Wess-Zumino gauge.
 Then the restriction $\nabla^{\breve{V}}$ of $\widehat{\nabla}^{\breve{V}}_\mu$ to $X^{\Bbb C}$
  is given by $\partial_\mu -\frac{\sqrt{-1}}{2}\,V^{(0)}_{[\mu]}$, $\mu=0,1,2,3$\,.
\end{lemma}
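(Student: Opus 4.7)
The plan is to compute $\{\widehat{\nabla}^{\breve V}_{e_{\alpha^\prime}}, \widehat{\nabla}^{\breve V}_{e_{\beta^{\prime\prime}}}\}$ directly using Definition~3.1.8, restrict the result to $X^{\Bbb C}$ (i.e.\ set $\theta=\bar\theta=\vartheta=\bar\vartheta=0$), and then contract with $\frac{\sqrt{-1}}{2}\breve\sigma_\mu^{\alpha\dot{\beta}}$. Write $\widehat{\nabla}^{\breve V}_{e_{\alpha^\prime}} = e_{\alpha^\prime} + A_\alpha$, where $A_\alpha := e^{-\breve V}(e_{\alpha^\prime} e^{\breve V}) \in \widehat{\cal O}_X^{\,\widehat{\boxplus}}$ is a (parity-odd) multiplication operator, and $\widehat{\nabla}^{\breve V}_{e_{\beta^{\prime\prime}}} = e_{\beta^{\prime\prime}}$. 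A brief parity bookkeeping with the $\Bbb Z/2$-graded Leibniz rule for the derivation $e_{\beta^{\prime\prime}}$ shows that the anticommutator of operators splits as
\[
 \{e_{\alpha^\prime}+A_\alpha,\; e_{\beta^{\prime\prime}}\}\;=\;\{e_{\alpha^\prime}, e_{\beta^{\prime\prime}}\}\,+\,e_{\beta^{\prime\prime}}(A_\alpha),
\]
where the second term is a multiplication operator. From [L-Y1: Sec.\,1.4] we already have $\{e_{\alpha^\prime}, e_{\beta^{\prime\prime}}\} = -2\sqrt{-1}\sum_\mu \sigma^\mu_{\alpha\dot{\beta}}\,\partial_\mu$.

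The heart of the computation is the evaluation of $e_{\beta^{\prime\prime}}(A_\alpha)|_{X^{\Bbb C}}$. In Wess-Zumino gauge, $\breve V$ has no $(\theta,\bar\theta)$-degree-$0$ or degree-$1$ component and contains only terms of total $(\theta,\bar\theta)$-degree $\ge 2$, so $\breve V^3 = 0$ and
\[
 e^{\pm\breve V}\;=\; 1 \pm \breve V + \tfrac{1}{2}\breve V^2,
\]
with $\breve V$ vanishing at $\theta=\bar\theta=0$. Consequently $A_\alpha = (e_{\alpha^\prime}\breve V) + (\text{terms of } (\theta,\bar\theta)\text{-degree}\ge 2)$. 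Using $e_{\alpha^\prime} = \partial/\partial\theta^\alpha + \sqrt{-1}\sum_{\dot\delta,\mu}\sigma^\mu_{\alpha\dot\delta}\bar\theta^{\dot\delta}\partial_\mu$ and the explicit Wess-Zumino-gauge expression for $\breve V$, the leading $(\theta,\bar\theta)$-part of $A_\alpha$ is $\sum_{\dot\delta,\mu}\bar\theta^{\dot\delta}\sigma^\mu_{\alpha\dot\delta}V^{(0)}_{[\mu]}$. Applying $e_{\beta^{\prime\prime}} = -\partial/\partial\bar\theta^{\dot\beta} + \sqrt{-1}\sum_{\gamma,\mu}\theta^\gamma\sigma^\mu_{\gamma\dot\beta}\partial_\mu$ and then setting $\theta=\bar\theta=0$ kills every $\bar\theta$-positive-degree contribution, leaving
\[
 e_{\beta^{\prime\prime}}(A_\alpha)\bigl|_{X^{\Bbb C}}\;=\;-\sum_\mu \sigma^\mu_{\alpha\dot{\beta}}\,V^{(0)}_{[\mu]}.
\]

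Assembling the two pieces gives $\{\widehat{\nabla}^{\breve V}_{e_{\alpha^\prime}}, \widehat{\nabla}^{\breve V}_{e_{\beta^{\prime\prime}}}\}\bigl|_{X^{\Bbb C}} = -2\sqrt{-1}\sum_\mu \sigma^\mu_{\alpha\dot{\beta}}\,\partial_\mu - \sum_\mu \sigma^\mu_{\alpha\dot{\beta}}\,V^{(0)}_{[\mu]}$. Contracting with $\frac{\sqrt{-1}}{2}\breve\sigma_\mu^{\alpha\dot{\beta}}$ and invoking the identity $\sum_{\alpha,\dot{\beta}}\breve\sigma_\mu^{\alpha\dot{\beta}}\sigma^\nu_{\alpha\dot{\beta}} = \delta_\mu^{\,\nu}$, which is a direct check from the explicit $2\times 2$ matrices listed in Definition~3.1.8, yields the claimed formula $\nabla^{\breve V}_\mu = \partial_\mu - \frac{\sqrt{-1}}{2}V^{(0)}_{[\mu]}$.

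The only subtle step is the control of the higher-order-in-$(\theta,\bar\theta)$ terms in $A_\alpha$: one must confirm that every contribution from $\tfrac12 \breve V^2$ and from the $\bar\theta\partial_\mu$-tail of $e_{\alpha^\prime}$, as well as from the $\theta\partial_\mu$-tail of $e_{\beta^{\prime\prime}}$, has total $(\theta,\bar\theta)$-degree $\ge 1$ and therefore drops out upon restriction to $X^{\Bbb C}$. This is immediate once one notes that every component field of $\breve V$ in Wess-Zumino gauge multiplies a monomial of $(\theta,\bar\theta)$-degree $\ge 2$, and that the Wess-Zumino form is preserved under the chiral/antichiral change of frame (Lemma~3.2.4), so no reorganization of $\breve V$ can inject a degree-$0$ or degree-$1$ piece. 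Thus the higher-order corrections contribute nothing at $\theta=\bar\theta=0$, and the computation above is complete.
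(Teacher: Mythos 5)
Your proposal is correct and takes essentially the same route as the paper's own proof: both isolate the multiplication-operator part $e^{-\breve{V}}(e_{\alpha^\prime}e^{\breve{V}})$ of $\widehat{\nabla}_{e_{\alpha^\prime}}$, reduce the anticommutator via the ${\Bbb Z}/2$-graded Leibniz rule to $\{e_{\alpha^\prime},e_{\beta^{\prime\prime}}\}=-2\sqrt{-1}\sum_\nu\sigma^\nu_{\alpha\dot{\beta}}\partial_\nu$ plus the multiplication operator $e_{\beta^{\prime\prime}}\bigl(e^{-\breve{V}}(e_{\alpha^\prime}e^{\breve{V}})\bigr)$, extract its $(\theta,\bar{\theta},\vartheta,\bar{\vartheta})$-degree-zero part (in the paper this is read off from the summand $e_{\beta^{\prime\prime}}e_{\alpha^\prime}\breve{V}$, giving $-\sum_\nu\sigma^\nu_{\alpha\dot{\beta}}V^{(0)}_{[\nu]}$, exactly your $e_{\beta^{\prime\prime}}(A_\alpha)|_{X^{\Bbb C}}$), and contract with $\frac{\sqrt{-1}}{2}\breve{\sigma}_\mu^{\alpha\dot{\beta}}$ using $\sum_{\alpha,\dot{\beta}}\breve{\sigma}_\mu^{\alpha\dot{\beta}}\sigma^\nu_{\alpha\dot{\beta}}=\delta_\mu^\nu$. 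The only blemish is a harmless sign slip in your displayed $e_{\beta^{\prime\prime}}$ (the $\theta^\gamma\sigma^\mu_{\gamma\dot{\beta}}\partial_\mu$ tail carries $-\sqrt{-1}$ in the paper's convention, not $+\sqrt{-1}$), which is immaterial here since that tail is annihilated upon restriction to $X^{\Bbb C}$.
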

   
\medskip

\begin{proof}
 Denote $\widehat{\nabla}^{\breve{V}}$ by $\widehat{\nabla}$.
 Since $X^{\Bbb C}\subset \widehat{X}^{\widehat{\boxplus}}$
   is described by the ideal in $C^\infty(\widehat{X}^{\widehat{\boxplus}})$ generated by
   $\theta$, $\bar{\theta}$, $\vartheta$, $\bar{\vartheta}$,
  the restriction of $\widehat{\nabla}$ to $X^{\Bbb C}$ is simply
   the $(\theta,\bar{\theta},\vartheta,\bar{\vartheta})$-degree-zero part of
   $\widehat{\nabla}$ when expressed in terms of the standard coordinate functions
   $(x,\theta,\bar{\theta},\vartheta,\bar{\vartheta})$ on $\widehat{X}^{\widehat{\boxplus}}$.
 This in turn is a ${\Bbb C}$-linear combination of the
  $(\theta,\bar{\theta},\vartheta,\bar{\vartheta})$-degree-zero part of
  $\{\widehat{\nabla}\!_{e_{\alpha^\prime}},
         \widehat{\nabla}\!_{e_{\beta^{\prime\prime}}}\}$
  in the standard coordinate functions $(x,\theta,\bar{\theta}, \vartheta,\bar{\vartheta})$.
 
 From the definition of $\widehat{\nabla}$ and the ${\Bbb Z}/2$-graded Leibniz rule, one has
   \begin{eqnarray*}
    \{\widehat{\nabla}\!_{e_{\alpha^\prime}}, \widehat{\nabla}\!_{e_{\beta^{\prime\prime}}}\}
	& = &  \{ e_{\alpha^\prime}+ e^{-\breve{V}}(e_{\alpha^\prime}e^{\breve{V}})\,,\,
	                       e_{\beta^{\prime\prime}}\}      \\
	& = & \{e_{\alpha^\prime}, e_{\beta^{\prime\prime}}\}\,
	     +\, (e_{\beta^{\prime\prime}}e^{-\breve{V}})
		         (e_{\alpha^\prime}e^{\breve{V}})
		 +\, e^{-\breve{V}} (e_{\beta^{\prime\prime}}e_{\alpha^\prime}e^{\breve{V}})\,.
   \end{eqnarray*}
 The $(\theta,\bar{\theta},\vartheta, \bar{\vartheta})$-degree-zero terms of
   $\{\widehat{\nabla}\!_{e_{\alpha^\prime}},
          \widehat{\nabla}\!_{e_{\beta^{\prime\prime}}}\}$
   thus come from
     $\{e_{\alpha^\prime}, e_{\beta^{\prime\prime}}\}$,
        which equals\\ $-2\sqrt{-1}\sum_\nu\sigma^\nu_{\alpha\dot{\beta}}\partial_\nu$, and
	the $(\theta,\bar{\theta}, \vartheta,\bar{\vartheta})$-degree-zero terms of the summand
	  $e_{\beta^{\prime\prime}}e_{\alpha^\prime}\breve{V}$
	  from the expansion of
	  $e^{-\breve{V}} (e_{\beta^{\prime\prime}}e_{\alpha^\prime}e^{\breve{V}})
	    = (1-\breve{V}+\frac{1}{2}\breve{V}^2)
		        (e_{\beta^{\prime\prime}}e_{\alpha^\prime}
				   (1+\breve{V}+\frac{1}{2}\breve{V}^2))$,
     which is $-\sum_\nu\sigma^\nu_{\alpha\dot{\beta}}V^{(0)}_{[\nu]}$.
 It follows that
  $$
    \nabla_\mu\;\;
	:=\;\; \widehat{\nabla}_\mu|_{X^{\Bbb C}}\;\;
	 =\;\; \mbox{\Large $\frac{\sqrt{-1}}{2}$}
	         \sum_{\alpha,\dot{\beta}}\breve{\sigma}_\mu^{\alpha\dot{\beta}}			
				 \sum_\nu\sigma^\nu_{\alpha\dot{\beta}}
				   \mbox{\large $($}
				     -2\sqrt{-1}\partial_\nu - V^{(0)}_{[\nu]}
				   \mbox{\large $)$} \;\;
	    =\;\; \partial_\mu - \mbox{\Large $\frac{\sqrt{-1}}{2}$} V^{(0)}_{[\mu]}\,.
  $$
 Here, the identity
   $\sum_{\alpha,\dot{\beta}}
       \breve{\sigma}_\mu^{\alpha\dot{\beta}}\sigma^\nu_{\alpha\dot{\beta}}
	 =\delta_\mu^\nu$ is used.
 This proves the lemma.	

\end{proof}

\bigskip

\begin{flushleft}
{\bf Explicit formulae for $\widehat{\nabla}_\mu^{\breve{V}}$}
\end{flushleft}
The full expression of $\widehat{\nabla}^{\breve{V}}_\mu$ for $\breve{V}$ in Wess-Zumino gauge
  is given here for the completeness of the discussion.
It's a curious feature that the curvature of $\nabla:= \widehat{\nabla}^{\breve{V}}|_{X^{\Bbb C}}$
 is somehow already captured in the $(\theta,\bar{\theta})$-degree-$>0$-terms of
 $\widehat{\nabla}^{\breve{V}}_\mu$'s.
$$
 \widehat{\nabla}_{\mu} \;
   =\; \partial_\mu \,
          +\, \mbox{\Large $\frac{\sqrt{-1}}{2}$}\sum_{\alpha,\dot{\beta}}
		          \breve{\sigma}_\mu^{\alpha\dot{\beta}}\Theta_{\alpha\dot{\beta}}\,,
$$
where
{\small
\begin{eqnarray*}
 \Theta_{\alpha\dot{\beta}}
  & = &
     \mbox{\Large $($}e_{\beta^{\prime\prime}} e^{-\breve{V}} \mbox{\Large $)$}
			    \mbox{\Large $($} e_{\alpha^\prime} e^{\breve{V}}  \mbox{\Large $)$}
				+  e^{-\breve{V}}				
				     \mbox{\Large $($}
                       e_{\beta^{\prime\prime}} e_{\alpha^\prime} e^{\breve{V}}
                     \mbox{\Large $)$}     \\
  & = &  -\sum_\nu \sigma^\nu_{\alpha\dot{\beta}} V^{(0)}_{[\nu]}
              -\sum_\gamma \theta^\gamma \varepsilon_{\alpha\gamma}
			      (\mbox{$\sum$}_{\gamma^\prime} \vartheta_{\gamma^\prime}
				        V^{(\gamma^\prime)}_{(12\dot{\beta})})\,
			  + \sum_{\dot{\delta}}\bar{\theta}^{\dot{\delta}}
			        \varepsilon_{\dot{\beta}\dot{\delta}}
					(\mbox{$\sum$}_{\dot{\delta}^\prime}
					     \bar{\vartheta}_{\dot{\delta}^\prime}
						 \overline{V^{(\delta^\prime)}_{12\dot{\alpha}}})\\
  && + \sum_{\gamma,\dot{\delta}}
              \theta^\gamma \bar{\theta}^{\dot{\delta}}
			    \left(\rule{0ex}{1.2em} \right.\!
				  \varepsilon_{\alpha\gamma} \varepsilon_{\dot{\beta}\dot{\delta}}
				   V^{(0)}_{(12\dot{1}\dot{2})}
				 + \sqrt{-1}\sum_{\dot{\beta}^\prime, \dot{\delta}^\prime, \mu, \nu}
				      \varepsilon_{\dot{\beta}\dot{\delta}}
					  \varepsilon^{\dot{\beta}^\prime\dot{\delta}^\prime}
                      \sigma^\mu_{\alpha\dot{\beta}^\prime}
                      \sigma^\nu_{\gamma\dot{\delta}^\prime}\,
					  \partial_\mu V^{(0)}_{[\nu]}					
				 - \sqrt{-1}\sum_{\mu,\nu}
				       \sigma^\mu_{\gamma\dot{\beta}}\sigma^\nu_{\alpha\dot{\delta}}\,
                       \partial_\mu V^{(0)}_{[\nu]} \\
      && \hspace{6em}	 	
				 + \sum_ \mu
					  \varepsilon_{\alpha\gamma} \varepsilon_{\dot{\beta}\dot{\delta}}
				      V^{(0)}_{(\mu)}V^{[\mu];(0)}
				 - \sum_{\mu,\nu}		
				      \sigma^\mu_{\gamma\dot{\beta}}\sigma^\nu_{\alpha\dot{\delta}}\,
					   V^{(0)}_{[\mu]} V^{(0)}_{[\nu]}
			     + \sum_{\mu,\nu}
				      \sigma^{\mu}_{\alpha\dot{\beta}}\sigma^\nu_{\gamma\dot{\delta}}\,
					  V^{(0)}_{[\mu]} V^{(0)}_{[\nu]}
			    \!\left.\rule{0ex}{1.2em}\right)\\
  && +\sum_{\dot{\delta}} \theta^1\theta^2 \bar{\theta}^{\dot{\delta}}
              \left(\rule{0ex}{1.2em}\right.\!
                \sqrt{-1}\sum_{\dot{\beta}^\prime, \dot{\delta}^\prime, \mu}
				 \varepsilon_{\dot{\beta}\dot{\delta}}
				 \varepsilon^{\dot{\beta}^\prime\dot{\delta}^\prime}
				 \sigma^\mu_{\alpha\dot{\beta}^\prime}\,
				 \partial_\mu(\mbox{$\sum$}_{\gamma^\prime}
				    \vartheta_{\gamma^\prime}V^{(\gamma^\prime)}_{(12\dot{\delta}^\prime)}) \\
      && \hspace{6em}					
                   -\,\sqrt{-1}\sum_\mu \sigma^\mu_{\alpha\dot{\beta}}\,
                         \partial_\mu (\mbox{$\sum$}_{\gamma^\prime}
                             \vartheta_{\gamma^\prime}	V^{(\gamma^\prime)}_{(12\dot{\delta})})
                    + 2 \sum_\mu \sigma^\mu_{\alpha\dot{\delta}}
					    V^{(0)}_{[\mu]}
						(\mbox{$\sum$}_{\gamma^\prime}
						    V^{(\gamma^\prime)}_{(12\dot{\beta})})
              \!\left.\rule{0ex}{1.2em}\right)         \\
  && + \sum_\gamma \theta^\gamma \bar{\theta}^{\dot{1}}\bar{\theta}^{\dot{2}}
               \left(\rule{0ex}{1.2em}\right.\!
			     -\sqrt{-1}\sum_\mu \sigma^\mu_{\gamma\dot{\beta}}\,
				       \partial_\mu (\mbox{$\sum$}_{\dot{\delta}^\prime}
					     \bar{\vartheta}_{\dot{\delta}^\prime}
						 \overline{V^{(\delta^\prime)}_{(12\dot{\alpha})}})
				 + 2 \sum_\mu \sigma^\mu_{\alpha\dot{\beta}}
				       V^{(0)}_{[\mu]}
					   (\mbox{$\sum$}_{\dot{\delta}^\prime}
					       \bar{\vartheta}_{\dot{\delta}^\prime}
						    \overline{V^{(\delta^\prime)}_{(12\dot{\gamma})}})
               \!\left.\rule{0ex}{1.2em}\right)\\			
  &&  +\, \theta^1\theta^2\bar{\theta}^{\dot{1}}\bar{\theta}^{\dot{2}}
                \left(\rule{0ex}{1.2em}\right.\!
				  -\sqrt{-1}\sum_\mu \sigma^\mu_{\alpha\dot{\beta}}\,
				      \partial_\mu V^{(0)}_{(12\dot{1}\dot{2})}
			      -\sum_{\gamma,\gamma^\prime, \dot{\beta}^\prime, \dot{\delta}^\prime, \mu, \mu^\prime, \nu}
				     \varepsilon^{\gamma\gamma^\prime}
					 \varepsilon^{\dot{\beta}^\prime\dot{\delta}^\prime}
					 \sigma^\mu_{\gamma\dot{\beta}}
					 \sigma^{\mu^\prime}_{\alpha\dot{\beta}^\prime}
					 \sigma^\nu_{\gamma^\prime\dot{\delta}^\prime}\,
					 \partial_\mu\partial_{\mu^\prime} V^{(0)}_{[\nu]}\\
      && \hspace{6em}
				  -\,\sqrt{-1}\sum_{\mu,\nu}
                      \sigma^\mu_{\alpha\dot{\beta}}\,
                      \partial_\mu (V^{(0)}_{[\nu]}V^{[\nu]; (0)})		\\
      && \hspace{6em}					
				  + \sqrt{-1}\sum_{\gamma,\gamma^\prime, \dot{\delta}, \dot{\delta}^\prime, \mu,\mu^\prime,\nu}
				        \varepsilon^{\gamma\gamma^\prime}
						\varepsilon^{\dot{\delta}\dot{\delta}^\prime}
						\sigma^\mu_{\gamma\dot{\beta}}
						\sigma^{\mu^\prime}_{\gamma^\prime\dot{\delta}}
						\sigma^\nu_{\alpha\dot{\delta}^\prime}\,
                        V^{(0)}_{[\nu]}\partial_\mu V^{(0)}_{[\mu^\prime]}					\\
      && \hspace{6em}				
				  -\,\sqrt{-1}\sum_{\gamma,\gamma^\prime, \dot{\delta}, \dot{\delta}^\prime, \mu,\mu^\prime,\nu}
				        \varepsilon^{\gamma\gamma^\prime}
						\varepsilon^{\dot{\delta}\dot{\delta}^\prime}
						\sigma^\mu_{\gamma\dot{\delta}}
						\sigma^{\mu^\prime}_{\gamma^\prime\dot{\beta}}
						\sigma^\nu_{\alpha\dot{\delta}^\prime}\,
                        V^{(0)}_{[\mu]}\partial_{\mu^\prime} V^{(0)}_{[\nu]}			\\
      && \hspace{6em}				
				  +\, 4\,(\mbox{$\sum$}_{\gamma^\prime}
				                \vartheta_{\gamma^\prime}
								V^{(\gamma^\prime)}_{(12\dot{\beta})})
				          (\mbox{$\sum$}_{\dot{\delta}^\prime}
						        \bar{\vartheta}_{\dot{\delta}^\prime}
								\overline{V^{(\delta^\prime)}_{(12\dot{\alpha})}})
				\!\left.\rule{0ex}{1.2em}\right)\,.
\end{eqnarray*}
} % end-small
 
Observe that $\Theta_{\alpha\dot{\beta}}$ involves at most terms quadratic in components of $\breve{V}$.
Terms cubic in components of $\breve{V}$ do appear in the immediate steps but they cancel each other in the end.
There are terms that contain the factor $\sigma^\nu_{\alpha\dot{\beta}}$.
They give rise to terms in $\widehat{\nabla}^{\breve{V}}_\mu$ of the following form:
{\small
\begin{eqnarray*}
  \widehat{\nabla}^{\breve{V}}_\mu
    & =  & \partial_\mu
       -\,\mbox{\large $\frac{\sqrt{-1}}{2}$}\, V^{(0)}_{[\mu]}
	   +\, \mbox{\large $\frac{\sqrt{-1}}{2}$}\sum_{\gamma,\dot{\delta}}
              \theta^\gamma \bar{\theta}^{\dot{\delta}}\cdot  		
			    V^{(0)}_{[\mu]}
				 \mbox{$\sum$}_\nu \sigma^\nu_{\gamma\dot{\delta}}\,V^{(0)}_{[\nu]}			
       +\,\mbox{\large $\frac{1}{2}$}
          \sum_{\dot{\delta}} \theta^1\theta^2 \bar{\theta}^{\dot{\delta}}\cdot		
               \partial_\mu (\mbox{$\sum$}_{\gamma^\prime}
                             \vartheta_{\gamma^\prime}	V^{(\gamma^\prime)}_{(12\dot{\delta})})		\\
  && +\, \sqrt{-1} \sum_\gamma \theta^\gamma \bar{\theta}^{\dot{1}}\bar{\theta}^{\dot{2}}
                 V^{(0)}_{[\mu]}
					   (\mbox{$\sum$}_{\dot{\delta}^\prime}
					       \bar{\vartheta}_{\dot{\delta}^\prime}
						    \overline{V^{(\delta^\prime)}_{(12\dot{\gamma})}})
        +\, \mbox{\large $\frac{1}{2}$}\,
               \theta^1\theta^2\bar{\theta}^{\dot{1}}\bar{\theta}^{\dot{2}}
                \left(\rule{0ex}{1.2em}\right.\!
				 \partial_\mu V^{(0)}_{(12\dot{1}\dot{2})}					  			  					 
				  + \partial_\mu (\mbox{$\sum_\nu$}\,V^{(0)}_{[\nu]}V^{[\nu]; (0)})																		
				\!\left.\rule{0ex}{1.2em}\right)               \\[1ex]
   &&  + \, (\,\cdots\cdots\,)\,,	
\end{eqnarray*}}where  % end-small
all terms in $(\,\cdots\cdots\,)$ have the total $(\theta,\bar{\theta})$-degree $\ge 1$.

\bigskip

\subsection{Supersymmetry transformations of a pre-vector superfield\\ in Wess-Zumino gauge}

Readers are recommend to read this subsection along with
   [Argu: Sec.\,4.3, pp.\:76, 77] of {\sl Argurio} and
   [G-G-R-S: Sec.\,4.2.1, from before Eq.\,(4.2.7) to after  Eq.\,(4.2.13)]
      of {\it Gates, Grosaru, Ro\u{c}ek, \& Siegel}
 for comparison.
 		
Recall the Grassmann parameter level of $\widehat{X}^{\widehat{\boxplus}}$.
The setup, discussions, and results in Sec.\,3.1 and Sec.\,3.2 can be generalized straightforwardly
 to the case where the Grassmann parameter level is turned on.
We shall use this to understand how supersymmetries act on pre-vector superfields in Wess-Zumino gauge.
 
To preserve the self-twisted-conjugate condition $\breve{V}^\dag=\breve{V}$
 of a pre-vector superfield $\breve{V}$,
 let $(\eta,\bar{\eta}):= (\eta^1,\eta^2; \bar{\eta}^{\dot{1}},\bar{\eta}^{\dot{2}})$
 be a conjugate pair of constant sections of
  ${\cal S}^{\prime\,\vee}_{\parameter}\oplus {\cal S}^{\prime\prime\,\vee}_{\parameter}
      \subset  \widehat{\cal O}_X^{\,\widehat{\boxplus}}$   and
 consider the infinitesimal supersymmetry transformation
  $$
    \delta_{(\eta,\bar{\eta})}\breve{V}\;
     :=\; (\eta Q+ \bar{\eta}\bar{Q})  \breve{V}\;
	 :=\;  \mbox{\Large $($}
	            \mbox{$\sum$}_\alpha \eta^\alpha Q_\alpha
				- \mbox{$\sum$}_{\dot{\beta}}\bar{\eta}^{\dot{\beta}} \bar{Q}_{\dot{\beta}}
		 	 \mbox{\Large $)$}\, \breve{V}			
  $$
  of $\breve{V}$.
Then
  $(\delta_{(\eta,\bar{\eta})}\breve{V})^\dag
     = \delta_{(\eta,\bar{\eta})}\breve{V}$.
However, for $\breve{V}$ in Wess-Zumino gauge,
  $(\eta Q+ \bar{\eta}\bar{Q})  \breve{V}$ remains a pre-vector superfield
     but in general no longer in Wess-Zumino gauge.
This can be remedied by a gauge transformation:
(e.g., [Argu: Sec.\,4.3.1], [G-G-R-S: Sec.\,4.2.a.1], [W-B: Chap.\,VII, Exercise (8)], and
                    [We: Sec.\,15.3, Eq.\,(15.78)])

\bigskip

\begin{lemma} {\bf [uniqueness of correcting gauge transformation]}\;
 Let $\breve{V}$ be a pre-vector superfield in Wess-Zumino gauge.
 Then there is a unique chiral superfield $\Lambda_{(\eta,\bar{\eta}; \breve{V})}$
   depending ${\Bbb C}$-multilinearly on $(\eta,\bar{\eta})$ and $\breve{V}$
   such that
   the gauge transformation
   $$
     (\eta Q+ \bar{\eta}\bar{Q})  \breve{V}  \,
       -\, \sqrt{-1}
	      (\Lambda_{(\eta,\bar{\eta}; \breve{V})}- \Lambda_{(\eta,\bar{\eta}; \breve{V})}^\dag)	
   $$
   of $(\eta Q+ \bar{\eta}\bar{Q})  \breve{V}$
   is in Wess-Zumino gauge.
\end{lemma}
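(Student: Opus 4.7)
The plan is to determine $\Lambda_{(\eta,\bar\eta;\breve V)}$ by identifying, component by component, the obstructions that prevent $(\eta Q+\bar\eta\bar Q)\breve V$ from lying in Wess–Zumino gauge, and then reading off exactly the chiral-superfield data $\bigl(\Lambda^{(0)}_{(0)},\Lambda^{(1)}_{(1)},\Lambda^{(2)}_{(2)},\Lambda^{(12)}_{(12)}\bigr)$ needed to kill those obstructions via the gauge-transformation formula computed in Sec.\,3.2. Concretely, recall from that subsection that a gauge transformation by $\breve\Lambda\in C^\infty(X^{\tinyphysics})^{\scriptsizech}$ shifts the Wess–Zumino–forbidden components of a pre-vector superfield according to
\[
 \delta V^{(0)}_{(0)}=2\,\Imaginary\Lambda^{(0)}_{(0)},\qquad
 \delta V^{(\alpha)}_{(\alpha)}=-\sqrt{-1}\,\Lambda^{(\alpha)}_{(\alpha)},\qquad
 \delta V^{(12)}_{(12)}=-\sqrt{-1}\,\Lambda^{(12)}_{(12)},
\]
together with the complex-conjugate relations for $V^{(\dot\beta)}_{(\dot\beta)}$ and $V^{(\dot 1\dot 2)}_{(\dot 1\dot 2)}$, which are automatic since $\breve\Lambda^\dag$ appears in $\delta_{\breve\Lambda}\breve V=-\sqrt{-1}(\breve\Lambda-\breve\Lambda^\dag)$.

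First I would apply $Q_\alpha=\partial/\partial\theta^\alpha-\sqrt{-1}\sum_{\mu,\dot\beta}\sigma^\mu_{\alpha\dot\beta}\bar\theta^{\dot\beta}\partial_\mu$ and $\bar Q_{\dot\beta}=-\partial/\partial\bar\theta^{\dot\beta}+\sqrt{-1}\sum_{\mu,\alpha}\theta^\alpha\sigma^\mu_{\alpha\dot\beta}\partial_\mu$ to the Wess–Zumino-gauge expansion
\[
\breve V=\theta\sigma^\mu\bar\theta\,V^{(0)}_{[\mu]}+\theta^1\theta^2\bar\theta^{\dot\beta}\vartheta_\alpha V^{(\alpha)}_{(12\dot\beta)}+\theta^\alpha\bar\theta^{\dot 1}\bar\theta^{\dot 2}\bar\vartheta_{\dot\beta}\overline{V^{(\beta)}_{(12\dot\alpha)}}+\theta^1\theta^2\bar\theta^{\dot 1}\bar\theta^{\dot 2}V^{(0)}_{(12\dot 1\dot 2)}
\]
and expand the result in the $(\theta,\bar\theta)$-monomial basis of $C^\infty(X^{\tinyphysics})^{\flat,\stc}$. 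The structural observation is that because $\breve V$ has no $(\theta,\bar\theta)$-degree-$0$, degree-$1$, or $\theta^1\theta^2$/$\bar\theta^{\dot 1}\bar\theta^{\dot 2}$ pieces, the derivative $\partial/\partial\theta^\alpha$ (resp.\ $\partial/\partial\bar\theta^{\dot\beta}$) acting on the $\theta\sigma\bar\theta V^{(0)}_{[\mu]}$ term together with the $\sqrt{-1}\sigma\partial$ piece of $\bar Q$ (resp.\ $Q$) acting on higher-degree terms produce precisely the Wess–Zumino-forbidden summands
$V^{(\alpha)}_{(\alpha)},\ V^{(\dot\beta)}_{(\dot\beta)},\ V^{(12)}_{(12)},\ V^{(\dot 1\dot 2)}_{(\dot 1\dot 2)}$, each depending ${\Bbb C}$-bilinearly on $(\eta,\bar\eta)$ and the components of $\breve V$; moreover no $V^{(0)}_{(0)}$ piece is generated since $\breve V$ has no constant or linear-in-$\theta$ terms to be differentiated into degree $0$.

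Next I would invert the gauge-shift formulas above to read off the unique choice
\[
 \Lambda^{(\alpha)}_{(\alpha)}=-\sqrt{-1}\,V^{(\alpha)}_{(\alpha)}\bigl((\eta Q+\bar\eta\bar Q)\breve V\bigr),\qquad
 \Lambda^{(12)}_{(12)}=-\sqrt{-1}\,V^{(12)}_{(12)}\bigl((\eta Q+\bar\eta\bar Q)\breve V\bigr),
\]
and $\Imaginary\Lambda^{(0)}_{(0)}=-\tfrac12 V^{(0)}_{(0)}\bigl((\eta Q+\bar\eta\bar Q)\breve V\bigr)=0$ (automatic, by the previous step). The remaining freedom is in $\Real\Lambda^{(0)}_{(0)}$, which corresponds to the residual $U(1)$ gauge transformation of Lemma/Definition~3.2.3\!--\!3.2.5. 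The canonical choice $\Real\Lambda^{(0)}_{(0)}=0$ eliminates this freedom and makes $\Lambda_{(\eta,\bar\eta;\breve V)}$ manifestly ${\Bbb C}$-multilinear in its arguments, since the identified components are polynomial expressions built from one factor of $(\eta,\bar\eta)$ and one factor of $\breve V$. Uniqueness then follows: any other $\Lambda^\prime$ satisfying the stated conditions with the same multilinearity convention must have identical $\Imaginary\Lambda^{(0)}_{(0)}$, $\Lambda^{(\alpha)}_{(\alpha)}$, $\Lambda^{(12)}_{(12)}$ (since these are forced by the Wess–Zumino condition) and identical $\Real\Lambda^{(0)}_{(0)}=0$ (since otherwise $\Lambda^\prime-\Lambda$ would be a nonzero chiral superfield purely of the form $\Real\Lambda^{(0)}_{(0)}$ multilinear in $(\eta,\bar\eta,\breve V)$, contradicting the canonical choice).

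The main obstacle will be the bookkeeping in step one: after applying $\eta Q+\bar\eta\bar Q$ to $\breve V$ one must reorganize the result into the canonical $(\theta,\bar\theta,\vartheta,\bar\vartheta)$-basis of $C^\infty(X^{\tinyphysics})^{\flat,\stc}$ and carefully separate the Wess–Zumino-allowed summands $\theta\sigma\bar\theta\,V^{(0)}_{[\mu]}$, $\theta^1\theta^2\bar\theta\vartheta V^{(\alpha)}_{(12\dot\beta)}$, etc.\ from the Wess–Zumino-forbidden ones; this requires repeated applications of the Fierz-type identities of the Appendix to pair up spinor indices with $\vartheta$'s and $\bar\vartheta$'s so that each monomial is recognized as contributing to a definite component $V^{\tinybullet}_{\tinybullet}$. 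Once this reorganization is done cleanly, the matching of components to chiral data is immediate.
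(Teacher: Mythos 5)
Your proposal is correct and takes essentially the same route as the paper's own proof: both hinge on the observation that the $(\tinybullet)^{(0)}_{(0)}$-component of $(\eta Q+\bar{\eta}\bar{Q})\breve{V}$ vanishes automatically when $\breve{V}$ is in Wess-Zumino gauge, and then invert the explicit gauge-fixing formulas of Sec.\,3.2 (Definition/Lemma~3.2.3 and the computation leading to Lemma~3.2.5) to read off $\Lambda^{(\alpha)}_{(\alpha)}=-\sqrt{-1}\,V^{(\alpha)}_{(\alpha)}$, $\Lambda^{(12)}_{(12)}=-\sqrt{-1}\,V^{(12)}_{(12)}$ from the components of the supersymmetry variation, with ${\Bbb C}$-multilinearity manifest from these formulas. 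The only cosmetic difference is that the paper states the normalization directly as $\Lambda^{(0)}_{(0)}=0$, which is precisely your forced condition $\Imaginary\Lambda^{(0)}_{(0)}=0$ combined with the canonical choice $\Real\Lambda^{(0)}_{(0)}=0$.
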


\medskip

\begin{proof}
 When $\breve{V}$ is in Wess-Zumino gauge,
   $(\tinybullet)^{(0)}_{(0)}$-component of
   $(\eta Q+ \bar{\eta}\bar{Q})  \breve{V}$ is always zero.
 It follows from the explicit computation in the previous theme that leads to Lemma~3.2.5
                                                                                                      % Lemma [representative in Wess-Zuminio gauge]
  that there is a unique chiral superfield $\breve{\Lambda}$
   associated to $(\eta Q+ \bar{\eta}\bar{Q})  \breve{V}$ with $\Lambda^{(0)}_{(0)}=0$
  such that
   $(\eta Q+ \bar{\eta}\bar{Q})  \breve{V}  \,-\, \sqrt{-1}(\Lambda- \Lambda^\dag)$
   is in Wess-Zumino gauge.
 The same explicit computation implies also that this unique $\Lambda$ depends ${\Bbb C}$-multilinearly 
   on $(\eta,\bar{\eta})$ and $\breve{V}$.
 This proves the lemma.
 
\end{proof}					

\bigskip

\begin{definition} {\bf [supersymmetry in Wess-Zumino gauge]}\; {\rm
 Set
  $$
   (\eta Q+ \bar{\eta}\bar{Q})  \breve{V}  \,
       -\, \sqrt{-1}
	        (\Lambda_{(\eta,\bar{\eta}; \breve{V})}
		                 - \Lambda_{(\eta,\bar{\eta}; \breve{V})}^\dag)\;	
     =\; \sum_\alpha \eta^\alpha Q_\alpha^{\!\tinyWZ} \breve{V}
	       - \sum_{\dot{\beta}}
		        \bar{\eta}^{\dot{\beta}}\bar{Q}_{\dot{\beta}}^{\!\tinyWZ}\breve{V}\,.
   $$
 This defines {\it (infinitesimal) supersymmetry transformations in Wess-Zumino gauge}
    $Q_\alpha^{\!\tinyWZ}$, $\bar{Q}_{\dot{\beta}}^{\!\tinyWZ}$
  that take a superfield in Wess-Zumino gauge to another in Wess-Zumino gauge.  	
}\end{definition}

\bigskip

Explicitly, let
 $$
   \breve{V}\;=\;
    \sum_{\gamma,\dot{\delta},\nu}\theta^\gamma \bar{\theta}^{\dot{\delta}}
	   \sigma^\nu_{\gamma,\dot{\delta}} V^{(0)}_{[\nu]}
	 + \sum_{\dot{\delta}}\theta^1\theta^2\bar{\theta}^{\dot{\delta}}
	       \sum_{\gamma^\prime}\vartheta_{\gamma^\prime}V^{(\gamma^\prime)}_{(12\dot{\delta})}
	 + \sum_\gamma \theta^\gamma \bar{\theta}^{\dot{1}} \bar{\theta}^{\dot{2}}
	   \sum_{\dot{\delta^\prime}}\bar{\vartheta}_{\dot{\delta}^\prime}
	             \overline{V^{(\delta^\prime)}_{(12\dot{\gamma})}}
     + \theta^1\theta^2\bar{\theta}^{\dot{1}}\bar{\theta}^{\dot{2}}
	       V^{(0)}_{(12\dot{1}\dot{2})}
 $$
 be a pre-vector superfield in Wess-Zumino gauge.
Then,
 {\small
 \begin{eqnarray*}
   \lefteqn{
    \delta_{\eta Q+ \bar{\eta}\bar{Q}}\breve{V}\;
      :=\; (\eta Q+ \bar{\eta}\bar{Q})  \breve{V} } \\
   && =\; \mbox{\Large $($}
	          \sum_\alpha \eta^\alpha \frac{\partial}{\partial\theta^\alpha}
			    -\sqrt{-1}\sum_{\alpha,\dot{\beta},\mu}
				    \eta^\alpha \sigma^\mu_{\alpha\dot{\beta}}\bar{\theta}^{\dot{\beta}}\partial_\mu
	         \mbox{\Large $)$}	    \breve{V}
			+  \mbox{\Large $($}
	          \sum_{\dot{\beta}}
			     \bar{\eta}^{\dot{\beta}} \frac{\partial}{\partial\bar{\theta}^{\dot{\beta}}}
			    + \sqrt{-1}\sum_{\alpha,\dot{\beta},\mu}
				    \theta^\alpha \sigma^\mu_{\alpha\dot{\beta}}\bar{\eta}^{\dot{\beta}}\partial_\mu
	            \mbox{\Large $)$}	    \breve{V}        \\
   && =\; \sum_\gamma \theta^\gamma
                  \sum_{\dot{\beta}, \nu}
                      \bar{\eta}^{\dot{\beta}}	\sigma^\nu_{\gamma\dot{\beta}} V^{(0)}_{[\nu]}
			 + \sum_{\dot{\delta}}\bar{\theta}^{\dot{\delta}}
			      \cdot (-1)\sum_{\alpha,\nu}
				    \eta^\alpha \sigma^\nu_{\alpha\dot{\delta}}V^{(0)}_{[\nu]}
			 + \theta^1\theta^2 \sum_{\dot{\beta}}\bar{\eta}^{\dot{\beta}}
			      (\mbox{$\sum$}_{\gamma^\prime}
				         \vartheta_{\gamma^\prime}V^{(\gamma^\prime)}_{(12\dot{\beta})})\\
    &&  \hspace{2em}
	         + \sum_{\gamma, \dot{\delta}} \theta^\gamma \bar{\theta}^{\dot{\delta}}
			      \mbox{\Large $($}
				    - \sum_{\alpha}\eta^\alpha \varepsilon_{\alpha\gamma}
                        (\mbox{$\sum$}_{\gamma^\prime}\vartheta_{\gamma^\prime}
						        V^{(\gamma^\prime)}_{(12\dot{\delta})})
                    + \sum_{\dot{\beta}}\bar{\eta}^{\dot{\beta}} \varepsilon_{\dot{\beta}\dot{\delta}}	
					    (\mbox{$\sum$}_{\dot{\delta}^\prime}
						    \bar{\vartheta}_{\dot{\delta}^\prime}
							\overline{V^{(\delta^\prime)}_{(12\dot{\gamma})}})
				  \mbox{\Large $)$}
             + \bar{\theta}^{\dot{1}} \bar{\theta}^{\dot{2}}
			      \sum_\alpha \eta^\alpha (\mbox{$\sum$}_{\dot{\delta}^\prime}
				      \overline{V^{(\delta^\prime)}_{(12\dot{\alpha})}} )     \\
	&& \hspace{2em}
	         + \sum_{\dot{\delta}} \theta^1\theta^2 \bar{\theta}^{\dot{\delta}}
			      \mbox{\Large $($}
			       \sum_{\dot{\beta}} \bar{\eta}^{\dot{\beta}}
			         \varepsilon_{\dot{\beta}\dot{\delta}} V^{(0)}_{(12\dot{1}\dot{2})}
				     + \sqrt{-1} \sum_{\alpha, \dot{\beta}, \gamma, \mu, \nu}
				         \bar{\eta}^{\dot{\beta}} \varepsilon^{\alpha\gamma}
					      \sigma^\mu_{\alpha\dot{\beta}} \sigma^\nu_{\gamma\dot{\delta}}\,
					       \partial_\mu V^{(0)}_{[\nu]}
			      \mbox{\Large $)$}    \\
    && \hspace{2em}	
             + \sum_\gamma \theta^\gamma \bar{\theta}^{\dot{1}} \bar{\theta}^{\dot{2}}
                   \mbox{\Large $($}
                     \sum_\alpha \eta^\alpha \varepsilon_{\alpha\gamma} V^{(0)}_{(12\dot{1}\dot{2})}
					  - \sqrt{-1} \sum_{\alpha,\dot{\beta}, \dot{\delta}, \mu, \nu}
					      \eta^\alpha \varepsilon^{\dot{\beta}\dot{\delta}}
						   \sigma^\mu_{\alpha\dot{\beta}} \sigma^\nu_{\gamma\dot{\delta}}\,
						    \partial_\mu V^{(0)}_{[\nu]}
                   \mbox{\Large $)$}				   \\
    && \hspace{2em}
             +\,	\theta^1\theta^2\bar{\theta}^{\dot{1}}\bar{\theta}^{\dot{2}}
			        \cdot(-\sqrt{-1})
			        \mbox{\Large $($}					
					  \sum_{\alpha,\dot{\beta}, \dot{\delta}, \mu}
					       \eta^\alpha \varepsilon^{\dot{\beta}\dot{\delta}} \sigma^\mu_{\alpha\dot{\beta}}\,
						   \partial_\mu (\mbox{$\sum$}_{\gamma^\prime}  \vartheta_{\gamma^\prime}
                                                           V^{(\gamma^\prime)}_{(12\dot{\delta})} )
					  + \sum_{\dot{\beta}, \alpha,\gamma, \mu}
					       \bar{\eta}^{\dot{\beta}} \varepsilon^{\alpha\gamma} \sigma^\mu_{\alpha\dot{\beta}}\,
						    \partial_\mu (\mbox{$\sum$}_{\dot{\delta}^\prime}
							                              \overline{V^{(\delta^\prime)}_{(12\dot{\gamma})}})
					\mbox{\Large $)$}\,.
 \end{eqnarray*}}Let    % end-small
 $\breve{\Lambda}$ be the unique chiral superfield in $\widehat{\cal O}_X^{\,\widehat{\boxplus}}$
 with
 $$
    \breve{\Lambda}_{(0)}\;=\; 0\,,\hspace{2em}
	\breve{\Lambda}_{(\gamma)}\;
	  =\; -\sqrt{-1} \sum_{\dot{\beta}, \nu} \bar{\eta}^{\dot{\beta}}
	              \sigma^\nu_{\gamma\dot{\beta}} V^{(0)}_{[\nu]}\,,\hspace{2em}				
    \breve{\Lambda}_{(12)}\;
	  =\; -\sqrt{-1} \sum_{\dot{\beta}} \bar{\eta}^{\dot{\beta}}				
	            (\mbox{$\sum$}_{\gamma^\prime}\vartheta_{\gamma^\prime}
				                    V^{(\gamma^\prime)}_{(12\dot{\beta})})\,.
 $$
I.e.\
 \begin{eqnarray*}
  \breve{\Lambda} & = &
    -\,\sqrt{-1}\sum_\gamma \theta^\gamma \sum_{\dot{\beta}, \nu}
	        \bar{\eta}^{\dot{\beta}}\sigma^\nu_{\gamma\dot{\beta}}V^{(0)}_{[\nu]}
	-\sqrt{-1}\,\theta^1\theta^2 \sum_{\dot{\beta}}
	        \bar{\eta}^{\dot{\beta}}
			  (\mbox{$\sum$}_{\gamma^\prime}\vartheta_{\gamma^\prime}
			      V^{(\gamma^\prime)}_{(12\dot{\beta})}) \\
	&& -\, \sum_{\dot{\delta}} \theta^1\theta^2 \bar{\theta}^{\dot{\delta}}
	             \sum_{\dot{\beta}, \alpha, \gamma, \mu, \nu}
				   \bar{\eta}^{\dot{\beta}} \varepsilon^{\alpha\gamma}
				     \sigma^\mu_{\alpha\dot{\delta}} \sigma^\nu_{\gamma\dot{\beta}}\,
					  \partial_\mu V^{(0)}_{[\nu]}\,.
 \end{eqnarray*}
 Then,
 {\small
 \begin{eqnarray*}
   \lefteqn{
     \delta_{\eta Q+ \bar{\eta}\bar{Q}}\breve{V}\, +\, \delta_{\breve{\Lambda}}\breve{V}\;
      =\; (\eta Q+\bar{\eta}\bar{Q})\breve{V}
	           -\sqrt{-1}\,(\breve{\Lambda}-\breve{\Lambda}^\dag) 	 }\\
   && =\; \sum_{\gamma, \dot{\delta}} \theta^\gamma \bar{\theta}^{\dot{\delta}}
			      \mbox{\Large $($}
				    - \sum_{\alpha}\eta^\alpha \varepsilon_{\alpha\gamma}
                        (\mbox{$\sum$}_{\gamma^\prime}\vartheta_{\gamma^\prime}
						        V^{(\gamma^\prime)}_{(12\dot{\delta})})
                    + \sum_{\dot{\beta}}\bar{\eta}^{\dot{\beta}} \varepsilon_{\dot{\beta}\dot{\delta}}	
					    (\mbox{$\sum$}_{\dot{\delta}^\prime}
						    \bar{\vartheta}_{\dot{\delta}^\prime}
							\overline{V^{(\delta^\prime)}_{(12\dot{\gamma})}})
				  \mbox{\Large $)$}         \\
	&& \hspace{2em}
	         + \sum_{\dot{\delta}} \theta^1\theta^2 \bar{\theta}^{\dot{\delta}} \cdot			
			     \sum_{\dot{\beta}} \bar{\eta}^{\dot{\beta}}
				  \mbox{\Large $($}
			         \varepsilon_{\dot{\beta}\dot{\delta}} V^{(0)}_{(12\dot{1}\dot{2})}
				     + \sqrt{-1} \sum_{\alpha, \gamma, \mu, \nu}
				          \varepsilon^{\alpha\gamma}
					      \sigma^\mu_{\alpha\dot{\beta}} \sigma^\nu_{\gamma\dot{\delta}}\, F_{\mu\nu}
			      \mbox{\Large $)$}    \\
    && \hspace{2em}	
             + \sum_\gamma \theta^\gamma \bar{\theta}^{\dot{1}} \bar{\theta}^{\dot{2}} \cdot
                     \sum_\alpha \eta^\alpha
				   \mbox{\Large $($}
					 \varepsilon_{\alpha\gamma} V^{(0)}_{(12\dot{1}\dot{2})}
					  - \sqrt{-1} \sum_{\dot{\beta}, \dot{\delta}, \mu, \nu}
					      \varepsilon^{\dot{\beta}\dot{\delta}}
						   \sigma^\mu_{\alpha\dot{\beta}} \sigma^\nu_{\gamma\dot{\delta}}\, F_{\mu\nu}
                   \mbox{\Large $)$}				   \\
    && \hspace{2em}
             +\;	\theta^1\theta^2\bar{\theta}^{\dot{1}}\bar{\theta}^{\dot{2}}
			        \cdot(-\sqrt{-1})
			        \mbox{\Large $($}					
					  \sum_{\alpha,\dot{\beta}, \dot{\delta}, \mu}
					       \eta^\alpha \varepsilon^{\dot{\beta}\dot{\delta}} \sigma^\mu_{\alpha\dot{\beta}}\,
						   \partial_\mu (\mbox{$\sum$}_{\gamma^\prime}  \vartheta_{\gamma^\prime}
                                                           V^{(\gamma^\prime)}_{(12\dot{\delta})} )
					  + \sum_{\dot{\beta}, \alpha,\gamma, \mu}
					       \bar{\eta}^{\dot{\beta}} \varepsilon^{\alpha\gamma} \sigma^\mu_{\alpha\dot{\beta}}\,
						    \partial_\mu (\mbox{$\sum$}_{\dot{\delta}^\prime}
							                              \overline{V^{(\delta^\prime)}_{(12\dot{\gamma})}})
					\mbox{\Large $)$}  \\
	 && =:\; \sum_\alpha \eta^\alpha Q_\alpha^{\!\tinyWZ}  \breve{V}
	                - \sum_{\dot{\beta}} \bar{\eta}^{\dot{\beta}}
					        \bar{Q}_{\dot{\beta}}^{\!\tinyWZ} \breve{V} \,,
 \end{eqnarray*}}where    % end-small
   $F_{\mu\nu}:= \partial_\mu V^{(0)}_{[\nu]}- \partial_\nu V^{(0)}_{[\mu]}$,
 now resumes in Wess-Zumino gauge.

From this, one reads off
{\small
 \begin{eqnarray*}
   Q_\alpha^{\!\tinyWZ} \breve{V}
    & = & - \sum_{\gamma, \dot{\delta}} \theta^\gamma \bar{\theta}^{\dot{\delta}}
					\varepsilon_{\alpha\gamma}
                        (\mbox{$\sum$}_{\gamma^\prime}\vartheta_{\gamma^\prime}
						        V^{(\gamma^\prime)}_{(12\dot{\delta})})  				 	
             - \sum_\gamma \theta^\gamma \bar{\theta}^{\dot{1}} \bar{\theta}^{\dot{2}} \cdot
				   \mbox{\Large $($}
					 \varepsilon_{\alpha\gamma} V^{(0)}_{(12\dot{1}\dot{2})}
					  - \sqrt{-1} \sum_{\dot{\beta}, \dot{\delta}, \mu, \nu}
					      \varepsilon^{\dot{\beta}\dot{\delta}}
						   \sigma^\mu_{\alpha\dot{\beta}} \sigma^\nu_{\gamma\dot{\delta}}\, F_{\mu\nu}
                   \mbox{\Large $)$}				   \\
    && \hspace{2em}
             +\;	\theta^1\theta^2\bar{\theta}^{\dot{1}}\bar{\theta}^{\dot{2}}
			        \cdot(-\sqrt{-1})					
					  \sum_{\dot{\beta}, \dot{\delta}, \mu}
					     \varepsilon^{\dot{\beta}\dot{\delta}} \sigma^\mu_{\alpha\dot{\beta}}\,
						   \partial_\mu (\mbox{$\sum$}_{\gamma^\prime}  \vartheta_{\gamma^\prime}
                                                           V^{(\gamma^\prime)}_{(12\dot{\delta})} )\,,	
 \end{eqnarray*}}		% end-small	
{\small
 \begin{eqnarray*}
   \bar{Q}_{\dot{\beta}}^{\!\tinyWZ}\breve{V}
    & = & - \sum_{\gamma, \dot{\delta}} \theta^\gamma \bar{\theta}^{\dot{\delta}}				
                   \varepsilon_{\dot{\beta}\dot{\delta}}	
					    (\mbox{$\sum$}_{\dot{\delta}^\prime}
						    \bar{\vartheta}_{\dot{\delta}^\prime}
							\overline{V^{(\delta^\prime)}_{(12\dot{\gamma})}})
	         + \sum_{\dot{\delta}} \theta^1\theta^2 \bar{\theta}^{\dot{\delta}} \cdot			
				  \mbox{\Large $($}
			         \varepsilon_{\dot{\beta}\dot{\delta}} V^{(0)}_{(12\dot{1}\dot{2})}
				     + \sqrt{-1} \sum_{\alpha, \gamma, \mu, \nu}
				          \varepsilon^{\alpha\gamma}
					      \sigma^\mu_{\alpha\dot{\beta}} \sigma^\nu_{\gamma\dot{\delta}}\, F_{\mu\nu}
			      \mbox{\Large $)$}    \\
    && \hspace{2em}
             +\;	\theta^1\theta^2\bar{\theta}^{\dot{1}}\bar{\theta}^{\dot{2}}
			        \cdot \sqrt{-1} \sum_{\alpha,\gamma, \mu}
					        \varepsilon^{\alpha\gamma} \sigma^\mu_{\alpha\dot{\beta}}\,
						    \partial_\mu (\mbox{$\sum$}_{\dot{\delta}^\prime}
							                              \overline{V^{(\delta^\prime)}_{(12\dot{\gamma})}})\,.
 \end{eqnarray*}}The    % end-small
supersymmetry algebra generated by
   $Q_\alpha^{\!\tinyWZ}$'s, $\bar{Q}_{\dot{\beta}}^{\!\tinyWZ}$'s, and $\partial_\mu$'s
  is now closed only up to a gauge transformation.

\bigskip

\subsection{From pre-vector superfields to vector superfields}

The discussion in Sec.\,3.2 is mathematically perfectly fine.
However, when compared to the vector multiplet in representations of $d=3+1$, $N=1$ supersymmetry algebra,
there are two redundant degrees of freedom in a pre-vector superfield $\breve{V}$
 due to that
 for each $\dot{\beta}\in\{\dot{1}, \dot{2}\}$,
  the coefficient of the $\theta^1\theta^2\bar{\theta}^{\dot{\beta}}$-term
  contains
    $\vartheta_1 V^{(1)}_{(12\dot{\beta})}
	    + \vartheta_2 V^{(2)}_{(12\dot{\beta})}$,
    which has two component-functions
       $V^{(1)}_{(12\dot{\beta})},\, V^{(2)}_{(12\dot{\beta})}
	     \in C^\infty(X)^{\Bbb C}$,
    instead of one.
Such redundancies can be removed easily\footnote{Note
                                            that, unlike the set of chiral functions or the set of antichiral functions on $X^{\tinyphysics}$,  											
											  the set of vector superfields on $X^{\tinyphysics}$ is only required to be
											    a $C^\infty(X)$-module, rather than a $C^\infty(X)$-algebra.
											Naively this is what makes it easy.
											{\it However}, gauge transformations act on this module
											  and preferably one wants to remove the redundancy in a gauge-invariant way.
											Usually this may not be always easy.
											That, when in the shifted expressions,
											  these terms are themselves gauge-fixed comes to the rescue.
                                              }  % end-footnote
as follows:
 \begin{itemize}
  \item[\LARGE $\cdot$]
   For a pre-vector superfield $\breve{V}$ {\it in the shifted expression},
     for each $\dot{\beta}=\dot{1}, \dot{2}$,
    introduce an
	{\it ${\Bbb R}$ linear constraint on
	     $\sum_\gamma \vartheta_\gamma V^{(\gamma)}_{(12\dot{\beta})}$},
		 which then induces simultaneously the same ${\Bbb R}$-linear constraint on
           its complex  conjugate
           $\sum_{\dot{\delta}}\bar{\vartheta}_{\dot{\delta}}
              V^{(\dot{\delta})}_{(\beta\dot{1}\dot{2})}
             = \sum_{\dot{\delta}}\bar{\vartheta}_{\dot{\delta}}
                  \overline{V^{(\delta)}_{(12\dot{\beta})}}$,\; 				
	    (\,equivalently,
         for each $\alpha= 1, 2$,
           introduce an
		   {\it ${\Bbb R}$ linear constraint on
	           $\sum_{\dot{\delta}} \bar{\vartheta}_{\dot{\delta}}
			     V^{(\dot{\delta})}_{(\alpha\dot{1}\dot{2})}$},
			which then induces simultaneously the same ${\Bbb R}$-linear constraint on its complex  conjugate
            $\sum_\gamma \vartheta_\gamma
             V^{(\gamma)}_{(12\dot{\alpha})}
            = \sum_\gamma  \vartheta_\gamma
                 \overline{V^{(\dot{\gamma})}_{(\alpha\dot{1}\dot{2})}}$\,)\;
     to remove a redundant degree of freedom.\\

  \item[\LARGE $\cdot$]	
   Denote this set of constrained pre-vector superfields by
    $C^\infty(X^{\physics})^{\flat,\stc}_{(\constrained)}$.
 \end{itemize}

\bigskip

Since for a pre-vector superfield in the shifted expression,
 the component-functions
  $V^{(\alpha)}_{(12\dot{\beta})}$ and $\overline{V^{(\alpha)}_{(12\dot{\beta})}}$
  are fixed under gauge transformations,
the $C^\infty(X)$-submodule $C^\infty(X^{\physics})^{\flat,\stc}_{(\constrained)}$
  of\\     $C^\infty(X^{\physics})^{\flat,\stc}$ is invariant under gauge transformations.
And all the discussion in Sec.\,3.2 on $C^\infty(X^{\physics})^{\flat,\stc}$
 applies to $C^\infty(X^{\physics})^{\flat,\stc}_{(\constrained)}$ as well.
In particular,
 once using a ${\Bbb R}$-linear constraints to remove the redundancies and
  bringing the constrained pre-vector superfield to be in Wess-Zumino gauge,
 the remaining redundancy are the gauge transformations on $V^{(0)}_{[\mu]}$,
 cf. the discussion after Lemma~3.2.5.
                                   % Lemma [representative in Wess-Zuminio gauge]
Once modding out this last class of gauge symmetries,
the remaining degrees of freedom of a pre-vector superfield in Wess-Zumino gauge match exactly with the vector multiplet
 of the representations of the $d=3+1$, $N=1$ supersymmetry algebra.

\bigskip

To fix the notion and for the simplicity of the notation, before proceeding to the next subsection,
 we choose the following most simple ${\Bbb R}$-linear constraint\footnote{Since
                                                              the underlying topology  ${\Bbb R}^4$ of the space-time $X$ is contractible,
                                                                different choices of the linear constraints would bear no significant consequences
																mathematically and likely so also physically.
                                                                 }  % end-footnote
  on pre-vector superfields $\breve{V}$ in the shifted expression
  (and in standard coordinate functions $(x,\theta,\bar{\theta},\vartheta,\bar{\vartheta})$):
 $$
    V^{(2)}_{(12\dot{\beta})}(x)\;
	 =\;  V^{(\dot{2})}_{(\alpha\dot{1}\dot{2})}(x)\;=\; 0\,.
 $$

\bigskip

\begin{definition} {\bf [vector superfield]}\; {\rm
 A pre-vector superfield $\breve{V}$  that satisfies the above constraint when in the shifted expression is called
    a {\it vector superfield}.
 The set of vector superfields on $X^{\physics}$ is a $C^\infty(X)$-module,
   denoted by $C^\infty(X^{\physics})^{\flat\flat,\stc}$.   	
}\end{definition}

\bigskip

It follows that

\bigskip

\begin{lemma} {\bf [from pre-vector superfield to vector superfield]}\;
 Definition~3.2.1,
      % Definition [pre-vector superfield in the shifted expression]
 Lemma~3.2.2,
      % Lemma [${\Bbb R}$-linearity]
 Definition/Lemma~3.2.3,
      % Definition/Lemma [pre-vector superfield in Wess-Zumino gauge]
 Lemma~3.2.4,
	  % Lemma [naturality]
 Lemma~3.2.5,
      % Lemma [representative in Wess-Zuminio gauge]
 Lemma~3.3.1, and
      % Lemma [uniqueness of correcting gauge transformation]
 Definition~3.3.2
      % Definition [supersymmetry in Wess-Zumino gauge]
   in Sec.\,3.2 and Sec.\,3.3
  remain valid with `pre-vector superfield' replaced by `vector superfield'.
\end{lemma}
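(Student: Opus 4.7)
The plan is to observe that the constraint defining a vector superfield --- namely $V^{(2)}_{(12\dot{\beta})}=V^{(\dot{2})}_{(\alpha\dot{1}\dot{2})}=0$ in the shifted expression --- is imposed on precisely the component-functions that the preceding development of pre-vector superfields already treats as \emph{rigid}.  Concretely, the calculation of $\breve{V}+\delta_{\breve{\Lambda}}\breve{V}$ carried out right before Definition~3.2.1 shows that, in the shifted expression, the components $V^{(\alpha)}_{(12\dot{\beta})}$, $\overline{V^{(\alpha)}_{(12\dot{\beta})}}$, and $V^{(0)}_{(12\dot{1}\dot{2})}$ are gauge-invariant.  Hence the $C^\infty(X)$-submodule $C^\infty(X^{\tinyphysics})^{\flat\flat,\stc}$ is gauge-invariant, and all steps referring to gauge transformations of pre-vector superfields (Definition~3.2.1, Lemma~3.2.2, Definition/Lemma~3.2.3, Lemma~3.2.5) restrict verbatim to vector superfields.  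For the $\mathbb{R}$-linearity statement (Lemma~3.2.2) nothing is needed beyond the $\mathbb{R}$-linearity of the constraint itself.

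\smallskip

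\noindent
For Lemma~3.2.4, I would first invoke part (1) for pre-vector superfields and note that the constraint is $C^\infty(X)$-linear.  For part~(2), the explicit change-of-coordinate formulae displayed in the proof of Lemma~3.2.4 show that, upon passing from the standard coordinate-functions $(x,\theta,\bar{\theta},\vartheta,\bar{\vartheta})$ to either the chiral $(x',\theta,\bar{\theta},\vartheta,\bar{\vartheta})$ or the antichiral $(x'',\theta,\bar{\theta},\vartheta,\bar{\vartheta})$ ones, the only component that is modified is $V^{(0)}_{(12\dot{1}\dot{2})}$, via the shift $\pm\,2\sqrt{-1}\,\partial^\mu V^{(0)}_{[\mu]}$.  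The components $V^{(\alpha)}_{(12\dot{\beta})}$ and $\overline{V^{(\alpha)}_{(12\dot{\beta})}}$ are therefore unchanged, so the constraint survives.  Lemma~3.3.1 follows from the analogous uniqueness for pre-vector superfields together with the observation, read off from the explicit form of $\Lambda_{(\eta,\bar{\eta};\breve{V})}$, that the correcting chiral superfield depends $\mathbb{C}$-multilinearly on $\breve{V}$ and on $(\eta,\bar{\eta})$ and hence vanishes when the corresponding components of $\breve{V}$ vanish.

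\smallskip

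\noindent
The heart of the lemma --- and the main obstacle I anticipate --- is the compatibility of the supersymmetry transformations in Wess-Zumino gauge (Definition~3.3.2) with the $\mathbb{R}$-linear constraint.  Here I would inspect directly the closed-form expressions for $Q^{\!\tinyWZ}_\alpha\breve{V}$ and $\bar{Q}^{\!\tinyWZ}_{\dot{\beta}}\breve{V}$ displayed after Definition~3.3.2.  One checks that the new components $(V^{new})^{(\alpha')}_{(12\dot{\delta})}$, which sit as coefficients of $\theta^1\theta^2\bar{\theta}^{\dot{\delta}}\vartheta_{\alpha'}$, arise only through expressions of the type
\[
\partial_\mu\!\left(\sum_{\gamma'}\vartheta_{\gamma'}V^{(\gamma')}_{(12\dot{\delta})}\right),
\]
so that imposing $V^{(2)}_{(12\dot{\delta})}=0$ forces the coefficient of $\vartheta_2$ on the right to vanish, and hence $(V^{new})^{(2)}_{(12\dot{\delta})}=0$.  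The analogous check for $(V^{new})^{(\dot{2})}_{(\alpha\dot{1}\dot{2})}$ uses $\bar{Q}^{\!\tinyWZ}_{\dot{\beta}}\breve{V}$ and the twisted-complex-conjugate version of the same identity, together with the symmetry under $\dag$ of the constraint.  Combining these verifications gives that the infinitesimal supersymmetries in Wess-Zumino gauge map $C^\infty(X^{\tinyphysics})^{\flat\flat,\stc}$ into itself, completing the proof of the lemma.
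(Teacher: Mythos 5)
Your proposal is correct and takes essentially the same route as the paper, which presents this lemma as an immediate consequence (``It follows that\,\ldots'') of exactly the observation you lead with: in the shifted expression the constrained components $V^{(\alpha)}_{(12\dot{\beta})}$, $\overline{V^{(\alpha)}_{(12\dot{\beta})}}$ are gauge-invariant, so the submodule $C^\infty(X^{\physics})^{\flat\flat,\stc}$ is preserved by gauge transformations and every statement of Sec.\,3.2--3.3 restricts to it, with the coordinate-change claim settled by the displayed formulae in the proof of Lemma~3.2.4 (the paper's own proof of Lemma~3.4.3 cites precisely that proof). One harmless slip in your final check, which does not affect validity: in the explicit formulas the $\partial_\mu\bigl(\sum_{\gamma^\prime}\vartheta_{\gamma^\prime}V^{(\gamma^\prime)}_{(12\dot{\delta})}\bigr)$-type terms sit in the top $\theta^1\theta^2\bar{\theta}^{\dot{1}}\bar{\theta}^{\dot{2}}$ slot of $Q_\alpha^{\!\tinyWZ}\breve{V}$, whereas the gaugino slot $\theta^1\theta^2\bar{\theta}^{\dot{\delta}}$ of the varied superfield carries $\sum_{\dot{\beta}}\bar{\eta}^{\dot{\beta}}\bigl(\varepsilon_{\dot{\beta}\dot{\delta}}V^{(0)}_{(12\dot{1}\dot{2})}+\sqrt{-1}\sum_{\alpha,\gamma,\mu,\nu}\varepsilon^{\alpha\gamma}\sigma^\mu_{\alpha\dot{\beta}}\sigma^\nu_{\gamma\dot{\delta}}F_{\mu\nu}\bigr)$ with no $\vartheta$-factor at all, so no $\vartheta_2$-term can ever be generated and the constraint survives even more directly than your stated mechanism suggests.
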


\medskip

\begin{lemma} {\bf [independence of coordinate functions chosen when in Wess-Zumino gauge]}\;
 When in Wess-Zumino gauge,
  the conditions
   $\;V^{(2)}_{(12\dot{\beta})}
        =  V^{(\dot{2})}_{(\alpha\dot{1}\dot{2})} = 0\;$
   on a pre-vector superfield $\breve{V}$
   remain to hold
    whether one expresses $\breve{V}$ in terms of the chiral coordinate functions
	   $(x^\prime, \theta, \bar{\theta}, \vartheta, \bar{\vartheta})$
    or antichiral coordinate functions
      $(x^{\prime\prime}, \theta, \bar{\theta}, \vartheta, \bar{\vartheta})$.
\end{lemma}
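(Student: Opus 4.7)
The plan is to leverage the explicit coordinate-change formula established in the proof of Lemma 3.2.4, together with the nilpotency relations $\theta^1\theta^2 \cdot \theta^\alpha = 0$ and $\bar{\theta}^{\dot 1}\bar{\theta}^{\dot 2}\cdot\bar{\theta}^{\dot\beta}=0$. First I would note that in Wess-Zumino gauge the shifted expression of a pre-vector superfield coincides with the ordinary expression, since each correction term picked up when shifting (Definition/Lemma 3.2.3) has as coefficient one of $V^{(0)}_{(0)}$, $V^{(\alpha)}_{(\alpha)}$, $V^{(12)}_{(12)}$, and all of these vanish in Wess-Zumino gauge. Consequently the constraints $V^{(2)}_{(12\dot{\beta})}=V^{(\dot 2)}_{(\alpha\dot 1\dot 2)}=0$, which by definition are imposed on the shifted expression, can be read off directly from the $\theta^1\theta^2\bar\theta^{\dot\beta}$- and $\theta^\alpha\bar\theta^{\dot 1}\bar\theta^{\dot 2}$-components of $\breve V$ in the standard coordinates $(x,\theta,\bar\theta,\vartheta,\bar\vartheta)$.

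Next I would apply the computation in the proof of Lemma 3.2.4, which gave the closed form of $\breve V$ after substituting $x=x'-\sqrt{-1}\,\theta\boldsymbol{\sigma}\bar\theta^t$. The key observation is that the Taylor expansion of any coefficient $V^{(\mbox{\tiny$\bullet$})}_{(\mbox{\tiny$\bullet$})}(x)$ produces correction terms proportional to $\theta\sigma^\mu\bar\theta$. When such a correction multiplies the $\theta^1\theta^2\bar\theta^{\dot\beta}$-monomial the result vanishes because $\theta^1\theta^2\cdot\theta^\alpha=0$; when it multiplies $\theta^\alpha\bar\theta^{\dot 1}\bar\theta^{\dot 2}$ it vanishes because $\bar\theta^{\dot 1}\bar\theta^{\dot 2}\cdot\bar\theta^{\dot\beta}=0$. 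Hence the coefficients of these two families of monomials are transported unchanged from $V^{(\alpha)}_{(12\dot\beta)}(x)$ to $V^{(\alpha)}_{(12\dot\beta)}(x')$, respectively from $\overline{V^{(\beta)}_{(12\dot\alpha)}}(x)$ to $\overline{V^{(\beta)}_{(12\dot\alpha)}}(x')$. The same argument, with $x''=x+\sqrt{-1}\,\theta\boldsymbol{\sigma}\bar\theta^t$, governs the antichiral case.

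To finish I would check that the shifted expression of $\breve V$ in the chiral coordinates $(x',\theta,\bar\theta,\vartheta,\bar\vartheta)$ (respectively antichiral $(x'',\theta,\bar\theta,\vartheta,\bar\vartheta)$) still agrees with the ordinary expression on the relevant monomials: the corrections from shifting involve $V^{(\alpha)}_{(\alpha)}(x')$ and $V^{(0)}_{(0)}(x')$, both of which are already zero after the coordinate change since the vanishing of these components is inherited from Wess-Zumino gauge (the lower components are unaffected by the coordinate change by the same nilpotency argument). Thus the constraints $V^{(2)}_{(12\dot\beta)}=V^{(\dot 2)}_{(\alpha\dot 1\dot 2)}=0$ read off from the shifted expression in $x$ are equivalent to the analogous vanishing statements in $x'$ or $x''$.

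The only thing requiring care is the bookkeeping verifying that no Taylor term of a lower-component coefficient in the shifted form can sneak into the $\theta^1\theta^2\bar\theta^{\dot\beta}$ or $\theta^\alpha\bar\theta^{\dot 1}\bar\theta^{\dot 2}$ slots under the coordinate change; but this is controlled uniformly by the two nilpotency identities above, so no genuine obstacle is expected. The main point is conceptual rather than computational: the constraint is imposed on components that already carry the maximal available factor in one of $\theta$ or $\bar\theta$, and the chiral/antichiral coordinate shift is quadratic in $(\theta,\bar\theta)$, so it cannot alter those components.
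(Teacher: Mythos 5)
Your proposal is correct and takes essentially the same route as the paper: the paper's proof of this lemma consists precisely of invoking the explicit coordinate-change computation in the proof of Lemma~3.2.4, where substituting $x = x^\prime \mp \sqrt{-1}\,\theta\boldsigma\bar{\theta}^t$ and Taylor-expanding shows that only the $\theta^1\theta^2\bar{\theta}^{\dot{1}}\bar{\theta}^{\dot{2}}$-coefficient is modified, so the $\theta^1\theta^2\bar{\theta}^{\dot{\beta}}$- and $\theta^\alpha\bar{\theta}^{\dot{1}}\bar{\theta}^{\dot{2}}$-components, and hence the constraints $V^{(2)}_{(12\dot{\beta})} = V^{(\dot{2})}_{(\alpha\dot{1}\dot{2})} = 0$, are transported unchanged. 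Your additional check that in Wess-Zumino gauge the shifted expression coincides with the plain expression (since the shift corrections carry the vanishing coefficients $V^{(0)}_{(0)}$, $V^{(\alpha)}_{(\alpha)}$, $V^{(12)}_{(12)}$) is a sound piece of bookkeeping the paper leaves implicit, but it does not change the argument.
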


\medskip

\begin{proof}
 This follows from the proof of Lemma~3.2.4 in Sec.\,3.2.
                                               % Lemma [naturality]                                             
											
\end{proof}

\bigskip
 
\subsection{Supersymmetric $U(1)$ gauge theory with matter on $X$ in terms of $X^{\physics}$}

With the preparations in Sec.\,3.1, Sec.\,3.2, and Sec.\,3.4, we are now ready to construct
 a supersymmetric $U(1)$ gauge theory with matter on $X$ in terms of $X^{\physics}$.\footnote{{\it Note
                                                               for mathematicians.}\hspace{1em}
                                                             We are in no position to illuminate the original physical insight in the construction
															   of the supersymmetric action functional for $U(1)$ gauge theory with matter.
															 As in Sec.\,2, our attempt here is only to demonstrate that
															  {\it there is a precise
															   (complexified, ${\Bbb Z}/2$-graded) $C^\infty$-Algebraic Geometry
															   tower construction in which everything in Chap.\,VI and the part of Chap.\,VII
															    on the $U(1)$ case of [W-B] by Julius Wess \& Jonathan Bagger is accounted for
																and mathematically harmoniously interpreted.}
														     Nevertheless,
															 mathematicians are highly recommended to try
															    [Argu: Sec.\,4.3.2] of {\sl Riccardo Argurio},
															    [G-G-R-S: Sec.\,4.2.a.1] of
																  {\sl James Gates, Jr., Marcus Grisaru, Martin Ro\u{c}ek, Warren Siegel},
																[We: Sec.\,15.2] of {\sl Peter West}
															   to at least get some sense of why and from where some quantities in this subsection
															    are considered.															
	                                                                     } % end-footnote

\bigskip

\begin{flushleft}
{\bf Two basic derived\footnote{Here,
                                              we are not using the term `{\it derived}' in any deeper sense.
											  We only mean that such superfields arise from the combination of more basic superfields
											    such as chiral superfields and vector superfields.
											  For example, the superpotential is a polynomial (or more generally holomorphic function) of
											    chiral superfields and thus can be regarded as a "derived" superfield.
											  Caution that these derived superfields may go beyond $C^\infty(X^{\tinyphysics})$
											     and lie only in $C^\infty(\widehat{X}^{\widehat{\boxplus}})$.
												}   % end-footnote
          superfields: gaugino superfield and kinetic-term superfield}
\end{flushleft}
Unlike chiral or antichiral superfields, a vector superfield $\breve{V}$
 contains no components that involve space-time derivatives.
For that reason, to construct a supersymmetric action functional for components of $\breve{V}$,
one needs to work out appropriate derived superfields from $\breve{V}$  first.

\bigskip

\begin{lemma-definition} {\bf [gaugino superfield]}\;
 {\rm ([W-B: Chap.\,VI, Eq.\,(6.7)])}\\
 Let $\breve{V}\in C^\infty(X^{\physics})^{\flat\flat,\stc}$ be a vector superfield.
 Define\footnote{The
                                design here  is made so that\;
								  $W_\alpha= \breve{V}_{(\alpha\dot{1}\dot{2})}\,
								    +\,(\mbox{terms of $(\theta,\bar{\theta})$-degree $\ge 1$})$\;  and\\
								  $\bar{W}_{\dot{\beta}}=\breve{V}_{(12\dot{\beta})}\,
								    +\, (\mbox{terms of $(\theta,\bar{\theta})$-degree $\ge 1$})$.			
                                Caution that, while $e_{\alpha^\prime}= \partial/\partial \theta^\alpha\,+\,\cdots$,\;
                                    $e_{\beta^{\prime\prime}}
									   = -\,\partial/\partial\bar{\theta}^{\dot{\beta}}\,+\,\cdots\,.$
								          } % end-footnote
   $$
    W_\alpha\;
	  :=\; e_{2^{\prime\prime}}e_{1^{\prime\prime}} e_{\alpha^\prime} \breve{V}
	\hspace{2em}
	(\,\mbox{resp.}\;\;
	\bar{W}_{\dot{\beta}}\;
	  :=\; e_{1^\prime}e_{2^\prime} e_{\beta^{\prime\prime}} \breve{V}
	   \,)\,
   $$
   $\alpha=1,2$, $\dot{\beta}=\dot{1}, \dot{2}$.
 Then
  (1) $W_\alpha$ (resp.\ $\bar{W}_{\dot{\beta}}$) is chiral (resp.\ antichiral).
  (2) $W_\alpha$ and $\bar{W}_{\dot{\beta}}$ are invariant under gauge transformations on $\breve{V}$.

 {\rm
     $W_\alpha$, $\bar{W}_{\dot{\beta}}$
	   are called the {\it gaugino superfields} associated to the vector superfield $\breve{V}$.
      }
\end{lemma-definition}

\medskip

\begin{proof}
 For Statement (1),
 $$
  \begin{array}{llrll}
   e_{1^{\prime\prime}}W_\alpha
    & = & -\,  e_{2^{\prime\prime}}(e_{1^{\prime\prime}})^2 e_{\alpha^\prime}\breve{V}
	       & = & 0\,,          \\
  e_{2^{\prime\prime}}W_\alpha
    & = &   (e_{2^{\prime\prime}})^2 e_{1^{\prime\prime}} e_{\alpha^\prime}\breve{V}
	       &  = & 0
  \end{array}
 $$
  since $(e_{1^{\prime\prime}})^2=(e_{2^{\prime\prime}})^2=0$.
 Similarly for the antichirality of  $\bar{W}_{\dot{\beta}}$.
  
 For Statement (2),
   under a gauge transformation
     $\breve{V}\rightarrow \breve{V} -\sqrt{-1}(\breve{\Lambda}-\breve{\Lambda}^\dag)$
	 on $\breve{V}$ specified by a chiral superfield $\breve{\Lambda}$,
   \begin{eqnarray*}
     W_\alpha  & \rightarrow
	  & e_{2^{\prime\prime}}e_{1^{\prime\prime}}e_{\alpha^\prime}
            \mbox{\large $($}
			  \breve{V} -\sqrt{-1}(\breve{\Lambda}-\breve{\Lambda}^\dag)
			\mbox{\large $)$}\;\; 		
         =\;\; W_\alpha\,
	              -\,\sqrt{-1}\, e_{2^{\prime\prime}}e_{1^{\prime\prime}}e_{\alpha^\prime}
                                              \breve{\Lambda}  \\				
	 && =\;  W_\alpha\,
	              -\,\sqrt{-1}\,
				      \mbox{\large $($}
					   \{e_{1^{\prime\prime}}, e_{\alpha^\prime}\} e_{2^{\prime\prime}}
					   -  e_{2^{\prime\prime}}e_{\alpha^\prime}e_{1^{\prime\prime}}
					  \mbox{\large $ )$}
                                      \breve{\Lambda}\;\;\;=\;\;\;  W_\alpha
   \end{eqnarray*}
 since
   $\Lambda^\dag$ is antichiral (thus, $e_{\alpha^\prime}\breve{\Lambda}^\dag=0$)  and
   $\Lambda$ is chiral
     (thus, $e_{1^{\prime\prime}}\breve{\Lambda}
	                = e_{2^{\prime\prime}}\breve{\Lambda}=0$).
 Similarly for $\bar{W}_{\dot{\beta}}$.
 
\end{proof}
 
\bigskip

It follows that in the construction of a supersymmetric $U(1)$-gauge theory with matter,
 one may assume that the vector superfield $\breve{V}$ is in Wess-Zumino gauge,
  which encodes the component fields
    $V^{(0)}_{[\mu]}$, $V^{(\alpha)}_{(12\dot{\beta})}$, and
    $V^{(0)}_{(12\dot{1}\dot{2})}$ on $X$.
Here, $\mu=0,1,2,3$, $\alpha=1,2$, $\dot{\beta}=\dot{1}, \dot{2}$.
For $\breve{V}$  in Wess-Zumino gauge,
 $\breve{V}^3=0$ and
 its exponential $e^{\breve{V}}$
   is simply the polynomial $1+\breve{V}+\frac{1}{2}\breve{V}^2$ in $\breve{V}$.

\bigskip

\begin{lemma-definition} {\bf [gauge-invariant kinetic term for chiral superfield]}\; Let\\
 $\breve{V}\in C^\infty(X^{\physics})^{\flat\flat,\stc}$
  be a vector superfield and $\breve{\Phi}$ be a chiral superfield on $X^{\physics}$.
 Then the product
   $$
     \breve{\Phi}^\dag  e^{\breve{V}}\breve{\Phi}
   $$
  is gauge-invariant.
 {\rm
  Since the expression of the product in $(x,\theta,\bar{\theta},\vartheta,\bar{\vartheta})$
   involves space-time derivatives ($\partial_\mu$, $\mu=0,1,2,3$) of components of $\breve{\Phi}$,
   this product is called the {\it gauge-invariant kinetic term for the chiral superfield} $\breve{\Phi}$.
   }
\end{lemma-definition}

\medskip

\begin{proof}
 %Recall from Sec.\,??? that the definition and property of the exponential $e^{\breve{f}}$
 %  of a function $\breve{f}\in C^\infty(X^{\physics})$.
   %
   %--------------------------------------------------------------------------------------
   % \marginpar{\raggedright\tiny\vspace{-12em} $\bullet$
   %         Explain in Sec.\,??? that the exponential function $e^{\breve{f}}$
   %		   is well-defined for all $\breve{f}\in C^\infty(X^{physics})$
   %		   and satisfies
   %		   the property $e^{\breve{f}_1+\breve{f}_2}= e^{\breve{f}_1}e^{\breve{f}_2}$. \\
   %		 Indeed, $h(\breve{f})$ should be well-defined for any holomorphic function on ${\Bbb C}$.\\
   %		 To be completed in Sec.\,???. }
   %------------=============------------------------===============
   %
 By construction,
 under the gauge transformation specified by a chiral superfield $\breve{\Lambda}$,
    \begin{eqnarray*}
       \breve{\Phi}^\dag  e^{\breve{V}}\breve{\Phi}
	     & \longrightarrow
		 & \mbox{\Large $($}
		      \Phi^\dag  e^{-\,\sqrt{-1}\,\breve{\Lambda}^\dag}
		    \mbox{\Large $)$}\,
		      e^{\breve{V}-\sqrt{-1}(\breve{\Lambda}-\breve{\Lambda}^\dag)}\,
		    \mbox{\Large $($}
		      e^{\sqrt{-1}\,\breve{\Lambda}}\breve{\Phi}
		    \mbox{\Large $)$}                                 \\
    && \hspace{1em}
	        =\; \breve{\Phi}^\dag\,
		        e^{-\,\sqrt{-1}\breve{\Lambda}^\dag
				        + \breve{V}-\sqrt{-1}(\breve{\Lambda}-\breve{\Lambda}^\dag)
						+\sqrt{-1}\,\breve{\Lambda}}\,
			   \breve{\Phi}\;\;
            =\;\;    \breve{\Phi}^\dag  e^{\breve{V}}\breve{\Phi}\,.			
   \end{eqnarray*}
\end{proof}

\bigskip

Note that
  in general $W_\alpha$, $\bar{W}_{\dot{\beta}}$
     only lie in $C^\infty(\widehat{X}^{\widehat{\boxplus}})$,
     not in $C^\infty(X^{\physics})$,
  while
    $\breve{\Phi}^\dag e^{\breve{V}}\breve{\Phi}$ always lies in $C^\infty(X^{\physics})$.

\bigskip

\begin{flushleft}
{\bf Explicit computations/formulae}
\end{flushleft}
The explicit expression of these derived  superfields and some related products can be computed via spinor calculus.
The results are listed below.
 \begin{itemize}
  \item[]  \makebox[6em][l]{\it Notations}
   In the expressions below,
     the Minkowski space-time metric tensor $(-,+,+,+)$ is used to raise or lower the space-time index 
	   $\mu,\, \nu,\, \cdots$
     while the $\varepsilon$-tensor is used to raise or lower the spinor index
        $\alpha, \gamma,\,\cdots, \dot{\beta},\dot{\delta},\,\cdots$;
     $F_{\mu\nu}:= \partial_\mu V^{(0)}_{[\nu]}-\partial_\nu V^{(0)}_{[\mu]}$;
	 $\delta_{\LARGEdot}^{\LARGEdot}$ is the Kronecker $\delta$
	    (when $\delta$ not served as a spinor index);
     $\varepsilon_{\mu\nu\mu^\prime\nu^\prime}$
      	 indicates the standard volume-form on the Minkowski space-time $X$ with $\varepsilon_{0123}= -1$;
     all the summations $\sum_{\cdots}$ are written explicitly.
 \end{itemize}

\bigskip

\noindent $\tinybullet$
$W_\alpha$:
(in chiral coordinate functions $(x^\prime, \theta,\bar{\theta},\vartheta,\bar{\vartheta})$
    on $\widehat{X}^{\widehat{\boxplus}}$)
 {\small
  \begin{eqnarray*}
  W_\alpha
    & = & \bar{\vartheta}_{\dot{1}}\overline{V^{(1)}_{(12\dot{\alpha})}}
	          + \sum_\gamma  \theta_\gamma
			         \mbox{\Large $($}
			       -\,\delta^\gamma_\alpha V^{(0)}_{(12\dot{1}\dot{2})}
				   + \sqrt{-1}\,\sum_{\mu,\nu}
				        (\sigma^\mu\bar{\sigma}^\nu)_\alpha^{\;\;\gamma}
						   \mbox{\large $($}
						     \partial_\mu V^{(0)}_{[\nu]} -\partial_\nu V^{(0)}_{[\mu]}
						   \mbox{\large $)$}
				   \mbox{\Large $)$} \\
      && +\,\theta^1\theta^2
                  \mbox{\Large $($}
				   2\sqrt{-1} \sum_{\dot{\beta},\dot{\delta}}
				        \varepsilon^{\dot{\beta}\dot{\delta}}
						  \sum_\mu \sigma^\mu_{\alpha\dot{\beta}}
						   \partial_\mu (\vartheta_1 V^{(1)}_{(12\dot{\delta})})
                  \mbox{\Large $)$}\,.				
  \end{eqnarray*}
  } % end-small
			
\bigskip

\noindent $\tinybullet$
$(W_1W_2)|_{\theta^1\theta^2}$:  (in coordinate functions $x$ on $X$)
 {\small
 \begin{eqnarray*}
  (W_1W_2)|_{\theta^1\theta^2}
     & = &   -\,\mbox{\Large $\frac{1}{2}$}\,\sum_{\mu,\nu} F^{\mu\nu}F_{\mu\nu}
	     + \mbox{\Large $\frac{\sqrt{-1}}{4}$}
		       \sum_{\mu,\nu,\mu^\prime, \nu^\prime}
			      \varepsilon_{\mu\nu\mu^\prime\nu^\prime}
				  F^{\mu\nu} F^{\mu^\prime\nu^\prime}                \\
	 && -\, \mbox{\Large $\frac{\sqrt{-1}}{2}$}
	             \sum_{\alpha,\dot{\beta},\mu}
				    (\bar{\vartheta}_{\dot{1}} V^{(\dot{1})}_{(\alpha\dot{1}\dot{2})})
					    \bar{\sigma}^ {\mu, \dot{\beta}\alpha}
					     \partial_\mu ( \vartheta_1 V^{(1)}_{(12\dot{\beta})})\;		
		   +\, \mbox{\Large $\frac{1}{4}$}  {V^{(0)}_{(12\dot{1}\dot{2})}}^2\,.
 \end{eqnarray*}
  } % \end-small
 
\bigskip

\noindent $\tinybullet$
$\bar{W}_{\dot{\beta}}$:
 (in antichiral coordinate functions $(x^{\prime\prime}, \theta,\bar{\theta}, \vartheta, \bar{\vartheta})$
  on $\widehat{X}^{\widehat{\boxplus}}$)
{\small
 \begin{eqnarray*}
  \bar{W}_{\dot{\beta}}
   & = & \vartheta_1 V^{(1)}_{(12\dot{\beta})}
       + \sum_{\dot{\delta}} \bar{\theta}^{\dot{\delta}}
	       \mbox{\Large $($}
	        -\varepsilon_{\dot{\beta}\dot{\delta}} V^{(0)}_{(12\dot{1}\dot{2})}
			-\sqrt{-1}\sum_{\dot{\delta}^\prime}
			   \varepsilon_{\dot{\beta}\dot{\delta}^\prime}
			     \sum_{\mu,\nu}
				   (\bar{\sigma}^\mu \sigma^\nu)^{\dot{\delta}^\prime}_{\;\;\dot{\delta}}
				    \mbox{\large $($}
					 \partial_\mu V^{(0)}_{[\nu]} -\partial_\nu V^{(0)}_{[\mu]}
					\mbox{\large $)$}
	       \mbox{\Large $)$}\\
   && + \bar{\theta}^{\dot{1}}\bar{\theta}^{\dot{2}}
              \mbox{\Large $($}
			   -2\sqrt{-1}\sum_{\dot{\delta},\gamma}
			    \varepsilon_{\dot{\beta}\dot{\delta}}\bar{\sigma}^{\mu, \dot{\delta}\gamma}
				 \partial_\mu
				 (\bar{\vartheta}_{\dot{1}}\overline{V^{(1)}_{(12\dot{\gamma})}})
			  \mbox{\Large $)$}\,.
 \end{eqnarray*}
} % end-small
 
\bigskip

\noindent $\tinybullet$
$(\bar{W}_{\dot{1}}\bar{W}_{\dot{2}})|
    _{\bar{\theta}^{\dot{1}}\bar{\theta}^{\dot{2}}}$: (in coordinate functions $x$ on $X$)
 {\small
 \begin{eqnarray*}
  (\bar{W}_{\dot{1}}\bar{W}_{\dot{2}})|
       _{\bar{\theta}^{\dot{1}}\bar{\theta}^{\dot{2}}}
  & = &   -\, \mbox{\Large $\frac{1}{2}$}\,\sum_{\mu,\nu} F^{\mu\nu}F_{\mu\nu}
	     - \mbox{\Large $\frac{\sqrt{-1}}{4}$}
		       \sum_{\mu,\nu,\mu^\prime, \nu^\prime}
			      \varepsilon_{\mu\nu\mu^\prime\nu^\prime}
				  F^{\mu\nu} F^{\mu^\prime\nu^\prime}                \\
	 && +\, \mbox{\Large $\frac{\sqrt{-1}}{2}$}
	             \sum_{\alpha,\dot{\beta},\mu}
				     \partial_\mu
					   (\bar{\vartheta}_{\dot{1}} V^{(\dot{1})}_{(\alpha\dot{1}\dot{2})})
					    \bar{\sigma}^{\mu, \dot{\beta}\alpha}
					 ( \vartheta_1 V^{(1)}_{(12\dot{\beta})})\;		
		   +\, \mbox{\Large $\frac{1}{4}$}  {V^{(0)}_{(12\dot{1}\dot{2})}}^2\,.
 \end{eqnarray*}
 } % end-small
 
\bigskip

\noindent $\tinybullet$
$(\breve{\Phi}^\dag e^{\breve{V}}\breve{\Phi})|
  _{\theta^1\theta^2
        \bar{\theta}^{\dot{1}}\bar{\theta}^{\dot{2}}}$:
(in coordinate functions $x$ on $X$)
 {\small
 \begin{eqnarray*}
   \breve{\Phi}^\dag e^{\breve{V}}\breve{\Phi}
   & = & \breve{\Phi}^\dag \breve{\Phi}\,
         +\,  \breve{\Phi}^\dag  \breve{V}\breve{\Phi}\,
		 +\,\mbox{\Large $\frac{1}{2}$}\, \breve{\Phi}^\dag  \breve{V}^2 \breve{\Phi} \\
   & = &\;\;(\mbox{I})\;\; +\;\; (\mbox{II})\;\; +\;\; (\mbox{III})\,.
 \end{eqnarray*}Term}  % end-small
 (I) is computed in Sec.\,2.2.
 {\small
  \begin{eqnarray*}
   \mbox{Term (III)} & = &
     \theta^1\theta^2\bar{\theta}^{\dot{1}}\bar{\theta}^{\dot{2}}\cdot
	   \overline{\Phi^{(0)}_{(0)}}\Phi^{(0)}_{(0)}
	     \sum_\mu V^{[\mu];(0)}V^{(0)}_{[\mu]}\,.
  \end{eqnarray*}
  } % end-small
{\small
 \begin{eqnarray*}
  \lefteqn{
   \mbox{Term (II)}|_{\theta^1\theta^2\bar{\theta}^{\dot{1}}\bar{\theta}^{\dot{2}}}}\\[1.8ex]
   && =\;
     \overline{\Phi^{(0)}_{(0)}} \Phi^{(0)}_{(0)}V^{(0)}_{(12\dot{1}\dot{2})}
	 - \sum_{\alpha,\gamma}\varepsilon^{\alpha\gamma}
             \overline{\Phi^{(0)}_{(0)}}
			   (\vartheta_\alpha \Phi^{(\alpha)}_{(\alpha)})
			     (\bar{\vartheta}_{\dot{1}} \overline{V^{(1)}_{(12\dot{\gamma})}})
	 + \sum_{\dot{\beta},\dot{\delta}} \varepsilon^{\dot{\beta}\dot{\delta}}
	       \Phi^{(0)}_{(0)}
		   (\bar{\vartheta}_{\dot{\beta}}\overline{\Phi^{(\beta)}_{(\beta)}})
		   (\vartheta_1 V^{(1)}_{(12\dot{\delta})})\\
  && \hspace{2em}
     +\,2\,\sqrt{-1}\, \sum_\mu
                \mbox{\large $($}
                 \overline{\Phi^{(0)}_{(0)}} \partial_\mu \Phi^{(0)}_{(0)}
				  - \partial_\mu \overline{\Phi^{(0)}_{(0)}}\,\Phi^{(0)}_{(0)}
                \mbox{\large $)$}
				V^{[\mu];(0)}
		- \sum_{\alpha,\dot{\beta},\mu}
		    (\bar{\vartheta}_{\dot{\beta}} \overline{\Phi^{(\beta)}_{(\beta)}})
			   \bar{\sigma}^{\mu, \dot{\beta}\alpha} V^{(0)}_{[\mu]}
			(\vartheta_\alpha \Phi^{(\alpha)}_{(\alpha)})\,.
 \end{eqnarray*}}Altogether % end-small
 {\small
 \begin{eqnarray*}
   \lefteqn{(\breve{\Phi}^\dag e^{\breve{V}}\breve{\Phi})
           |_{\theta^1\theta^2\bar{\theta}^{\dot{1}}\bar{\theta}^{\dot{2}}}  }\\[1.8ex]
    &&=\;
	  -\,\overline{\Phi^{(0)}_{(0)}}\square \Phi^{(0)}_{(0)}
	  - \square \overline{\Phi^{(0)}_{(0)}}\,\Phi^{(0)}_{(0)}
	  + 2\sum_\mu \partial_\mu\overline{\Phi^{(0)}_{(0)}}\,\partial^\mu\Phi^{(0)}_{(0)}
	  + (\vartheta_1\vartheta_2\Phi^{(12)}_{(12)})
	          (\bar{\vartheta}_{\dot{1}}\bar{\vartheta}_{\dot{2}}
			        \overline{\Phi^{(12)}_{(12)}})   \\
	&& \hspace{2em}
	  +\,2\sqrt{-1}\,\sum_\mu
	         V^{[\mu];(0)} \overline{\Phi^{(0)}_{(0)}}\,
			   \partial_\mu \Phi^{(0)}_{(0)}
      -\,2\sqrt{-1}\,\sum_\mu
	         \partial_\mu\overline{\Phi^{(0)}_{(0)}}\,
               V^{[\mu],(0)} \Phi^{(0)}_{(0)}
	  +  \mbox{\large $($}
             \sum_\mu V^{(0)}_{[\mu]}V^{[\mu];(0)}
          \mbox{\large $)$}\,
             \overline{\Phi^{(0)}_{(0)}}\Phi^{(0)}_{(0)}		  \\
    && \hspace{2em}
       -\,\sqrt{-1}\, \sum_{\mu,\alpha,\dot{\beta}}	
	          (\bar{\vartheta}_{\dot{\beta}}\overline{\Phi^{(\beta)}_{(\beta)}})
			      \bar{\sigma}^{\mu,\dot{\beta}\alpha}\,
				  \partial_\mu (\vartheta_\alpha\Phi^{(\alpha)}_{(\alpha)})
	   +\,\sqrt{-1}\sum_{\mu,\alpha,\dot{\beta}}
	       \partial_\mu(\bar{\vartheta}_{\dot{\beta}}\overline{\Phi^{(\beta)}_{(\beta)}})
		     \bar{\sigma}^{\mu,\dot{\beta}\alpha}
			 (\vartheta_\alpha\Phi^{(\alpha)}_{(\alpha)})   \\
    &&\hspace{2em}			
	  -\,\sum_{\mu,\alpha,\dot{\beta}}
	         V^{(0)}_{[\mu]}
			 (\bar{\vartheta}_{\dot{\beta}}\overline{\Phi^{(\beta)}_{(\beta)}})
			  \bar{\sigma}^{\mu,\dot{\beta}\alpha}
			  (\vartheta_\alpha \Phi^{(\alpha)}_{(\alpha)})              \\
    && \hspace{2em}
	  +\,\Phi^{(0)}_{(0)}
	         \sum_{\dot{\beta},\dot{\delta}}
			   \varepsilon^{\dot{\beta}\dot{\delta}}
			     (\vartheta_1 V^{(1)}_{(12\dot{\beta})})
				 (\bar{\vartheta}_{\dot{\delta}}\overline{\Phi^{(\delta)}_{(\delta)}})
      -\, \overline{\Phi^{(0)}_{(0)}}				
	        \sum_{\alpha,\gamma}\varepsilon^{\alpha\gamma}
			  (\bar{\vartheta}_{\dot{1}}\overline{V^{(1)}_{(12\dot{\alpha})}})
			  (\vartheta_\gamma \Phi^{(\gamma)}_{(\gamma)})
      +\, V^{(0)}_{(12\dot{1}\dot{2})}
	         \overline{\Phi^{(0)}_{(0)}} \Phi^{(0)}_{(0)} \\[1.8ex]
   &&=\;
     -\, \sum_\mu \partial_\mu
           \mbox{\large $($}
             \overline{\Phi^{(0)}_{(0)}}\, \partial^\mu \Phi^{(0)}_{(0)}
			 + \partial^\mu \overline{\Phi^{(0)}_{(0)}}\,\Phi^{(0)}_{(0)}
           \mbox{\large $)$}		
	+\,4\, \sum_\mu \nabla_\mu
	      \overline{\Phi^{(0)}_{(0)}}\, \nabla^\mu \Phi^{(0)}_{(0)}\,
    +\,(\vartheta_1\vartheta_2\Phi^{(12)}_{(12)})
	          (\bar{\vartheta}_{\dot{1}}\bar{\vartheta}_{\dot{2}}
			        \overline{\Phi^{(12)}_{(12)}})   \\		
   && \hspace{2em}
       -\,\sqrt{-1}\, \sum_{\mu,\alpha,\dot{\beta}}	
	          (\bar{\vartheta}_{\dot{\beta}}\overline{\Phi^{(\beta)}_{(\beta)}})
			      \bar{\sigma}^{\mu,\dot{\beta}\alpha}\,
				  \nabla_\mu (\vartheta_\alpha\Phi^{(\alpha)}_{(\alpha)})
	   +\,\sqrt{-1}\sum_{\mu,\alpha,\dot{\beta}}
	       \nabla_\mu(\bar{\vartheta}_{\dot{\beta}}\overline{\Phi^{(\beta)}_{(\beta)}})
		     \bar{\sigma}^{\mu,\dot{\beta}\alpha}
			 (\vartheta_\alpha\Phi^{(\alpha)}_{(\alpha)})   \\
   && \hspace{2em}
	  +\,\Phi^{(0)}_{(0)}
	         \sum_{\dot{\beta},\dot{\delta}}
			   \varepsilon^{\dot{\beta}\dot{\delta}}
			     (\vartheta_1 V^{(1)}_{(12\dot{\beta})})
				 (\bar{\vartheta}_{\dot{\delta}}\overline{\Phi^{(\delta)}_{(\delta)}})
      -\, \overline{\Phi^{(0)}_{(0)}}				
	        \sum_{\alpha,\gamma}\varepsilon^{\alpha\gamma}
			  (\bar{\vartheta}_{\dot{1}}\overline{V^{(1)}_{(12\dot{\alpha})}})
			  (\vartheta_\gamma \Phi^{(\gamma)}_{(\gamma)})
      +\, V^{(0)}_{(12\dot{1}\dot{2})}
	         \overline{\Phi^{(0)}_{(0)}} \Phi^{(0)}_{(0)}
 \end{eqnarray*}}where    % end-small
 \begin{eqnarray*}
    \nabla_\mu \Phi^{(0)}_{(0)}\;
     :=\; (\partial_\mu - \mbox{\large $\frac{\sqrt{-1}}{2}$}V^{(0)}_{[\mu]})\,
            	 \Phi^{(0)}_{(0)}\,,
	 &
    \nabla_\mu \overline{\Phi^{(0)}_{(0)}}\;
     :=\; (\partial_\mu + \mbox{\large $\frac{\sqrt{-1}}{2}$}V^{(0)}_{[\mu]})\,
            	 \overline{\Phi^{(0)}_{(0)}}\,,     \\[1.2ex]
   \nabla_\mu \Phi^{(\alpha)}_{(\alpha)}\;
     :=\; (\partial_\mu - \mbox{\large $\frac{\sqrt{-1}}{2}$}V^{(0)}_{[\mu]})\,
            	 \Phi^{(\alpha)}_{(\alpha)}\,,
	 &
    \nabla_\mu \overline{\Phi^{(\beta)}_{(\beta)}}\;
     :=\; (\partial_\mu + \mbox{\large $\frac{\sqrt{-1}}{2}$}V^{(0)}_{[\mu]})\,
            	 \overline{\Phi^{(\beta)}_{(\beta)}}
 \end{eqnarray*}
 are the covariant derivatives  defined by $\breve{V}$ on components of $\breve{\Phi}$.
Note that this is consistent with Lemma~3.2.6;
                                                  % Lemma  [restriction $\nabla^{\breve{V}}$
										          %                 of $\widehat{\nabla}^{\breve{V}}$ to $X^{\Bbb C}$]
cf.\;footnote~22.									
         % footnote on the choice of $\breve{V}$ vs.\ $-\breve{V}$

\bigskip

\begin{flushleft}
{\bf A supersymmetric action functional for $U(1)$ gauge theory with matter on $X$}
\end{flushleft}
Now restore the electric charge $e_m$ in the discussion.
Then the gauge-invariant kinetic term for the matter chiral superfield $\breve{\Phi}$ becomes
 $$
   \breve{\Phi}^\dag e^{e_m \breve{V}}\breve{\Phi}\,.
 $$
Thus, replacing $\breve{\Lambda}$  with $e_m\breve{\Lambda}$ and
                          $\breve{V}$ with $e_m\breve{V}$ in the above discussion and computations,
 we recover the charge $e_m$ case we well.

Recall the standard purge-evaluation/index-contracting map
 ${\cal P}: {\cal O}_X^{\physics}\rightarrow \widehat{\cal O}_X$
  from Sec.\,2.2    and
redenote:
 $$
  \begin{array}{lll}
   \Phi_{(0)}\;:=\;  \Phi_{(0)}\,,\;\;
     &  \Phi_{(\alpha)}\;:=\; {\cal P}(\vartheta_\alpha\Phi^{(\alpha)}_{(\alpha)})\,,\;\;
     & \Phi_{(12)}\;:=\; {\cal P}(\vartheta_1\vartheta_2\Phi^{(12)}_{(12)})\,,\;\; \\[1.2ex]
   V_{[\mu]}\;:=\; V^{(0)}_{([\mu])}\,,\;\;
     & V_{(\alpha\dot{1}\dot{2})}\;:=\;
     {\cal P}(\bar{\vartheta}_{\dot{1}}
                          \overline{V^{(1)}_{(12\dot{\alpha})}})\,,\;\;
     & V_{(12\dot{1}\dot{2})}\;:=\; V^{(0)}_{(12\dot{1}\dot{2})}\,.
  \end{array}	
 $$
Then, it follows from Theorem~1.5.3
                                % Theorem [fundamental: supersymmetric functional]
 that\footnote{\makebox[11.6em][l]{\it Note for mathematicians}
                                               The coefficients are chosen to make the kinetic term
											     of the complex scalar  field $\Phi_{(0)}$ in the standard/normalized form:
											    $\sum_\mu\partial_\mu\overline{\Phi_{(0)}}
												                       \partial^\mu\Phi_{(0)}$
												and the kinetic term of the gauge field $V_{[\mu]}$
												 in the standard form
												  $-\frac{1}{2g_{\mbox{\tiny\it gauge}}^2}
												      \mbox{\it Tr} \sum_{\mu,\nu}F_{\mu\nu}F^{\mu\nu}$ 
												  of the Yang-Mills theory with gauge coupling $g_{\mbox{\tiny\it gauge}}$.
                                               One may also consider
	                              $$
		                            \mbox{\Large $\frac{\tau}{2}$}\,
                                      \int_{\widehat{X}}d^4x\, d\theta^2d\theta^1	{\cal P}(W_1W_2)\;
                                       +\; \mbox{\Large $\frac{\bar{\tau}}{2}$}
                                      \int_{\widehat{X}}d^4x\, d\bar{\theta}^{\dot{2}}d\bar{\theta}^{\dot{1}}	
			                                {\cal P}(\bar{W}_{\dot{1}}\bar{W}_{\dot{2}}) \\[1ex]
		                          $$
		                                         for the pure gauge term,
												 where
												   $\tau:= \frac{1}{g_{\gaugetiny}} \,-\,\sqrt{-1}\,\frac{\Theta}{8\pi^2}$
												   is the complexified coupling constant.
		                                       This will keep the topological term $\int_X F\wedge F$ to the action functional.}
 {\small
 \begin{eqnarray*}
  \lefteqn{
   S_2(V_{[0]}, V_{[1]}, V_{[2]}, V_{[3]};
                V_{(1\dot{1}\dot{2})}, V_{(2\dot{1}\dot{2})};
			    V_{(12\dot{1}\dot{2})};
               \Phi_{(0)};
			     \Phi_{(1)}, \Phi_{(2)};
				 \Phi_{(12)} ) }\\[1ex]
   && :=\;\;
    \mbox{\Large $\frac{1}{2 g_{\gaugetiny}^2}$}\,
            \int_{\widehat{X}}d^4x\, d\theta^2d\theta^1	{\cal P}(W_1W_2)\;
     +\; \mbox{\Large $\frac{1}{2 g_{\gaugetiny}^2}$}
            \int_{\widehat{X}}d^4x\, d\bar{\theta}^{\dot{2}}d\bar{\theta}^{\dot{1}}	
			     {\cal P}(\bar{W}_{\dot{1}}\bar{W}_{\dot{2}}) \\[1ex]
   && \hspace{2em}
     +\; \mbox{\Large $\frac{1}{4}$}
        \int_{\widehat{X}}
	     d^4x\,    d\bar{\theta}^{\dot{2}}d\bar{\theta}^{\dot{1}} d\theta^2d\theta^1\,
	     {\cal P}\mbox{\Large $($}
		       \breve{\Phi}^\dag e^{e_m\breve{V}} \breve{\Phi}
			               \mbox{\Large $)$}  \\[1ex]								
	&& \hspace{2em}
	  +\; \int_{\widehat{X}}d^4x\,d\theta^2 d\theta^1\,
			    {\cal P}\mbox{\Large $($}
				                     \lambda \breve{\Phi}
				                      + \mbox{\large $\frac{1}{2}$}m \breve{\Phi}^2
                                      + \mbox{\large $\frac{1}{3}$}g \breve{\Phi}^3
								   \mbox{\Large $)$}\,
            +\, \int_{\widehat{X}}d^4x\,d\bar{\theta}^{\dot{2}} d\bar{\theta}^{\dot{1}}\,
			    {\cal P}\mbox{\Large $($}
				                     \bar{\lambda} \breve{\Phi}^\dag
				                      + \mbox{\large $\frac{1}{2}$}\bar{m} (\breve{\Phi}^\dag)^2
                                      + \mbox{\large $\frac{1}{3}$}\bar{g} (\breve{\Phi}^\dag)^3
								   \mbox{\Large $)$}
 \end{eqnarray*}}where   % end-small
    $g_{\gaugescriptsize} $ is the gauge coupling constant,
 gives a functional of the component fields\\
  $(\Phi_{(0)}; \Phi_{(\alpha)}; \Phi_{(12)})_{\alpha=1,2}$
    of $\breve{\Phi}$ (cf.\;chiral matter) and
  $(V_{[\mu]}; V_{(\alpha\dot{1}\dot{2})}; V_{(12\dot{1}\dot{2})} )
      _{\mu=0,1,2,3; \alpha=1,2}$
    of $\breve{V}$ (cf.\;gauge field)   on $X$
 that is invariant under supersymmetries up to boundary terms on $X$.

Explicitly, up to boundary terms on $X$, this is the action functional
(cf.\ [W-B: Chap.\,VI, Eq.\,(6.13) \& Chap.\,VII, Eq.\,(7.10)], with mild adjustment, cf.\ footnote~13.) 
 {\small
 \begin{eqnarray*}
  \lefteqn{
   S_2(V_{[0]}, V_{[1]}, V_{[2]}, V_{[3]};
                V_{(1\dot{1}\dot{2})}, V_{(2\dot{1}\dot{2})};
			    V_{(12\dot{1}\dot{2})};
               \Phi_{(0)};
			     \Phi_{(1)}, \Phi_{(2)};
				 \Phi_{(12)} ) }\\
   &&=\;   \int_X d^4x \,\mbox{\Large $($}
     -\,\mbox{\Large $\frac{1}{2 g_{\gaugetiny}^2}$}\,
	      \sum_{\mu,\nu}F_{\mu\nu}F^{\mu\nu}\,
	 +\,\mbox{\Large $\frac{\sqrt{-1}}{2 g_{\gaugetiny}^2}$}\,
	      \sum_{\alpha, \dot{\beta},\mu}
		     \partial_\mu V_{(\alpha\dot{1}\dot{2})}
			   \bar{\sigma}^{\mu, \dot{\beta}\alpha} \overline{V_{(\beta\dot{1}\dot{2})}}\,
	 +\, \mbox{\Large $\frac{1}{4 g_{\gaugetiny}^2}$}\,
	        V_{(12\dot{1}\dot{2})}^{\;2}   \\[1ex]
   && \hspace{2em}
    +\, \sum_\mu \nabla_\mu
	      \overline{\Phi_{(0)}}\, \nabla^\mu \Phi_{(0)}\,
	-\,  \mbox{\Large $\frac{\sqrt{-1}}{2}$}\, \sum_{\mu,\alpha,\dot{\beta}}	
	          \overline{\Phi_{(\beta)}}
			      \bar{\sigma}^{\mu,\dot{\beta}\alpha}\,
				  \nabla_\mu \Phi_{(\alpha)}		
    +\,\mbox{\Large $\frac{1}{4}$}\, \overline{\Phi_{(12)}}\Phi_{(12)}\\
   && \hspace{2em}
	  +\,\mbox{\Large $\frac{1}{4}$}\,\Phi_{(0)}
	         \sum_{\dot{\beta},\dot{\delta}}
			   \varepsilon^{\dot{\beta}\dot{\delta}}
			     V_{(12\dot{\beta})}\overline{\Phi_{(\delta)}}
      -\,\mbox{\Large $\frac{1}{4}$}\, \overline{\Phi_{(0)}}				
	        \sum_{\alpha,\gamma}\varepsilon^{\alpha\gamma}
			  \overline{V_{(12\dot{\alpha})}}\Phi_{(\gamma)}
      +\,\mbox{\Large $\frac{1}{4}$}\, V_{(12\dot{1}\dot{2})}
	         \overline{\Phi_{(0)}} \Phi_{(0)}   \\
	 && \hspace{2em}
       +\, \int_X d^4x
              \mbox{\Large $($}	
                  \lambda \Phi_{(12)}
                     + m\, \mbox{\large $($}
			              \Phi_{(0)} \Phi_{(12)}- \Phi_{(1)} \Phi_{(2)}
			                   \mbox{\large $)$}\,
		          + g\, \mbox{\large $($}
			          \Phi_{(0)}^2 \Phi_{(12)}
			           - 2\, \Phi_{(0)} \Phi_{(1)} \Phi_{(2)}
			          \mbox{\large $)$}  \\[-1ex]
     && \hspace{8em}
                +\, (\,\mbox{complex conjugate}\,)	\,
	      \mbox{\Large $)$}\,.			
 \end{eqnarray*}}where    % end-small
 \begin{eqnarray*}
    \nabla_\mu \Phi_{(0)}\;
     :=\; (\partial_\mu - \mbox{\large $\frac{\sqrt{-1}}{2}$}e_m V^{(0)}_{[\mu]})\,
            	 \Phi_{(0)}\,,
	 &
    \nabla_\mu \overline{\Phi_{(0)}}\;
     :=\; (\partial_\mu + \mbox{\large $\frac{\sqrt{-1}}{2}$}e_m V^{(0)}_{[\mu]})\,
            	 \overline{\Phi_{(0)}}\,,     \\[1.2ex]
   \nabla_\mu \Phi_{(\alpha)}\;
     :=\; (\partial_\mu - \mbox{\large $\frac{\sqrt{-1}}{2}$}e_m V^{(0)}_{[\mu]})\,
            	 \Phi_{(\alpha)}\,,
	 &
    \nabla_\mu \overline{\Phi_{(\beta)}}\;
     :=\; (\partial_\mu + \mbox{\large $\frac{\sqrt{-1}}{2}$}e_m V^{(0)}_{[\mu]})\,
            	 \overline{\Phi_{(\beta)}}
 \end{eqnarray*}
 are the covariant derivatives defined by $\breve{V}$ on components of $\breve{\Phi}$.
The index structure of this explicit expression implies that this functional is indeed Lorentz invariant.
 
\bigskip

This is what underlies [W-B: Chap.\,VI \& $U(1)$ part of Chap.\,VII] of Wess \& Bagger
  from the aspect of (complexified ${\Bbb Z}/2$-graded) $C^\infty$-Algebraic Geometry.
Together with Sec.\,2, physicists' two most basic supersymmetric quantum field theories are now recast solidly
 into the realm of (complexified ${\Bbb Z}/2$-graded) $C^\infty$-Algebraic Geometry.

\bigskip
 
The same tower construction can be applied to superspace(-time)s of all other space(-time) dimensions
     with either simple (i.e.\ $N=1$) or extended (i.e.\ $N\ge 2$) supersymmetries,
	 with necessary modifications dictated by the specifics of spinors in each dimension and signature.
A redo of [L-Y1] (D(14.1)) along this new setting should give
 a fundamental (as opposed to solitonic) description of super D-branes
 parallel to Ramond-Neveu-Schwarz fundamental superstrings.

%\bigskip
%\bigskip

\newpage

\begin{flushleft}
{\large \bf Appendix\;
 Notations, conventions, and identities in spinor calculus}
\end{flushleft}
Notations, conventions, and identities in spinor calculus that are used in the current work are collected below.
See [W-B: Appendix A \& Appendix B] and [Argu] for a more complete list.

\bigskip

\noindent $\bullet$
Minkowski metric:\; $\eta_{\mu\nu}\;=\;(-+++)$.

\bigskip

\noindent $\bullet$
 With the identification $\Spin(3,1;{\Bbb R})\simeq \SL(2;{\Bbb C})$,
  two-component spinors with upper or lower, dotted or undotted indices transform as follows\;
  ($\alpha,\gamma=1,2\,; \dot{\beta}, \dot{\delta} =\dot{1}, \dot{2}$)
 $$
  \begin{array}{lccl}
   \psi^\prime_\gamma \;=\;  \sum_\alpha m_{\gamma}^{\:\:\alpha}\, \psi_\alpha\,,
     &&& \bar{\psi}^\prime_{\dot{\delta}}\;
	      =\; \sum_{\dot{\beta}}
		         \bar{m}_{\dot{\delta}}  ^{\:\:\dot{\beta}}\,\bar{\psi}_{\dot{\beta}} \,, \\[1.2ex]      
  \psi^{\prime\,\gamma}\;=\; \sum_\alpha m^{-1}\,\!_\alpha^{\:\:\gamma}\, \psi^\alpha\,,
     &&& \bar{\psi}^{\prime\, \dot{\delta}} \;
	      =\; \sum_{\dot{\beta}}
         		  \bar{m}^{-1}\,\!_{\dot{\beta}}^{\:\:\dot{\delta}}\, \bar{\psi}^{\dot{\beta}}\,,
  \end{array}
 $$
 where $m\in SL(2;{\Bbb C})$ as $2\times 2$ matrices, $\bar{m}$ the complex conjugate of $m$.

\bigskip

\noindent $\bullet$
The $\varepsilon$-tensor:
 \begin{itemize}
  \item[\LARGE $\cdot$]
  $\varepsilon^{12}= 1 = \varepsilon^{\dot{1}\dot{2}}$,\;
  $\varepsilon^{\alpha\gamma}= -\, \varepsilon^{\gamma\alpha}
     = -\,\varepsilon_{\alpha\gamma}$,\;
  $\varepsilon^{\dot{\beta}\dot{\delta}}
     = -\, \varepsilon^{\dot{\delta}\dot{\beta}} = -\,\varepsilon_{\dot{\beta}\dot{\delta}}$,
   for $\alpha,\gamma=1,2$ and $\dot{\beta},\dot{\gamma}=\dot{1}, \dot{2}$.

  \item[\LARGE $\cdot$]
   Use of $\varepsilon$-tensor to raise and lower spinor indices of the same chirality:
    $$
	 \begin{array}{c}
	  \psi^\gamma\;=\; \sum_\alpha \varepsilon^{\gamma\alpha}\psi_\alpha\,,\hspace{1em}
      \psi_\gamma\;=\; \sum_\alpha \varepsilon_{\gamma\alpha}\psi^\alpha\,,\hspace{1em}	
	
	  \bar{\psi}^{\dot{\delta}}\;
	    =\; \sum_{\dot{\beta}}
		      \varepsilon^{\dot{\delta}\dot{\beta}}\bar{\psi}_{\dot{\beta}}\,,\hspace{1em}
      \psi_{\dot{\delta}}\;
	    =\; \sum_{\dot{\beta}}
		      \varepsilon_{\dot{\delta}\dot{\beta}}\bar{\psi}^{\dot{\beta}}\,.
	 \end{array}
	$$
   
  \item[\LARGE $\cdot$]
  $\theta^\alpha\theta^\gamma= \varepsilon^{\alpha\gamma}\theta^1\theta^2$,\;
  $\theta_\alpha\theta_\gamma=  - \varepsilon_{\alpha\gamma}\theta_1\theta_2$,\;
  $\theta^{\dot{\beta}}\theta^{\dot{\delta}}
     = \varepsilon^{\dot{\beta}\dot{\gamma}}\theta^{\dot{1}}\theta^{\dot{2}}$,\;
  $\theta_{\dot{\beta}}\theta_{\dot{\delta}}
     = -\, \varepsilon_{\dot{\beta}\dot{\gamma}}\theta_{\dot{1}}\theta_{\dot{2}}$.	
\end{itemize}	

\bigskip

\noindent $\bullet$
Pauli matrices $\sigma^\mu$ and $\bar{\sigma}^\mu$:\;
  $\bar{\sigma}^{\mu, \dot{\delta}\gamma}\;
     :=\; \sum_{\alpha, \dot{\beta}}
	              \varepsilon^{\dot{\delta}\dot{\beta}}
                                    \varepsilon^{\gamma\alpha}\,	 \sigma^\mu_{\alpha\dot{\beta}}$,
  where
 \begin{itemize}
  \item[]
   \vspace{-2em}
   $$ \small
     \sigma^0\;=\; \left(\!  \begin{array}{rr}-1 & 0  \\ 0 & -1 \end{array}\!\right)\,,\;\;\;
     \sigma^1\;=\; \left(\!  \begin{array}{rr} 0 & 1  \\ 1 &   0 \end{array}\!\right)\,,\;\;\;
     \sigma^2\;=\; \left(\!  \begin{array}{rr} 0 & -\sqrt{-1}  \\  \sqrt{-1} & 0 \end{array}\!\right)\,,\;\;\;
     \sigma^3\;=\; \left(\!  \begin{array}{rr} 1 & 0  \\ 0 & -1 \end{array}\!\right)\,.
   $$
  
  \item[\LARGE $\cdot$]
   Explicitly,
    $\sigma^0= \bar{\sigma}^0$,  $\sigma^\mu=-\bar{\sigma}^\mu$ for $\mu=1,2,3$;
   in entries,
   $\sigma^0_{\alpha\dot{\beta}}= \bar{\sigma}^{0,\dot{\alpha}\beta}$,
   $\sigma^\mu_{\alpha\dot{\beta}}= - \bar{\sigma}^{\mu, \dot{\alpha}\beta}$.
   
  \item[\LARGE $\cdot$]
   Special relations:
     $\sigma^\mu_{1\dot{1}}= \bar{\sigma}^{\mu, \dot{2}2}$ and
	 $\sigma^\mu_{2\dot{2}}= \bar{\sigma}^{\mu, \dot{1}1}$.
	
  \item[\LARGE $\cdot$]	
    $\theta_\alpha\bar{\theta}_{\dot{\beta}}
	  =\frac{1}{2}
	      \sum_{\mu, \gamma,\dot{\delta}}
		    \theta^\gamma \sigma_{\mu,\gamma\dot{\delta}}\bar{\theta}^{\dot{\delta}}\,
			  \sigma^\mu_{\alpha\dot{\beta}}$\,,\;\;
	$\theta^\alpha\bar{\theta}^{\dot{\beta}}
	  =\frac{1}{2}
	      \sum_{\mu, \gamma,\dot{\delta}}
		    \theta^\gamma \sigma_{\mu,\gamma\dot{\delta}}\bar{\theta}^{\dot{\delta}}\,
			  \bar{\sigma}^{\mu, \dot{\beta}\alpha}$\,.
	  	
  \item[\LARGE $\cdot$]	
   $ \Tr \mbox{\large $($}
     (\sum_\mu p_\mu\sigma^\mu)(\sum_\nu p^\prime_\nu \bar{\sigma}^\nu)
	       \mbox{\large $)$}
      =  -2\, \langle p, p^\prime\rangle$.
 \end{itemize}

\bigskip

\noindent $\bullet$
Let
   $S^\prime$, $S^{\prime}$ be the Weyl-spinor representations   and
   $V$ be the vector/fundamental representation
 of $\SO(3,1)$.
Then,
  $S^{\prime\,\vee}\otimes_{\Bbb C}S^{\prime\prime\,\vee}
     \simeq V\otimes_{\Bbb R}{\Bbb C}$
 as $\SO(3,1)$-modules.

\newpage

\baselineskip 13pt
%references
{\footnotesize

\vspace{1em}

\noindent
chienhao.liu@gmail.com, 
chienliu@cmsa.fas.harvard.edu; \\
yau@math.harvard.edu

}%endfootnotesize

\end{document}